\DeclareFontFamily{U}{mathc}{}
\DeclareFontShape{U}{mathc}{m}{it}%
{<->s*[1.03] mathc10}{}
\DeclareMathAlphabet{\mathcal}{U}{mathc}{m}{it}
\DeclareMathAlphabet{\mathpzc}{OT1}{pzc}{m}{it}
\newcommand{\V}[1]{``{#1}"}
\def\be{\begin{equation}}
\def\ee{\end{equation}}
\def\ber{\begin{eqnarray}}
\def\eer{\end{eqnarray}}
\def\berr{\begin{eqnarray*}}
\def\eerr{\end{eqnarray*}}
\def\mc{\bm{\mathfrak{C}}}
\long\def\symbolfootnote[#1]#2{\begingroup
\def\thefootnote{\fnsymbol{footnote}}\footnote[#1]{#2}\endgroup}
\newcommand{\field}[1]{\mathbb{#1}}
\long\def\symbolfootnote[#1]#2{\begingroup
\def\thefootnote{\fnsymbol{footnote}}\footnote[#1]{#2}\endgroup}
\def\sc{\bm{\mathcal{C}}}
\def\cov{{\rm cov }}
\def\cum{{\rm cum }}
\def\tr{{\rm tr }}
\def\diag{{\rm diag }}
\def\I{\bm{I}}
\def\norm{\mathcal{N}}
\newcommand{\prob}{\field{P}}
\newcommand{\E}{\field{E}}
\def\A{\bf{A}}
\def\P{\bm{P}}
\def\tr{{\rm tr }}
\def\lb{\left\|}
\def\rb{\right\|}
\def\ka{\kappa}
\def\epi{\bm{\varepsilon_i}}
\def\epj{\bm{\varepsilon_j}}
\def\ei{\bm{\epsilon_i}}
\def\ej{\bm{\epsilon_j}}
\def\Gi{\bm{\Gamma_{i}}}
\def\Gj{\bm{\Gamma_{j}}}
\def\f{\bm{\mathrm{F}}}
\def\ff{\bm{\mathcal{F}}}
\def\g{\bm{\mathrm{G}}}
\def\gg{\bm{\mathcal{G}}}
\def\x{\bm{\mathrm{X}_i}}
\def\xx{\bm{\mathcal{X}_i}}
\def\xxt{\bm{\mathcal{X}_t}}
\def\xt{{\bm X}_t}
\def\xxj{\bm{\mathcal{X}_j}}
\def\z{\bm{\mathrm{Z}}}
\def\zi{\bf{Z}_i}
\def\zz{\bm{\mathrm{z}}}
\def\yy{\bm{\mathcal{y}_i}}
\def\yyt{\bm{\mathcal{y}_t}}
\def\y{\bm{\mathrm{y}}_i}
\def\v{{\bf{V}}_{\bm{i}}}
\def\vj{{\bf{V}}_{\bm{j}}}
\def\vv{\bm{\mathcal{V}}_{\bm{i}}}
\def\vvj{\bm{\mathcal{V}_{j}}}
\def\vvv{\bm{\mathrm{v}}}
\def\B{\bm{ \mathrm {B}_N}}
\def\b{\bm{\mathrm{b}_i}}
\def\bj{\bm{\mathrm{b}_j}}
\def\uu{\bm{\mathcal{u}_i}}
\def\uuj{\bm{\mathcal{u}_j}}
\def\huu{\bm{\hat{\mathcal{u}}_i}}
\def\huut{\bm{\hat{\mathcal{u}}_t}}
\def\ui{\bm{\mathrm{u}_i}}
\def\hui{\bm{\hat{\mathrm{u}}_i}}
\def\ut{\bm{\mathrm{u}_t}}
\def\hut{\bm{\hat{\mathrm{u}}_t}}
\def\d{\bm{\mathrm{D}}}
\def\db{\bm{\mathrm{D}}_{\bot}}
\def\mzi{\mathscr{M}_{\zi}}
\def\pd{\mathscr{P}_{\d}}
\def\md{\mathscr{M}_{\d}}
\def\mxx{\mathscr{M}_{\xx}}
\def\mxxt{\mathscr{M}_{\xxt}}
\def\projp{\mathscr{P}}
\def\projm{\mathscr{M}}
\newcommand{\m}{\text{\calligra m}}
\def\si{\psi^{-}}
\def\sth{\bm{\mathrm{S}}_N}
\def\sthi{\bm{\mathrm{S}}^{-1}_N}
\def\ssth{\bm{\mathcal{S}}_N}
\def\ssthi{\bm{\mathcal{S}}^{-1}_N}
\def\SO{\bm{\Sigma}}
\def\SS{\bm{\Omega_N}}
\def\SSI{\bm{\Omega_N}^{-1}}
\def\SSIb{\bm{\bar{\Omega}}}
\def\sh{\bm{\hat{\mathcal{S}}_N}}
\def\shi{\bm{\hat{\mathcal{S}}_N}^{-1}}
\DeclareMathOperator{\sa}{\mathscr{F}}
\def\sxi{\bm{\Sigma_{\xx ' \xx}}^{-1}}
\def\sx{\bm{\Sigma_{\xx ' \xx }}}
\def\sxf{\bm{\Sigma_{\xx ' \ff }}}
\def\Hi{\bm{\Xi_{i}}}
\def\Hn{\bm{\Xi_{N}}}
\def\Hninv{\bm{\Xi^{-1}_{N}}}
\def\txx{\frac{\xx'\xx}{T}}
\def\sxf{\bm{\Sigma_{\xx'\ff}}}
\def\sf{\bm{\Sigma_{\ff'\ff}}}
\def\sv{\bm{\Sigma_{\v'\v}}}
\def\dif{\left(\ps-\txx\right)}
\def\xp{\xx^+}
\def\ixx{\left(\frac{\xx'\xx}{T}\right)^{-1}}
\def\sh{\bm{\hat{\mathcal{S}}_N}}
\def\shi{\bm{\hat{\mathcal{S}}_N}^{-1}}
\def\st{\sqrt{T}}
\def\MA{\bm{\mathcal{H}_N}}
\def\MAI{\bm{\mathcal{H}^{-1}_N}}
\def\MAIN{\bm{C^{-1}_N}}
\def\cn{\bm{\mathrm{C}_N}}
\def\cni{\bm{\mathrm{C}_N}^{-1}}
\def\bcn{\bm{\breve{\mathrm{C}}_N}}
\def\stldp{\bm{\tilde{S}^+}}
\def\oi{\bar{\omega}}
\def\bb{\boldsymbol{\beta}}
\def\ba{\boldsymbol{\alpha}}
\def\Hbar{\bm{\bar{\Xi}_{N}}}
\def\Hbari{\bm{\bar{\Xi}^{-1}_{N}}}
\def\hbar{\bar{\xi}}
\def\T{\bm{\Theta_i}}
\def\BTN{\bm{\breve{\Theta}}_N}
\def\BT{\bm{\breve{\Theta}_i}}
\def\ps{\bm{\Psi_i}}
\def\psinv{\bm{\Psi_i}^{-1}}
\def\di{\mathscr{D}}
\def\pp{\mathcal{P}}
\def\dic{\mathcal{D}^c}
\def\fft{\bm{\chi_i}}
\def\ft{\chi}
\def\MAB{\bm{\bar{C}_N}}
\def\xii{\bar{\xi}}
\def\cb{\bar{c}_N}
\def\stldp{\bm{\tilde{S}^+}}
\def\norm{\mathcal{N}}
\def\aa{\approx}
\def\rem{\bm{\mathfrak{R}}}
\def\an{\bm{A_N}}
\def\han{\bm{\hat{A}_N}}
\newtheorem{dummytheorem}{Dummy-Theorem}[section]
\newtheorem{assumptions}{Assumptions}[section]
\newtheorem{remark}{Remark}[section]
\newcommand{\proofendsign}{$\Box$} 
\newtheorem{lemma}[dummytheorem]{Lemma}
\newtheorem{theorem}[dummytheorem]{Theorem}
\newtheorem{prop}[dummytheorem]{Proposition}
\newtheorem{corollary}[dummytheorem]{Corollary}
\newtheorem{example}[dummytheorem]{Example}
\newtheorem{assumption}[assumptions]{Assumption}
\newenvironment{proof}{{\noindent \bf Proof }}
 {{\hspace*{\fill}\proofendsign\par\bigskip}}
\def\an{\bm{A_N}}
\def\han{\bm{\hat{A}_N}}
\renewcommand\appendix{\par
\setcounter{section}{0}%
\setcounter{subsection}{0}%
\setcounter{table}{0}
\setcounter{figure}{0}
\setcounter{equation}{0}
\numberwithin{equation}{section}
\gdef\thetable{\Alph{table}}
\gdef\thefigure{\Alph{figure}}
\gdef\theequation{\Alph{section}.\arabic{equation}}
\gdef\thesection{\Alph{section}}
}
\newlist{Part}{enumerate}{1}
\setlist[Part,1]{
  label={Part~(\alph*)},
  leftmargin=*,
  align=left,
  labelsep=1mm,
  itemindent=\dimexpr\labelsep+\labelwidth+18pt\relax
}
\newlist{Partn}{enumerate}{1}
\setlist[Partn,1]{
  label={Part~(\roman*)},
  leftmargin=*,
  align=left,
  labelsep=1mm,
  itemindent=\dimexpr\labelsep+\labelwidth+18pt\relax
}
\title{ Robust  Nearly-Efficient Estimation  \\  of Large Panels   with Factor Structures}
\author[a]{Marco Avarucci}
\affil[a]{University of Glasgow}
\author[b]{Paolo Zaffaroni}
\affil[b]{Imperial College London}
\providecommand{\keywords}[1]{\textbf{Keywords:} #1}
\begin{document}
\maketitle
\begin{abstract}
This paper  studies   estimation of linear panel regression models  with heterogeneous coefficients, when both the regressors and the residual contain a possibly common, latent, factor structure. Our theory is  (nearly) efficient, because based on the GLS principle, and also robust to the specification of such factor structure, because it does not require any information on the number of factors nor estimation of the factor structure itself.
  We first show how the unfeasible GLS estimator not only affords an efficiency improvement but, more importantly, provides a bias-adjusted estimator with the
    conventional limiting distribution, for situations where the OLS is affected by a first-order bias.  The technical  challenge resolved in the paper is to show how these properties are preserved for a class of feasible GLS estimators in a double-asymptotics setting. Our theory is illustrated by means of Monte Carlo exercises and, then, with an empirical application using individual asset returns and firms' characteristics data.
\end{abstract}

\keywords{ GLS estimation; panel; factor structure; robustness; bias-correction.}

\section{Introduction}







This paper  considers  (nearly) {\em efficient}  estimation of linear panel regression models  with heterogeneous coefficients, when both the regressors and the residual contain a common, latent, factor structure. At the same time, our estimation procedure does not require any knowledge of such  latent factor structure, not even the maximum possible number of latent factors, let alone the latent factors themselves.  This qualifies our procedure as {\em robust}.
Factor models represent one of the most  popular and successful way to capture cross-sectional and temporal dependence, especially when facing a large number of units ($N$)	 and time periods ($T$), although in our context factors and their loadings represent nuisance parameters.

  However, the possibility of a common factor structure in  both regressors and residuals, which
 would  typically arise when omitting relevant regressors, leads  to endogeneity, making estimation by ordinary least squares (OLS)  invalid such that  all  its asymptotic properties are not holding any longer.

We first consider   an unfeasible generalized least squares (UGLS) estimator for the regression coefficients,
based on the presumption that the covariance matrix of the residuals, evaluated conditional on the latent factors, is known. It turns out that, regardless of the possibility of endogeneity (that is when regressors and residuals  are correlated),  the UGLS is  $\sqrt{T}$-consistent and asymptotically normal
distributed {\em without }  requiring any  information   on the factor structure, such as the number of factors or the factors themselves and their loadings.  
This contrasts with the asymptotic bias plaguing the OLS estimator, under the same circumstances.
  In other words, the UGLS does not only represent a more efficient estimator  but  provides an {\em automatic } biased-adjusted estimator with desirable asymptotic properties.   This  result  is  due to an important    insight, namely the existence of a  form of {\em asymptotic orthogonality } between the common factors, that affect the residuals,  and the {\em inverse } of the residuals'   covariance matrix.  Most importantly, such asymptotic orthogonality  is manifested  at a very fast rate, namely the squared norm of the product between the covariance matrix and the factors is $O(T^{-1})$.

  The  challenge  arises when considering a 
 feasible version of the UGLS  estimator. A natural approach, here followed, is to  
 make  use of the panel dimension,  considering the sample (across $N$)  covariance
 matrix of the OLS residuals, which in turn have been obtained by a (time-series) regression with $T$ observations.
%
    Unlike the OLS and UGLS cases, the asymptotic theory for the GLS requires  both $N$ and $T$ to diverge.   Lack of consistency of the  OLS estimator for the regression coefficients  unavoidably implies  that such $T \times T $ sample covariance matrix  is not consistent, element by element, for the true residuals' covariance matrix but will converge (element by element) instead to a pseudo-true covariance matrix.
      The  surprising, crucial,  result here established is that such pseudo-true covariance matrix is also  asymptotically orthogonal
 to the latent factors, and at the same rate of convergence $O(T^{-1})$.  Indeed,  there is an entire class of
 matrices, rather than  a unique matrix, that is asymptotically
 orthogonal to the factors.  This is the most intriguing aspect of our theory.  The technical achievement of this paper is to show that the {\em feasible }
 GLS (henceforth GLS)
 estimator for the regression coefficients,  is $\sqrt{T}$-consistent and  asymptotically normal, as {\em both }  $N,T$ diverge to
 infinity. Again, this holds even when the  OLS remains an invalid estimator.
At the same time, since  the pseudo-true value differs  in general from the true covariance matrix,  the GLS might not be as efficient  as
the UGLS.  However, evaluation of the GLS iteratively, as explained below, permits  to make it  close to the UGLS estimator. 


In summary, the GLS estimator exhibits four main  desirable, compelling, properties.
First,   it permits to carry out inference on the regression coefficients  based on conventional asymptotic distributions.  In particular, the GLS estimator of the regression coefficients has a  mixed-normal asymptotic distribution, implying the possibility of   inference  by means of chi-squared criteria. 
    Second, as in classical estimation theory, it delivers (nearly) efficient estimation.
 Third,   the GLS estimator does not require any knowledge of the exact number  of latent
  factors, or even an upper bound of such number. In particular,  the number of factors can be either smaller, equal  or larger than the number of  regressors.
 Fourth, the GLS is computationally easy to handle since it simply requires to perform ($N$) linear
 regressions, without invoking any nonlinear numerical optimizations. 
Our approach can be also applied to the dual case of  cross-sectional regressions with time-varying coefficients.

%
%
%
%
%
%
%
%
%

This paper belongs to, and extends, two different strands of literature.


 First,    it has been demonstrated, in various contexts, that efficient estimation techniques not only lead to an improvement of precision but, most importantly, resurrect the required asymptotic properties, in terms of bias, rate of convergence and distribution, in situations where these are not warranted by non-efficient approaches.  
 
 In the context of cointegrated systems,  \cite{P91b} and \cite{P91a} 
show that use of the efficient, full system, maximum likelihood (ML)  goes beyond an efficiency improvement: it solves the well-known issues of specification and inference in cointegrated systems, that plagues unrestricted VAR estimation such as the presence of asymptotic biases and non-standard asymptotic distributions (i.e. the Dickey-Fuller distribution).\footnote{Indeed, 
especially   \cite{P91b}   provides a detailed explanation of  these properties,  namely removing second order bias, dealing with endogeneity, absence of nuisance parameters and, obviously, achievement of full efficiency.} 
   Note that  ML is  asymptotically equivalent to GLS in that set-up.   Although our  theoretical framework is not one of cointegrated systems, strong analogies emerge with  \cite{P91b} and \cite{P91a}: in both cases, a (local)
 mixed-normal distribution arises and efficient estimation mitigates the lack of strong exogeneity. Moreover, such deficiency (i.e. lack of exogeneity) is manifested through the form of the residuals' covariance matrix:  non-block diagonality for (triangular) cointegrated systems of    \cite{P91b}, \cite{P91a}  and a factor structure such as ours, which also rules out block-diagonality, for our framework. Second, \cite{P91b} demonstrates how  these  remarkable properties of  efficient estimation are warranted by full system regressions but not by single-equation regressions. Likewise, our method requires the full information arising from the panel, namely one needs  both $T$ and $N$  to diverge.
 
 \cite{RH97}  study estimation of time series regression models, when both the regressors and the residual exhibit long-memory, and in fact spectral singularities can arise at any frequency. Under these circumstances, in particular when the spectral singularities of the regressors     and residuals arise at the same frequency with sufficient intensity, the  OLS estimator is no longer $\sqrt{T}$-consistent and asymptotically normal.  However,  under the same circumstances,  \cite{RH97} show that a class of weighted least squares estimates, which includes GLS as a special case, has standard asymptotic properties. 
 
 An important difference  between  our approach and \cite{P91b}, \cite{P91a}  and   \cite{RH97} is that their estimation procedure is affected by a second-order bias, that is their estimators are consistent (although with non-standard rate of convergence and asymptotic distribution), whereas in    our context a first-order bias arises, leading for instance to inconsistency of the  OLS estimator. Therefore, our GLS  adjustment appears compelling in our framework.

Second,  inference of panel data model with a latent factor structure in the residuals and  {\em heteroreneous} regression coefficients has been studied, initially, from a purely econometric perspective  and,   more recently,  from an empirical finance angle.

    In a linear cross-sectional
regression  \cite{A05} shows  that, when residuals and  regressors share  a factor
structure, $\sqrt{N}$-consistency of the OLS estimator is preserved only  with uncorrelated factor loadings.\footnote{Although not spelled out, \cite{A05} can be readily applied to time-varying coefficients.}
 Within a linear time regression,
  \cite{P06}  shows that
 heterogeneous  regression coefficients can be $\sqrt{T}$-consistently
estimated  OLS by  augmenting the regressors with cross-sectional averages
of the dependent variable  and individual-specific regressors.  \cite{AB15} 
 consider  a panel model with (sparse) heterogenous coefficients  and establish the asymptotics of a penalized OLS estimator.
 Maintaining the assumption of a common latent factor structure in the residuals of a  panel data model with heterogeneous coefficients,  \cite{vel17}  allow for the possibility that 
 the idiosyncratic innovation is non-stationary, in particular  exhibiting long memory.  
 
Motivated by empirical asset pricing,  new methods to conduct  robust  inference on panel data models with  a latent factor structure have been recently developed. 
\cite{GX18}  derive the asymptotics for a  procedure to estimate the risk-premium of an observed factor, robust to the omission of the set of relevant (i.e. priced)  factors. 
 Like us,  they adopt a double-asymptotics approach. However,  \cite{GX18} differ from us because they focus on estimation of the parameters of the second-pass regression, 
 that is when the asset-pricing restriction is imposed,   whereas we ignore any asset-pricing content (i.e., from the point of view of the two-pass methodology, we focus on the parameters of the first-pass regression).  Moreover, their procedure relies on estimating the   complete space spanned by the latent factors driving the model whereas our method can avoid this aspect altogether. Gagliardini, Ossola and Scaillet (2018)
  study the  properties of 
 a diagnostic criterion to detect an approximate factor structure in the residuals of large, unbalanced, panel data models. Like us, they consider a double-asymptotic setting and ignore any 
 asset-pricing restrictions on the parameters of the panel data model. Moreover,  \cite{GOS18}  method is robust, in the sense that it does not need to explicitly estimate the latent factor structure embedded in the residuals, just like us. However  their focus is specifically to check whether the unobserved residuals have a factor structure whereas our method focuses on estimation of the regression coefficients to the observed, possibly heterogenous, regressors. 

Unlike the previous papers, the large majority of contributions to this literature focused on the case of  constant regression coefficients. 
  \cite{P06}   shows that  a faster rate of convergence is achieved with  constant  regression coefficients.
   \cite{B09} considers joint estimation of the constant regression coefficients and of the residuals'  factor structure components  through an iterative OLS procedure.
   The same estimator has been studied by   \cite{MW09}  under weaker conditions on the observed regressors. 
 %
  \cite{MW13} show that    \cite{B09}  and  \cite{MW09}  results  hold, with no loss of efficiency, when the exact number of latent  factors $M$  is unknown and only  an upper bound is specified. \cite{BAIGLS}  show that  GLS estimation leads to an efficiency improvement over the   \cite{B09}  and  \cite{MW09} OLS-type estimator. 
	     \cite{GHS12}  establish $\sqrt{NT}$-asymptotics for the OLS estimator by  augmenting the regressors with the principal component estimator of the common factors extracted from the observable data.
 \footnote{Several generalizations of the aforementioned results have been considered. 
 \cite{PT11} and \cite{CP13} confirm the same asymptotic results of     \cite{P06} when  spatial-dependence  in the idiosyncratic component of the innovation's factor structure as well as dynamic panel, respectively, are allowed for.   \cite{KUW15}  show  that Pesaran's  estimator retains its asymptotic properties under weaker conditions, 
 allowing for either correlated loadings or for the number of latent factor  $m$ to be larger than the number of observables, whereas  \cite{WU15} discuss some limitations.   \cite{S13} extends \cite{B09} to the case of non-constant regression coefficients  establishing $\sqrt{T}$-asymptotics when $T/N^2 \rightarrow 0 $.  Dynamic panel are  permitted.}
  Other contributions to this literature include  Holtz-Eakin, Newey \& Rosen (1988), Ahn, Hoon Lee \& Schmidt (2001),
\cite{BN04}, \cite{PS03},  \cite{MP04} and \cite{PS07}.

%
%
%
%
%
None of these papers address the issue of efficient estimation, except for \cite{BAIGLS}\footnote{\cite{BAIGLS} focus on the {\em homogeneous} parameter case, unlike us, and considers joint estimation of the latent factors and parameters, generalizing \cite{B09} and \cite{MW09}. 
More importantly, the motivation of \cite{BAIGLS} differs drastically from ours because they focus on the GLS approach for an efficiency improvement of an estimator that already exhibits the conventional asymptotic properties under \cite{BAIGLS} assumptions, in particular iid-ness across time. In our case, our GLS approach mitigates the {\em first-order} bias  affecting the  OLS estimator, where we allow for both serial  and cross-sectional correlation as well as heteroskedasticity of the residuals. 
},  but rather focus on  various, ingenious, ways to mitigate the bias induced by the correlation between regressors and innovations. 
 In contrast, our GLS approach allows to tackle both issues,  at the same time, without requiring {\em any } knowledge of the factor structures affecting the regressors and innovations.\footnote{In particular, given that we can afford to be completely agnostic about the need to conduct inference on the latent factor structure  affecting the model, our work differ, both in terms of focus and in terms of the techniques developed, from  the multitude of papers developing inference methods on latent factor structures
(on estimating the number of latent factors see  \cite{BN02}, \cite{HL07},  \cite{AW07}, \cite{O09},  \cite{O10}, \cite{AH13}   and  on estimating latent factor structures see \cite{FHLR00},  \cite{SW02},  \cite{BN02}, \cite{B09}  among  others.}
  Our  asymptotic  distribution theory requires  $ T^3/N^2 \rightarrow 0 $ whereas the milder $ T/N \rightarrow 0 $ ensures consistency. This relative speed  spells out a neat dichotomy  in terms of the role of $T$ and $N$: the faster rate of divergence for $N$ is asked for to estimate accurately the (inverse of the) sample-covariance matrix  required by the GLS formula, 
which in turn mitigates the asymptotic bias.
 Instead,  the slower divergence of $T$ controls the asymptotic variance of the GLS estimator, dictating ultimately the estimator's rate of convergence. Noticeably, the relative speed requested by our estimator  differs from the relative speed requested by the alternative procedures described above,
 suggesting that our result can also be viewed as complementary to the others, for example  more suitable to short panels where $N$ is  much larger than $T$.\footnote{For example, \cite{P06}  requires  $T/N^2 \rightarrow 0$ for asymptotic normality but the weaker condition  $T/N \rightarrow 0 $ is required for homogeneous regression coefficients, where the faster  $\sqrt{NT}$ -rate of convergence is achieved.
 Moreover, one needs  the number of heterogeneous regressors  to be greater than number of latent factors. \cite{B09} shows that the regression coefficients' estimator is also $ \sqrt{NT}$-consistent,   when $T/N \rightarrow \kappa > 0 $ for some constant $ \kappa $. \cite{B09} estimator is asymptotically biased, in general, but an asymptotically valid bias-correction is established under slightly stronger conditions. 
 \cite{MW09}  establish  $ \sqrt{NT} $-asymptotics, again when  $N$ and $T$  diverge at the same speed (i.e. $N/T  \rightarrow \kappa > 0 $).}

%

This paper proceeds as follows.
    Section~\ref{def} illustrates the general model and
 the assumptions required for
estimation of regressions with unit-specific parameters. The asymptotic results for the OLS, UGLS and GLS estimators are presented in Section~\ref{res}. 
Section~\ref{common} describes estimation and inference of the coefficients to common regressors. 
The technical contributions of the paper are discussed and highlighted in Section~\ref{LEMMI}
Section~\ref{various} discusses various issues related to the GLS estimator. In particular,  
we first explore the case when the regressors and the residuals do not depend on the same set of factors.  
Second, we discuss the conditions under which the feasible GLS will still work in the context of dynamic panels.
Third, despite the inefficiency of the feasible GLS, we explain how substantial efficiency gains can be achieved by a multi-step version of the GLS estimator. Fourth, we  describe how consistent estimation of the asymptotic covariance matrix can be obtained.
Fifth,  we describe how to implement our estimator to cross-sectional regressions with time-varying coefficients.
Our theoretical results are corroborated by a set of Monte Carlo
experiments described in Section~\ref{MC}. An empirical application, which investigates whether firms' characteristics are relevant to individual stock returns, is presented in  Section~\ref{EMP}. 
 Section~\ref{CONCL}
concludes. The proofs of our theorems are reported in Appendix \ref{proofmain}, relying on  three  technical results, enunciated in Appendix \ref{centralemmas}. Appendix \ref{auxiliary} defines some quantities of interest for the construction of the GLS estimator, in particular regarding the (inverse of the) covariance matrix of the residuals. The Supplement contains appendices \ref{someresults}-\ref{app3}  with the proofs of additional material, that serve out main results.  

Hereafter we use the following notation.  Let $\bm{A}(R \times C)$ denotes a generic real $R\times C$ matrix with entries $a_{nm}$; in short $\bm{A}=[a_{nm}]_{n,m=1}^{R,C}$, or simply $\bm{A}=[a_{nm}]$ when the matrix's dimension is clear.  Similarly, $\bm{a}$ denotes a generic column vectors of length $R$ with element $a_n$; in short $\bm{a}=[a_n]_{n=1}^R$. The transpose of a  $\bm{A}$ is denoted by $\bm{A}'$. If $R=C$, $\lambda_1(\bm{A})$ and $\lambda_N(\bm{A})$ denote the minimum and the maximum eigenvalue of $\A$, respectively.
  With $\bm{A} > (\geq)  0 $ we mean that  $\bm{A}$ is positive definite (positive semi positive).  
Let  $\lb\bm{A}\rb_{sp}=\sqrt{\lambda_N\left(\bm{A}'\bm{A}\right)}$ 
denotes the spectral norm of $\bm{A}$, and $\|\bm{A}\|=\sqrt{\tr\left(\bm{A}'\bm{A}\right)}$, where $\tr(\cdot)$ denotes the trace, is the Frobenius norm. When $R=C$  we define the column and row norm of  $\bm{A} $ as $\| \bm{A} \|_{col} =  \max_{1\leq n \leq C} \sum_{n'=1}^C |a_{nn'}|$ and
  $ \| \bm{A} \|_{row} =  \max_{1\leq n' \leq C} \sum_{n=1}^C |a_{nn'}|$, respectively.  
Furthermore, for $C>R$, we use $\projp_{\bm{A}}=\bm{A(A'A)^{+}A'}$, where $\bm{A}^+$ denotes the Moore-Penrose generalized inverse of $\bm{A}$ and $\projm_A=\I_C-\projp_A$, where $\I_C$ is the identity matrix of dimension $C \times C $. If $\bm{A}$ has full column rank,  $\bm{A}_{\bot}$ denotes the $C\times (C-R)$ matrix satisfying $\bm{A}_{\bot}'\bm{A}=\bm{0}$, where $\bm{0}$ is a matrix of zeros, and $\bm{A}'_{\bot}\bm{A}_{\bot}=\I_{C-R}$. 
We use $ \V{\xrightarrow{p}} $, $ \V{\xrightarrow{d}}$ to denote
convergence in probability and  convergence in
distribution, respectively, and  $ \mathcal{N}\left(\bm{a},\bm{B}\right)$ denote a random vector normally distributed with mean and covariance matrix equal to $\bm{a},\bm{B} $, respectively.
  For $\bm{A}(C\times R),\bm{B}(C\times C ) $ and $\bm{C}(C \times P)$ being three random matrices, then $  \bm{\Sigma}_{A'BC}$ denotes the probability limit  (when finite) of $ C^{-1} \bm{ A'BC} $ as $C \rightarrow \infty $. For the $C\times R$ random  matrices $\bm{A}_N,\bm{B}_N$ that are functions of $N$ we write $\bm{A}_N \approx\bm{B}_N$ if $\lb\bm{A}_N-\bm{B}_N\rb\xrightarrow{p}0$ when $N\to\infty$. 
 $\sa(\bm{A})$ denotes the sigma-algebra generated by the random matrix $\bm{A}$, $\prob(\cdot)$ and $\E(\cdot)$ indicate the probability of an event and the expectation of a random variable, respectively. In the sequel, $\ka$  denotes a generic, positive constant, which need not to be the same every time we use it.

\section{Model: definitions  and assumptions} \label{def}

Assume that the observed variables obey a linear
regression model with $S$ common observed regressors  ${\bf d }_t =(d_{t1},\dots,d_{tS})'$ and $K$ heterogeneous regressors ${\bf x }_{it}=(\mathrm{x}_{it1} \hdots \mathrm{x}_{itK})'$.   
Following  the  convenient  specification put forward  by \cite{P06}, the
model for the $i$th unit can be expressed, in matrix form, as
 \be \label{eq:hetero}
\y= \d \bm{ \alpha}_{i} +  \x  \bm{\beta _{i}} + {\bf u_i},
 \ee
for an observed $T \times 1 $ vector $ {\bf y }_i = ( \mathrm{y}_{i1}, \dots, \mathrm{y}_{iT})' $,  an observed $T \times S $ matrix ${\bf D } =  ( {\bf d }_{1} \hdots {\bf d }_{T})' $ of  common
regressors, an observed $T \times K $ matrix $ {\bf
X }_i = ( {\bf x }_{i1} \hdots {\bf x }_{iT})' $ of unit-specific regressors, and
an unobserved $T \times 1 $ vector $ {\bf u_i} = ( \mathrm{u}_{i1} \hdots \mathrm{u}_{iT})' $.  In turn,  the innovation  vector satisfies the factor structure:
\begin{equation}\label{eq:uu}
  \ui =
{\bf F } {\bf b }_i + \bm{\varepsilon }_i,\qquad\textrm{with}\quad \Hi:=\E\bm{\varepsilon_{i}
\varepsilon'_{i}},
\end{equation} for an unobserved $ M \times 1 $ vector of factor loadings $ {\bf b }_{i} $, an
unobserved $T \times M $ matrix of common factors $ {\bf F } = (
{\bf f}_{1},...,{\bf f }_{T})' $  and an unobserved $ T \times 1 $
vector of idiosyncratic innovations  $ {\bm \varepsilon }_i=(
\varepsilon_{i1} \hdots \varepsilon_{iT})' $. The unit specific regressors  satisfy:
 \be \label{eq:ref_def}
  {\bf X }_i = {\bf D } {\bf \Delta }_i +
{\bf F } {\bf \Gamma }_i + {\bf V }_i, \ee for an unobserved $
S \times K $ matrix of factor loadings $ {\bf \Delta }_{i} = ( \boldsymbol{ \delta }_{i1} \hdots  \boldsymbol{ \delta }_{iS} )'$ with $\boldsymbol{\delta }_{il}=(\delta_{il1} \hdots \delta_{ilK}   )' $,
 an unobserved $ M \times K $ matrix of factor loadings $ {\bf \Gamma }_{i} = ( \boldsymbol{ \gamma }_{i1} \hdots \boldsymbol{ \gamma }_{iM}  )' $ with $ \boldsymbol{ \gamma }_{il} = ( \gamma_{il1} \hdots \gamma_{ilK}  )' $, and
an unobserved $T \times K $ matrix of idiosyncratic innovations  $ {\bf V }_i=(
\vvv_{i1} \hdots \vvv_{iT})' $ with $ \vvv_{it} = (v_{it1}, \dots v_{itK} )' $.  
 The maintained
assumption here is that $K$, $S$  and $M$ {\em do not vary  } with $T$ and
$N$. Moreover, we do not need to impose {\em any } relationship between them so that, in particular, $M$ can be either {\em smaller, equal or bigger} than $K$.  Although  model (\ref{eq:hetero}) is written as a single regression
across time for a given $i$, we assume that in fact a panel of observations $\{{\bf y}_1 \hdots {\bf y}_N, {\bf X
}_1 \hdots  {\bf X }_N  \}$  is available and fully used within our methodology.


As explained below,  throughout our analysis we always {\em de-mean } the data by $\md $. This allows to avoid making any assumptions on $\bm{ \Delta_i} $. 
   We now present  our assumptions which, thank to the detailed specification of model (\ref{eq:hetero})-(\ref{eq:ref_def}), appear relatively primitive.

\medskip

\begin{assumption}[idiosyncratic innovation $ \varepsilon_{it}  $]\label{ass eps}  The $N\times 1$ vector $\bm{\varepsilon}_{t}= ( {\varepsilon}_{1t}...{\varepsilon}_{Nt})'$ satisfies the following equation
\be \label{eq:11CHL} 
\bm{\varepsilon}_{t}= \bm{R } \bm{ a }_{t},\qquad\qquad \textrm{for}\quad t=1,\dots T,
\ee
where  the $N\times N$ matrix  of constants $\bm{R} = [ r_{ij} ] $ satisfies $ \parallel \!\! \bm{R} \!\!  \parallel_{row} +\parallel \!\!  \bm{R} \!\!  \parallel_{col}  \,   < \infty $, $\inf_i\sum_{j=1}^N |r_{ij}|>\ka$  for some $\ka>0$, 
 and the elements of  the $N \times 1 $ vector $
  \bm{ a_{t}}  =  \left(a_{1t},a_{2t},\dots,a_{Nt}\right)' $ follow a linear process:
\be \label{eq:11} 
a_{it}  =\sum_{s=0}^{\infty} \phi_{is}\eta_{i,t-s}, \quad \sup_i\sum_{s=0}^{\infty}s^2|\phi_{is}|<\infty,\quad \textrm{with}\quad \phi_{i0}=1,
\ee
 where the sequence $\{\eta_{it}\}$ is independent and identically distributed across $i$ and $t$ with $\E \eta_{it}=0 $ and $ \E | \eta_{it} |^6< \infty $. Moreover, 
for every complex number $z\in\mathbb{C}$, 
\be\label{eq:11ap}
\inf_i|\phi_i(z)|>\ka, \quad |z|\leq 1, \qquad \textrm{where}\quad\phi_i(z)=\sum_{s=0}^{\infty} \phi_{is}z^s.
\ee
\end{assumption}
\medskip

\begin{remark}
Assumption \ref{ass eps} is similar to Assumptions 1 and 2 in \cite{PT11} and, with same variations, 
this form of  cross-sectional and time dependence has been adopted  also by  \cite{MW09}, \cite{MW13} and \cite{Onatski2015388}. 
The above assumption turns out to be extremely convenient for establishing the asymptotic distribution of the feasible and unfeasible GLS estimators along the lines of 
Theorem 1 in \cite{RH97}.
\end{remark}

\begin{remark}\label{remcum}  Assumption \ref{ass eps} implies that, for every $ 2 \le h,\ell\le 6$:
$$
\sup_{i_1} \sup_{t_1}   \sum_{i_2 \cdots  i_{\ell}=1}^N  \sum_{t_2 \cdots t_h =1}^T | \cum_h  (\varepsilon_{i_1 t_1}, \varepsilon_{i_2 t_2} \cdots ,   \varepsilon_{i_{\ell}  t_h})| < \infty ,  
$$
where the summands are the cumulants of order $h$ of $\varepsilon_{i_1 t_1}, \varepsilon_{i_2 t_2} \cdots ,   \varepsilon_{i_{\ell} t_h}$.
\end{remark}
\medskip
\begin{remark}\label{unif lower bound}  By \cite{BD91}, Proposition 4.5.3,  (\ref{eq:11ap}) implies that the eigenvalues of the covariance matrices of $\bm{a_j}=(a_{j1},\dots a_{jT})'$ are bounded, and greater than $\ka$ for every $i$. Easy calculations give
$\Hi=  \sum_{j=1}^N r_{ij}^2  \E \bm{a}_j \bm{a}_j'$,
implying that $ \inf_i \lambda_1(\Hi)>\ka $ and $ \sup_i \lambda_T(\Hi)<\infty$.
\end{remark}

\medskip

\begin{assumption}[regressor innovation $\v $]\label{ass V}

 The sequence $\{{ v}_{itk}\} $ have zero mean,  $ \sup_i \sup_k \sup_t  \E |  \!  v_{itk} \!  |^{12} < \infty $ and  they  satisfy, for every $2  \le h,\ell,s\le 14$ and $2\le j\le h$:
$$
  \sup_{k_1 \cdots k_s }  \sup_{i_1} \sup_{t_1}   \sum_{i_2 \cdots  i_{\ell} =1}^N  \sum_{t_2 \cdots t_h =1}^T (1 + t_j^2)  | \cum_h  (v_{i_1 t_1 k_1}, \cdots ,   v_{i_{\ell}  t_h,k_s})| \le\infty .
$$
Moreover,  $\inf_i\lambda_1\left(\E\bm{\mathrm{v}}_{it}\bm{\mathrm{v}}'_{it}\right)>\ka$, where $\bm{\mathrm{v}'_{it}}=(v_{1it},\dots,v_{Kit})$. 
\end{assumption}
\begin{remark}\label{inverseV}
Assumption \ref{ass V} implies that $T^{-1}\v'\v\xrightarrow{p}T^{-1}\sum_{t=1}^T\E\bm{\mathrm{v}}_{it}\bm{\mathrm{v}}'_{it}=: \SO_{\v'\v}$, with $\inf_i \lambda_1\left(\SO_{\v'\v}\right)>\ka$. It follows that $\sup_i\lb\left(\v'\v/T\right)^{-1}\rb=O_p(1)$.
\end{remark}
\begin{remark}
The $ v_{itk} $ can be interpreted as the high-rank components of the regressors $\mathrm{x}_{itk}$, adopting  \cite{MW13} terminology, as opposed to the $ \d $ which represent the low-rank components. For instance, if for each $k$ the  $ v_{itk} $ are generated as $\varepsilon_{it}$ in Assumption \ref{ass eps}, one obtains $\bm{V_k}=\left[v_{kti}\right]_{t,i=1}^{T,N}= O_p(  \sqrt{ \max ( N,T) } ) $
 for every $k$  (see the discussion in \cite{MW13}, Appendix 1 and \cite{Onatski2015388}).   In contrast, $ (  \sum_{i=1}^N  \parallel \!\! \d \!\!  \parallel^2 )^{1 \over 2} =
 (  N  \parallel \!\! \d \!\!  \parallel^2 )^{1 \over 2} =    O_p(\sqrt{(NT} )   $.
\end{remark}

\medskip

\begin{assumption}[latent  and observed  factors]\label{ass factors}
Set $\z = ( \d , \f ) = [ \mathrm{z}_{ti} ]$
for $ 1 \le t \le T $ and $ 1 \le j \le M+S < \infty  $.  
Then, 
%
\be\label{eq:maggio1}
\frac{\z'\z}{T}\xrightarrow{p}\bm{\Sigma_{\z'\z}},\quad\textrm{with}\quad \bm{\Sigma_{\z'\z}}:=\left[
\begin{array}{cc}
\bm{\Sigma}_{\d'\d} & \bm{\Sigma}_{\d'\d}\\
\bm{\Sigma}_{\f'\d} & \bm{\Sigma}_{\f'\f}
\end{array}
\right]>0,
\ee
and $ \bm{\Sigma}_{\d' \d}>0 $,  $ \bm{\Sigma}_{\f' \f}>0 $.  Moreover, we assume
$
\E\lb\bm{\mathrm{z}_{t}}\rb^4<\infty.
$
where  $\bm{\mathrm{z}_{t}}=(\mathrm{z}_{t,1},\dots,\mathrm{z}_{t,M+S})'$
\end{assumption}
\medskip
\begin{remark}\label{fpd}
Equation (\ref{eq:maggio1}) implies that $\f'\md\f>0$ (see \cite{ltk}, Result (4), Section 9.11.2).
\end{remark}
\medskip
\begin{remark}
Although not strictly necessary, we are ruling out  trending behaviours in  $ \d $ and $ \f $. However, $\d$ and $\f $ are allowed to be cross-correlated as well as serially correlated. although not perfectly collinear.  For instance, the joint dynamics of $\z $ could  be described by  a  multivariate stationary ARMA.
\end{remark}
\medskip

\begin{assumption}[regressors]\label{ass X} For every $i$, the matrix of unit specific regressors $\x $  and the matrix of common regressors $\d $ have full row rank. Moreover, setting $ \zi := [ \d , \x ] $,
 $ N^{-1}\sum_{i=1}^N \mzi \ui  \ui' \mzi$ has always rank $T-S$ for  sufficiently large $N$ and $T$. 
\end{assumption}
\begin{remark}\label{remark ass X}
Assumption \ref{ass X} requires enough cross section heterogeneity of the $\x$'s across individuals. Simple manipulations show that
$$
\db'\left(\frac{1}{N}\sum_{i=1}^N \mzi \ui  \ui' \mzi\right)\db=\frac{1}{N}\sum_{i=1}^N \bm{M_{(\db'\x)}} \db'\ui  \ui'\db \bm{M_{(\db'\x)}}>0 , 
$$
implying that the empirical covariance matrix $\sh$ defined in (\ref{eq:charlie2}) is invertible.
\end{remark}
\medskip

\begin{assumption}[loadings $\b$ and $ {\bf \Gamma }_i  $]\label{ass loading}
$\Gi$ and $\b$ are non-random such that $\lb \Gi \rb<\infty$ and $\lb\b\rb<\infty$ and, for $N> M$,
\be
\B:={1 \over N } \sum_{i=1}^N \b \b' >0 . \label{eq:seconmomb}
\ee
and
\ber
\an&:=&\frac{1}{N}\sum_{i=1}^N\left(\I_M-\Gi\psinv\Gi'\frac{\f\md\f}{T}\right)
\b\b'
\left(\I_M-\frac{\f'\md\f}{T}\Gi\psinv\Gi'\right), \label{eq:seconmomb2}
\eer
is positive definite with
\be\label{eq:trenta}
\bm{\Psi}_i:=\Gi'\frac{\f'\md\f}{T}\Gi+\SO_{\v'\v}.
\ee

\end{assumption}
\begin{remark}
 Condition (\ref{eq:seconmomb}) implies that the factor structure (\ref{eq:uu}) is strong, as defined in \cite{PT11}.  This is commonly assumed in the literature.
The technical condition (\ref{eq:seconmomb2}) is used in the proof of Theorem \ref{Theorem_GLS}. As shown  in Section \ref{lemmata1} in the Supp. Material,  Lemma \ref{lemmata1}\ref{zerone}, the matrices in brackets are of full rank. Hence, (\ref{eq:seconmomb2}) will be satisfied when there is enough cross-sectional heterogeneity in the sample. Finally, our results will not change if random loadings are assumed (and cross-sectionally independent from other parameters). 
\end{remark}
\begin{assumption}[independence]\label{independence} The $f_{mt}, v_{ksi},\varepsilon_{uj} $ are mutually independent for every $i,j$ and $t,s,u$ and $m,k$.
\end{assumption}
\begin{remark}
We are not allowing for any correlation between any entries of $ \epj$ and  $ \x $. This rules out  the possibility that $ \x $ contains a weakly exogenous component, and in this respect we are similar to \cite{P06} and \cite{B09}.   The implications from generalizing this assumption, in particular when considering dynamic panels where one element of $\x$ represents the lagged dependent variable, are discussed in Section~\ref{dynamic}. 
\end{remark}

\begin{remark}\label{mreg}
Assumptions \ref{ass V}, \ref{ass factors} and \ref{independence} and Remark \ref{fpd}  imply that $T^{-1}\x'\x\xrightarrow{p}\bm{\Sigma}_{\x'\x}>0$ and   $T^{-1}\x'\md\x\xrightarrow{p}\bm{\Sigma}_{\x'\md\x}>0$,  for every $i$.  Hence $\lb(T^{-1}\x'\x)^{-1}\rb=O_p(1)$ and  $\lb(T^{-1}\x'\md\x)^{-1}\rb=O_p(1)$ for $T$ large enough. 
\end{remark}
\section{Estimators: definitions and asymptotics}  \label{res}

Our main objective is to estimate the heterogeneous slope coefficients $ \bm{ \beta }_{i}$ of (\ref{eq:hetero}). However,  estimation of the coefficients  $ {\bm \alpha }_{i} $ of the common regressors is also discussed in Section \ref{common}. Hence, without loss of generality,  we premultiply both sides of 
(\ref{eq:hetero}) by the projection matrix $\md$, obtaining
\ber
\md  \y=  \md \x\bm{\beta_i}+\md \ui . \label{eq:mm11} 
\eer
We consider three different estimators for the  parameters $ {\bm \beta }_{i}$, namely the OLS, the unfeasible  and  feasible GLS estimators.
Regarding the OLS estimator for $ \bm{ \beta _{i} }$:
\begin{equation}\label{eq:charlie1}
 \bm{\hat{ \beta }_i^{OLS}} := ( \x' \md \x )^{-1} \x \md \y.
\end{equation}

We now  consider   GLS estimation.   
Define the   cross-sectional averages of the individual covariance   matrices of the $\md \ui$, conditional on sigma algebra generated by $ \bm{Z}$, defined in Assumption \ref{ass factors}:
\be\label{eq:deffeb}
\md\sth\md,\;  \mbox{  setting  }\;\sth:= \f\B \f'+ \Hn,\quad\mbox{with}\quad
\Hn:=\frac{1}{N}\sum_{i=1}^N\Hi.
\ee
We assume  without loss of generality  that $\bf{d_t}$ includes an element equal to one, i.e. we allow for an intercept term,  leading to  $ \E \ui =0 $. 
The  presence of $\md $  could cause some complications in the definition of the GLS estimator since  the $\md \ui $ have a singular covariance matrix. We  show how to solve this issue and obtain  a model with a  non-singular residual covariance matrix that can be used to construct the  GLS estimator.

Proceeding along the lines of \cite{MN88}, Section 11 in Chapter 13, one gets the  UGLS  estimator  when the  residual covariance matrix to model (\ref{eq:mm11}) is singular:
\begin{equation}\label{eq:11giugno}
\bm{\hat{\beta}}_i^{UGLS}:=\left(\x'\md \left(\md \sth \md\right)^+ \md  \x\right)^{-1}\x'
\md  (\md\sth\md)^+ \md \y.
\end{equation}
By Lemma \ref{fact6410} in the Supp. Material
$$ (\md\sthi \md)^+ =\db\left(\db'\sth \db\right)^{-1}\db' , $$ where $\db$ is the $T\times (T\!-\!S)$ full rank matrix such that
 $ \md = \db \db' $ where $\db'\db =\I_{T-S} $. Assumption \ref{ass X} and display (\ref{eq:deffeb}) imply that the inverse in (\ref{eq:11giugno}) is well defined for  any $T$. By substitution, setting  for simplicity 
\begin{equation} \label{eq:storti}
\yy=\db' \y,\quad \xx= \db'\x,\quad 
\ei=\db' \epi,\quad   \ff=\db' \f,\quad    \uu=  \db'  \ui, 
\end{equation}
  one obtains
 \ber
\bm{\hat{\beta}}_i^{UGLS} && 
 = \left(\x'\db
\left(\db'\sthi\db\right)^{-1}
\db'\x'\right)^{-1}
\x'\db\left(\db'\sthi \db\right)^{-1}\db\y \nonumber\\
&& =  \left(\xx'  \ssthi \xx \right)^{-1}
\xx' \ssthi \yy,\label{eq:UGLSdef}
\eer
where we set  
$
\ssth= \db'\sth\db.
$
This means that the UGLS has now the more  conventional expression of the  generalized least squares  for the model
\ber\label{eq:modstorto}
\yy=\xx\bm{\beta}_i+\uu,\qquad\textrm{with}\qquad \uu=\ff\b+\uu,
\eer
 without involving Moore-Penrose matrices.
 Pre-multiplying the data by  $ \db'  $ reduces the sample size by $S$ units since now  the $ \yy $ and the $ \xx $ have $T-S$ rows. Likewise, considering again model (\ref{eq:modstorto}), an equivalent representation of (\ref{eq:charlie1}) is $  \bm{\hat{\beta }_i^{OLS}} = (\xx ' \xx )^{-1} \xx ' \yy   $.

Along the same lines,  our proposed   {\em feasible } GLS estimator is given by 
\begin{equation}\label{eq:bgls}
\bm{\hat{ \beta }_i^{GLS}} := \left(\xx'  \shi \xx \right)^{-1}
\xx' \shi \yy,
\end{equation}
where
 \begin{equation}\label{eq:charlie2}
 \sh:=
N^{-1} \sum_{i=1}^N \huu\huu'   , \,\,\, \mbox{ with } \huu:=
 \yy-\xx \bm{\hat{\beta}}_i^{OLS} = \mxx \uu,
\end{equation}
for $N$ and $T$  large enough, by Assumption~\ref{ass X} and Remark~\ref{remark ass X}, $ \sh $
 has full rank.
The following two theorems  enunciates the asymptotic distribution of the OLS,  UGLS and GLS estimators, respectively. The  proofs are given in  Appendixes \ref{thuno} and \ref{proofT2}, respectively. Further details are provided in the Supp. Material.

\begin{theorem}  \label{Theorem_OLS} 
 When  Assumptions \ref{ass eps}, \ref{ass V}, \ref{ass factors}, \ref{ass X}, \ref{ass loading} and \ref{independence} hold, for any $N$ and as $T\to\infty$

\noindent
(i)   (OLS estimator)
$$
T^{1 \over 2 } ( \bm{\hat{ \beta }}_i^{OLS} - \bm{ \beta }_{i} - {\bf
\tau }_i^{ OLS } ) \xrightarrow{d}
\mathcal{N}\left(\bm{0},\bm{\Sigma_i}\right),\quad
$$
where 
\begin{equation}\label{eq:bias17}
  {\bf \tau }_i^{OLS} :=  \sxi  \sxf  \b , 
  \end{equation}
is the bias term, and the asymptotic covariance matrix equals
\begin{equation}\label{eq:asycm17}
\bm{\Sigma_i}
 :=  \sxi \bm{\Sigma_{  \xx '  \db'\Hi\db \xx  }}   \sxi ,
\end{equation}
setting
\ber
 \bm{  \Sigma }_{\xx' \ff }  &:=& \Gi' \Big( \bm{\Sigma }_{\f'\f} - \bm{\Sigma }_{\f'\d} \bm{\Sigma }_{\d'\d}^{-1} \bm{\Sigma }_{\d'\f} \Big)  , \label{eq:sigmaXF}  \\
 \bm{  \Sigma }_{\xx' \xx }  &:=&    {\bf \Gamma}_i'  \Big( \bm{\Sigma }_{\f'\f} - \bm{\Sigma }_{\f'\d} \bm{\Sigma }_{\d'\d}^{-1} \bm{\Sigma }_{\d'\f}   \Big)  {\bf \Gamma}_i + {\bf \Sigma }_{ \v'\v} , \label{eq:sigmaXX}  \\
 \bm{ \Sigma }_{\xx'  \db'\Hi\db \xx }   &:=&  \Gi ' ( -  \bm{ \Sigma }_{\f'\d} \bm{ \Sigma }_{\d'\d}^{-1} , \bm{  I }_m  )  \bm{ \Sigma }_{\z' \Hi \z}  ( -  \bm{ \Sigma }_{\f'\d}  \bm{ \Sigma }_{\d'\d}^{-1} , \bm{ I }_m )' \Gi \label{eq:sigmaXHX}\\
 &&+  \bm{ \Sigma }_{\v' \Hi \v}  .  \nonumber 
 \eer

\medskip

 \noindent (ii)  (UGLS estimator)
 \be\label{eq:theorem1part2}
T^{1 \over 2 } \left( \bm{\hat{ \beta }_i^{UGLS}} - \bm{ \beta }_{i}   \right) \xrightarrow{d}
\mathcal{N}\left(\bm{0},\SO^{\star}_N\right),
\ee
with $\SO^{\star}_N:=\SO_{\v'\Hninv\v}^{-1}\SO_{\v'\Hninv\Hi\Hninv\v}\SO_{\v'\Hninv\v}^{-1}$.

 \end{theorem}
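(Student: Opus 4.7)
My proof plan proceeds by treating the two parts in parallel, exploiting that in both cases the common algebraic skeleton is
$\hat{\bm{\beta}}_i - \bm{\beta}_i = (\xx'\bm{W}\xx)^{-1}\xx'\bm{W}\uu$
with $\uu=\ff\b+\ei$, where $\bm{W}=\bm{I}$ for OLS and $\bm{W}=\ssthi$ for UGLS. Throughout I substitute the reduced form $\xx=\db'\x=\ff\bm{\Gamma}_i+\bm{\mathcal{V}}_i$ (which follows from (\ref{eq:ref_def}) since $\db'\d=\bm{0}$) and work on the rotated model (\ref{eq:modstorto}).

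For Part (i), I would first establish the probability limits of the relevant sample moments using Assumptions \ref{ass eps}--\ref{independence} and the standard LLN for the linear processes in Assumption \ref{ass eps}: namely $\xx'\xx/T\xrightarrow{p}\bm{\Sigma}_{\xx'\xx}$ and $\xx'\ff/T\xrightarrow{p}\bm{\Gamma}_i'\bm{\Sigma}_{\f'\md\f/T}$, where the second limit uses $\ff'\ff/T=\f'\md\f/T\xrightarrow{p}\bm{\Sigma}_{\f'\f}-\bm{\Sigma}_{\f'\d}\bm{\Sigma}_{\d'\d}^{-1}\bm{\Sigma}_{\d'\f}$, together with $\bm{\mathcal{V}}_i'\ff/T\xrightarrow{p}\bm{0}$ by Assumption \ref{independence}. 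This already produces $\hat{\bm{\beta}}_i^{OLS}-\bm{\beta}_i\xrightarrow{p}\tau_i^{OLS}$. For the $\sqrt{T}$-CLT I condition on $\sa(\z)$; conditionally $\ff$ is deterministic and only $\bm{\mathcal{V}}_i,\ei$ carry randomness. The dominant term $(\xx'\xx/T)^{-1}\xx'\ei/\sqrt{T}$ is handled by a martingale/linear-process CLT in the spirit of Theorem 1 of \cite{RH97}, yielding the sandwich limit $\bm{\Sigma}_{\xx'\xx}^{-1}\bm{\Sigma}_{\xx'\db'\Hi\db\xx}\bm{\Sigma}_{\xx'\xx}^{-1}$ via the conditional variance formula. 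The residual piece $\sqrt{T}\bigl[(\xx'\xx/T)^{-1}\xx'\ff\b/T-\tau_i^{OLS}\bigr]$ has leading contributions involving $\bm{\mathcal{V}}_i'\ff/\sqrt{T}$ (centered, $O_p(1)$) which I would show combines into the same middle-matrix expression (\ref{eq:sigmaXHX}), or I would use the linearization of matrix inverses to show the relevant cross-covariances collapse to the stated $\bm{\Sigma}_i$; the deterministic limit of the conditional variance then delivers unconditional normality.

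For Part (ii), the heart of the argument is the asymptotic-orthogonality claim $\|\ssthi\ff\|^{2}=O(T^{-1})$. I would prove it via the Woodbury identity applied to $\ssth=\ff\B\ff'+\db'\Hn\db$:
\begin{equation*}
\ssthi\ff=(\db'\Hn\db)^{-1}\ff\bigl(\B^{-1}+\ff'(\db'\Hn\db)^{-1}\ff\bigr)^{-1}\B^{-1}.
\end{equation*}
Remark \ref{unif lower bound} bounds the eigenvalues of $\db'\Hn\db$ away from $0$ and $\infty$, and Remark \ref{fpd} implies $\ff'(\db'\Hn\db)^{-1}\ff$ grows linearly in $T$, so the central bracket is $O(T^{-1})$. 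Tracing yields $\|\ssthi\ff\|^{2}=O(T^{-1})$ (and $\B$ is positive definite by Assumption \ref{ass loading}). Consequently $\xx'\ssthi\ff\b = \bm{\Gamma}_i'\ff'\ssthi\ff\b + \bm{\mathcal{V}}_i'\ssthi\ff\b = O_p(1)$, so $\sqrt{T}$ times the bias term $(\xx'\ssthi\xx/T)^{-1}\xx'\ssthi\ff\b/T$ vanishes and the UGLS is bias-free.

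It remains to identify the limits of the remaining pieces. I would substitute the Woodbury expansion $\ssthi=(\db'\Hn\db)^{-1}-(\db'\Hn\db)^{-1}\ff\bigl(\B^{-1}+\ff'(\db'\Hn\db)^{-1}\ff\bigr)^{-1}\ff'(\db'\Hn\db)^{-1}$ directly into $\xx'\ssthi\xx/T$ and $\xx'\ssthi\ei/\sqrt{T}$, and show the rank-$M$ correction contributes $O_p(T^{-1})$ (again by the $O(T^{-1})$ spectral bracket), so that $\xx'\ssthi\xx/T\xrightarrow{p}\SO_{\v'\Hninv\v}$ and, by a conditional CLT for $\ei=\db'\bm{\varepsilon}_i$ applied under Assumption \ref{ass eps}, $\xx'\ssthi\ei/\sqrt{T}\xrightarrow{d}\mathcal{N}(\bm{0},\SO_{\v'\Hninv\Hi\Hninv\v})$. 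Slutsky then produces the claimed sandwich $\SO^{\star}_N$. The main obstacle I foresee is not any single step but the bookkeeping needed to show that after the Woodbury decomposition every cross-term involving $\ff'(\db'\Hn\db)^{-1}\bm{\mathcal{V}}_i$ is truly $o_p(1)$ uniformly; this is where the independence of Assumption \ref{independence}, the cumulant conditions of Assumptions \ref{ass eps}--\ref{ass V}, and the linear-process CLT machinery of \cite{RH97} have to be applied with care.
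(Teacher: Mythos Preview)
Your overall architecture matches the paper's: the same algebraic decomposition, the Woodbury identity for $\ssthi\ff$ (packaged in the paper as Lemma~A.1), and the reduction of $\xx'\ssthi\xx/T$ and $\xx'\ssthi\ei/\sqrt{T}$ to quantities involving $(\db'\Hn\db)^{-1}$ and then $\Hninv$ (Propositions~B.5--B.6) are all as in the paper.

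There is, however, a real gap in Part~(ii). After reducing the UGLS score to $T^{-1/2}\v'\Hninv\epi$, you invoke a ``conditional CLT'' and locate the main difficulty in cross-term bookkeeping. The paper does \emph{not} apply a CLT directly to this quantity. The weight $\Hninv$ is $T\times T$ with dimension growing in $T$, and the Robinson--Hidalgo argument you cite requires the weight matrix to have absolutely summable rows so that $\max_u|d_{iju}|\to 0$ in the Lindeberg verification. The paper therefore inserts an additional approximation step (Proposition~B.7): $\Hninv$ is replaced by the symmetric \emph{circulant} matrix $\Hbar=2\pi\P\bm{G}_{\xi_N}^{-1}\P'$, and Lemma~A.3 (a Zygmund-type bound on Fourier coefficients of functions with bounded second derivative) yields $\|\Hbar\|_{row}<\infty$. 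Only then do Propositions~B.8--B.10 run the Robinson--Hidalgo truncation $\bm{w}^{(\xii)}_{ij}=\bm{w}^{(\xii)}_{ij0}+\bm{w}^{(\xii)}_{ij1}$ and Bernstein's lemma. This circulant approximation is the central technical device of the UGLS proof, and your proposal neither contains it nor an equivalent; without it the CLT step for the weighted sum is not justified.

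On Part~(i): the paper in fact centers at the \emph{random} bias $(\xx'\xx)^{-1}\xx'\ff\b$ (its display~(B.2)) and proves the CLT for $(\xx'\xx/T)^{-1}T^{-1/2}\xx'\ei$ via the decomposition $T^{-1/2}\xx'\ei=T^{-1/2}\Gi'\f'\md\epi+T^{-1/2}\v'\md\epi$, handling each piece by a separate Robinson--Hidalgo argument; it only shows $(\xx'\xx)^{-1}\xx'\ff\b\xrightarrow{p}\tau_i^{OLS}$ as a side remark. Your attempt to additionally control $\sqrt{T}\bigl[(\xx'\xx/T)^{-1}(\xx'\ff/T)\b-\tau_i^{OLS}\bigr]$ goes beyond what the paper does, but your assertion that the resulting fluctuations ``combine into the same middle-matrix expression'' is not substantiated---contributions from $\vv'\ff/\sqrt{T}$ and from $\sqrt{T}[\ff'\ff/T-\bm{\Sigma}_{\f'\md\f}]$ would in general add variance terms not present in the stated $\bm{\Sigma}_i$.
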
  
\medskip

\begin{remark}
  The OLS estimator is affected by a {\em first-order bias}. It  will be asymptotically unbiased if either $ \b = \bm{0} $ or $ \Gi = \bm{0} $ or, alternatively, for diagonal $\bm{ \Sigma }_{\ff' \ff } $ as well as  with $ \Gi $  and $ \b $  satisfying $ \gamma_{il} b_{il} = 0 $ for every  $l$ and $i$. Essentially, this means that the entries of $ \Gi $ are non zero whenever the corresponding entries of $ \b $ are zero, for the same row $l$, and viceversa. More in general, 
no bias arises if  ${\bf b}_i $ belongs to the null space of  $  \bm{  \Sigma }_{\ff' \ff } \Gi  $, assuming  $M>K$.
\end{remark}

\begin{remark}
  One can assume without loss of generality that the same latent factors $\f $ enter into $ \bm{X}_i $ and  $\ui $. In fact, assume  $ \ui =  \g  \bm{b}_i + \bm{\varepsilon }_i  $ with the rows of $ \g $ correlated, but not identical to the rows of $ \f $. Then the bias  takes the form
\[
  {\bf \tau }_i^{OLS} =  \sxi  \bm{\Sigma }_{\f' \projp_F \g}  \b ,
\]
exploiting the decomposition $ \g = \projp_F\g+ \projm_F\g$. Hence the bias will only be non-zero due to the portion of $\g$ correlated with $\f$. The same consideration applies to the GLS estimator. In Section~\ref{diff-fac} we explore more in details the implications of having different, yet correlated, factor structures for regressors and innovations. 
\end{remark}

\bigskip


\begin{remark}
The UGLS estimator is asymptotically unbiased, consistent and asymptotically normal as $T \rightarrow \infty $. Moreover,   the  UGLS estimator can be  efficient in the GLS sense. In particular, when 
the $ \bm{\varepsilon }_i  $ are not (unconditionally) heteroskedastic, namely   $  \Hi = {\bm \Xi } $, then the UGLS  asymptotic covariance matrix does  not have the  sandwich form,  unlike for the OLS estimator. One can define the UGLS differently, for instance replacing $\sth $   with  $\f\B \f'+ \Hi $ in (\ref{eq:11giugno}). However, our  definition of the UGLS estimator 
makes it closer to the population counterpart to the class of feasible GLS estimators  here studied.
\end{remark}

We now present the main result of the paper.

\begin{theorem}  \label{Theorem_GLS} When  Assumptions \ref{ass eps}, \ref{ass V}, \ref{ass factors}, \ref{ass X}, \ref{ass loading} and \ref{independence}  hold,  
as $ 1 / T   +  T /  N  \rightarrow 0 $, 
\be\label{eq:theorem2A}
  \bm{\hat{ \beta }_i^{GLS}} \xrightarrow{p}\bm{\beta _{i}},\qquad    
\ee
and, as $(1/T)+(T^3/N^2)\to 0$, then
\be\label{eq:theorem2B}
\left(\v'\MAIN\Hi\MAIN\v\right)^{-\frac{1}{2}}
\left(\v'\MAIN\v\right)
 ( \bm{\hat{ \beta }_i^{GLS}} - \bm{ \beta _{i}}   ) \xrightarrow{d}
\mathcal{N}\left(\bm{0},\I_{\bm{K}}\right)  ,
\ee
%
%
%
%
%
where
\be\label{eq:MMM2}
\cn:= \frac{1}{N}\sum_{i=1}^N \left(\Hi+\bm{\Theta_i}\right)  \mbox{with}\;
\bm{\Theta_i}:=\E\left[\v\sxi \Gi\sf\b\b'\sf\Gi'\sxi \v'\right],
\ee
with $\sx$ defined  in  (\ref{eq:sigmaXX}).
\end{theorem}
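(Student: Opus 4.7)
The plan is to reduce the feasible GLS error to a computable leading term involving only the high-rank innovation component $\v$ and the idiosyncratic error $\bm{\varepsilon}_i$, and then invoke a CLT of Robinson–Hidalgo type. The starting decomposition is standard: pre-multiplying (\ref{eq:mm11}) by $\db'$ and using $\xx = \ff\Gi + \vv$, $\uu = \ff\b + \ei$, we write
\be\label{errdec}
\bm{\hat{\beta}_i^{GLS}} - \bm{\beta}_i = \bigl(\xx'\shi\xx\bigr)^{-1}\xx'\shi\uu.
\ee
The task is to replace $\shi$ by a tractable pseudo-inverse and then to show that all factor-related pieces in the numerator and denominator of (\ref{errdec}) drop out at the relevant asymptotic rates.

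First I would compute the pseudo-limit of $\sh$. Since $\huu = \mxx\uu$ and the OLS bias is $\sxi\Gi'\sf\b$, expansion of $N^{-1}\sum_i \huu\huu'$ produces a factor component $\ff\bar{B}\ff'$ (with a modified loading-average $\bar{B}$ that accounts for the bias) plus an idiosyncratic average that, after taking expectations in the cross section, is exactly $\cn$ as defined in (\ref{eq:MMM2})—the $\Hi$ part comes from $\mxx\ei\ei'\mxx$ and the $\bm{\Theta_i}$ part from $\xx\sxi\Gi'\sf\b\b'\sf\Gi'\sxi\xx'$. The central lemmas announced in Appendix \ref{centralemmas} should give $\sh = \ff\bar{B}\ff' + \cn + \bm{R}_{NT}$ with $\bm{R}_{NT}$ controlled uniformly in the norms that matter; the speed at which this remainder is negligible is the source of the rate condition $T^3/N^2 \to 0$, because the feasible GLS inverts $\sh$ and a perturbation of order $O_p(T/\sqrt{N})$ translates, once inverted and sandwiched against $\xx$, into an error of order $O_p(T^{3/2}/N)$.

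Next I would exploit the asymptotic orthogonality. Writing $\bcn := \ff\bar{B}\ff' + \cn$ and applying Woodbury,
\be\label{wood}
\bcn^{-1}\ff = \cni\ff\bigl(\bar{B}^{-1} + \ff'\cni\ff\bigr)^{-1}\bar{B}^{-1},
\ee
and since Remark~\ref{unif lower bound} gives $\ff'\cni\ff = O_p(T)$, the right-hand side of (\ref{wood}) has Frobenius norm $O_p(T^{-1/2})$. This is the $O(T^{-1})$ squared-norm orthogonality emphasized in the introduction. Feeding this into (\ref{errdec}) kills every term that multiplies $\ff$: in the numerator $\xx'\shi\uu$, the pieces $\Gi'\ff'\bcn^{-1}\ff\b$, $\Gi'\ff'\bcn^{-1}\ei$, and $\vv'\bcn^{-1}\ff\b$ are each $O_p(T^{1/2})$ or smaller, whereas the retained piece $\vv'\bcn^{-1}\ei = \vv'\cni\ei + o_p(T^{1/2})$ is exactly of order $T^{1/2}$; and $\xx'\bcn^{-1}\xx = \vv'\cni\vv + O_p(1)$. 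Analogous but looser bounds, using only $T/N \to 0$, yield consistency (\ref{eq:theorem2A}) via $T^{-1}\xx'\shi\uu = o_p(1)$.

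The final step is the CLT for $\vv'\cni\ei$. By construction $\cn$ is non-random (a cross-sectional expectation) and $\vv$ is independent of $\ei$ by Assumption~\ref{independence}, so conditional on the filtration generated by $\bm{V}_i$ and $\bm{Z}$ the quantity $(\v'\cni\Hi\cni\v)^{-1/2}\vv'\cni\ei$ is asymptotically standard normal; this is where Assumption~\ref{ass eps} plus Remark~\ref{remcum} feed into Theorem~1 of \cite{RH97}, providing a CLT for weighted quadratic forms in the linear process $\ei$ with weights given by the rows of $\cni\vv$. Combined with $(\v'\cni\v)^{-1}$ in front, (\ref{eq:theorem2B}) follows. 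The main obstacle is undoubtedly the second step: bounding $\shi - \bcn^{-1}$ in a norm strong enough to preserve the $O_p(T^{-1/2})$ orthogonality when sandwiched against $\ff$, because generic spectral perturbation bounds are too crude and one must exploit the rank-$M$ structure of the factor component together with Assumption~\ref{ass loading}'s non-degeneracy of $\an$ to keep $\bar{B}^{-1} + \ff'\cni\ff$ well-invertible with probability approaching one.
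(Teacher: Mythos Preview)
Your strategy is the paper's strategy: replace $\shi$ by a deterministic surrogate of the form $\ff\bar B\ff'+\cn$, use Woodbury to get the $O_p(T^{-1/2})$ orthogonality of the surrogate's inverse against $\ff$, reduce $\xx'\shi\uu$ to $\v'\cni\epi$, and then appeal to a Robinson--Hidalgo CLT. Two of the steps, however, need different machinery than you describe.

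\textbf{The feasibility error.} You correctly diagnose that bounding $\|\shi-\bcn^{-1}\|_{sp}$ is too crude, but your proposed fix---keeping $\bar B^{-1}+\ff'\cni\ff$ well-invertible via Assumption~\ref{ass loading}---solves a different problem: it makes the orthogonality lemma apply to $\bcn^{-1}$, not to $\shi$. What the paper actually does is avoid norm convergence of $\shi$ altogether. It uses the exact second-order identity
\[
\shi=\SSI-\SSI(\sh-\SS)\SSI+\SSI(\sh-\SS)\shi(\sh-\SS)\SSI
\]
(Lemma~\ref{approx inv}), with an intermediate matrix $\SS$ that already has the factor$+$Toeplitz form and satisfies the hypotheses of Lemma~\ref{PZ}. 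Then $\xx'(\shi-\SSI)\uu$ is a sum of quadratic forms in which $\sh-\SS$ is \emph{sandwiched} between copies of $\SSI$, and each such form is bounded \emph{directly} (not via operator norms) through lengthy cumulant calculations exploiting Remark~\ref{remcum} and Assumption~\ref{ass V}. This is where the $O_p(T/\sqrt N)$ and $O_p(T^2/N)$ rates come from, and hence $T^3/N^2\to0$. The rank-$M$ structure alone does not deliver this; the point is that the perturbation lives in a sandwich whose outer factors already enjoy the orthogonality.

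\textbf{The CLT for $\v'\cni\epi$.} Robinson--Hidalgo's Theorem~1 requires the weighting coefficients to be absolutely summable, i.e.\ $\|\cni\|_{row}<\infty$. This is not obvious for the inverse of a general Toeplitz-type matrix $\cn$. The paper handles it by first approximating $\cn$ by a symmetric circulant matrix, inverting that (Proposition~\ref{mamma}), and invoking Lemma~\ref{paolo} on the decay of Fourier coefficients to get bounded row norm of the approximating inverse; the replacement error is then shown to be $O_p(1)$ and hence negligible after division by $\sqrt T$. Without this step the application of \cite{RH97} is not justified.

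Minor point: your $\bcn$ conflicts with the paper's notation (where $\bcn$ is a different intermediate involving sample rather than population moments), and the paper in fact uses \emph{two} surrogate matrices, $\SS$ and $\MA$, precisely because the passage from sample to population $\cn$ requires its own argument (Lemma~\ref{mmttt}).
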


\begin{remark}
 The GLS estimator is asymptotically unbiased, consistent and asymptotically normal as both $N,T \rightarrow \infty $ such that $ T^3/N^2 \rightarrow 0 $.  The  feasible GLS estimator is not efficient in  general. A multi-step generalization achieves substantial efficiency gains, see Section~\ref{various}.
\end{remark}

\section{Common regressors} \label{common}

We now  consider estimation of the coefficients  $ \bm{ \alpha }_{i} $ to the common regressors  $ \d  $ in model (\ref{eq:hetero}). A natural generalization of the GLS estimator  would be
\[
\Big(   \begin{array}{c}  \tilde{\bm \alpha }_i^{GLS}  \\  \tilde{\bm \beta }_i^{GLS}  \end{array}  \Big) := \left(\zi'  \tilde{\bm{S}}^{+}_N\zi \right)^{+}
\zi ' \tilde{\bm{S}}^{+}_N \y,
\]
where $ \zi$ has been defined in Assumption \ref{ass X} and 
\be\label{eq:gidio}
 \tilde{\bm{S}}_N = N^{-1} \sum_{i=1}^N \hui\hui',\qquad  
  \hui:=  
  \y - \d \hat{\bm{ \alpha }}_i^{OLS} -  \x \hat{\bm{ \beta }}_i^{OLS} ,
\ee
that is $\hui$ are the OLS residuals, for  $ (  \hat{\bm{ \alpha }}_i^{OLS'}  ,  \hat{\bm{ \beta }}_i^{OLS'}  )':= (\zi' \zi )^{-1} \zi'  \y$.
 However, we show in Theorem \ref{TI1} (Supp. Material, Appendix~\ref{app2})  that $  \tilde{\bm{ \alpha }}_i^{GLS} =  {\bm 0}_M  $ and $   \tilde{\bm{ \beta }}_i^{GLS} =    \hat{\bm{ \beta }}_i^{GLS} $ due to a cancellation that occurs as a consequence of $ \d $ being common across units. 
If the joint distribution for the estimators of $   ( {\bm{ \alpha }}_{i}' ,  {\bm{ \beta }}_{i}' )' $  is not required, one can estimate the   $ \bm{ \alpha }_{i} $
 as the projection of
 $ \y- \x \hat{\bm{ \beta }}_i^{GLS} $ on $ \d $ yielding
\be\label{eq:gidio2}
\widetilde{\bm  \alpha }_i := (\d' \d )^{-1} \d' (  \y - \x \hat{\bm{ \beta }}_i^{GLS}  ) .
\ee
Using our theory, its asymptotic distribution follows (see Theorem \ref{TI2} ins the Spp. Material for further details).
 Note that the additional assumption $ \f ' \d = {\bm  0 } $ is required. For example, if we are interested in a model with  an intercept term, heterogenous across units, 
 such as $ \d  {\bm \alpha }_{i} = \bm{ \iota }_T \alpha_{i1} + \d_2 \bm{ \alpha }_{i2} $, with $ \bm{ \alpha }_{i} = (\alpha_{i1} ,\bm{ \alpha }_{i2} ')' , \d = ( \bm{ \iota }_T , \d_2 ) $,
 then one of the restrictions $\f' \d= {\bm 0 } $  is simply $ \sum_{t=1}^T \bm{f}_t = \bm{0} $.  If, moreover, a grand-mean is  also allowed for, such as $ \d \bm{ \alpha }_{i} = \bm{ \iota }_T  \alpha_{3} + \bm{ \iota}_T \alpha_{i1} + \d_2 \bm{ \alpha }_{i2} $, then the additional
 restriction $ \sum_{i=1}^N  \alpha_{i1}  = 0 $ is needed.
 Similar identification conditions are discussed in \cite{B09} and \cite{MW09}.\footnote{Most of the papers on estimation of panel regressions with so-called interactive fixed effects, such as ours, focus exclusively on the coefficients to the  heterogeneous time-varying regressors. Among the few exceptions, is \cite{B09} who shows that,  without further identification 
assumption,  estimation of the coefficient to common regressors is possible only for constant parameters.  In contrast,  for the case of  non-constant   coefficients further  identification assumptions  similar to ours are needed.  \cite{MW09} study the same estimator of \cite{B09} under weaker conditions on the regressors, allowing for instance for pre-determinatedness. Our identification  condition for the coefficients to common regressors implies their weaker  corresponding assumption. They focus exclusively on the case of constant regression coefficients.}

If instead the joint distribution for estimators of $ \bm{\alpha }_{i} $ and $ \bm{ \beta }_{i} $ is required, this can be achieved by  a slight modification of our GLS estimator, namely
\be
\left(   \begin{array}{c}  \breve{\bm \alpha }_i^{GLS  }  \\  \breve{\bm \beta }_i^{GLS}  \end{array}  \right) := \left(\zi'  \breve{\bm{S}}_N^{-1} \zi \right)^{-1}
\zi' \breve{\bm{S}}_N^{-1} \y, \label{dagger}
\ee
for the non-singular matrix
\be\label{eq:gidio3}
\breve{\bm{S}}_N := \tilde{\bm{S}}_N + \left( { \tr( \tilde{\bm{S}}_N ) \over N }\right) \pd .
\ee
Non-singularity of  $\breve{\bm{S}}_N $ follows by augmenting the matrix $\breve{\bm{S}}_N $, of rank $T-S$, with the projection matrix $ \pd $ of rank $S$. Scaling by 
$N^{-1}\tr\left( \tilde{\bm{S}}_N\right)   $ in not required by the asymptotic theory but could be relevant in finite-samples to ensure the same order of magnitude of the two terms in $\breve{\bm{S}}_N  $.  It turns out that  the same identification condition  $ \f ' \d = {\bm  0 } $, discussed above, is required.  Monte Carlo experiments are reported in Section~\ref{MC} to assess the small-sample properties of these estimators.

\section{Technical contributions} \label{LEMMI}
The asymptotics for the GLS estimator requires four key auxiliary results, enunciated in Appendix \ref{centralemmas}, which   could be useful in a broader set of statistical problems. 
 The main reason for this  complexity is that, unlike most of the existing  theoretical results on GLS estimation, we are not restricting the number of free elements  of the  weighting matrix to be finite. Indeed, in our case the number of free elements of the weighting matrix is $O(T^2)$ and hence rapidly increasing with $T$. To tackle the curse-of-dimensionality issue, we exploit the approximate factor structure of the weighted matrix, that we write as  $\bm{E=FAF'+C}$, for  (possibly random)  $ M_1 \times M_1 $  matrix $ \bm{A}>0 $ and $ T \times T $ matrix $ \bm{C} > 0 $ for every finite $T$.  The inverse has a convenient form  thanks to the Sherman-Morrison formula (see Appendix \ref{smw}). 
 
Lemma \ref{PZ} establishes the asymptotic orthogonality between the inverse of the   matrix $\bm{E}$, and the factor $\f$. More precisely, when $ \bm{A} $ and $ \bm{C} $ satisfy a set of mild regularity conditions
\be
\parallel \bm{E}^{-1} \bm{F} \parallel^2 = O_p (T^{-1}),  \label{asyorth}
\ee
This is a remarkably fast  rate given that 
$\bm{E}^{-1} \f $ is $T \times M_1 $ dimensional, with $M_1$ fixed, hence with its number of rows increasing with $T$. It implies that for a large class of  $ T \times M_2 $ matrices $\bm{P}$ (satisfying the mild regularity conditions of Lemma \ref{PZ}) possibly unrelated to both $\bm{E}$ and $\bm{F} $, then  $ \bm{P}' \bm{E}^{-1} \bm{F} = O_p(1)  $ and, when the entries of $\bm{P}$ have zero mean and are stochastically independent of $\bm{E} $ and $\bm{F} $, then  $ \bm{P}' \bm{E}^{-1} \bm{F} = O_p(T^{-\nicefrac{1}{2}}) $. These rates are very different from the usual case, arising when $ \bm{E} $  and $ \bm{F} $ are unrelated. For example, when $ \bm{A}=\bm{0} $, under the same assumptions on $\bm{P}$, one gets  that  $ \bm{P}' \bm{E}^{-1} \bm{F}   $ is of order $ O_p(T^{\nicefrac{1}{2}} )  $ or $ O_p(T) $,  depending  on whether  $ \bm{P}$ has zero or non-zero mean, respectively, assuming that $ \bm{F} $, $\bm{A} $  and $\bm{P}$ are mutually independent. 

 The  asymptotic orthogonality  (\ref{asyorth}) plays a crucial role in establishing the asymptotics for the GLS (and UGLS)  estimator. To better understand this, consider the following decomposition of the GLS estimator:
\ber
\hat{\bm \beta }_i^{GLS}  -  {\bm \beta }_{i}  && =  \left(\xx'  \shi \xx \right)^{-1}\nonumber
\xx' \shi  \uu \\ &&  =   \left(\xx'  \shi \xx \right)^{-1}
\xx' \shi  \ff \b +  \left(\xx'  \shi \xx \right)^{-1}\label{eq:franzferd}
\xx' \shi \ei.
\eer
Since $   \sh
 = \ff \han \ff' + \bm{\hat{\sc}_N} $, for some random matrices $\han, \bm{\hat{ \sc}_N} $ (specified in Appendix~\ref{auxiliary})  function of both $T$ and $N$,  we show that Lemma \ref{PZ}  applies to  the bias term, namely the first term in (\ref{eq:franzferd}). In particular, by (\ref{asyorth}) term $    \xx' \shi \ff  $ 
 is of a smaller order of magnitude (and vanishes asymptotically) than $\xx' \shi  \ei $ as long as $N$ is diverging faster than $T$.
 Note that both the dimension of  $ \bm{\hat{ \sc}}_N $, as well as its elements,  are changing with  $T$.   The faster rate for $N$ is demanded  for by the need to  have  $\han $ and $ \bm{\hat{\sc}}_N $ with  the desired limiting properties. 

To save notation we rename  $\bm{\hat{\beta}_i^{GLS}}$ in (\ref{eq:bgls}) as $\bm{\hat{\beta}_i(\sh)}$ and define $\bm{\hat{\beta}_i(\MA)}$ the estimator obtained replacing the weighting matrix $\sh$ with $\MA$ in (\ref{eq:bgls}). 
The matrix  $ \MA =\ff \an \ff' + \bm{\mathcal{C}_N}$, defined in Appendix~\ref{proofT2} (display (\ref{eq:MMM1})),
 is non-stochastic, if $\z$ is fixed, and, most importantly,  satisfies the assumption of Lemma \ref{PZ}. Existing conditions  for the asymptotic equivalence of $\bm{\hat{\beta}_i(\sh)}$ and $\bm{\hat{\beta}_i(\MA)}$ cannot be applied here. Although  the inverse operator of a matrix is an analytic function,  one cannot  rely on the delta method due to the curse of dimensionality, namely the fact that the elements as well as the size of $\sh $ are varying with $T$ and $N$. For similar reasons, element-wise convergence of $\sh$ cannot be combined with the Slutsky's Theorem, as discussed in \cite{mf94}. Considering the absolute convergence $\lb\shi-\MAI\rb_{sp}$ and use random matrix theory is not a viable option. Even if we were able to obtain the optimal convergence rate $\lb\sh-\MA\rb_{sp}=O_p(\sqrt{T/N})$ established by for i.i.d. data, it would be not sufficient to obtain $\sqrt{T}\left(\bm{\hat{\beta}_i(\sh)}-\bm{\hat{\beta}_i(\MA)}\right)=o_p(1)$ without strengthening  the Assumptions of Theorem \ref{Theorem_GLS}. The latter convergence requires indeed proving that $T^{-\nicefrac{1}{2}}\xx'\left(\shi-\MAI\right)\uu=o_p(1)$. We would have
\be\label{eq:casa}
\lb T^{-\nicefrac{1}{2}}\xx'\left(\shi-\MAI\right)\uu\rb \leq \lb T^{-\nicefrac{1}{2}}\xx\rb
\lb \shi-\MAI \rb_{sp} \lb \uu \rb =O_p\left(\frac{T}{\sqrt{N}}\right).
\ee
 To prove our results we found convenient proceeding in two steps: firstly we prove that 
$T^{-\nicefrac{1}{2}}\xx'\shi\uu$
is asymptotically equivalent
$T^{-\nicefrac{1}{2}}\xx'\SSI\uu$, with $\SS$ as defined in  Appendix~\ref{auxiliary}, equation (\ref{eq:silver}) ,and secondly that the latter term is in turn asymptotically equivalent to 
$T^{-\nicefrac{1}{2}}\xx'\MAI\uu$.
Lemma \ref{approx inv}, a simple extension of a well known result in matrix algebra, entails that
\ber\label{eq:formula}
\shi 
= \SSI  - \SSI   ( \sh  - \SS) \SSI+ 
\SSI  (   \sh  - \SS ) \shi    ( \sh   - \SS ) \SSI,
\eer
implying that $
\xx' \shi  \uu  $ can be re-written as the (algebric) sum of $ \xx' \SSI  \uu  $, $ \xx' 
\SSI (  \sh  - \SS ) \SSI  \uu $ and 
 $ \xx' 
\SSI   (   \sh  - \SS ) \SSI    (   \sh  - \SS )
\SSI  \uu  $. In the proof  we show that the second last term is of order $O_p(T/N^{\nicefrac{1}{2}}) $ and that  the last term is of order $O_p(T^2/N) $, respectively. Instead, the first leading term will exhibit the usual $O_p(T^{\nicefrac{1}{2}}) $ rate of convergence.  For the second and third  terms to be asymptotically negligible, in terms of asymptotic distribution, one requires that $T^{3 \over 2}/N $ goes to zero as $T$ increases. Unfortunately our approach requires lengthy calculations involving high order cumulants that are bounded using the diagram formula (see Appendix \ref{auxiliary1} in the Supp. Material). 

A further step necessary to derive the convergence in distribution of the estimator is the derivation of the asymptotic distribution of $T^{-\nicefrac{1}{2}}\xx'\MAI\uu$. 
We first show that the latter term is equivalent to $T^{-\nicefrac{1}{2}}\v'\cni\epi$. In order to exploit the results in \cite{RH97}, the absolute row/column summability of $\cni$ needs to be shown. To accomplish this task we first approximate  $\cn$ with a circulant symmetric matrix, as further discussed in the Supp. Material, Appendix \ref{invcov}. By Lemma \ref{paolo}, that extends a result in \cite{Z93}, we prove that the inverse of the latter matrix has indeed bounded row norm.

\section{Discussion and generalizations}  \label{various}
 In this section we describe various generalizations to our framework. In particular, we explain the consequences of allowing for different, yet related, factor structures in the regressors and 
 residuals,  respectively. Then, we show to derive a consistent estimator for the asymptotic covariance matrix of the GLS estimator. We also discuss how to achieve efficiency improvements 
 of the GLS by an iterative procedure. Finally, we explain how our results apply to cross-sectional regressions with time-varying coefficients. 
 
\subsection{Different factor structures}  \label{diff-fac}

So far we have assumed that the unit-specific regressors $\x$ and the true residuals $\ui $ of (\ref{eq:hetero}) share the same common, latent, factors.  We now explore the implications of allowing that possibly different, yet correlated, set of factors affect the regressors and the residuals, respectively.  Let us here illustrate the UGLS case and then provide more details in the Supp. Material (Appendix~\ref{app3}) for the (feasible) GLS. 
To simplify the exposition we assume  that $\d=\bm{0}$, and
\be\label{eq:diffact}
\y=\x\bm{\beta}_{i}+\ui, \quad \x=\f_1\Gi+\v, \quad \ui=\f_2\b+\epi ,
\ee
where $\f_1(T\times M_1)$ and $\f_2(T\times M_2)$ satisfy Assumption \ref{ass factors}. By Remark \ref{remarkPZ},
\be\label{eq:tikka}
T^{-\nicefrac{1}{2}}\f_1'\sthi\ui \approx  
T^{-\nicefrac{1}{2}}\f_1'\left[\I_T-\Hninv\f_2\left(\f_2'\Hninv\f_2\right)^{-1}\f_2'\right]\Hninv\ui .
\ee
Let $\bm{W}:=\Hninv\f$, then 
\berr
&&\f_1'\left[\I_T-\Hninv\f_2\left(\f_2'\Hninv\f_2\right)^{-1}\f_2'\right]\Hninv\ui=
\f_1'\left[\I_T-\bm{W}\left(\f_2'\bm{W}\right)^{-1}\f_2'\right]\Hninv\ui\\
&=& \f_1'\f_{2\bot}\left(\bm{W}'_{\bot}\f_{2\bot}\right)^{-1}\bm{W}'_{\bot}\Hninv\ui
=\f_1'\f_{2\bot}\left(\bm{W}'_{\bot}\f_{2\bot}\right)^{-1}\bm{W}'_{\bot}\Hninv\epi,
\eerr
where $\f'_{2\bot}\f_2=\bm{0}$.

It follows that even if $\f_1\notin\textrm{sp}(\f_2)$, where $\textrm{sp}(\f_2)$ denote the space spanned by $\f_2$, the UGLS estimator is still consistent but the term $\breve{\f}_1'\bm{\Xi_i}^{-1}\epi$, with $\breve{\f}_1:=\bm{W}_{\bot}\left(\f_{2\bot}'\bm{W}_{\bot}\right)^{-1}\f_2'\f_1\in\textrm{sp}(\bm{W}_{\bot})$, will contribute to the asymptotic distribution of the estimator.\footnote{We conjecture that one needs to assume on the $\breve{\mathrm{g}}_t$ the same conditions assumed by Robinson and Hidalgo (1997) about their $ x_t$ (see their Condition 7) .}
  In Appendix~\ref{app3} (Supp. Material) we show, heuristically, that the FGLS estimator of (\ref{eq:diffact}) enjoys the same asymptotic properties stated in Theorem \ref{Theorem_GLS}.

 

\subsection{Dynamic models} \label{dynamic}

Although our set-up allows for dynamics,  through the dynamic autocorrelation of either the factors and the idiosyncratic error, 
   our results extend to the case of dynamic panel with factor structure such as
%
%
\be\label{eq:dynamic1}
\y=    \x\bm{\beta}_{i}+ \bm{y}_{-1,i}  {\rho}_{i}+  \ui, 
\ee
  where $ \x $ and $ \ui $ satisfy  (\ref{eq:ref_def}) and  (\ref{eq:uu}), respectively, and we set $ \bm{y}_{-1,i} =  ( \mathrm{y}_{i0}, \dots, \mathrm{y}_{iT-1})' $, with  first-order autoregressive coefficients  satisfying  $ -1 < \rho_i < 1 $ for every $i$. We set $ \bm{D} = \bm{0} $  to simplify the exposition.
    Obviously one can re-write (\ref{eq:dynamic1}) as 
  \be  \label{wrongdyn}
\y=\x^*  \bm{\beta}_{i}^* +  \ui \mbox{ setting }  \x^* := ( \x , \bm{y}_{-1,i} )  \mbox{ and } \bm{\beta}_{i}^* := ( \bm{\beta}_{i} ' ,  {\rho}_{i} )'.
  \ee
  It turns out that applying our GLS estimator (\ref{eq:bgls})  to specification (\ref{wrongdyn})  will still work when 
   further conditions are assumed on the idiosyncratic part of the residuals $ \ui = {\bf F } {\bf b }_i + \bm{\varepsilon }_i  $, namely  that 
   the $ \varepsilon_{it} $ are i.i.d. across time but have some degree of cross-correlation across $i$. A similar assumption is made by \citet{CP13}, Assumption 1,  also in the context of dynamic panel data models.  Notice that our result is rather strong because we are {\em not } ruling out that the model residuals $u_{it}$ are dependent across time (and across $i$), through the factors ${\bf f }_t $. Moreover, again thanks to the factors ${\bf f }_t $, regressors and residuals are correlated and thus we are violating the classical strong-exogeneity assumption typically advocated in a GLS framework.

    A technical proof goes beyond the scope, and the page limit, of the present paper but details, 
  corroborated by Monte Carlo simulations, are  available upon request.

\subsection{Efficiency improvements}


  The form of the asymptotic covariance matrix of the GLS, indicated in Theorem~\ref{Theorem_GLS}, denotes lack of efficiency, unlike for the UGLS estimator case (in the special sense discussed). This arises because although $  \sh$  is {\em approximated } by  the  matrix $ \MA$ defined in Appendix~\ref{proofT2}, (in the sense that $\xx'\sh\uu = \xx'\MAI\uu + o_p(\sqrt{T})$),  the latter does not coincide with the true covariance matrix $\ssth $. This is to be expected  since $\MA$ is constructed based on the OLS residuals $ \huu = \yy - \xx \bm{\hat{ \beta }}_i^{OLS} $  where $\bm{\hat{ \beta }}_i^{OLS} $ is non consistent for $ \bm{\beta }_{i} $. However, a multi-step procedure can be envisaged that could achieve (near) asymptotic efficiency, or more precisely an estimator with an asymptotic distribution arbitrarily close to the UGLS estimator. We shall call the outcome of this procedure the iterated-GLS estimator. The first step would be to construct the GLS estimator as explained in the previous sections, which we now denominate as $\hat{\bm \beta }_{i}^{(1)} $. We then construct the associated residuals $ \huu^{(1)} = \yy - \xx \bm{\hat{ \beta }}_i^{(1)} $. Notice that now  $\bm{\hat{ \beta }}_i^{(1)}  $ is a consistent estimator for $\bm{\beta }_{i} $. The second step entails constructing 
$\sh^{(1)} = N^{-1} \sum_{i=1}^N    \db'   \huu^{(1)} \huu^{(1)'}  \db  $ and using it to obtain 
$
\bm{\hat{ \beta }}_i^{(2)} = 
 \left(
 \xx'  \left(\sh^{(1)}\right)^{-1} \xx \right)^{-1} \xx'
 \left(\sh^{(1)} \right)^{-1}  \yy.
$ 
In general the $h$th step entails constructing  $  \bm{\hat{ \beta }}_i^{(h)} =   \left(\xx'  \left( \sh^{(h-1)} \right)^{-1} \xx \right)^{-1} \xx' \left(\sh^{(h-1)} \right)^{-1}  \yy$,  
where $ 
\sh^{(h-1)} $ is obtained based on $ \bm{\hat{ \beta }}_i^{(h-1)} $. We conjecture that as $h$ increases, the asymptotic distribution of  $\bm{\hat{ \beta }}_i^{(h)} $ 
is getting arbitrarily close to the one of the UGLS.  This is confirmed by the Monte Carlo experiments presented in Section~\ref{MC}.  Although the theoretical  analysis of this iterated-GLS is not developed here, techniques along the lines of the ones developed in the current paper would allow to establish the asymptotics.  Indeed, since $\bm{\hat{ \beta }}_i^{(h)} $  is consistent for $ \bm{ \beta }_{i} $  for any $h \ge 2 $, the asymptotics should follow more easily than for the GLS estimator.

%
 
\subsection{  Estimation of asymptotic covariance matrix}   \label{SE}
  Consistent estimation of the GLS asymptotic covariance matrix    can be obtained in different ways, depending on the type of heteroskedasticity and correlation assumed for the $ \epi $. For instance, using the  results of 
  \cite{NW87}, Theorem 2,  one obtains the covariance matrix estimator for $\hat{\bm \beta }_i^{GLS} $
 \be
 \left(   { \xx' \shi  \xx \over T }  \right)^{-1}
 \left( \hat{\bm A }_{i0}  + \sum_{h=1}^n {\scriptstyle (1 - {h \over (n+1) })} \left(  \hat{\bm A }_{ih} + \hat{\bm  A }_{ih}'               \right) \right) 
  \left(   { \xx' \shi  \xx \over T }  \right)^{-1},  \label{con_est}
\ee
setting $
 \hat{\bm A }_{ih}:=  T^{-1} \sum_{t=h+1}^T  \hat{u}_{it}^{GLS}  \hat{u}_{it-h}^{GLS}  \hat{\mathcal{x} }_{it} \hat{\bf \mathcal{x} }_{it-h}', \,\,\, h=0,1,...T-1$,
where $ \hat{ \xx }:= ( \hat{\mathcal{x} }_{i1} ... \hat{\mathcal{x} }_{iT} )' =    \shi   \xx $ and $   \hat{ \uu }^{GLS}:= ( \hat{u}_{1i}^{GLS} ... \hat{u }_{iT}^{GLS} )' = \yy  - \xx  \hat{\bm \beta }_i^{GLS} $    and the bandwidth $n=n(T,N)$  grows slowly with $N$ and $T$.
The same approach has been used in \cite{P06}, eq. (51) and (52).   \cite{B09}, Section 7, provides estimators of the asymptotic covariance matrix when correlation and heteroskedasticity of either series or cross-section form is allowed for, using  \cite{NW87} and a partial-samplig approach, respectively. Note that these approaches cannot be applied to our case since require constant regression coefficients, involving averaging across both $N$ and $T$. Similar approaches have been used by   \cite{MW09} and  \cite{MW13} under  more restrictive  dependence assumptions.
%
%
%
 
  Notice that, although  $\hat{\uu}^{GLS} $ contains a factor structure,  as it is evident by from  its population counterpart  $ \uu  = \ff  \b  + \ei $, the contribution of 
  $  \shi \ff \b   $ to $ \shi  \hat{\uu }^{GLS} $ is  (asymptotically) negligible with respect to the contribution of $\shi \ei $, whose asymptotic variance is require for consistent 
  estimation of the asymptotic covariance matrix of $\bm{\hat{ \beta }}_i^{GLS}  $.
    An alternative approach consists of   estimating the idiosyncratic component of $\hat{\uu}^{GLS} $ directly, for example by  principal components, yielding $ \hat{\ei}^{GLS}:=(  \hat{\epsilon }_{i1}^{GLS}  \cdots   \hat{\epsilon }_{iT}^{GLS}  )' $ and then replacing $ \hat{\bm A }_{ih} $ by $
   T^{-1} \sum_{t=h+1}^T  \hat{\epsilon }_{it}^{GLS}  \hat{\epsilon }_{it-h}^{GLS}  \hat{\bf x }_{it} \hat{\bf x }_{it-h}' $ into (\ref{con_est}). Preliminary testing for the number   of factors $M$  is required in this case, making it less appealing.

   Consistent estimation of the  asymptotic covariance matrix  for 
$   (  \breve{\bm \alpha }_i^{GLS  \prime} , \breve{\bm \beta }_i^{GLS \prime } )' $
    of (\ref{dagger}) follows along the same lines,  leading to:
 \[
 \left({ \zi'  \breve{\bm{S}}_N^{-1} \zi \over T } \right)^{-1}
  \left( \breve{\bm A }_{i0}  + \sum_{h=1}^n {\scriptstyle (1 - {h \over (n+1) })} \left(  \breve{\bm A }_{ih} + \breve{\bm  A }_{ih}' \right) \right)
   \left({ \zi'  \breve{\bm{S}}_N^{-1} \zi  \over T } \right)^{-1},
\]
 where $
 \breve{\bm A }_{ih} =  {1 \over T } \sum_{t=h+1}^T  \breve{u}_{it}^{GLS}  \breve{u}_{it-h}^{GLS}  \breve{\bf z }_{it} \breve{\bf z }_{it-h}', \,\,\, h=0,1,...T-1$,
where $ \breve{ \z }_i = ( \breve{\bf z }_{i1} ... \breve{\bf z }_{iT} )' =   \breve{\bm{S}}_N^{-1}   \z_i $ and $   \breve{ \uu }^{GLS} = ( \breve{u}_{1i}^{GLS} ... \breve{u }_{iT}^{GLS} )' = 
\y  - \d  \breve{\bm \alpha }_i^{GLS  }  - \x \breve{\bm \beta }_i^{GLS } $.


 \subsection{  Cross-sectional regressions} 

As an example of a cross-sectional regression with factor structure,  consider \cite{A05}  model:
 \be y_{it} = {\bm \vartheta }_{t} ' (1 \, {\bf x }_{it}'
)' + u_{it}, \label{fac0} \ee where $(y_{it}, {\bf x }_{it}  )$  are
assumed $i.i.d.$ across units conditional on $ {\bf c }_{1t}, \,
{\bf C }_{2t}$ by \cite{A05}, Assumption 1, with \ber u_{it}
&=& {\bf c }_{1t}' {\bf u }_i^* + \varepsilon_{it}, \label{fac1} \\
{\bf x }_{it} &=& {\bf C }_{2t} {\bf x }_i^* + {\bf v }_{it} ,
\label{fac2}
 \eer with ${\bf c
}_{1t}, \, {\bf u }_i^* $ are $ d_1 \times 1 $ random vectors and $
{\bf C }_{2t}, \, {\bf x }_i^* $  respectively a random matrix of
dimension $k \times d_2 $, with $ d_2 \ge k $, and a random vector
of dimension $ d_2 \times 1 $ and $ \varepsilon_{it} $ and $ {\bf v
}_{it} $ are $i.i.d.$  innovations across $i$ and $t$, respectively
scalar and $k \times 1 $, with zero mean and variances $ \xi_{i,t} $
and $ {\bf \Sigma }_{V_t'V_t}  $, respectively. We focus here on \cite{A05}'s {\em
standard factor} structure, spelled out in his Assumption SF1, here
 slightly extended to allow for an
idiosyncratic component in both the regression error $u_{it}$ and
the regressors ${\bf x }_{it}$ as well as time-variation in
parameters, common factors and covariance matrices.   The first
extension is unavoidable for us since when $ \varepsilon_{it}  = 0 \,\,\,
a.s.$ our theory does not apply.    Model  (\ref{fac0})-(\ref{fac1})-(\ref{fac2}) can be rewritten as   
 \be \label{time}
{\bf y}_t = {\bf D } {\bf \alpha }_{t} +  {\bf X }_t {\bf \beta }_{t} + \ut,
 \ee
where we set  $ {\bf y }_t = ( y_{1t}  \hdots y_{Nt})' $ and $ {\bf D } = {\bm \iota }_N ,  {\bf X }_t =    {\bf X }^*   {\bf C }_{2t}'   $,
with $  {\bf X }^* = ( {\bf x }_1^*  \hdots {\bf x }_N^* )' $ and parameters $ {\bm \vartheta }_{t} = ( {\bm \alpha }_{t}' , {\bm \beta }_{t}'  )' $, and  the $ {\bm u}_t = ( u_{1t} \hdots u_{Nt})' $  satisfy the factor structure  (\ref{eq:uu})
 \[ 
 {\bm u}_t  =
{\bf B } {\bf f }_t + {\bm \varepsilon }_t,
\]
with ${\bf f}_t = {\bf c }_{1t} $, $ {\bf B } = ( {\bf u }_1^*  \hdots {\bf u }_N^* )' $ and 
  $ \bm{\varepsilon }_t=( \varepsilon_{1t}  \hdots \varepsilon_{Nt} )' $.

  In analogy with Section~\ref{def}, 
%
%
  our proposed   {\em feasible } GLS estimator is 
\[
\hat{\bm \beta }_t^{GLS} = \left(\xxt'  \hat{\bm{\mathcal{S}}}_T^{-1} \xxt \right)^{-1}
\xxt' \hat{\bm{\mathcal{S}}}_T^{-1}\yyt,
\] 
setting $ \yyt =  \bm{D}_{\bot}' \bm{Y}_t, \xxt = \bm{D}_{\bot}' \bm{X}_t $,  
 assuming large enough $N$ and $T$ to ensure invertibility of $ \hat{\bm{\mathcal{S}_T}} $, given by
\begin{equation}\label{eq:charlie2T}
 \bm{\hat{\mathcal{S}}_T} =
T^{-1} \sum_{t=1}^T  \huut \huut'   , \,\,\, \mbox{ with } \huut=
 \yyt-\xxt \bm{\hat{\beta}}_t^{OLS} = \mxxt \hut ,
\end{equation}
where $ \bm{\hat{\beta}}_t^{OLS} =  \left(\xxt'   \xxt \right)^{-1}
\xxt' \yyt $ and  ${\hut}$ are the OLS estimator and the OLS regression residuals, respectively,  of regression (\ref{time}).
%
%
%
%
%
Notice that now  $\md = \bm{I}_N-\d(\d'\d)^{-1}\d' = \bm{D}_{\bot} \bm{D}_{\bot}' $ is a $N \times N $ matrix and $ \d_{\bot} $ is a $N-1 \times N $ matrix.
Given the duality between $\hat{\bm \beta }_t^{GLS}  $ and $ \hat{\bm \beta }_i^{GLS}  $,  we conjecture that under a set of regularity conditions analogous to  Assumptions 2.1-2.6 one obtains consistency of $\hat{\bm \beta }_t^{GLS}  $ for  $ 1/N + N/T \rightarrow 0 $ and asymptotic normality of $ \sqrt{N}( \hat{\bm \beta }_t^{GLS}  -  {\bm \beta }_{t}  ) $ for $  1/N + N^3/T^2  \rightarrow 0 $. Extension to a more general form of common observed regressors, other than $ {\bf D } = {\bm \iota }_N $,  can be obtained along the lines of  Section~\ref{common}.

\section{Monte Carlo analysis}  \label{MC}

We conduct a set of  Monte Carlo  experiments to appreciate the relevance of our asymptotic results for the GLS estimator in finite samples. 

\subsection{Design}

The data generating process is
\ber \label{unit}
y_{it} && = \alpha_{i0} + \beta_{i0} x_{it} + b_{i10} f_{1t} + b_{i20} f_{2t}  +\varepsilon_{it},
\eer
where the single regressor satisfies
\ber \label{reg}
x_{it}  = 0.5 + \delta_{i10} f_{1t} + \delta_{i30} f_{3t}  + v_{it}.
\eer
 Note that the model implies an observed common factor equal to $1$ for all observations. The single regressor is allowed to be contemporaneously correlated with the innovation through one of the latent common factors (whenever $ b_{i10} \delta_{i10} \neq 0  $). The factor loadings are normally distributed random variables, $i.i.d.$ across unit:
  \ber
&& \left( \begin{array}{c} b_{i10} \\ b_{i20} \end{array} \right) \sim NID \left( \left( \begin{array}{c} 1 \\ 0 \end{array} \right), \left( \begin{array}{cc} 0.2 & 0 \\ 0 & 0.2 \end{array} \right) \right), \\
&& \left( \begin{array}{c} \delta_{i10} \\ \delta_{i30} \end{array} \right)  \sim NID \left( \left( \begin{array}{c} 0.5 \\ 0 \end{array} \right), \left( \begin{array}{cc} 0.5 & 0 \\ 0 & 0.5 \end{array} \right) \right),
\label{rank}
  \eer
and the latent common factors and the idiosyncratic components are stationary stochastic processes, mutually independent to each other, satisfying
\[
  { f }_{j,t}  = 0.5  { f }_{j,t-1} + \sqrt{0.5} { \eta }_{jf,t}, \,  j=1,2,3,
 \]
 where each $  { \eta }_{jf,t}  \sim NID(0,1) $, mutually independent for $j=1,2,3$, and
\berr
&&    \varepsilon_{it}  =  \rho_{i\varepsilon }   { \varepsilon }_{it-1} + { \eta }_{i\varepsilon,t}, \,\,\,
 { \eta }_{i\varepsilon,t}  \sim NID( 0,   \sigma_i^2 ( 1 -  \rho_{i\varepsilon }^2) ),   i=1,...,N, \\
&&    v_{it}  =  \rho_{iv }
  { v }_{it-1} + { \eta }_{iv,t}, \,\,\,
 { \eta }_{iv ,t}  \sim NID( 0,   ( 1 -  \rho_{i v }^2) ), i=1,...,N,
 \eerr
with $ \rho_{i \varepsilon } \sim UID(0.05,0.95), \, \rho_{i v } \sim UID(0.05,0.95), \sigma_{i \varepsilon }^2 \sim UID(0.5,1.5) $ where $NID, UID$ means $iid$ normally and uniformly distributed respectively.
Finally, the  parameters of interest are  constant across replications and equal to $ \alpha_{i0} = 1,  \gamma_{i0} = 0.5 $ and, assuming $N$ even,
\[
\beta_{i0} = \left\{  \begin{array}{ll} 1 & \mbox{ for } i=1,...,{N \over 2} , \\ 3  & \mbox{ for } i={N \over 2}+1,...,N. \end{array}    \right.
\]
This Monte Carlo design is a  simplified version of \cite{P06}, designed in such a way that (through (\ref{rank})) the rank condition in \cite{P06},  eq. (21), is not satisfied. \cite{P06} shows that
under this circumstance his individual specific estimator for $ \beta_{i0} $ is invalid  whereas his pooled estimators for $\beta_{0} = \E \beta_{i0} $ remains consistent.

%

We consider $2000$ Monte Carlo replications with sample sizes $(N,T) \in \{ 60,200,600 \} \times  \{ 30,100,300 \}$, where  $N>T$.

The results  are summarized in Tables 4,5, and 6, where we report the sample mean  and the root mean square error
 for the estimates of the parameter ${\bf \alpha }_{i0}, {\bf \beta }_{i0} $,
 averaged across the Monte Carlo iterations. We consider four estimators which corresponds to four panels of each table: the GLS, 
 the multi-step GLS (described in Section~\ref{various}) where the iteration is carried out $J=4$ times, the OLS 
 and the UGLS 
  estimators.
 In particular,    for each of these four estimators,  we report the average  across all $N$ units of the sample mean (denoted by \texttt{mean})
 $ MM^{-1} \sum_{m=1}^{MM} \hat{\alpha}_i^m $ and of  the root mean square error (denoted by (denoted by \texttt{rmse}) $  \left( MM^{-1} \sum_{m=1}^{MM} (   \hat{\alpha}_i^m -   1 )^2 \right)^{1 \over 2} $
and the average across the units $i=N/2+1,...,N$ of
 $ MM^{-1} \sum_{m=1}^{MM} \hat{\beta}_i^m $ and $  \left(MM^{-1} \sum_{m=1}^{MM} (   \hat{\beta}_i^m -   3 )^2 \right)^{1 \over 2} $ with $MM=2,000$. Recall that we assumed that the true intercept coefficients are constant across units whereas the regression coefficients take two different values for the first half and second half of the $N$ units.
  Here
$\hat{\alpha}_i^m $ and $ \hat{\beta}_i^m $ denote, respectively, the estimates of the intercept and regression coefficients corresponding to the $m$th Monte Carlo iteration for a generic estimator.

\subsection{Results}


Our comments below apply to each table, with minor differences. Since the GLS and multi-step GLS estimators requires $N \ge T $, each panel is made by a lower triangular matrix. Obviously, the OLS and the UGLS estimator do not require this constraint since they can be also
evaluated when $N<T$ but we did not report the results for this case.
The upper left panel describes the GLS results. One can see how the bias diminishes as both $N,T$ grow or when $N$ increases for a given $T$. This is because the
inverse of the pseudo-covariance matrix  is better estimated in these circumstances. In contrast, although still negligible in absolute terms, the bias, if any,  tends to increase when $T$ grows for a given $N$.
Instead, as expected, the \texttt{rmse} always diminishes when  $T$ increases for a given $N$ or when they both increase. 
 In general these  results suggest that the bias of the estimates varies mainly with $N$ and their variance varies with $T$.
 The same pattern is observed with respect to the multi-step GLS results, reported in the upper right panel.
The only difference is that now the bias and the \texttt{rmse} are always much smaller than the GLS case. The lower right panel reports the results for the UGLS which is unfeasible in practice since it involves the
true covariance matrix ${\bf S }_N $. As a consequence, the results do not depend on $N$ but only on $T$.  The bias is negligible even for small samples and, for larger sample sizes,
it is remarkably comparable to the iterated GLS although the latter exhibit a slightly larger \texttt{rmse}. Finally, the lower left panel reports the OLS results which also do not depend on $N$, as expected.
 Under our design, the OLS estimator is non-consistent obtaining  a bias which is much larger than for any other estimators and, more importantly, only marginally varying as $N$ or $T$ increases.
The \texttt{rmse} diminishes suggesting that the variance of the OLS estimator is converging to zero with the squared bias converging to the squared of $  \tau_i^{OLS} $.

\section{Empirical Application: Firms' Characteristics and Expected Returns}  \label{EMP}

We present an empirical application of our methodology, inspired by  asset-pricing theory. According to  so-called beta-pricing models, asset returns follow a factor model:
\be
R_{i,t} = \tilde{\alpha}_{i} + \tilde{\boldsymbol {\gamma }}_{i}' \d_t +  u_{i,t} ,  \label{eq:facmod}
\ee
where $ R_{i,t}$ defines the  rate of return for asset $i$, in excess of the risk-free rate, and $ \d_t $ is a vector of observed factors, with coefficients $ \tilde{\boldsymbol \gamma }_{i} $.
Important, special, cases of model (\ref{eq:facmod}) are the Capital Asset Pricing Model (CAPM) of \cite{S64}  and \cite{L65}, when $\d_t$ is the (scalar) excess  market return with $ \tilde{\alpha}_{i} =0 $ for every $i$, and the Arbitrage Pricing Theory (APT)  of \cite{R76}, when $\d_t$ is a vector of possibly non-traded factors.\footnote{Focusing on the special case when $\d_t$ are the excess returns of traded assets, the APT  holds when  the  $ \tilde{ \alpha}_{i} $, although not zero, satisfy the condition $ \tilde{\boldsymbol \alpha }'  ( var( \ut ) )^{-1}  \tilde{\boldsymbol \alpha}  < \infty $, setting $  \tilde{\boldsymbol \alpha }  = ( \tilde{\alpha}_{1} , \cdots , \tilde{ \alpha}_{N} )' $.}

 Model (\ref{eq:facmod}), together with some form of no-arbitrage and some constraints of the covariance matrix of the $u_{i,t}$, implies that expected  excess returns $\E(R_{it}) $ are linear in the coefficients $  \boldsymbol{ \alpha }_i  $ only, namely that the $\d_t$ are the only source of risk (see Corollary 1, \cite{C83}). However, this fundamental paradigm has been challenged  empirically. For instance, \cite{DT97} 
 and    \cite{DFF}  provide strong evidence according to which  stocks characteristics, such as market capitalization (size), valuation (book-to-market) and other characteristics do influence expected returns well beyond the betas.   One can  extend model (\ref{eq:facmod}) to allow for   characteristics by specifying:
 \be
 R_{i,t} = \tilde{\alpha}_{i} + \tilde{\boldsymbol {\gamma}}_{i}' {\bf  D}_t +  \boldsymbol{\beta}_i' {\bf X}_{i,t} +  u_{i,t} =   \boldsymbol{ \alpha}_i'  (1, {\bf D}_t')' +   \boldsymbol{  \beta}_i' {\bf  X}_{i,t} +  u_{i,t},  \label{eq:facmod2}
  \ee
 where now $X_{i,t}$ defines a vector of characteristics associated with the $i$th stock, setting $ \boldsymbol{\alpha}_i = (   \tilde{{\alpha}}_{i}, \tilde{\boldsymbol{ \gamma}}_{i}'  )' $.  Model (\ref{eq:facmod2}) can be interpreted
  as, and in fact is equivalent to, our basic model (\ref{eq:hetero}). Moreover, it is conceivable that the error term has a factor structure, such as (\ref{eq:uu}), possibly correlated with both the $\d_t$ and the $ \xt$. For instance, this is arises whenever one suspects the possibility of missing, pervasive, factors.   
  Our asymptotic distribution theory can be used to assess whether the   $\boldsymbol{ \alpha }_i $ or the $ \boldsymbol{ \beta }_i $ or both are significant or not.

 We use a data set of monthly observations,  from January 1966   to December 1994, of individual asset returns extracted from CRSP and of firms' characteristics extracted from COMPUSTAT.\footnote{See \cite{BCS98}   for details}. In particular,  the eight characteristics that we consider are SIZE (the natural logarithm of the market value of the equity of the firm as of
the end of the second to last month), BM  (the natural logarithm of the ratio of the book value of equity plus deferred taxes to the market value of equity, using the end of the previous year market and book values)\footnote{As in Fama and French (1993), the value of BM for July of year t to June of year t+1 was computed using accounting data at the end of year t-1.},  DVOL  (the natural logarithm of the dollar volume of trading in the security in the second to last month),   PRICE (the natural logarithm of the reciprocal of the share price as reported at the end of the second to last month),  YLD  (the dividend yield as measured by the sum of all dividends paid over the previous 12 months, divided by the share price at the end of the second to last month),    RET2-3  (the natural logarithm of the cumulative return over the two months ending at the beginning of the previous month), RET4-6 (the natural logarithm of the cumulative return over the three months ending three months previously), 
 RET7-12  (the natural logarithm of the cumulative return over the 6 months ending 6 months previously).\footnote{Lagged return variables were constructed to exclude the return during the immediate prior month in order to avoid any spurious association between the prior month return and the current month return caused by thin trading or bid-ask spread effects.}

 We report the results in Table 1,2 and 3. In particular,  we consider three different  factor models, depending on the set of common factors. Table 1 refers to the CAPM model augmented with the eight characteristics. 
 We report the average, across the $N=356 $ assets, of the  GLS estimates  $ (\breve{\boldsymbol{ \alpha }}_i^{GLS \prime } , \breve{\boldsymbol \beta }_i^{GLS \prime} )' $ in (\ref{dagger})  for each regression parameter, together with their $10$th and $90$th percentiles, out of the $N$ assets. 
 Similarly, we report the average, across the $N=356 $ assets, of the t-$ratio$s  for each regression parameter, together with their $10$th and $90$th percentiles, out of the $N$ assets.
 Finally, we report the  F test statistics corresponding to three different joint hypotheses, namely for  all $ \boldsymbol{ \alpha}_i = 0 $, or  all $ \boldsymbol{\beta}_i=0 $ or both.  Again, we report
  the average across the $N$ assets of the F test statistics, and  their $10$th and $90$th percentile. Table 2 refers to the 
  three-factor model of \cite{FF3}, augmented with the eight characteristics, whereby the elements of $\d_t$ are the market, the small-minus-large (SML) and the high-minus-low (HML) portfolio returns, respectively. Finally, Table 3 refers to the five-factor model of \cite{FF5}, augmented with the eight characteristics, whereby the elements of $\d_t$, with respect to the three-factor model, are augmented by  the profitability (RMW)  and investment (CMA)  portfolio returns.

 Across all the three asset-pricing models, the results strongly indicate that characteristics  influences excess returns, and highly significantly so. This emerges both by considering individual t-$ratios$ as well   as the F test for the joint hypothesis that the coefficients to the characteristics (i.e. the $ \boldsymbol{ \beta}_i $)   are all zero.  Noticeably,  the effects of the common factors, for example the market return 
 for the CAPM, are also strongly significant,  across the three asset-pricing models. Indeed, their effects appear unambiguously stronger than for the characteristics, although 
 they are both highly significant.

\section{Concluding remarks} \label{CONCL}

This paper proposes a feasible GLS estimator for linear panel with common factor structure in both the regressors and the innovation.
We establish our results   for time regressions with unit-specific coefficients, and present  several  generalisations such as dynamic panels,  cross-section regressions with time varying coefficients and different factor structures for regressors and residuals.
The GLS estimator is consistent and asymptotically normal, when both the cross-section $N$ and time series  $T$ dimensions diverge to infinity  where, under the same 
circumstances, the OLS is first-order biased. 
In summary, the GLS estimator exhibits four main   properties: 
first,   it permits to carry out inference on the regression coefficients  based on conventional distributions; 
    second, as in classical estimation theory, it delivers (almost) efficient estimation; 
 third,   it does not require any knowledge of the exact number  of latent
  factors, or even an upper bound of such number; and 
 fourth, the GLS is computationally easy to handle without invoking any nonlinear numerical optimizations. 
 Our results are corroborated by a set of Monte Carlo experiments and illustrated by an asset-pricing empirical application.
 
  



%
%
%
%
%
%
%
%
%
%
%

\vspace{-2in}

\begin{table}  \label{capmtable}
\centering
 {\footnotesize\bf $\,\,\,\,\,\,\,\,\,\,\,\,\,\,\,\,\,\,\,\,\,\,\,$  Table 1: \\    $\,\,\,\,\,\,\,\,\,\,\,\,\,\,\,\,\,\,\,\,\,\,\,$ CAPM: \\ Testing  the effect of characteristics  }
  \footnotesize
\begin{center}
\begin{tabular}{ |c|ccc|ccc| } 
  \hline
  \multicolumn{7}{|c|}{Panel A} \\ \hline \\
  \multicolumn{1}{|c|}{}  & \multicolumn{3}{|c|}{t-ratios} & \multicolumn{3}{|c|}{GLS estimates}  \\
     parameter:  & 10-th quantile  &  average   &  90-th quantile  	 & 10-th quantile  &  average   &  90-th quantile 	\\		
    intercept  &-182.9  &    -32.6 &     125.2			&		  -0.123  & -0.024   & 0.063	\\
  Mkt &        1507.2  &       2488.4   &    3568.2			&		    0.061   & 0.010   & 0.014		\\
   SIZE &      -239.5    &    -83.6   &      77.1				&	   -0.217   & -0.069  &  0.073		\\
  BM &       -118.4 &           14.7 &           153.5 &						   -0.063 &    0.006 &   0.076 \\		
   DVOL &      -12.1 &        118.9 &        272.4 &						   -0.01   &  0.038 &   0.085 \\		
   PRICE &      -266.2 &         -126.7 &          4.6 &						   -0.088 &   -0.044 &    0.001 \\		
   YLD &      -301.9 &           -103.6 &          59.9 &						   -0.150 &   -0.051 &    0.010 \\		
  RET23 &       -289.1 &         -143.4 &      -26.4 &						   -0.021 &   -0.012 &  -0.002 \\		
 RET46 &        -264.9 &         -139.9 &       -14.2 &						   -0.022 &   -0.012 &  -0.001 \\		
   RET712 &      -234.4 &         -121.4 &         -18.7 &						   -0.019 &  -0.011 &   -0.001 \\		
  \hline 
  \multicolumn{7}{|c|}{Panel B} \\ \hline 
  \multicolumn{1}{|c|}{test statistic:}  & \multicolumn{2}{|c}{10-th quantile} & \multicolumn{2}{c}{average}  & \multicolumn{2}{c|}{90-th quantile} \\
   \multicolumn{1}{|c|}{ $F_\gamma$  } &  \multicolumn{2}{|c}{1444899.20}    &   \multicolumn{2}{c}{ 5135622.34}      &   \multicolumn{2}{c|}{9905946.98	}	\\	
      \multicolumn{1}{|c|}{$ F_\beta  $} &  \multicolumn{2}{|c}{82958.69}    &   \multicolumn{2}{c}{ 296084.32}      &   \multicolumn{2}{c|}{ 612749.73	}	\\	
      \multicolumn{1}{|c|}{$ F_{ \beta , \gamma }$  } &  \multicolumn{2}{|c}{7690568.38}    &   \multicolumn{2}{c}{20110261.61}      &   \multicolumn{2}{c|}{ 37936804.13	}	\\
      	\hline 
 \end{tabular}
\end{center}
\end{table}
\noindent \footnotesize{{\bf Note to Table~1}:  Panel A reports t-$ratios$ and parameter estimates corresponding to the CAPM model, augmented with characteristics SIZE,
BM,  DVOL,  PRICE,  YLD,  RET23,  RET46 and   RET712:
\[
 R_{it}  =  \tilde{\alpha }_{i} +  \tilde{\gamma }_{i} R_{Mkt,t} +    \boldsymbol{ \beta}_{i}'  {\bf X}_{it}   + u_{it}, \,\,\, t=1, \cdots , T, \,\,\, i=1, \cdots , N ,
 \]
where $R_{it}$ defines the excess return on asset $i$,  $R_{Mkt,t}  $ is the $S\& P 500 $ excess return and $X_{it}$  the $8 \times 1 $ vector of characteristics.
Panel B reports the $F$ test statistics corresponding to the null hypotheses $H_0:  \tilde{ \gamma }_{i} = 0 $,   
 $H_0:   \boldsymbol{ \beta}_{i} = 0 $ and   $H_0:  \boldsymbol{ \beta}_i = 0 ,  \tilde{ \gamma }=  0 $,  given by $F_\gamma , F_\beta $ and $ F_{\beta , \gamma } $ respectively.

The data are monthly and makes a panel of monthly observations with $T=348, N=356 $. The characteristics have been cross-sectionally standartized.
Column 2 to 4 of Panel A report the $10$-th decile, the average and the $90$th decile of the t-$ratios$ across the $ N$ assets. Columns 5 to 7  of Panel A report the same quantities with respect to the 
parameter estimates, using the GLS estimator $ (\breve{ \boldsymbol \alpha }_i^{GLS \prime} , \breve{ \boldsymbol \beta }_i^{GLS \prime} )' $ in (\ref{dagger}). Their covariance matrix 
 is estimated using the approach described in Section~\ref{SE}. Column 2 to 4 of Panel B report  the $10$-th decile, the average and the $90$th decile of the three F test statistics
 across the $ N$ assets.

\noindent 

}

\newpage

\begin{table}  \label{ff3table}
\centering
 {\footnotesize\bf $\,\,\,\,\,\,\,\,\,\,\,\,\,\,\,\,\,\,\,\,\,\,\,$  Table 2: \\    $\,\,\,\,\,\,\,\,\,\,\,\,\,\,\,\,\,\,\,\,\,\,\,$ Fama French (1993)  3-factor model: \\    $\,\,\,\,\,\,\,\,\,\,\,\,\,\,\,\,\,\,\,\,\,\,\,$ Testing  the effect of characteristics  }
  \footnotesize
\begin{center}
\begin{tabular}{ |c|ccc|ccc| } 
  \hline
  \multicolumn{7}{|c|}{Panel A} \\ \hline \\
  \multicolumn{1}{|c|}{}  & \multicolumn{3}{|c|}{t-ratios} & \multicolumn{3}{|c|}{GLS estimates}  \\
     parameter:  & 10-th quantile  &  average   &  90-th quantile  	 & 10-th quantile  &  average   &  90-th quantile 	\\		
    intercept  &    -210.6   &   -36.6 &        133.5  &    -0.116 &    -0.021 &     0.069  \\					
       Mkt &       1473.1  &        2325   &     3349.8   &  0.006 &    0.009 &     0.0128 \\					
     SMB    &    -541.1   &    256.6   &    1072.5  &  -0.003  &   0.003 &     0.010 \\					
HML &       -340.2 &      348.3 &       1068.7  &  -0.002 &    0.002 &    0.006 \\					
       SIZE    &    -252.3 &    -91.7 &       81.6 &     -0.205 &    -0.068 &     0.065 \\					
     BM &        -138.3 &       14.6 &       169.3 &    -0.06 &    0.004 &     0.076 \\					
     DVOL    &    -12.8 &      125.1 &       278.6 &   -0.003 &     0.036 &     0.081 \\					
    PRICE & -303.3 &   -140.4 &      -2.1 &    -0.089 &    -0.042 &  -0.001 \\
       YLD &         -327.1 &    -113.7  &      72.7  &    -0.148 &    -0.050 &    0.009 \\
  RET23  &    -305.8 &      -155.2 &       -30.8 &    -0.021 &    -0.011 &   -0.002 \\
 RET46 &   -316.6 &     -160.3 &      -25.8 &    -0.021 &   -0.011 &   -0.002 \\
	 RET712 &      -272.5 &     -141.3 &      -28.5 &    -0.018 &    -0.010 &   -0.002 \\
  \hline 
  \multicolumn{7}{|c|}{Panel B} \\ \hline 
  \multicolumn{1}{|c|}{test statistic:}  & \multicolumn{2}{|c}{10-th quantile} & \multicolumn{2}{c}{average}  & \multicolumn{2}{c|}{90-th quantile} \\
    \multicolumn{1}{|c|}{ $F_\gamma$  } &  \multicolumn{2}{|c}{3148154.05}    &   \multicolumn{2}{c}{8766878.36}      &   \multicolumn{2}{c|}{16270455.79}	\\	
      \multicolumn{1}{|c|}{$ F_\beta  $} &  \multicolumn{2}{|c}{87651.83}    &   \multicolumn{2}{c}{ 315163.52}      &   \multicolumn{2}{c|}{ 634901.58}	\\	
      \multicolumn{1}{|c|}{$ F_{ \beta , \gamma }$  } &  \multicolumn{2}{|c}{ 12894492.39}    &   \multicolumn{2}{c}{ 29647189.57}      &   \multicolumn{2}{c|}{   55497464.04}	\\
      	\hline 
 \end{tabular}
\end{center}
\end{table}
\noindent \footnotesize{{\bf Note to Table~2}:  Panel A reports t-$ratios$ and parameter estimates corresponding to the \cite{FF3}  3-factor model, augmented with characteristics SIZE,
BM,  DVOL,  PRICE,  YLD,  RET23,  RET46 and   RET712:
\[
 R_{it}  = \tilde{ \alpha}_{i} + \tilde{\gamma }_{i1} R_{Mkt,t} +   \tilde{ \gamma }_{i2} R_{SMB,t}    +   \tilde{\gamma }_{i3}  R_{HML,t} +  \boldsymbol{ \beta}_{i}'  X_{it}   + u_{it}, \,\,\, t=1, \cdots , T, \,\,\, i=1, \cdots , N ,
 \]
where $R_{it}$ defines the excess return on asset $i$,  $R_{Mkt,t}  $ is the $S\& P 500 $ excess return, $R_{SMB,t}$ is the size factor, $ R_{HML,t} $ is
 the value factor  and ${\bf X}_{it}$ the the $8 \times 1 $ vector of characteristics.
  Panel B reports the $F$ test statistics corresponding to the null hypotheses $H_0:  \tilde{ \boldsymbol \gamma}_{i} = 0 $,    $H_0:  \boldsymbol{ \beta}_{i} = 0 $ and 
  $H_0:   \boldsymbol{ \beta}_i = 0 , \tilde{ \boldsymbol \gamma }_i = 0 $,  given by $F_\gamma , F_\beta $ and $ F_{  \beta  , \gamma } $ respectively, setting $  \tilde{ \boldsymbol  \gamma }_i = (  \tilde{ \gamma }_{i1},  
  \tilde{ \gamma }_{i2},  \tilde{ \gamma }_{i3})' $.  For details refer to the notes to Table~1.


}

\newpage

\begin{table}  \label{ff5table}
\centering
 {\footnotesize\bf $\,\,\,\,\,\,\,\,\,\,\,\,\,\,\,\,\,\,\,\,\,\,\,$  Table 3: \\    $\,\,\,\,\,\,\,\,\,\,\,\,\,\,\,\,\,\,\,\,\,\,\,$ Fama French (2105)  5-factor model:  \\     $\,\,\,\,\,\,\,\,\,\,\,\,\,\,\,\,\,\,\,\,\,\,\,$ Testing  the effect of characteristics  }
  \footnotesize
\begin{center}
\begin{tabular}{ |c|ccc|ccc| } 
  \hline
  \multicolumn{7}{|c|}{Panel A} \\ \hline \\
  \multicolumn{1}{|c|}{}  & \multicolumn{3}{|c|}{t-ratios} & \multicolumn{3}{|c|}{GLS estimates}  \\
     parameter:  & 10-th quantile  &  average   &  90-th quantile  	 & 10-th quantile  &  average   &  90-th quantile 	\\		
    intercept  &   -208.7 &   -36.6 &       129.1 &     -0.120 &    -0.021 &     0.075 \\
   Mkt &         1303    &    2172.2  &      3207.8  &   0.006 &    0.009 &     0.012 \\
SMB &  -605.5 &       259.9 &       1143.2   &  -0.003 &    0.002 &     0.010 \\						
     HML &       -326.7 &    202.5 &        863.8  &  -0.003 &   0.001 &    0.007 \\
     RMW &  -440.7 &     -57.1 &       344.4 &  -0.006 &  -0.001 &    0.006 \\
CMA &   -488.1 &     -0.661 &       484.4 &   -0.007 &    0.0001  &  0.007 \\
   SIZE   &  -255.8 &      -90.3 &       78.5 &     -0.206 &    -0.067 &     0.062 \\
  BM &     -141.7 &      16.4 &       163.8 &    -0.063  &    0.005 &     0.075 \\
   DVOL &   -11.1 &     124.7 &       266.5 &   -0.002 &     0.036 &      0.081 \\
   PRICE &     -300.1 &      -138.9 &      0.813 &     -0.087 &    -0.042 &  -0.001 \\
   YLD &       -319.1 &    -113.1 &       71.9 &     -0.151 &    -0.050 &    0.009 \\
  RET23 &         -304.6 &      -154.5 &      -26.7 &    -0.021 &    -0.011 &   -0.002 \\
 RET46 &     -318.8 &      -158.8 &      -21.7 &    -0.021 &    -0.011 &   -0.001 \\
	   RET712 & -265.6 &     -141.1 &    -31   &  -0.018 &   -0.010 &   -0.002 \\
	    \hline 
    \multicolumn{7}{|c|}{Panel B} \\ \hline 
  \multicolumn{1}{|c|}{test statistic:}  & \multicolumn{2}{|c}{10-th quantile} & \multicolumn{2}{c}{average}  & \multicolumn{2}{c|}{90-th quantile} \\
     \multicolumn{1}{|c|}{ $F_\gamma$  } &  \multicolumn{2}{|c}{ 4930084.6}    &   \multicolumn{2}{c}{11721601.5}      &   \multicolumn{2}{c|}{20418779.1}	\\	
      \multicolumn{1}{|c|}{$ F_\beta  $} &  \multicolumn{2}{|c}{83573.5}    &   \multicolumn{2}{c}{303480.1}      &   \multicolumn{2}{c|}{ 569989.4}	\\	
      \multicolumn{1}{|c|}{$ F_{\beta , \gamma }$  } &  \multicolumn{2}{|c}{14961844.2}    &   \multicolumn{2}{c}{34707178.6}      &   \multicolumn{2}{c|}{60856307.5	}	\\
      	\hline 
 \end{tabular}
\end{center}
\end{table}
\noindent \footnotesize{{\bf Note to Table~3}:  Panel A reports t-$ratios$ and parameter estimates corresponding to the Fama and French \cite{FF5}  5-factor model, augmented with characteristics SIZE,
BM,  DVOL,  PRICE,  YLD,  RET23,  RET46 and   RET712:
\[
 R_{it}  = \tilde{\alpha}_{i} +  \tilde{\gamma}_{i1} R_{Mkt,t} +   \tilde{\gamma}_{i2} R_{SMB,t}    +  \tilde{\gamma}_{i3}  R_{HML,t} +    \tilde{\gamma}_{i4} R_{RMW,t}  
   +  \tilde{\gamma}_{i5}  R_{CMA,t} +  \boldsymbol{ \beta }_{i}'  \boldsymbol{ X}_{it}   + u_{it}, \,\,\, t=1, \cdots , T, \,\,\, i=1, \cdots , N ,
 \]
where $R_{it}$ defines the excess return on asset $i$,  $R_{Mkt,t}  $ is the $S\& P 500 $ excess return, $R_{SMB,t}$ is the size factor, $ R_{HML,t} $ is the value factor, 
$R_{RMW,t}$ is the size factor, $ R_{CMA,t} $ is the value factor   and ${\bf X}_{it}$ the the $8 \times 1 $ vector of characteristics.
  Panel B reports the $F$ test statistics corresponding to the null hypotheses $H_0: \tilde{ \boldsymbol \gamma}_i = 0 $,    $H_0:  \boldsymbol{ \beta}_{i} = 0 $ and   $H_0: \boldsymbol{ \beta}_i = 0 , \tilde{ \boldsymbol \gamma}_i = 0 $, 
given by $F_\gamma, F_\beta $ and $ F_{ \beta  , \gamma } $ respectively, setting $\tilde{ \boldsymbol \gamma}_i = (  \tilde{\gamma}_{i1}, \tilde{\gamma}_{i2},  \tilde{\gamma}_{i3},
  \tilde{\gamma}_{i4}, \tilde{\gamma}_{i5})' $.  For details refer to the notes to Table~1.


\noindent 

}

\newpage

\begin{table}  \label{MC1}
\centering
 {\footnotesize\bf $\,\,\,\,\,\,\,\,\,\,\,\,\,\,\,\,\,\,\,\,\,\,\,$  Table 4: \\ time regression with unit-specific coefficients \\ intercept term $ \alpha_{i0}  = 1, \,\,\, i=1,...,N.$ }
  \footnotesize
\begin{tabular}{|lcccccccccccc|}
\hline
 \multicolumn{1}{|c|}{ }  &
 \multicolumn{6}{|c|}{  } &  \multicolumn{6}{|c|}{  }
 \\
\multicolumn{1}{|c|}{ }  & \multicolumn{6}{|c|}{ \begin{centering}
$GLS$ \end{centering}
 } &
 \multicolumn{6}{|c|}{ \begin{centering}  $GLS$ (multi-step)   \end{centering} }
\\ \hline
\multicolumn{1}{|c|}{ }  & \multicolumn{3}{|c|}{ \begin{centering}
\texttt{mean}
\end{centering}
 } &  \multicolumn{3}{|c|}{ \begin{centering}
\texttt{rmse} \end{centering}
 } & \multicolumn{3}{|c|}{ \begin{centering} \texttt{mean}
\end{centering}
 } &  \multicolumn{3}{|c|}{ \begin{centering}
\texttt{rmse} \end{centering}
 } \\ \hline
\multicolumn{1}{|c|}{  $(N,T)$ }  & \multicolumn{1}{|r}{
\hspace{0.6pt} $30$ } & \multicolumn{1}{r}{\hspace{0.6pt} $100$ } &
\multicolumn{1}{c|}{ $300$ }   & \multicolumn{1}{|r}{ \hspace{0.6pt}
$30$ } & \multicolumn{1}{r}{\hspace{0.6pt} $100$ } &
\multicolumn{1}{c|}{ $300$ }  &  \multicolumn{1}{|r}{ \hspace{0.6pt}
$30$ } & \multicolumn{1}{r}{\hspace{0.6pt} $100$ } &
\multicolumn{1}{c|}{ $300$ }  & \multicolumn{1}{|r}{ \hspace{0.6pt}
$30$ } & \multicolumn{1}{r}{\hspace{0.6pt} $100$ } &
\multicolumn{1}{c|}{ $300$
} \\
\multicolumn{1}{|c|}{  }  & \multicolumn{3}{|c|}{ \begin{centering} \end{centering}  } &  \multicolumn{3}{|c|}{ \begin{centering}  \end{centering}  } & \multicolumn{3}{|c|}{ \begin{centering} \end{centering}  } &  \multicolumn{3}{|c|}{ \begin{centering}  \end{centering}  } \\
\multicolumn{1}{|l|}{  $60$ }  & \multicolumn{3}{|l|}{ \begin{centering} $0.944$ \hspace{0.1in} $-$ \hspace{0.16in} $-$ \end{centering}  } &  \multicolumn{3}{|l|}{ \begin{centering} $0.523$ \hspace{0.1in} $-$ \hspace{0.16in} $-$ \end{centering}  } & \multicolumn{3}{|l|}{ \begin{centering} $0.976$ \hspace{0.1in} $-$ \hspace{0.16in} $-$ \end{centering} } &  \multicolumn{3}{|l|}{ \begin{centering} $0.531$ \hspace{0.1in} $-$ \hspace{0.16in} $-$ \end{centering}  } \\
\multicolumn{1}{|l|}{  $200$ }  & \multicolumn{3}{|l}{ \begin{centering} $0.967$ $ 0.951 $ \hspace{0.1in} $-$ \end{centering} } &  \multicolumn{3}{|l|}{ \begin{centering} $0.518$ $ 0.315 $ \hspace{0.1in} $-$  \end{centering} } & \multicolumn{3}{|l|}{ \begin{centering} $0.986$ $0.987 $ \hspace{0.1in} $-$  \end{centering}  } &  \multicolumn{3}{|l|}{ \begin{centering} $0.527$ $  0.309$ \hspace{0.1in} $-$  \end{centering} } \\
\multicolumn{1}{|l|}{  $600$ }  & \multicolumn{3}{|l|}{
\begin{centering} $0.981$ $ 0.982$ $0.955$ \end{centering} } &  \multicolumn{3}{|l|}{ \begin{centering} $0.524$ $0.308$ $0.200$ \end{centering}  } & \multicolumn{3}{|l|}{ \begin{centering} $0.994$ $0.998$ $0.991$ \end{centering}  } & \multicolumn{3}{|l|}{ \begin{centering} $0.531$ $0.310$ $0.184$
\end{centering} } \\ \hline
 \multicolumn{1}{|c|}{ }  &
 \multicolumn{6}{|c|}{  } &  \multicolumn{6}{|c|}{  }
 \\
\multicolumn{1}{|c|}{ }  & \multicolumn{6}{|c|}{ \begin{centering}
$OLS$ \end{centering}
 } &
 \multicolumn{6}{|c|}{ \begin{centering}  $UGLS$    \end{centering} }
\\ \hline
\multicolumn{1}{|c|}{ }  & \multicolumn{3}{|c|}{ \begin{centering}
\texttt{mean}
\end{centering}
 } &  \multicolumn{3}{|c|}{ \begin{centering}
\texttt{rmse} \end{centering}
 } & \multicolumn{3}{|c|}{ \begin{centering} \texttt{mean}
\end{centering}
 } &  \multicolumn{3}{|c|}{ \begin{centering}
\texttt{rmse} \end{centering}
 } \\ \hline
\multicolumn{1}{|c|}{  $(N,T)$ }  & \multicolumn{1}{|r}{
\hspace{0.6pt} $30$ } & \multicolumn{1}{r}{\hspace{0.6pt} $100$ } &
\multicolumn{1}{c|}{ $300$ }   & \multicolumn{1}{|r}{ \hspace{0.6pt}
$30$ } & \multicolumn{1}{r}{\hspace{0.6pt} $100$ } &
\multicolumn{1}{c|}{ $300$ }  &  \multicolumn{1}{|r}{ \hspace{0.6pt}
$30$ } & \multicolumn{1}{r}{\hspace{0.6pt} $100$ } &
\multicolumn{1}{c|}{ $300$ }  & \multicolumn{1}{|r}{ \hspace{0.6pt}
$30$ } & \multicolumn{1}{r}{\hspace{0.6pt} $100$ } &
\multicolumn{1}{c|}{ $300$
} \\
\multicolumn{1}{|c|}{ }  & \multicolumn{3}{|c|}{
\begin{centering}
\end{centering}
 } &  \multicolumn{3}{|c|}{ \begin{centering}
 \end{centering}
 } & \multicolumn{3}{|c|}{ \begin{centering}
\end{centering}
 } &  \multicolumn{3}{|c|}{ \begin{centering}
 \end{centering}
 } \\
\multicolumn{1}{|l|}{  $60$ }  & \multicolumn{3}{|l|}{ \begin{centering} $0.897$  \hspace{0.1in} $-$ \hspace{0.16in} $-$ \end{centering}  } &  \multicolumn{3}{|l|}{ \begin{centering} $0.560$ \hspace{0.1in} $-$ \hspace{0.16in} $-$ \end{centering}  } & \multicolumn{3}{|l|}{ \begin{centering} $0.993$ \hspace{0.1in} $-$ \hspace{0.16in} $-$ \end{centering}  } &  \multicolumn{3}{|l|}{ \begin{centering} $ 0.369$ \hspace{0.1in} $-$ \hspace{0.16in} $-$ \end{centering}  } \\
 \multicolumn{1}{|l|}{  $200$ }  & \multicolumn{3}{|l|}{ \begin{centering} $0.892$ $  0.901$ \hspace{0.1in} $-$ \end{centering} } &  \multicolumn{3}{|l|}{ \begin{centering} $0.563$ $ 0.361$  \hspace{0.1in} $-$ \end{centering} } & \multicolumn{3}{|l|}{ \begin{centering} $0.993$ $0.999$ \hspace{0.1in} $-$ \end{centering}  } &  \multicolumn{3}{|l|}{ \begin{centering} $0.368$ $0.221 $ \hspace{0.1in} $-$ \end{centering}
 } \\
\multicolumn{1}{|l|}{$600$}&\multicolumn{3}{|l|}{
\begin{centering} $0.898$ $0.902$ $0.904$ \end{centering}  } & \multicolumn{3}{|l|}{
\begin{centering} $0.567$ $0.363$ $0.261$ \end{centering}  } &
\multicolumn{3}{|l|}{ \begin{centering} $0.994$ $0.998$ $0.999$
\end{centering}  } & \multicolumn{3}{|l|}{ \begin{centering} $0.369$
$0.222$ $0.134$ \end{centering}  } \\ \hline
\end{tabular}
\end{table}
\noindent \footnotesize{{\bf Note to Table~4}:  data are generated according to model
\[
 y_{it}  = \alpha_{i0} + \beta_{i0} x_{it} + b_{i10} f_{1t} + b_{i20} f_{2t}  +\varepsilon_{it}
 \]
with regressor  $ x_{it}  = \gamma_{i0} + \delta_{i10} f_{1t} + \delta_{i30} f_{3t}  + v_{it}$.
Factor loadings are normally distributed random variables, $iid$ across units and mutually independent, satisfying
$b_{i10} \sim NID (1,0.2), $ \hspace{0.1in} $ b_{i20} \sim NID (0,0.2) ,$ $ \delta_{i10} \sim NID(0.5,0.5),$ $  \delta_{i30} \sim NID(0,0.5). $
Latent common factors are
$  { f }_{j,t} = $ $  0.5  { f }_{j,t-1} $ $ + \sqrt{0.5} { \eta }_{jf,t}, $
 with $  { \eta }_{jf,t}  $ $ \sim NID(0,1) $, mutually independent for $j=1,2,3$, and
  idiosyncratic innovation are
  $
    \varepsilon_{it}  = $ $ \rho_{i\varepsilon }   { \varepsilon }_{it-1}$ $ + { \eta }_{i\varepsilon,t}$ with $
 { \eta }_{i\varepsilon,t} $ $ \sim NID( 0,   \sigma_i^2 ( 1 -  \rho_{i\varepsilon }^2) ),$ $
    v_{it}   = $ $  \rho_{iv }   { v }_{it-1} + { \eta }_{iv,t},$ with $
 { \eta }_{iv ,t} $ $ \sim NID( 0,   ( 1 -  \rho_{i v }^2) ),$
with $ \rho_{i \varepsilon } $ $ \sim UID(0.05,0.95),$ $  \rho_{i v } $ $ \sim UID(0.05,0.95),$ $ \sigma_{i \varepsilon }^2 $ $ \sim UID(0.5,1.5)$, $iid$ across $i=1,...,N$ and mutually independent.

\noindent Parameters of interest are  constant across replications and equal to $ \alpha_{i0} = 1,  \gamma_{i0} = 0.5 $ and, assuming $N$ even, $
\beta_{i0} =  1 $ for   $i=1,...,N/2$  and $\beta_{i0} =  3  $ for $ i=N /2+1,...,N.$

\noindent Panels headed by \texttt{mean} and \texttt{rmse} report, respectively,  $ N^{-1} \sum_{i=1}^N \left( MM^{-1} \sum_{m=1}^{MM} \hat{\alpha}_i^m  \right) $ and  \newline
 $  N^{-1} \sum_{i=1}^N  \left( MM^{-1} \sum_{m=1}^{MM} (   \hat{\alpha}_i^m -   1 )^2 \right)^{1 \over 2} $ with $MM=2,000$. Here $ \hat{ \alpha }_i^m $ denotes the estimate, based on either the  GLS 
 $ \tilde{\bm \alpha }_i^{GLS}  $ of Section~(\ref{common})  (top left panel), multi-step GLS  with $J=4$ steps (top right panel), OLS $  \bm{\hat{ \alpha }_i^{OLS}}  $   (bottom left panel) and UGLS $ \bm{\hat{ \alpha }_i^{UGLS}}  $ (bottom right panel) of $ \alpha_{i0} $ for the $m$ Monte Carlo iteration.
}


\newpage

\begin{table}  \label{MC2}
\vspace{-1.30in}
\centering {\footnotesize\bf
$\,\,\,\,\,\,\,\,\,\,\,\,\,\,\,\,\,\,\,\,\,\,\,$  Table 5: \\ 
 time
regression with unit-specific coefficients \\ regression coefficient
$ \beta_{i0}  = 1, \,\,\, i=1,...,N/2.$ }
  \footnotesize
\begin{tabular}{|lcccccccccccc|}
\hline
 \multicolumn{1}{|c|}{ }  &
 \multicolumn{6}{|c|}{  } &  \multicolumn{6}{|c|}{  }
 \\
\multicolumn{1}{|c|}{ }  & \multicolumn{6}{|c|}{ \begin{centering}
$GLS$ \end{centering}
 } &
 \multicolumn{6}{|c|}{ \begin{centering}  $GLS$ (multi-step)   \end{centering} }
\\ \hline
\multicolumn{1}{|c|}{ }  & \multicolumn{3}{|c|}{ \begin{centering}
\texttt{mean}
\end{centering}
 } &  \multicolumn{3}{|c|}{ \begin{centering}
\texttt{rmse} \end{centering}
 } & \multicolumn{3}{|c|}{ \begin{centering} \texttt{mean}
\end{centering}
 } &  \multicolumn{3}{|c|}{ \begin{centering}
\texttt{rmse} \end{centering}
 } \\ \hline
\multicolumn{1}{|c|}{  $(N,T)$ }  & \multicolumn{1}{|r}{
\hspace{0.6pt} $30$ } & \multicolumn{1}{r}{\hspace{0.6pt} $100$ } &
\multicolumn{1}{c|}{ $300$ }   & \multicolumn{1}{|r}{ \hspace{0.6pt}
$30$ } & \multicolumn{1}{r}{\hspace{0.6pt} $100$ } &
\multicolumn{1}{c|}{ $300$ }  &  \multicolumn{1}{|r}{ \hspace{0.6pt}
$30$ } & \multicolumn{1}{r}{\hspace{0.6pt} $100$ } &
\multicolumn{1}{c|}{ $300$ }  & \multicolumn{1}{|r}{ \hspace{0.6pt}
$30$ } & \multicolumn{1}{r}{\hspace{0.6pt} $100$ } &
\multicolumn{1}{c|}{ $300$
} \\
\multicolumn{1}{|c|}{  }  & \multicolumn{3}{|c|}{ \begin{centering} \end{centering}  } &  \multicolumn{3}{|c|}{ \begin{centering}  \end{centering}  } & \multicolumn{3}{|c|}{ \begin{centering} \end{centering}  } &  \multicolumn{3}{|c|}{ \begin{centering}  \end{centering}  } \\
\multicolumn{1}{|l|}{  $60$ }  & \multicolumn{3}{|l|}{ \begin{centering} $1.089$  \hspace{0.1in} $-$ \hspace{0.16in} $-$ \end{centering}  } &  \multicolumn{3}{|l|}{ \begin{centering} $0.266$  \hspace{0.1in} $-$ \hspace{0.16in} $-$ \end{centering}  } & \multicolumn{3}{|l|}{ \begin{centering} $1.028$  \hspace{0.1in} $-$ \hspace{0.16in} $-$ \end{centering} } &  \multicolumn{3}{|l|}{ \begin{centering} $0.198$  \hspace{0.1in} $-$ \hspace{0.16in} $-$ \end{centering}  } \\
\multicolumn{1}{|l|}{  $200$ }  & \multicolumn{3}{|l|}{ \begin{centering} $1.039$ $ 1.079 $ \hspace{0.1in}  $-$ \end{centering} } &  \multicolumn{3}{|l|}{ \begin{centering} $0.175$ $ 0.191 $ \hspace{0.1in}   $-$ \end{centering} } & \multicolumn{3}{|l|}{ \begin{centering} $1.008$ $1.015 $ \hspace{0.1in}  $-$ \end{centering}  } &  \multicolumn{3}{|l|}{ \begin{centering} $0.148$ $  0.096$\hspace{0.1in}  $-$ \end{centering} } \\
\multicolumn{1}{|l|}{  $600$ }  & \multicolumn{3}{|l|}{
\begin{centering} $1.026$ $1.028$ $1.078$ \end{centering} } &  \multicolumn{3}{|l|}{ \begin{centering} $0.155$ $0.097$ $0.167$ \end{centering}  } & \multicolumn{3}{|l|}{ \begin{centering} $1.006$ $1.002$ $1.012$ \end{centering}  } & \multicolumn{3}{|l|}{ \begin{centering} $0.138$ $0.069$ $0.055$
\end{centering} } \\ \hline
 \multicolumn{1}{|c|}{ }  &
 \multicolumn{6}{|c|}{  } &  \multicolumn{6}{|c|}{  }
 \\
\multicolumn{1}{|c|}{ }  & \multicolumn{6}{|c|}{ \begin{centering}
$OLS$ \end{centering}
 } &
 \multicolumn{6}{|c|}{ \begin{centering}  $UGLS$    \end{centering} }
\\ \hline
\multicolumn{1}{|c|}{ }  & \multicolumn{3}{|c|}{ \begin{centering}
\texttt{mean}
\end{centering}
 } &  \multicolumn{3}{|c|}{ \begin{centering}
\texttt{rmse} \end{centering}
 } & \multicolumn{3}{|c|}{ \begin{centering} \texttt{mean}
\end{centering}
 } &  \multicolumn{3}{|c|}{ \begin{centering}
\texttt{rmse} \end{centering}
 } \\ \hline
\multicolumn{1}{|c|}{  $(N,T)$ }  & \multicolumn{1}{|r}{
\hspace{0.6pt} $30$ } & \multicolumn{1}{r}{\hspace{0.6pt} $100$ } &
\multicolumn{1}{c|}{ $300$ }   & \multicolumn{1}{|r}{ \hspace{0.6pt}
$30$ } & \multicolumn{1}{r}{\hspace{0.6pt} $100$ } &
\multicolumn{1}{c|}{ $300$ }  &  \multicolumn{1}{|r}{ \hspace{0.6pt}
$30$ } & \multicolumn{1}{r}{\hspace{0.6pt} $100$ } &
\multicolumn{1}{c|}{ $300$ }  & \multicolumn{1}{|r}{ \hspace{0.6pt}
$30$ } & \multicolumn{1}{r}{\hspace{0.6pt} $100$ } &
\multicolumn{1}{c|}{ $300$
} \\
\multicolumn{1}{|c|}{ }  & \multicolumn{3}{|c|}{
\begin{centering}
\end{centering}
 } &  \multicolumn{3}{|c|}{ \begin{centering}
 \end{centering}
 } & \multicolumn{3}{|c|}{ \begin{centering}
\end{centering}
 } &  \multicolumn{3}{|c|}{ \begin{centering}
 \end{centering}
 } \\
\multicolumn{1}{|l|}{  $60$ }  & \multicolumn{3}{|l|}{ \begin{centering} $1.179$  \hspace{0.1in} $-$ \hspace{0.16in} $-$ \end{centering}  } &  \multicolumn{3}{|l|}{ \begin{centering} $0.451$  \hspace{0.1in} $-$ \hspace{0.16in} $-$ \end{centering}  } & \multicolumn{3}{|l|}{ \begin{centering} $1.013 $  \hspace{0.1in} $-$ \hspace{0.16in} $-$ \end{centering}  } &  \multicolumn{3}{|l|}{ \begin{centering} $ 0.176$  \hspace{0.1in} $-$ \hspace{0.16in} $-$ \end{centering}  } \\
 \multicolumn{1}{|l|}{  $200$ }  & \multicolumn{3}{|l|}{ \begin{centering} $1.179$ $  1.170$  \hspace{0.1in} $-$ \end{centering} } &  \multicolumn{3}{|l|}{ \begin{centering} $0.452$ $ 0.374$  \hspace{0.1in} $-$ \end{centering} } & \multicolumn{3}{|l|}{ \begin{centering} $1.014$ $1.003$  \hspace{0.1in} $-$ \end{centering}  } &  \multicolumn{3}{|l|}{ \begin{centering} $0.177$ $0.087$  \hspace{0.1in}  $-$ \end{centering}
 } \\
\multicolumn{1}{|l|}{$600$}&\multicolumn{3}{|l|}{ \begin{centering}
$1.179$ $1.171$ $1.169$ \end{centering}  } & \multicolumn{3}{|l|}{
\begin{centering} $0.451$ $0.373$ $0.347$ \end{centering}  } &
\multicolumn{3}{|l|}{ \begin{centering} $1.014$ $1.004$ $1.001$
\end{centering}  } & \multicolumn{3}{|l|}{ \begin{centering} $0.177$
$0.088$ $0.049$ \end{centering}  } \\ \hline
\multicolumn{13}{l}{
 \footnotesize{{\bf Note to Table~5}:   data are generated according to the same model described in Table~1.}} \\
 \multicolumn{13}{l}{
 \footnotesize{  Panels headed by \texttt{mean} and \texttt{rmse} report, respectively, 
  $ (N/2)^{-1} \sum_{i=1}^{N/2} \left( MM^{-1} \sum_{m=1}^{MM} \hat{\beta}_i^m  \right) $ }} \\
   \multicolumn{13}{l}{ \footnotesize{ and  $  (N/2)^{-1} \sum_{i=1}^{N/2}  \left( MM^{-1} \sum_{m=1}^{MM} (   \hat{\beta}_i^m -   1 )^2 \right)^{1 \over 2} $ with $MM=2,000$. Here $ \hat{ \beta }_i^m $ denotes the
  }} \\
  \multicolumn{13}{l}{ \footnotesize{
     estimate, based on either the  GLS  $ \bm{\hat{ \beta }_i^{GLS}} $  in equation (\ref{eq:bgls}) (top left panel), multi-step GLS $\bm{\hat{ \beta }}_i^{(J)} $  }} \\
   \multicolumn{13}{l}{ \footnotesize{  with $J=4$ steps
     (top right panel), OLS  $  \bm{\hat{ \beta }_i^{OLS}}  $ (bottom left panel) and UGLS  $ \bm{\hat{ \beta }_i^{UGLS}}  $ }} \\
  \multicolumn{13}{l}{ \footnotesize{  (bottom right panel) of $ \beta_{i0} $
 for the  $m$th Monte Carlo iteration.}}
\end{tabular}
\end{table}


\begin{table}  \label{MC3}
\centering {\footnotesize\bf
$\,\,\,\,\,\,\,\,\,\,\,\,\,\,\,\,\,\,\,\,\,\,\,$  Table 6: \\ time
regression with unit-specific coefficients \\ regression coefficient
$ \beta_{i0}  = 3, \,\,\, i=N/2+1,...,N.$ }
  \footnotesize
\begin{tabular}{|lcccccccccccc|}
\hline
 \multicolumn{1}{|c|}{ }  &
 \multicolumn{6}{|c|}{  } &  \multicolumn{6}{|c|}{  }
 \\
\multicolumn{1}{|c|}{ }  & \multicolumn{6}{|c|}{ \begin{centering}
$GLS$ \end{centering}
 } &
 \multicolumn{6}{|c|}{ \begin{centering}  $GLS$ (multi-step)   \end{centering} }
\\ \hline
\multicolumn{1}{|c|}{ }  & \multicolumn{3}{|c|}{ \begin{centering}
\texttt{mean}
\end{centering}
 } &  \multicolumn{3}{|c|}{ \begin{centering}
\texttt{rmse} \end{centering}
 } & \multicolumn{3}{|c|}{ \begin{centering} \texttt{mean}
\end{centering}
 } &  \multicolumn{3}{|c|}{ \begin{centering}
\texttt{rmse} \end{centering}
 } \\ \hline
\multicolumn{1}{|c|}{  $(N,T)$ }  & \multicolumn{1}{|r}{
\hspace{0.6pt} $30$ } & \multicolumn{1}{r}{\hspace{0.6pt} $100$ } &
\multicolumn{1}{c|}{ $300$ }   & \multicolumn{1}{|r}{ \hspace{0.6pt}
$30$ } & \multicolumn{1}{r}{\hspace{0.6pt} $100$ } &
\multicolumn{1}{c|}{ $300$ }  &  \multicolumn{1}{|r}{ \hspace{0.6pt}
$30$ } & \multicolumn{1}{r}{\hspace{0.6pt} $100$ } &
\multicolumn{1}{c|}{ $300$ }  & \multicolumn{1}{|r}{ \hspace{0.6pt}
$30$ } & \multicolumn{1}{r}{\hspace{0.6pt} $100$ } &
\multicolumn{1}{c|}{ $300$
} \\
\multicolumn{1}{|c|}{  }  & \multicolumn{3}{|c|}{ \begin{centering} \end{centering}  } &  \multicolumn{3}{|c|}{ \begin{centering}  \end{centering}  } & \multicolumn{3}{|c|}{ \begin{centering} \end{centering}  } &  \multicolumn{3}{|c|}{ \begin{centering}  \end{centering}  } \\
\multicolumn{1}{|l|}{  $60$ }  & \multicolumn{3}{|l|}{ \begin{centering} $3.105$  \hspace{0.1in} $-$ \hspace{0.16in} $-$ \end{centering}  } &  \multicolumn{3}{|l|}{ \begin{centering} $0.314$  \hspace{0.1in} $-$ \hspace{0.16in} $-$ \end{centering}  } & \multicolumn{3}{|l|}{ \begin{centering} $3.041$  \hspace{0.1in} $-$ \hspace{0.16in} $-$ \end{centering} } &  \multicolumn{3}{|l|}{ \begin{centering} $0.277$  \hspace{0.1in} $-$ \hspace{0.16in} $-$ \end{centering}  } \\
\multicolumn{1}{|l|}{  $200$ }  & \multicolumn{3}{|l|}{ \begin{centering} $3.053$ $ 3.095 $ \hspace{0.1in}  $-$ \end{centering} } &  \multicolumn{3}{|l|}{ \begin{centering} $0.228$ $ 0.227 $ \hspace{0.1in}   $-$ \end{centering} } & \multicolumn{3}{|l|}{ \begin{centering} $3.013$ $3.024 $ \hspace{0.1in}  $-$ \end{centering}  } &  \multicolumn{3}{|l|}{ \begin{centering} $0.215$ $  0.146$\hspace{0.1in}  $-$ \end{centering} } \\
\multicolumn{1}{|l|}{  $600$ }  & \multicolumn{3}{|l|}{
\begin{centering} $3.034$ $3.037$ $3.091$ \end{centering} } &  \multicolumn{3}{|l|}{ \begin{centering} $0.209$ $0.133$ $0.198$ \end{centering}  } & \multicolumn{3}{|l|}{ \begin{centering} $3.012$ $3.004$ $3.019$ \end{centering}  } & \multicolumn{3}{|l|}{ \begin{centering} $0.200$ $0.111$ $0.090$
\end{centering} } \\ \hline
 \multicolumn{1}{|c|}{ }  &
 \multicolumn{6}{|c|}{  } &  \multicolumn{6}{|c|}{  }
 \\
\multicolumn{1}{|c|}{ }  & \multicolumn{6}{|c|}{ \begin{centering}
$OLS$ \end{centering}
 } &
 \multicolumn{6}{|c|}{ \begin{centering}  $UGLS$    \end{centering} }
\\ \hline
\multicolumn{1}{|c|}{ }  & \multicolumn{3}{|c|}{ \begin{centering}
\texttt{mean}
\end{centering}
 } &  \multicolumn{3}{|c|}{ \begin{centering}
\texttt{rmse} \end{centering}
 } & \multicolumn{3}{|c|}{ \begin{centering} \texttt{mean}
\end{centering}
 } &  \multicolumn{3}{|c|}{ \begin{centering}
\texttt{rmse} \end{centering}
 } \\ \hline
\multicolumn{1}{|c|}{  $(N,T)$ }  & \multicolumn{1}{|r}{
\hspace{0.6pt} $30$ } & \multicolumn{1}{r}{\hspace{0.6pt} $100$ } &
\multicolumn{1}{c|}{ $300$ }   & \multicolumn{1}{|r}{ \hspace{0.6pt}
$30$ } & \multicolumn{1}{r}{\hspace{0.6pt} $100$ } &
\multicolumn{1}{c|}{ $300$ }  &  \multicolumn{1}{|r}{ \hspace{0.6pt}
$30$ } & \multicolumn{1}{r}{\hspace{0.6pt} $100$ } &
\multicolumn{1}{c|}{ $300$ }  & \multicolumn{1}{|r}{ \hspace{0.6pt}
$30$ } & \multicolumn{1}{r}{\hspace{0.6pt} $100$ } &
\multicolumn{1}{c|}{ $300$
} \\
\multicolumn{1}{|c|}{ }  & \multicolumn{3}{|c|}{
\begin{centering}
\end{centering}
 } &  \multicolumn{3}{|c|}{ \begin{centering}
 \end{centering}
 } & \multicolumn{3}{|c|}{ \begin{centering}
\end{centering}
 } &  \multicolumn{3}{|c|}{ \begin{centering}
 \end{centering}
 } \\
\multicolumn{1}{|l|}{  $60$ }  & \multicolumn{3}{|l|}{ \begin{centering} $3.201$  \hspace{0.1in} $-$ \hspace{0.16in} $-$ \end{centering}  } &  \multicolumn{3}{|l|}{ \begin{centering} $0.488$  \hspace{0.1in} $-$ \hspace{0.16in} $-$ \end{centering}  } & \multicolumn{3}{|l|}{ \begin{centering} $3.013 $  \hspace{0.1in} $-$ \hspace{0.16in} $-$ \end{centering}  } &  \multicolumn{3}{|l|}{ \begin{centering} $ 0.181$  \hspace{0.1in} $-$ \hspace{0.16in} $-$ \end{centering}  } \\
 \multicolumn{1}{|l|}{  $200$ }  & \multicolumn{3}{|l|}{ \begin{centering} $3.204$ $  3.196$  \hspace{0.1in} $-$ \end{centering} } &  \multicolumn{3}{|l|}{ \begin{centering} $0.490$ $ 0.415$  \hspace{0.1in} $-$ \end{centering} } & \multicolumn{3}{|l|}{ \begin{centering} $3.012$ $3.003$  \hspace{0.1in} $-$ \end{centering}  } &  \multicolumn{3}{|l|}{ \begin{centering} $0.180$ $0.090$  \hspace{0.1in}  $-$ \end{centering}
 } \\
\multicolumn{1}{|l|}{$600$}&\multicolumn{3}{|l|}{ \begin{centering}
$3.204$ $3.197$ $3.193$ \end{centering}  } & \multicolumn{3}{|l|}{
\begin{centering} $0.492$ $0.417$ $0.389$ \end{centering}  } &
\multicolumn{3}{|l|}{ \begin{centering} $3.012$ $3.003$ $3.001$
\end{centering}  } & \multicolumn{3}{|l|}{ \begin{centering} $0.180$
$0.090$ $0.051$ \end{centering}  } \\ \hline
\multicolumn{13}{l}{
 \footnotesize{{\bf Note to Table~6}:   data are generated according to the same model described in Table~1.}} \\
 \multicolumn{13}{l}{
 \footnotesize{  Panels headed by \texttt{mean} and \texttt{rmse} report, respectively, 
  $ (N/2)^{-1} \sum_{i=N/2+1}^N \left( MM^{-1} \sum_{m=1}^{MM} \hat{\beta}_i^m  \right) $ }} \\
   \multicolumn{13}{l}{ \footnotesize{ and  $  (N/2)^{-1} \sum_{i=N/2+1}^N  \left( MM^{-1} \sum_{m=1}^{MM} (   \hat{\beta}_i^m -   3 )^2 \right)^{1 \over 2} $ with $MM=2,000$. Here $ \hat{ \beta }_i^m $ denotes the
  }} \\    \multicolumn{13}{l}{ \footnotesize{
     estimate, based on either the  GLS  $ \bm{\hat{ \beta }_i^{GLS}} $  in equation (\ref{eq:bgls}) (top left panel), multi-step GLS $\bm{\hat{ \beta }}_i^{(J)} $  }} \\
   \multicolumn{13}{l}{ \footnotesize{  with $J=4$ steps
     (top right panel), OLS  $  \bm{\hat{ \beta }_i^{OLS}}  $ (bottom left panel) and UGLS  $ \bm{\hat{ \beta }_i^{UGLS}}  $ }} \\
  \multicolumn{13}{l}{ \footnotesize{  (bottom right panel) of $ \beta_{i0} $
 for the  $m$th Monte Carlo iteration.}}
\end{tabular}
\end{table}

\appendix
\normalfont
\begin{center}
\begin{Huge}
\textsc{
\textbf{Appendices}
}
\end{Huge}
\end{center}

All the proofs and technical details are reported in the three appendixes (Appendixes  A,B and C) of the manuscripts and in the seven appendixes of the Supplement
(Appendixes D,E,F,G,H,I,J).

\bigskip

In particular, Appendix A states (without proof; see Appendix F)  Lemmas A.1 to A.3 and  Appendix B contains the proofs to Theorem 3.1 and 3.2  and states
 (without proofs; see Appendix G) Propositions B.1 to B.17.
Finally, Appendix C defines the $\SS $ matrix and associated quantities that characterize the  asymptotic distribution of the GLS estimator.

\bigskip 

Regarding the Supplement, Appendix D contains needed results of linear matrix algebra (Lemmas D.1 to D.6), Appendix E  contains some results used to construct  bounds on the inverse of various covariance matrixes  (Lemmas E.1 to E.3 and Corollary E.3),  Appendix F contains the proofs to Lemmas A.1 to A.3, Appendix G 
contains the proofs to Propositions B.1 to B.17, Appendix H contains important auxiliary results for the proof of Theorem 3.2 (Lemmas H.1 to  H.19), Appendix I formalizes the 
asymptotic properties of the estimators for the common observed regressors' coefficient (Theorems I.1 to I.3) and, finally, Appendix J provides some technical details  for the case when the regressors and the residuals have different, yet correlated, factor structures.

\section{Central Lemmas}\label{centralemmas}
In the following  $m,m_1,m_3,m_3$ denote positive constants. The proofs of the lemmas stated in this section are provided in Appendix  \ref{proofAppA}.
\begin{lemma}\label{PZ}
  Let ${\bf A }(m_1
\times m_1)$, ${\bf C
}(m_2 \times m_2)$, and ${\bf B }(m_1 \times m_2) $, $m_1>m_2$ be random matrices.
Set $ {\bf E:=BCB' +   A   }$ and  assume that $\lambda_1\left(\A\right)=O_p(1)$. Assume further that $m_2<\infty$, and 
\begin{enumerate}[label=(\alph*)]
\item $\lb\frac{\bm{B'A^{-1}B}}{m_1}\rb=O_p(1)\;$ and $\quad\lb\left(\frac{\bm{B'A^{-1}B}}{m_1}\right)^{-1}\rb=O_p(1)$.
\item $\lb\left(\frac{\bm{ C^{-1} +
B'A^{-1}B}}{m_1} \right)^{-1}\rb =O_p(1)$.
\item  $\lb\frac{\bm{B'(A^{-1})'A^{-1}B}}{m_1}\rb=O_p(1)$.
\item $\lb \bm{C} \rb=O_p(1)$ and $\lb \bm{C}^{-1} \rb=O_p(1)$.
\end{enumerate}   
Then,
\begin{enumerate}[label=(\roman*)]
\item 
$
\parallel \bm{ E }^{-1} \bm{ B } \parallel_2=O_p\left( m_1^{-\nicefrac{1}{2}}\right). $
\end{enumerate}
Let $\bm{D}$ be a $m_1\times m_3$ matrix, with $m_3<\infty$.
\begin{enumerate}[label=(\roman*)]
\setcounter{enumi}{1}
\item 
If $   \lb m_1^{-1}\bm{D}' \bm {A }^{-1 
}  \bm{ B }\rb_2
 = O_p ( 1 ) $,
 then
$ \lb \bm{D}'  \bm{ E }^{-1} \bm{ B }\rb =O_p\left( 1 \right). $
\item  If  $  \lb m_1^{-\nicefrac{1}{2}}\bm{D}' \bm{ A }^{-1}
\bm{ B }\rb_2
 = O_p (1) $, then  $\lb \bm{D}'  \bm{ E }^{-1} \bm{ B } \rb =O_p\left(m_1^{-\nicefrac{1}{ 2 }} \right)$.
\end{enumerate}

\end{lemma}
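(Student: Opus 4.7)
The plan is to reduce all three parts to a single matrix factorization obtained from the Sherman--Morrison--Woodbury identity, and then read off the rates from Assumptions~(a)--(d). Since $\bm{A}$ and $\bm{C}$ are invertible (the former is implicit in (a) and (c), the latter is explicit in (d)), Woodbury gives
\[
\bm{E}^{-1} \;=\; \bm{A}^{-1} - \bm{A}^{-1}\bm{B}\,\bm{G}^{-1}\,\bm{B}'\bm{A}^{-1}, \qquad \bm{G}:=\bm{C}^{-1}+\bm{B}'\bm{A}^{-1}\bm{B}.
\]
Right-multiplying by $\bm{B}$ and simplifying via the algebraic identity $\bm{I}_{m_2}-\bm{G}^{-1}\bm{B}'\bm{A}^{-1}\bm{B}=\bm{G}^{-1}\bm{C}^{-1}$ gives the key compression
\[
\bm{E}^{-1}\bm{B} \;=\; \bm{A}^{-1}\bm{B}\,\bm{G}^{-1}\bm{C}^{-1}.
\]
The point of this factorization is that the entire $m_1^{-1}$ decay is concentrated in the middle factor $\bm{G}^{-1}$, while the flanking factors live in the fixed, $m_2$-dimensional world.

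Next, I would translate the assumptions into rates on the three factors: (b) gives $\|\bm{G}^{-1}\|=O_p(m_1^{-1})$, (d) gives $\|\bm{C}^{-1}\|=O_p(1)$, and (c) gives $\|\bm{B}'(\bm{A}^{-1})'\bm{A}^{-1}\bm{B}\|=O_p(m_1)$. For part~(i), I would expand the Gram matrix using the factorization,
\[
\bm{B}'(\bm{E}^{-1})'\bm{E}^{-1}\bm{B} \;=\; (\bm{C}^{-1})'(\bm{G}^{-1})'\bigl(\bm{B}'(\bm{A}^{-1})'\bm{A}^{-1}\bm{B}\bigr)\bm{G}^{-1}\bm{C}^{-1},
\]
and bound its norm by sub-multiplicativity as $O_p(1)\cdot O_p(m_1^{-1})\cdot O_p(m_1)\cdot O_p(m_1^{-1})\cdot O_p(1)=O_p(m_1^{-1})$. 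Since this is a fixed-size $m_2\times m_2$ matrix, its trace, which equals $\|\bm{E}^{-1}\bm{B}\|^2$, inherits the same order, giving~(i).

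For parts~(ii) and~(iii), I would apply the same factorization to obtain
\[
\bm{D}'\bm{E}^{-1}\bm{B} \;=\; (\bm{D}'\bm{A}^{-1}\bm{B})\,\bm{G}^{-1}\,\bm{C}^{-1},
\]
and chain the norms. Under the hypothesis of~(ii), $\|\bm{D}'\bm{A}^{-1}\bm{B}\|=O_p(m_1)$, so the product is $O_p(m_1)\cdot O_p(m_1^{-1})\cdot O_p(1)=O_p(1)$; under the hypothesis of~(iii), $\|\bm{D}'\bm{A}^{-1}\bm{B}\|=O_p(m_1^{1/2})$, yielding $O_p(m_1^{-1/2})$. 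Because $m_2$ and $m_3$ are fixed, Frobenius and spectral norms agree up to constants, and the stated bounds follow regardless of which norm is intended by $\|\cdot\|_2$.

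The only non-routine step is the Woodbury rewrite itself; once in hand, everything else is a bookkeeping exercise on the rates in (a)--(d). The conceptual content of the lemma, and the main obstacle if one tried to proceed without the factorization, is recognizing that this rewrite redistributes all the high-dimensional randomness into $\bm{G}^{-1}$, which is precisely what delivers the surprisingly fast $m_1^{-1/2}$ rate on $\|\bm{E}^{-1}\bm{B}\|$ despite that matrix having $m_1$ rows.
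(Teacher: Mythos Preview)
Your proof is correct and follows essentially the same route as the paper: both arguments hinge on the Sherman--Morrison--Woodbury identity to obtain the factorization $\bm{E}^{-1}\bm{B}=\bm{A}^{-1}\bm{B}\,\bm{G}^{-1}\bm{C}^{-1}$ with $\bm{G}=\bm{C}^{-1}+\bm{B}'\bm{A}^{-1}\bm{B}$, and then read off the rates from assumptions~(a)--(d) via sub-multiplicativity. The only cosmetic difference is that the paper reaches this factorization by first expanding $\bm{E}^{-1}$ into three terms (two of which annihilate $\bm{B}$) rather than via your direct identity $\bm{I}_{m_2}-\bm{G}^{-1}\bm{B}'\bm{A}^{-1}\bm{B}=\bm{G}^{-1}\bm{C}^{-1}$, and for part~(i) bounds $\lb\bm{A}^{-1}\bm{B}\rb$ directly rather than passing through the Gram matrix.
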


\begin{remark}\label{remarkPZ}
Let $\bm{\bar{E}}:=\bm{A}^{-1}-
\bm{A}^{-1}\bm{B}
\left(
\bm {B'A^{-1}B
}
\right)^{-1}\bm{B}'\bm{A}^{-1}$.  Write $\bm{H}:=\bm{A^{-1}B}$; following  \cite[p.40]{j96} we find
\berr
\bm{\bar{E}}&=& 
\bm{A}^{-1}\left(\I_T-\bm{A}^{-1}\bm{B}\left(\bm {B'A^{-1}B}\right)^{-1}\bm{B}'\right)=\bm{H_{\bot}}\left(\bm{B_{\bot}'}\bm{H_{\bot}}\right)^{-1}\bm{B'_{\bot}},
\eerr
implying that $\bm{\bar{E}}\bm{B}=\bm{0}$. Because 
$$
\lb \bm{E}^{-1}-\bm{\bar{E}}\rb=\lb\bm{A^{-1}BC^{-1}B'A^{-1}}\rb\leq \lb\frac{\bm{A^{-1}B}}{m^{\nicefrac{1}{2}}}\rb^2\lb\frac{\bm{C^{-1}}}{m}\rb=O_p\left(\frac{1}{m^{\nicefrac{1}{2}}}\right),
$$
 it follows that the inverse of the matrix $\bm{E}$ approximates a matrix orthogonal to $\bm{B}$.
\end{remark}
%
%
%
%
\begin{lemma}\label{approx inv}
Let $\bm{A},\bm{B}$ nonsingular matrices. Then 
\berr
(i)\quad\bm{A}^{-1}&=&\bm{B}^{-1}-\bm{B}^{-1}(\bm{A}-\bm{B})\bm{B}^{-1}+\bm{B}^{-1}(\bm{A}-\bm{B})\bm{A}^{-1}(\bm{A}-\bm{B})\bm{B}^{-1}\\
(ii)\quad\bm{A}^{-1}&=& \bm{B}\sum_{j=0}^J\left[(-1)\left(\bm{A}-\bm{B}\right)\bm{B}\right]^{j}+\bm{B}(\bm{A}-\bm{B})\bm{A}^{-1}\left[(-1)\bm{B}\left(\bm{A}-\bm{B}\right)\right]^{j}
\eerr
for $J=1,2,\dots.$
\end{lemma}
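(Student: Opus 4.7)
The plan is to derive both identities by repeated application of a basic resolvent identity. First, I would establish that for any invertible square matrices $\bm{A}$ and $\bm{B}$ of the same size,
\[
\bm{A}^{-1} - \bm{B}^{-1} = -\bm{B}^{-1}(\bm{A}-\bm{B})\bm{A}^{-1} = -\bm{A}^{-1}(\bm{A}-\bm{B})\bm{B}^{-1}.
\]
Both equalities follow at once by pre- or post-multiplying $\bm{A}-\bm{B}$ by the appropriate inverse on each side and using $\bm{A}^{-1}\bm{A} = \bm{B}^{-1}\bm{B} = \bm{I}$. Rearranging produces the two one-step expansions
\[
\bm{A}^{-1} = \bm{B}^{-1} - \bm{B}^{-1}(\bm{A}-\bm{B})\bm{A}^{-1}, \qquad \bm{A}^{-1} = \bm{B}^{-1} - \bm{A}^{-1}(\bm{A}-\bm{B})\bm{B}^{-1},
\]
which I will call the left and right forms, respectively.

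For part (i), I would start from the left form and substitute the right form in place of the residual $\bm{A}^{-1}$ appearing on its right-hand side:
\[
\bm{A}^{-1} = \bm{B}^{-1} - \bm{B}^{-1}(\bm{A}-\bm{B})\bigl[\bm{B}^{-1} - \bm{A}^{-1}(\bm{A}-\bm{B})\bm{B}^{-1}\bigr].
\]
Distributing the product and collecting terms delivers the three-term identity claimed in (i). The key subtlety here is that one must use the left form for the outer expansion and the right form for the inner substitution; otherwise the middle factor of $\bm{A}^{-1}$ would be flanked by two $\bm{B}^{-1}$ blocks rather than by two $(\bm{A}-\bm{B})$ blocks.

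For part (ii), I would proceed by induction on $J$, iterating the left-form identity. The base case $J=0$ is precisely the one-step left form. For the inductive step, I would assume the formula holds at stage $J-1$ and apply the left form once more to the trailing $\bm{A}^{-1}$ in the remainder; the freshly produced $\bm{B}^{-1}$ piece is absorbed into the sum as the new $j=J$ term, while the residual acquires one extra power of $-\bm{B}^{-1}(\bm{A}-\bm{B})$. This mirrors the standard Neumann-type expansion of $(\bm{I}+\bm{X})^{-1}$ with $\bm{X} = \bm{B}^{-1}(\bm{A}-\bm{B})$, truncated at order $J$.

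No genuine obstacle arises: the statement is purely algebraic, requiring nothing beyond invertibility of $\bm{A}$ and $\bm{B}$. The only care needed is bookkeeping of left versus right multiplication, since in part (i) the symmetric placement of $\bm{A}^{-1}$ between two copies of $(\bm{A}-\bm{B})$ is produced precisely by combining the two different one-step expansions rather than iterating a single one.
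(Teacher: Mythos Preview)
Your proposal is correct and follows essentially the same route as the paper. The paper also starts from the two one-step identities $\bm{A}^{-1}=\bm{B}^{-1}-\bm{B}^{-1}(\bm{A}-\bm{B})\bm{A}^{-1}$ and $\bm{A}^{-1}=\bm{B}^{-1}-\bm{A}^{-1}(\bm{A}-\bm{B})\bm{B}^{-1}$, then obtains (i) by substituting the second into the first, and (ii) by iterating this substitution. The only cosmetic difference is that for (ii) you iterate the \emph{left} form alone, which places the residual $\bm{A}^{-1}$ at the far right of the remainder, whereas the paper keeps substituting the \emph{right} form and so leaves $\bm{A}^{-1}$ sandwiched; both are valid identities and, given the evident typos in the displayed formula for (ii), either reading is acceptable.
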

\begin{lemma}\label{paolo}
Let $g( \omega) $  be a periodic (mod $2 \pi $) symmetric function defined over $ - \pi \le  \omega  \le \pi $ with bounded  $r$-th order derivative $g^{(r)}(\cdot)$,  some $ r \ge 1 $. Then the Fourier coefficients $ \varsigma_h = \int_{-\pi }^\pi  g( \omega ) \cos (  h \omega ) d \omega $   satisfy:
\[
| \varsigma_h | = O \left( {1 \over h^r  } \right)\quad \mbox{ as } \qquad h \rightarrow \infty .
\]
\end{lemma}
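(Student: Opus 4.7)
The plan is to prove the decay bound by repeated integration by parts, exploiting the periodicity of $g$ to kill the boundary terms at each stage. Since $g$ is periodic mod $2\pi$ and has $r$ derivatives with $g^{(r)}$ bounded, each derivative $g^{(k)}$ for $0 \le k \le r-1$ is itself continuous and periodic mod $2\pi$, so in particular $g^{(k)}(\pi) = g^{(k)}(-\pi)$ for each such $k$.

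The core computation is to integrate by parts once in $\varsigma_h = \int_{-\pi}^{\pi} g(\omega)\cos(h\omega)\,d\omega$, obtaining the boundary contribution $[g(\omega)\sin(h\omega)/h]_{-\pi}^{\pi}$, which vanishes because $\sin(\pm h\pi) = 0$ for integer $h$, and the remainder $-h^{-1}\int_{-\pi}^{\pi} g'(\omega)\sin(h\omega)\,d\omega$. A second integration by parts gives a boundary term proportional to $[g'(\omega)\cos(h\omega)]_{-\pi}^{\pi}/h^2 = (g'(\pi) - g'(-\pi))\cos(h\pi)/h^2$, which vanishes by periodicity of $g'$, leaving $-h^{-2}\int_{-\pi}^{\pi} g''(\omega)\cos(h\omega)\,d\omega$. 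Iterating, after $r$ such steps one arrives at
\[
\varsigma_h = \pm\, h^{-r} \int_{-\pi}^{\pi} g^{(r)}(\omega)\,\varphi_r(h\omega)\,d\omega,
\]
where $\varphi_r$ is either $\sin$ or $\cos$ depending on the parity of $r$, and every intermediate boundary term is killed by the periodicity of the corresponding lower-order derivative of $g$.

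The conclusion is then a trivial norm bound: $|\varsigma_h| \le h^{-r} \cdot 2\pi \cdot \sup_{\omega}|g^{(r)}(\omega)| = O(h^{-r})$, since $g^{(r)}$ is bounded by hypothesis.

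The only subtlety worth flagging is the regularity needed to justify each integration by parts. The natural reading of the hypothesis is that $g \in C^{r-1}$ with $g^{(r-1)}$ absolutely continuous and $g^{(r)}$ essentially bounded, which is exactly enough for the $r$-fold integration by parts to be valid in the Lebesgue sense. The symmetry of $g$ is not strictly needed for the bound itself; periodicity alone suffices to eliminate the boundary terms. This lemma is a standard Fourier-analytic fact, and no deeper ingredients (e.g.\ Riemann–Lebesgue refinements or Bernstein-type estimates) are required.
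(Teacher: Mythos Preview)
Your proof is correct, and it is the standard textbook argument for this classical fact. The paper, however, takes a genuinely different route. Rather than integrating by parts, it uses a phase-shift trick: substituting $\omega \mapsto \omega \pm \pi/h$ in the integral reverses the sign of $\varsigma_h$ (since $\cos(h\omega \mp \pi) = -\cos(h\omega)$), and averaging the two shifted identities with the original one yields
\[
\varsigma_h \;=\; -\tfrac{1}{4}\int_{-\pi}^{\pi}\Big[g\big(\omega+\tfrac{\pi}{h}\big)+g\big(\omega-\tfrac{\pi}{h}\big)-2g(\omega)\Big]\cos(h\omega)\,d\omega.
\]
Two applications of the mean value theorem then bound the bracketed second difference by a constant times $h^{-2}\sup_\omega|g^{(2)}(\omega)|$, giving the $r=2$ case; the general $r$ is handled by iterating the same idea. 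Your integration-by-parts argument is shorter and more transparent, and it delivers all $r$ at once with a clean induction; the paper's finite-difference approach avoids differentiating under the integral sign and boundary bookkeeping, relying only on periodicity to shift the domain of integration and on pointwise smoothness via the mean value theorem. Both are entirely valid; yours is the more conventional presentation.
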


\section{Proof of the main theorems}\label{proofmain}
The proofs of the theorems rely on several propositions, the proofs of which are relegated in Appendix \ref{propth1}
\subsection{Proof of Theorem~\ref{Theorem_OLS} }\label{thuno}

\subsubsection*{Proof of part (i)}
Rewrite equation (\ref{eq:charlie1}) as
\be
\bm{\hat{ \beta }_i^{OLS}} - \bm{{ \beta }_i} -  ( \xx'\xx )^{-1} \xx'\ff \b  = ( \xx'\xx )^{-1} \xx '\ei.  \label{eq:ols1}
\ee
Proposition \ref{propo1} implies that
\be\label{eq:radio106}
{ \xx'\xx \over T }  = \Gi'  { \ff '  \ff \over T }  \Gi +  { \vv' \vv \over T }  + 
  { \vv' \ff  \over T }  \Gi + \Gi' { \ff'\vv \over T } \xrightarrow{p} \sx>0,
\ee
and 
\be\label{eq:radio107}
\frac{\xx'\ff}{T}=\Gi\frac{\ff' \ff}{T}+\frac{\vv' \ff}{T}.
\xrightarrow{p} \sxf \b.
\ee
Hence $( \xx'\xx )^{-1} \xx'\ff \b\b\xrightarrow{p}\bm{\tau_i^{OLS}}$,  proving $(\ref{eq:bias17})$.

Next, we derive the asymptotic distribution of  
\begin{equation}\label{eq:twoterms}
 T^{-{\nicefrac{1}{2}}} \xx'\ei = T^{-\nicefrac{1}{2}} \Gi ' \f ' \md \epi + T^{-{\nicefrac{1}{2}}} \v' \md\epi .
\end{equation}
We first show that the first term in (\ref{eq:twoterms}) satisfies
\be\label{eq:200117a}
T^{-\nicefrac{1}{2}}{ \Gi '\f' \md \epi   } 
   \xrightarrow{d}
 \norm\left(0, \Gi ' ( -    \bm{ \Sigma} _{\d'\d}^{-1} , \I_M  ) \bm{ \Sigma }_{\z' \Hi \z}  ( - \bm{ \Sigma} _{\f'\d}  \bm{ \Sigma}_{\d'\d}^{-1} , \I_M )' \Gi  \right) ,
\ee
Note that
$\Gi '\f' \md \epi=\Gi ' (  - \f '\d( \d' \d)^{-1}, \I_M ){ \z' \epi },$
 with $\z$ defined in Assumption \ref{ass factors}. Because of  (\ref{eq:maggio1}), to prove (\ref{eq:200117a}) it suffices to prove that 
\be\label{eq:radio110} 
 T^{-\frac{1}{2}}\z\bm{'\varepsilon_i}\xrightarrow{d}\norm\left(0,\bm{ \Sigma }_{\z' \Hi \z}\right).
 \ee
Adapting \cite[Proof of Theorem 1]{RH97}, and using equations (\ref{eq:11CHL}) and (\ref{eq:11}) we have
\be\label{eq:cdieci}
  T^{-\nicefrac{1}{2}}\z' \epi  =  \sum_{j=1}^N r_{ij} \left(   T^{-\nicefrac{1}{2}} \sum_{t=1}^T \zz_{t}   a_{jt}   \right) = 
  \sum_{j=1}^N r_{ij}\bm{w_{j}},
  %
\ee
 where $\bm{w_{j}}:=T^{-\nicefrac{1}{2}} \sum_{t=1}^T \zz_{t}   a_{jt} $, and $\zz'_t=(z_{t1},\dots,z_{t(M+S)})'$ is the $t-$th row of $\z$.  For $\tau_0=\tau_0(T)$ yet to be chosen, define
 \be\label{eq:cdieci3}
  \bm{w}_{j0}:=  T^{-{1 \over 2} } \sum_{u=- \tau_0 }^T \bm{s}_{ju}   \eta_{ju},
  \qquad \textrm{with} \qquad
 \bm{s}_{ju}:= \sum_{t=1}^T \zz_{t} \phi_{jt-u},\;
\phi_{jh} = 0, \mbox{ for } h < 0 ,
  \ee 
and let $\bm{w}_{j1}:=\bm{w}_{j}-\bm{w}_{j0}$, $\bm{W}_{j1}:=\bm{W}_{j}-\bm{W}_{j0}$,
where
\be\label{eq:cidieci3FEB}
\bm{W}_{j}:= \E \left( \bm{w}_{j} \bm{w}_{j}'  | \sa(\z)  \right),\qquad \mbox{and}
\qquad
 \bm{W}_{j0} :=  \E ( \bm{w}_{j0} \bm{w}_{j0}'  | \sa(\z) ).
\ee
Write
$$
\bm{W_{j}}^{-1/2}\bm{w_{j}}=\left(\bm{I_{(M+S)}}+\bm{W_{j1}}\bm{W_{j0}}^{-1}\right)^{-1/2}\bm{W_{j0}}^{-1/2}\bm{w_{j0}}+\bm{W_{j}}^{-1/2}\bm{w_{j1}}.
$$
Noting that $\E\lb\bm{W_{j1}}\rb\leq \E\lb\bm{w_{j1}}\rb^2$, Propositions \ref{prop190117a} and \ref{prop190117c} implies that, for $\tau_0$ increasing suitably with $T$, $\lb \bm{W_{j}}^{-1/2}\bm{w_{j1}}\rb=o_p(1)$ and $\lb\bm{W_{j1}}\bm{W_{j0}}^{-1}\rb=o_p(1)$. Hence, $ \bm{W_j}^{-1/2}\bm{w_j}\approx\bm{W_{j0}}^{-1/2}\bm{w_{j0}}$ as $T\to\infty$. Therefore, Bernstein's Lemma (see \cite{H70}, p. 242), Propositions \ref{prop190117b} and \ref{prop190117c} imply that, for every given $j$, $\bm{w_{j}}\xrightarrow{d}\norm(\bm{0},\bm{\mathcal{W}_{j}})$
with $\bm{\mathcal{W}_j}$ defined in Equation (\ref{eq:190117a}).
However, the terms $\bm{w_j}$ (and their limits) are uncorrelated  across $j$'s implying that 
$
 T^{-{1 \over 2}}\z' \epi=\sum_{j=1}^N  r_{ij}\bm{w_j} \xrightarrow{d} \mathcal{N}(\bm{0}, \sum_{j=1}^N r_{ij}^2 \bm{\mathcal{W}_{j}} ),
$ 
 where notice that   by easy calculations  $   \sum_{j=1}^N r_{ij}^2 {\bm{\mathcal{W} }}_{j}  = \bm{ \Sigma }_{\z' \Hi \z }$, proving (\ref{eq:radio110}).
It is worth noting that as $N$ increases, the latter distribution
 can be made arbitrarily close to $  \norm\left(\bm{0},  \sum_{j=1}^\infty r_{ij}^2  \bm{ \mathcal{W}_{j}}  \right) $ because of the absolute summability of the $r_{ij}$.

\bigskip
The second term in (\ref{eq:twoterms}) satisfies $T^{-\nicefrac{1}{2}}\v'\md\epi\approx T^{-\nicefrac{1}{2}}\v'\epi$ (see Proposition \ref{propo1}\ref{propo1iv}).  The proof of the weak convergence,
\be\label{eq:radio105}
T^{-\nicefrac{1}{2}}\v'\epi\xrightarrow{d}\norm\left(0,\bm{\Sigma}_{\v'\Hi\v}\right),
\ee
is very similar to the proof of (\ref{eq:radio110}), and hence omitted.
By assumption (\ref{independence}), the two terms on the RHS of (\ref{eq:twoterms}) are uncorrelated, so by (\ref{eq:200117a}) and (\ref{eq:radio105}) its LHS converges weakly to a  random variable  with mean zero, and variance given in Equation (\ref{eq:sigmaXHX}). 
%
\begin{prop}\label{propo1}
\begin{enumerate}[label=(\roman*)]
\item[]
\item $\frac{\ff'\ff}{T}={ \f ' \md \f  \over T } \xrightarrow{p} \bm{ \Sigma }_{\f'\md  \f} =  \bm{ \Sigma }_{\f' \f} - \bm{ \Sigma }_{\f'\d} \bm{ \Sigma }_{\d'\d}^{-1} \bm{ \Sigma }_{\d'\f}$ .\label{propo1D1}
\item $\frac{\vv'\vv}{T}={ \v '\md\v  \over T }  \xrightarrow{p} \bm{ \Sigma}_{\v'\v}.
$\label{propo1D2}
\item   $ \lb\frac{\vv'\ff}{T}\rb=\lb{ {\v ' \md \f} \over T }\rb = O_p(T^{-\nicefrac{1}{2}} )$.\label{propo1D3}
\item  $ \lb{{\v ' \epi} \over \sqrt{T}}-{{\v ' \md \epi} \over \sqrt{T} }\rb = O_p(T^{-\nicefrac{1}{2}} )$.\label{propo1iv}
\item $\SO_{\xx'\xx}>0.$\label{propo15M}
\end{enumerate}
\end{prop}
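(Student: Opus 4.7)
The plan is to exploit a single common decomposition, $\md=\I_T-\pd$ with $\pd=\d(\d'\d)^{-1}\d'$, together with the identity $\db\db'=\md$ (so that any ``tilde'' quantity such as $\ff'\ff$ coincides with $\f'\md\f$). All five statements then reduce to controlling the convergence of the unprojected averages via Assumptions \ref{ass eps}, \ref{ass V}, \ref{ass factors} and \ref{independence}, plus bounding the bilinear cross-terms $\f'\d/T$, $\v'\d/T$, $\v'\f/T$ and $\d'\epi/T$.

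First, for part (i) I would write $\f'\md\f/T=\f'\f/T-(\f'\d/T)(\d'\d/T)^{-1}(\d'\f/T)$ and invoke Assumption \ref{ass factors} blockwise, with the continuous mapping theorem applied to the inverse (legitimate because $\bm{\Sigma}_{\d'\d}>0$). For part (ii) the same decomposition gives $\v'\md\v/T=\v'\v/T-(\v'\d/T)(\d'\d/T)^{-1}(\d'\v/T)$; the first term converges to $\bm{\Sigma}_{\v'\v}$ by Remark \ref{inverseV}, and the cross-term $\v'\d/T$ is $O_p(T^{-1/2})$ by a Chebyshev bound that uses the second-order cumulant summability of Assumption \ref{ass V} together with independence between $\v$ and $\d$ from Assumption \ref{independence} and $\E\|\bm{z}_t\|^4<\infty$, so the correction is $o_p(1)$. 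Part (iii) is the same decomposition one more time: $\v'\f/T$ and $\v'\d/T$ are both $O_p(T^{-1/2})$ by the same independence-plus-cumulant argument, while $(\d'\d/T)^{-1}$ and $\d'\f/T$ are $O_p(1)$, giving the claimed $O_p(T^{-1/2})$ rate. Part (iv) rewrites the difference as $\v'\pd\epi/\sqrt{T}=(\v'\d/\sqrt{T})(\d'\d/T)^{-1}(\d'\epi/T)$; the first factor is $O_p(1)$ by part (iii), the middle is $O_p(1)$, and $\d'\epi/T=O_p(T^{-1/2})$ by Assumption \ref{independence} combined with the cumulant bound on $\epi$ from Assumption \ref{ass eps} (Remark \ref{remcum}). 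Part (v) follows immediately from the identity $\xx=\ff\Gi+\vv$: combining (i), (ii) and (iii) yields $\bm{\Sigma}_{\xx'\xx}=\Gi'\bm{\Sigma}_{\f'\md\f}\Gi+\bm{\Sigma}_{\v'\v}$, which is positive definite because the second summand is, by Remark \ref{inverseV}, and the first is positive semidefinite.

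Essentially no obstacle is conceptual here; the only mildly delicate step is obtaining the sharp $O_p(T^{-1/2})$ rate (as opposed to mere $o_p(1)$) for the three cross-moments $\v'\d/T$, $\v'\f/T$ and $\d'\epi/T$. The cleanest route is to compute the Frobenius-norm second moment entry-by-entry, which by independence collapses to a double sum $T^{-2}\sum_{t,s}\text{cov}(v_{itk},v_{isk'})\,\E(d_{tl}d_{sl'})$ (and analogously for the other pairs); absolute summability of the covariances, guaranteed by the $h=2$ case of the cumulant conditions in Assumptions \ref{ass eps} and \ref{ass V}, together with $\sup_t\E\|\bm{z}_t\|^2<\infty$ from Assumption \ref{ass factors}, yields an $O(T^{-1})$ bound on the expected squared norm and hence the stated $O_p(T^{-1/2})$ rate by Markov's inequality.
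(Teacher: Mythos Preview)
Your argument is correct. The difference from the paper's proof is one of technique rather than substance: you expand $\md=\I_T-\d(\d'\d)^{-1}\d'$ and bound each cross-moment $\v'\d/T$, $\v'\f/T$, $\d'\epi/T$ separately via entry-wise second-moment computations using cumulant summability, whereas the paper works at the matrix level throughout. For parts (ii)--(iv) the paper never unpacks $\pd$; instead it uses the trace--eigenvalue inequality $\tr(\bm{A}\bm{B})\le\lambda_{\max}(\bm{A})\tr(\bm{B})$ (their Lemma~\ref{emme}\ref{emme1}) together with $\tr(\pd)=S<\infty$ to bound, e.g., $\E\|\v'\pd\|^2\le\tr(\pd)\,\lambda_T(\E\v\v')=O(1)$ directly, and similarly for $\|\v'\pd\epi\|^2$. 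This is a bit quicker because it sidesteps the three auxiliary rates you establish, but your decomposition is equally valid and in fact makes the source of the $O_p(T^{-1/2})$ more transparent. Part (v) is identical in both treatments.
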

\begin{prop}\label{prop190117a}
For $\tau_0$ increasing suitably with $T$
$$
\lim_{T\to\infty}\E\lb\bm{w}_{j1}\rb^2=0.
$$
\end{prop}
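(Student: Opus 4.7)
My plan is to rewrite $\bm{w}_{j1}$ as a clean tail of the innovation-driven double sum and then exploit the independence structure to reduce the expected squared norm to a single sum indexed by the truncated lags. Substituting the linear-process expansion $a_{jt}=\sum_{u\le t}\phi_{j,t-u}\eta_{ju}$ (with $\phi_{jh}=0$ for $h<0$) into the definition of $\bm{w}_{j}$ and swapping the order of summation, one obtains $\bm{w}_{j}=T^{-\nicefrac{1}{2}}\sum_{u\le T}\eta_{ju}\bm{s}_{ju}$, which upon subtracting $\bm{w}_{j0}$ yields
\[
\bm{w}_{j1}=T^{-\nicefrac{1}{2}}\sum_{u\le -\tau_0-1}\eta_{ju}\bm{s}_{ju}.
\]

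Next I would take the expected squared norm while conditioning on $\sa(\z)$. Since the $\eta_{ju}$ are i.i.d.\ with mean zero and variance $\sigma_{\eta}^{2}:=\E\eta_{j0}^{2}$ (Assumption \ref{ass eps}) and independent of $\z$ (Assumption \ref{independence}), all cross terms $u_{1}\neq u_{2}$ drop out, leaving
\[
\E\lb\bm{w}_{j1}\rb^{2}=T^{-1}\sigma_{\eta}^{2}\sum_{u\le -\tau_0-1}\E\lb\bm{s}_{ju}\rb^{2}.
\]
For each summand, Minkowski's inequality together with $\sup_{t}\E\lb\zz_t\rb^{2}<\infty$ (a consequence of Assumption \ref{ass factors}) gives the straightforward bound $(\E\lb\bm{s}_{ju}\rb^{2})^{\nicefrac{1}{2}}\le \ka\sum_{s\ge 1-u}|\phi_{js}|$.

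The pivotal ingredient is then a tail estimate on $|\phi_{js}|$: the uniform-in-$j$ summability $\sup_{j}\sum_{s\ge 0}s^{2}|\phi_{js}|<\infty$ from Assumption \ref{ass eps} yields at once $\sum_{s\ge v}|\phi_{js}|\le \ka v^{-2}$ uniformly in $j$. Plugging in $v=1-u$ with $u\le -\tau_0-1$ gives $\E\lb\bm{s}_{ju}\rb^{2}\le \ka |u|^{-4}$, and hence
\[
\E\lb\bm{w}_{j1}\rb^{2}\le \frac{\ka}{T}\sum_{u\le -\tau_0-1}\frac{1}{u^{4}}=O\!\left(\frac{1}{T\,\tau_0^{3}}\right),
\]
which vanishes as $T\to\infty$ for any $\tau_0\to\infty$. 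In truth, keeping $\tau_0$ fixed would already suffice for this proposition alone, but a diverging $\tau_0$ will be demanded by the companion Propositions \ref{prop190117b}--\ref{prop190117c}. I do not foresee a conceptual obstacle; the only bookkeeping is to track uniformity in $j$ of both the $\zz_{t}$ second moment and the $\phi_{js}$ tail, which follow directly from the supremum form of Assumptions \ref{ass eps} and \ref{ass factors}, while absolute convergence from the $v^{-2}$ tail legitimises the interchange of sums and expectations throughout.
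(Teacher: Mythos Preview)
Your proof is correct and follows essentially the same route as the paper: rewrite $\bm{w}_{j1}$ as the tail sum $T^{-1/2}\sum_{u\le -\tau_0-1}\eta_{ju}\bm{s}_{ju}$, use the i.i.d.\ structure of $\{\eta_{ju}\}$ and its independence from $\z$ to kill the cross terms, and then control $\E\lb\bm{s}_{ju}\rb^{2}$ via the summability of $\{\phi_{js}\}$ in Assumption~\ref{ass eps}. The only difference is in the final inequality: the paper bounds $|\E\zz_t'\zz_s|\le\kappa$ directly and obtains $\E\lb\bm{w}_{j1}\rb^{2}\le\kappa\sum_{u\ge\tau_0}|\phi_{ju}|$, which requires $\tau_0\to\infty$, whereas your Minkowski-plus-tail step gives the sharper rate $O(T^{-1}\tau_0^{-3})$---so your remark that even a fixed $\tau_0$ would suffice for this proposition in isolation is indeed correct.
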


\begin{prop}\label{prop190117b}
As $T\to\infty$
\be\label{eq:scott0}
\bm{W}_{j0} ^{-{1 \over 2}} \bm{w}_{j0}  \xrightarrow{d} \norm\left(\bm{0}, \bm{I_{(M+S)}} \right).
\ee
\end{prop}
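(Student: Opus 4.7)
The plan is to work conditionally on $\sa(\z)$. Under this conditioning the vectors $\bm{s}_{ju}$ are non-random, and $\bm{w}_{j0}= T^{-\nicefrac{1}{2}} \sum_{u=-\tau_0}^{T} \bm{s}_{ju}\eta_{ju}$ is a sum of $T+\tau_0+1$ independent, zero-mean random vectors, since the $\eta_{ju}$ are i.i.d. in $u$ by Assumption \ref{ass eps}. I would apply the multivariate Lyapunov CLT for triangular arrays via the Cram\'er--Wold device: fix a unit vector $\bm{c}\in\R^{M+S}$ and consider the scalars
\[
\xi_{ju}:= \frac{1}{\sqrt{T}}\,\bm{c}' \bm{W}_{j0}^{-1/2}\bm{s}_{ju}\,\eta_{ju},\qquad u=-\tau_0,\dots,T.
\]
The conditional means vanish, and, by the very definition of $\bm{W}_{j0}$ in (\ref{eq:cidieci3FEB}), the conditional variances add to $\bm{c}'\bm{W}_{j0}^{-1/2}\bm{W}_{j0}\bm{W}_{j0}^{-1/2}\bm{c}=1$. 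Hence only a Lyapunov-type negligibility condition remains.

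For Lyapunov with $\delta=1$ (recall $\E|\eta_{ju}|^6<\infty$), the sum of third absolute moments is bounded by
\[
\sum_u \E\bigl(|\xi_{ju}|^{3}\mid \sa(\z)\bigr)\;\le\; \E|\eta_{ju}|^3\cdot \Bigl(\max_u\|\bm{W}_{j0}^{-1/2}\bm{s}_{ju}\|/\sqrt{T}\Bigr)\cdot\Bigl(\tfrac{1}{T}\sum_u \|\bm{W}_{j0}^{-1/2}\bm{s}_{ju}\|^{2}\Bigr),
\]
and the last factor equals $\tr(\I_{M+S})/\sigma_\eta^{2}$. It therefore suffices to show that $\max_u \|\bm{W}_{j0}^{-1/2}\bm{s}_{ju}\|=o_p(\sqrt{T})$. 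This in turn reduces to two ingredients: (a) $\max_u\|\bm{s}_{ju}\|^2=o_p(T)$, which follows at once from $\|\bm{s}_{ju}\|\le(\sum_s|\phi_{js}|)\max_t\|\bm{z}_t\|$, the summability in (\ref{eq:11}), and the bound $\max_t\|\bm{z}_t\|=o_p(T^{1/4})$ granted by $\E\|\bm{z}_t\|^4<\infty$ (Assumption \ref{ass factors}) and Markov's inequality; and (b) a lower bound $\lambda_1(T^{-1}\sum_u \bm{s}_{ju}\bm{s}_{ju}')\ge\kappa>0$ in probability, so that $\|\bm{W}_{j0}^{-1/2}\|_{sp}=O_p(T^{-\nicefrac{1}{2}})$.

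The main obstacle will be establishing (b). Writing
\[
\frac{1}{T}\sum_u \bm{s}_{ju}\bm{s}_{ju}'\;=\;\frac{1}{T}\sum_{t,s=1}^{T}\bm{z}_t\bm{z}_s'\,\Bigl(\sum_u\phi_{j,t-u}\phi_{j,s-u}\Bigr),
\]
one recognises a Toeplitz-type quadratic form in $\bm{z}$ whose symbol is $|\phi_j(e^{-i\omega})|^2$ times the spectrum of $\{\bm{z}_t\}$. By the minimum-phase condition (\ref{eq:11ap}) the factor $|\phi_j(e^{-i\omega})|^2$ is bounded below by $\kappa^2>0$ uniformly in $\omega$, and Assumption \ref{ass factors} gives $T^{-1}\z'\z\xrightarrow{p}\bm{\Sigma_{\z'\z}}>0$. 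Combining these via a Fej\'er-kernel argument (after letting $\tau_0$ grow suitably so that edge effects become negligible, as in Proposition \ref{prop190117a}) yields the required uniform positive lower bound. With (a) and (b) in hand, Lyapunov's condition is satisfied, the univariate CLT delivers $\bm{c}'\bm{W}_{j0}^{-1/2}\bm{w}_{j0}\xrightarrow{d}\mathcal{N}(0,1)$ conditionally on $\sa(\z)$, and Cram\'er--Wold together with bounded convergence in the conditioning variable upgrade this to the claimed unconditional weak convergence.
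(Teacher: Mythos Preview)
Your approach is essentially the same as the paper's: condition on $\sa(\z)$, reduce via Cram\'er--Wold to a scalar triangular-array CLT, and verify negligibility of the individual summands by showing $\max_u T^{-1/2}\|\bm{W}_{j0}^{-1/2}\bm{s}_{ju}\|\to 0$ together with $\|\bm{W}_{j0}^{-1}\|=O_p(1)$. The only substantive differences are cosmetic. First, the paper verifies a Lindeberg condition (via the truncation device from \cite{Scott73} and \cite{RH97}), whereas you go through Lyapunov with third moments; since $\E|\eta_{ju}|^6<\infty$ is assumed, your route is slightly more direct. Second, for the bound $\|\bm{W}_{j0}^{-1}\|=O_p(1)$ the paper simply invokes the neighbouring Propositions \ref{prop190117a} and \ref{prop190117c} (so that $\bm{W}_{j0}=\bm{W}_j-\bm{W}_{j1}\xrightarrow{p}\bm{\mathcal W}_j>0$), which is cleaner and more modular than your direct Toeplitz/minimum-phase argument --- you are effectively re-deriving Proposition \ref{prop190117c} inside the present proof. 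Third, your bound $\|\bm{s}_{ju}\|\le(\sum_s|\phi_{js}|)\max_t\|\bm{z}_t\|$ is simpler than the paper's $\epsilon$--$L$ splitting, and suffices here.

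Two harmless slips to fix: the lower eigenvalue bound on $T^{-1}\sum_u\bm{s}_{ju}\bm{s}_{ju}'$ gives $\|\bm{W}_{j0}^{-1/2}\|_{sp}=O_p(1)$, not $O_p(T^{-1/2})$ (the conclusion you draw is still correct, since $O_p(1)\cdot o_p(T^{1/2})=o_p(T^{1/2})$); and $\max_t\|\bm{z}_t\|=O_p(T^{1/4})$ rather than $o_p(T^{1/4})$, which again does not affect the argument because $O_p(T^{1/2})=o_p(T)$.
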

\begin{prop}\label{prop190117c} 
As $T\to\infty$
\begin{equation}\label{eq:190117a}
\bm{W}_j \xrightarrow{p} \E (\eta_{j0}^2 )  \bm{\Sigma }_{\z' \bm{\Phi }_j \z }=:\bm{\mathcal{ W}_j} > 0, 
\end{equation} and 
 $\bm{\Phi }_j  $ is the $T\times T$ matrix with $(t,s)-$th element equal to $ \sum_{u=- \infty }^{\min(t,s)}  \phi_{j,t-u} \phi_{j,s-u} =  \sum_{v=0 }^{\infty}  \phi_{j,v} 
 \phi_{j,v+|t-s|}$.
\end{prop}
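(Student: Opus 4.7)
My plan is to reduce $\bm{W}_j$ to an explicit Toeplitz quadratic form in $\z$ and then invoke the absolute summability of the weights together with Assumption~\ref{ass factors} to pass to the limit.

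First, by Assumption~\ref{independence} the innovations $\{\eta_{j,\cdot}\}$ are independent of $\z$, so conditioning drops out; together with iid-ness of the $\eta_{jt}$ across $t$ from Assumption~\ref{ass eps},
$$
\bm{W}_j \;=\; T^{-1}\sum_{t,s=1}^T \zz_t\zz_s'\,\E(a_{jt}a_{js}) \;=\; T^{-1}\E(\eta_{j0}^2)\,\z'\bm{\Phi}_j\z ,
$$
using the linear--process identity $\E(a_{jt}a_{js}) = \E(\eta_{j0}^2)\sum_{u=-\infty}^{\min(t,s)}\phi_{j,t-u}\phi_{j,s-u}=\E(\eta_{j0}^2)\,[\bm{\Phi}_j]_{ts}$. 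The statement therefore reduces to showing $T^{-1}\z'\bm{\Phi}_j\z \xrightarrow{p} \bm{\Sigma}_{\z'\bm{\Phi}_j\z}$ together with positive definiteness of that limit.

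Next, I would exploit that $\bm{\Phi}_j$ is Toeplitz with $[\bm{\Phi}_j]_{ts}=\gamma_j(|t-s|)$, where $\gamma_j(h):=\sum_{v=0}^\infty \phi_{jv}\phi_{j,v+h}$. Cauchy--Schwarz combined with Assumption~\ref{ass eps} gives $\sum_{h\in\mathbb{Z}}|\gamma_j(h)|\le \bigl(\sum_{v\ge 0}|\phi_{jv}|\bigr)^{2}<\infty$ uniformly in $j$. Rearranging by the lag $h=s-t$,
$$
T^{-1}\z'\bm{\Phi}_j\z \;=\; \sum_{|h|\le T-1}\gamma_j(|h|)\;T^{-1}\!\!\sum_{\substack{1\le t\le T\\ 1\le t+h\le T}}\!\! \zz_t\zz_{t+h}' ,
$$
so the conclusion follows once each inner sample cross-moment converges in probability (to some $R(h)$), which is available under the stationary ARMA specification admitted by Assumption~\ref{ass factors} and its accompanying remark. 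A standard bandwidth-truncation argument---valid because $\sum_h |\gamma_j(h)|<\infty$ and each inner term is uniformly $O_p(1)$ in $h$ (let $T\to\infty$ at fixed bandwidth $K$, then $K\to\infty$)---then gives $T^{-1}\z'\bm{\Phi}_j\z \xrightarrow{p} \sum_h \gamma_j(|h|)\,R(h) =:\bm{\Sigma}_{\z'\bm{\Phi}_j\z}$.

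Positive definiteness of the limit is immediate from Remark~\ref{unif lower bound}: since $\E(\bm{a}_j\bm{a}_j')=\E(\eta_{j0}^2)\bm{\Phi}_j \ge \kappa\,\I_T$, one has $\bm{W}_j \ge \kappa\,T^{-1}\z'\z$, whose probability limit $\kappa\,\bm{\Sigma}_{\z'\z}$ is strictly positive definite by Assumption~\ref{ass factors}. I expect the delicate step to be upgrading pointwise-in-$h$ convergence of sample cross-moments to convergence of the full Toeplitz quadratic form (whose dimension grows with $T$); the bandwidth-truncation sketched above handles this cleanly thanks to the absolute summability of the weights $\gamma_j(\cdot)$.
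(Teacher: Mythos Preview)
Your approach is correct and coincides with the paper's: the heart of the argument is the algebraic identity $\bm{W}_j=\E(\eta_{j0}^2)\,T^{-1}\z'\bm{\Phi}_j\z$, after which Remark~\ref{unif lower bound} gives positive definiteness. The paper's own proof is in fact terser than yours---having written this identity it simply appeals to its standing notational convention that $\bm{\Sigma}_{\z'\bm{\Phi}_j\z}$ \emph{denotes} the probability limit of $T^{-1}\z'\bm{\Phi}_j\z$, and says no more. Your Toeplitz/bandwidth-truncation argument adds detail the paper omits; just be aware that the lag-wise convergence $T^{-1}\sum_t\zz_t\zz_{t+h}'\xrightarrow{p}R(h)$ you invoke is not literally contained in Assumption~\ref{ass factors} (which only gives the lag-zero limit), so you are right to anchor it on the stationary-ARMA example in the accompanying remark rather than on the bare assumption.
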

%
%
%
%
%
%
%
%
\subsubsection*{Proof of  part (ii)}  

All the limits below hold as $T \rightarrow \infty $.  By simple manipulation of equation (\ref{eq:UGLSdef})
\be
\sqrt{T}\left(\bm{\hat{\beta }_i^{UGLS} - { \beta }_{i}} -
  \left(\xx'  \ssthi \xx \right)^{-1}
\xx'  \ssthi  \ff \b\right) = 
\left(\frac{\xx'  \ssthi \xx}{T} \right)^{-1}
\left(\frac{\xx'  \ssthi   \ei}{\sqrt{T}}\right)  .  \label{eq:ugls11} 
\ee
We first show that estimator is asymptotically unbiased. 
Proposition \ref{lemma160117a} implies  
\be\label{eq:radio111D}
\frac{\xx'  \ssthi \xx}{ T } = \Gi'  \frac{ \ff ' \ssthi \ff}{T } \Gi + \frac{ \vv ' \ssthi \vv}{T } +  \frac{ \vv '  \ssthi \ff}{ T }  \Gi + \Gi' \frac{ \ff '  \ssthi \vv}{ T}\xrightarrow{p}\bm{ \Sigma }_{\v' \Hninv \v } ,
\ee
and
$$
\lb \xx ' \ssthi\ff \b\rb \leq   \lb\Gi\rb\lb \ff ' \ssthi\ff \rb \lb\b\rb +\lb \vv ' \ssthi\ff \b\rb=O_p(1),
$$
implying that  the the bias term is $O_p\left(T^{-\nicefrac{1}{2}}\right)$.

To complete the proof we need to derive the limiting distribution of the latter term in (\ref{eq:ugls11}). By Proposition \ref{lemma170117a}, $T^{-\nicefrac{1}{2}}\xx ' \ssthi\ei\aa T^{-\nicefrac{1}{2}}\v ' \Hninv \epi$.
Proposition \ref{prop020217a}$(i)$ shows   that,  $T^{-\nicefrac{1}{2}}\v ' \Hninv \epi\aa$ $ T^{-\nicefrac{1}{2}}$ $\v ' \Hbar  \epi$, where $\Hbar$, defined in the same proposition, is circulant and symmetric. Hence, 
 as stated by the second part of  Proposition \ref{prop020217a}, 
 $\|\Hbar\|_{row}<\infty$, allowing us to exploit again \cite[Propositons 1 and 2]{RH97}.

Similarly to (\ref{eq:cdieci})-(\ref{eq:cidieci3FEB}), we define
\be\label{eq:radio112}
  T^{-\nicefrac{1}{2}} {\v'\Hbar \epi}  = 
   \sum_{j=1}^N  r_{ij} \left(T^{-{\nicefrac{1}{2}}}  \sum_{t=1}^T\sum_{s=1}^T {\vvv}_{it}\xii_N(t-s)   \mathnormal{a_{js}}   \right)
=\sum_{j=1}^N r_{ij}\bm{w^{(\xii)}_{ij}} ,
\ee
 where $\xii_N(t-s)$, denoting the $(t,s)-$element of $\Hbar$, satisfies  $\xii_N(h)=\xii_N(T-h)$, $h=0,1,\dots,T-1$. Write
\be\label{eq:LEO}
\bm{w}^{(\xii)}_{ij0} :=    T^{-{\nicefrac{1}{2}} }  \sum_{u=- \tau_0 }^T \bm{s}^{(\xii)}_{iju}   \eta_{ju},\quad
\bm{s}^{(\xii)}_{iju}:= \sum_{s=1}^T \bm{\ell}^{(\xii)}_{is} \phi_{js-u}, \quad\bm{\ell}^{(\xii)}_{is}:=\sum_{t=1}^{T} \vvv_{it}\xii_N(t-s),
\ee
and define $ \bm{w}^{(\xii)}_{ij1}:= \bm{w}^{(\xii)}_{ij}- \bm{w}^{(\xii)}_{ij0}$, and  $ \bm{W}^{(\xii)}_{ij1}:= \bm{W}^{(\xii)}_{ij}- \bm{W}^{(\xii)}_{ij0}$, where
\be\label{eq:radio112b}
\bm{W}^{(\xii)}_{ij} := \E \left( \bm{w}^{(\xii)}_{ij} \bm{w}^{(\xii)'}_{ij}| \sa(\v)\right),\quad \mbox{and}
    \quad
  \bm{W}^{(\xii)}_{ij0} := \E \left( \bm{w}^{(\xii)}_{ij0} \bm{w}^{(\xii)'}_{ij0}| \sa(\v)\right).
\ee
%
Proceeding as in the proof of the first part of the theorem,  Propositions \ref{propCHLOE1}-\ref{propCHLOE3} allow us to establish that, for any $i,j$ 
$\bm{w^{(\xii)}_{ij}}\xrightarrow{d}\norm(\bm{0},\bm{\mathcal{W}^{(\xii)}_{ij}})$, where $\bm{\mathcal{W}^{(\xii)}_{ij}}$ defined in  (\ref{eq:06021717a})
below.
It follows that, by (\ref{eq:radio112}) 
\be\label{eq:radio115}
 T^{-{\nicefrac{1}{2}}}{\v'\Hbar \epi}  \xrightarrow{d} \norm\left(\bm{0}, \sum_{j=1}^N r_{ij}^2 \bm{{\cal {\bm W}}^{(\xii)}_{ij}} \right),
\ee 
  and  $   \sum_{j=1}^N r_{ij}^2 \bm{\mathcal{W}^{(\xii)}_{ij}}=\bm{ \Sigma }_{\v' \Hninv\Hi\Hninv \v }$, as required. The result in (\ref{eq:theorem1part2}) is proved by (\ref{eq:radio111D}) and  (\ref{eq:radio115}).

\begin{prop}\label{lemma160117a}
\begin{enumerate}[label=(\roman*)]
\item[]
\item  $\lb\ff ' \ssthi \ff\rb  = O_p(1).$
\item $\lb\vv ' \ssthi \ff\rb = O_p\left(T^{-\nicefrac{1}{2}} \right).$ 
\item $T^{-1}\left( \vv '   \ssthi\vv\right)   \xrightarrow{d} \bm{ \Sigma }_{\v' \Hninv\v } > 0.$  \label{eq:lemma160117aTRE}
\end{enumerate}
\end{prop}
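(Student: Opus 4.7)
My plan is to invoke the Woodbury identity on $\ssth = \ff\B\ff' + \db'\Hn\db$, which fits the low-rank-plus-invertible form of Lemma~\ref{PZ} with $\bm{B}=\ff$, $\bm{C}=\B$, $\bm{A}=\db'\Hn\db$, and $m_1=T-S$. Hypotheses (a)--(d) would be checked from Remark~\ref{unif lower bound} (the eigenvalues of $\db'\Hn\db$ are bounded above and away from zero, so $\lb(\db'\Hn\db)^{-1}\rb_{sp}=O(1)$), Assumption~\ref{ass loading} ($\lb\B\rb,\lb\B^{-1}\rb=O_p(1)$), and Proposition~\ref{propo1}\ref{propo1D1} ($T^{-1}\ff'(\db'\Hn\db)^{-1}\ff$ is bounded with smallest eigenvalue bounded away from zero). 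This would give the shared building block $\lb\ssthi\ff\rb_{sp}=O_p(T^{-\nicefrac{1}{2}})$, and more explicitly, Woodbury yields
\[
\ssthi\ff \;=\; (\db'\Hn\db)^{-1}\ff\,\bigl(\B^{-1}+\ff'(\db'\Hn\db)^{-1}\ff\bigr)^{-1}\B^{-1}.
\]

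For part (i), setting $\bm{M}_T := T^{-1}\ff'(\db'\Hn\db)^{-1}\ff$, the identity above rewrites $\ff'\ssthi\ff = \bm{M}_T(\bm{M}_T+T^{-1}\B^{-1})^{-1}\B^{-1} \xrightarrow{p} \B^{-1}$, which is $O_p(1)$. For part (ii), I would invoke Lemma~\ref{PZ}(iii) with $\bm{D}=\vv$; the condition $\lb T^{-\nicefrac{1}{2}}\vv'(\db'\Hn\db)^{-1}\ff\rb_{sp}=O_p(1)$ is to be obtained by rewriting this as $\v'\db(\db'\Hn\db)^{-1}\db'\f$ and observing that, by mutual independence of $\v$ and $\f$ (Assumption~\ref{independence}) and determinism of $\db'\Hn\db$, it has mean zero with variance $O(T)$. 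The lemma would then yield $\lb\vv'\ssthi\ff\rb=O_p(T^{-\nicefrac{1}{2}})$.

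Part (iii) will be the delicate one. Woodbury would give
\[
T^{-1}\vv'\ssthi\vv = T^{-1}\vv'(\db'\Hn\db)^{-1}\vv \;-\; T^{-1}\vv'(\db'\Hn\db)^{-1}\ff\bigl(\B^{-1}+\ff'(\db'\Hn\db)^{-1}\ff\bigr)^{-1}\ff'(\db'\Hn\db)^{-1}\vv,
\]
and I plan to absorb the second summand into $O_p(T^{-1})$ via the part (ii) bound and $\lb(\B^{-1}+\ff'(\db'\Hn\db)^{-1}\ff)^{-1}\rb=O_p(T^{-1})$. For the leading term, I would complete $\db$ by $P=\d(\d'\d)^{-\nicefrac{1}{2}}$ to an orthogonal basis of $\mathbb{R}^T$, block-invert $\Hn$, and apply the Schur-complement identity $(A-BC^{-1}B')^{-1}=A^{-1}+A^{-1}B(C-B'A^{-1}B)^{-1}B'A^{-1}$ to obtain
\[
\db(\db'\Hn\db)^{-1}\db' = \md\Hninv\md \;-\; \db(\db'\Hn\db)^{-1}B_*\bigl(C_*-B_*'(\db'\Hn\db)^{-1}B_*\bigr)^{-1}B_*'(\db'\Hn\db)^{-1}\db',
\]
with $B_*=\db'\Hn P$, $C_*=P'\Hn P$. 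A direct variance bound (independence of $\v$ from the deterministic $\db(\db'\Hn\db)^{-1}B_*$, which has bounded Frobenius norm) would give $\lb\vv'(\db'\Hn\db)^{-1}B_*\rb=O_p(1)$, so the rank-$S$ correction contributes only $O_p(1)$ to $\vv'(\db'\Hn\db)^{-1}\vv$ and disappears after the $T^{-1}$ scaling. In parallel, the expansion $\md\Hninv\md - \Hninv = -\pd\Hninv - \Hninv\pd + \pd\Hninv\pd$ combined with $\lb\v'\d\rb=O_p(\sqrt{T})$ and $\lb(\d'\d)^{-1}\rb=O_p(T^{-1})$ would yield $T^{-1}\v'\md\Hninv\md\v = T^{-1}\v'\Hninv\v + O_p(T^{-1})$. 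The claim would then reduce to the weak law $T^{-1}\v'\Hninv\v \xrightarrow{p} \bm{\Sigma}_{\v'\Hninv\v}$, which follows from approximating $\Hninv$ by a circulant symmetric matrix of bounded row norm (the tool later formalised in Proposition~\ref{prop020217a}). The hard part will be keeping both rank-$S$ corrections (the Schur complement and the $\md$-versus-$I$ deficit) strictly $o_p(T)$ before scaling --- a property that rests crucially on $\v$ being zero-mean and independent of the deterministic weights, which prevents the apparently dangerous cross terms $\vv'(\db'\Hn\db)^{-1}B_*$ and $\v'\pd\Hninv\v$ from growing at the naive $O_p(\sqrt{T})$ rate.
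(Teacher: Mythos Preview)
Your argument is correct and, for parts (i) and (ii), essentially identical to the paper's: both invoke Lemma~\ref{PZ} with $\bm{A}=\db'\Hn\db$, $\bm{B}=\ff$, $\bm{C}=\B$, after checking that $\lambda_1(\db'\Hn\db)$ is bounded away from zero (the paper does this via the inclusion principle applied to $[\db,\d(\d'\d)^{-\nicefrac{1}{2}}]'\Hn[\db,\d(\d'\d)^{-\nicefrac{1}{2}}]$, which is exactly your orthogonal completion).

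For part (iii) your overall route is the same---use Woodbury to reduce to $T^{-1}\vv'(\db'\Hn\db)^{-1}\vv$, then compare this with $T^{-1}\v'\Hninv\v$---but your implementation of the comparison differs. You carry out an explicit two-step Schur-complement calculation: first $\db(\db'\Hn\db)^{-1}\db'=\md\Hninv\md-\text{(rank-}S\text{ correction)}$, then $\md\Hninv\md$ versus $\Hninv$ via $\pd$. The paper instead invokes its Lemma~\ref{MN}, which gives in one stroke that $\Hninv-(\md\Hn\md)^{+}\ge 0$ with $\tr\bigl(\Hninv-(\md\Hn\md)^{+}\bigr)\le S\,\lambda_T(\Hninv)$; combined with $\lambda_T(\E\v\v')<\infty$ this yields $\E\lb T^{-1}\v'\Hninv\v-T^{-1}\vv'(\db'\Hn\db)^{-1}\vv\rb_{sp}=O(T^{-1})$ directly. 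Your approach reproduces the same bound but with more bookkeeping; the paper's lemma packages the two rank-$S$ corrections you handle separately into a single positive-semidefinite residual. For the final weak law $T^{-1}\v'\Hninv\v\xrightarrow{p}\bm{\Sigma}_{\v'\Hninv\v}$, the circulant approximation of Proposition~\ref{prop020217a} is more than you need---that machinery is designed for the \emph{distribution} of $\v'\Hninv\epi$, whereas here a routine variance bound (summable cumulants of $\v$, bounded eigenvalues of $\Hninv$) suffices; the paper simply cites Remark~\ref{inverseV}.
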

\begin{prop}\label{lemma170117a}
\begin{enumerate}[label=(\roman*)]
\item[]
\item  $\lb\ff' \ssthi \ei\rb = O_p\left(T^{-\nicefrac{1}{2}}\right).$
\item $\lb {\vv'}   {\ssthi} {\ei}  -\v'{\Hninv}{\epi}\rb=O_p(1).$  
\end{enumerate}
\end{prop}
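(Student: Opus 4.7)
The plan is to exploit the factor representation $\ssth = \ff\B\ff' + \db'\Hn\db$, reducing part~(i) to a direct application of Lemma~\ref{PZ}(iii) and part~(ii) to a Sherman--Morrison--Woodbury expansion followed by a finite-rank correction argument.

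For part~(i), I take $\bm{E}=\ssth$, $\bm{B}=\ff$, $\bm{C}=\B$, $\bm{A}=\db'\Hn\db$, and $\bm{D}=\ei$ in Lemma~\ref{PZ}(iii). Conditions (a)--(d) are immediate from the setup: by Remark~\ref{unif lower bound} the eigenvalues of $\Hn$ (and hence of $\db'\Hn\db$) are uniformly bounded away from $0$ and $\infty$, so $\lb(\db'\Hn\db)^{-1}\rb_{sp}=O(1)$; $\B$ is positive definite with bounded norms under Assumption~\ref{ass loading}; and convergence of the relevant $T^{-1}$-normalized cross-products (and their inverses) was already handled in the proof of Proposition~\ref{lemma160117a}. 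The precondition $\lb T^{-\nicefrac{1}{2}}\,\ei'(\db'\Hn\db)^{-1}\ff\rb=O_p(1)$ follows from a conditional CLT argument: writing $\ei'(\db'\Hn\db)^{-1}\ff = \epi'(\md\Hn\md)^{+}\f$ and conditioning on $\sa(\z)$, Assumption~\ref{independence} makes this a centered linear form in $\epi$ whose conditional variance is dominated by $\lb\Hi\rb_{sp}\lb(\md\Hn\md)^{+}\rb_{sp}^{2}\lb\f\rb^{2}=O_p(T)$. Lemma~\ref{PZ}(iii) then delivers $\lb\ff'\ssthi\ei\rb=O_p(T^{-\nicefrac{1}{2}})$.

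For part~(ii), I apply the Sherman--Morrison--Woodbury identity to obtain
\[
\ssthi = (\db'\Hn\db)^{-1} - (\db'\Hn\db)^{-1}\ff\,\bigl[\B^{-1}+\ff'(\db'\Hn\db)^{-1}\ff\bigr]^{-1}\ff'(\db'\Hn\db)^{-1},
\]
so that $\vv'\ssthi\ei = \vv'(\db'\Hn\db)^{-1}\ei - R$, with a correction $R$ that is $O_p(1)$: the conditional CLT arguments of (i) give $\vv'(\db'\Hn\db)^{-1}\ff=O_p(T^{\nicefrac{1}{2}})$ and $\ff'(\db'\Hn\db)^{-1}\ei=O_p(T^{\nicefrac{1}{2}})$, while the bracket inverts a matrix whose dominant piece $\ff'(\db'\Hn\db)^{-1}\ff$ is of exact order $T$ by condition~(a) of Lemma~\ref{PZ}, so the bracket is $O_p(T^{-1})$, and the product is $O_p(T^{\nicefrac{1}{2}}\cdot T^{-1}\cdot T^{\nicefrac{1}{2}})=O_p(1)$. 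It remains to compare $\vv'(\db'\Hn\db)^{-1}\ei = \v'(\md\Hn\md)^{+}\epi$ with $\v'\Hninv\epi$. Using the identity
\[
\db(\db'\Hn\db)^{-1}\db' = \Hninv - \Hninv\d(\d'\Hninv\d)^{-1}\d'\Hninv,
\]
which is verified by checking equality on $\textrm{range}(\Hn\db)$ and on $\textrm{range}(\d)$ (whose direct sum is $\R^{T}$, since $\Hn\db\alpha=\d\beta$ forces $\alpha=\beta=0$ by nonsingularity of $\db'\Hn\db$), the difference collapses to $-\,\v'\Hninv\d\,(\d'\Hninv\d)^{-1}\,\d'\Hninv\epi$. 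This is a finite-rank ($\le S$) bilinear form: $\v'\Hninv\d$ and $\d'\Hninv\epi$ are each $O_p(T^{\nicefrac{1}{2}})$ by conditional CLT (mean-zero and bounded second moments of $\v$ and $\epi$, combined with $\lb\Hninv\rb_{sp}=O(1)$), while $(\d'\Hninv\d)^{-1}=O_p(T^{-1})$ by Assumption~\ref{ass factors}. Multiplying out yields $O_p(1)$, completing~(ii).

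The main technical obstacle is the uniform control of the conditional CLT bounds used throughout (e.g.\ $\ff'(\db'\Hn\db)^{-1}\ei$, $\vv'(\db'\Hn\db)^{-1}\ff$, $\v'\Hninv\d$, $\d'\Hninv\epi$ all of order $T^{\nicefrac{1}{2}}$). These are centered sums of $T$ weakly-dependent terms whose variances are $O_p(T)$; the cleanest route is to recycle the linear-process/cumulant machinery already assembled in the proofs of Propositions~\ref{prop190117a}--\ref{prop190117c} together with Remark~\ref{remcum}, rather than re-derive the bounds from scratch.
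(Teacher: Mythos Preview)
Your proof is correct and follows essentially the same route the paper intends (the paper omits details, pointing to Proposition~\ref{lemma160117a}): part~(i) is Lemma~\ref{PZ}(iii) applied with $\bm D=\ei$ after checking $T^{-1/2}\ei'(\db'\Hn\db)^{-1}\ff=O_p(1)$, and part~(ii) is the Sherman--Morrison--Woodbury reduction to $\vv'(\db'\Hn\db)^{-1}\ei$ plus a rank-$S$ remainder. The only cosmetic difference is in this last step: the paper controls the rank-$S$ correction via the positive-semidefiniteness of $\Hninv-(\md\Hn\md)^{+}$ from Lemma~\ref{MN} (as in its proof of Proposition~\ref{lemma160117a}(iii)), whereas you write out the explicit identity $\db(\db'\Hn\db)^{-1}\db'=\Hninv-\Hninv\d(\d'\Hninv\d)^{-1}\d'\Hninv$ and bound each factor directly; both arguments deliver the same $O_p(1)$ bound.
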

\begin{prop}\label{prop020217a}
Let 
$\Hbar:=2\pi\P\bm{G}_{\xi_N}^{-1}\P'$, with $\P$ defined in (\ref{eq:peigenvectors}) and $\bm{G}_{\xi_N}=\diag({\bm g}(\xi_N,\omega))$ is defined in (\ref{eq:gt}) with $g_{\xi_N}(\omega)=\sum_{h=-\infty}^\infty \xi_{N}(h)\cos(h\omega)$, where $\xi_N(h)=\xi_{N,ts}$ is the $(t,s)-$entry of the matrix $\Hn$ defined in (\ref{eq:deffeb}). Then,
\begin{enumerate}[label=(\roman*)]
\item $\lb \v '\Hninv \epi  -\v'\Hbar\epi\rb=O_p(1)$.
\item $\lb \Hbar\rb_{row}<\infty$.
\end{enumerate}
\end{prop}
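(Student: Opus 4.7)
The plan is to exploit that $\Hn$ is the symmetric Toeplitz matrix with symbol $g_{\xi_N}(\omega)$, while $\Hbar$ is the circulant matrix whose symbol is the inverse $2\pi/g_{\xi_N}(\omega)$; both parts then follow from standard Toeplitz/circulant approximation estimates combined with Lemma~\ref{paolo}.

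For part (ii), I would apply Lemma~\ref{paolo} to $2\pi/g_{\xi_N}(\omega)$. By Remark~\ref{unif lower bound}, $\inf_{\omega}g_{\xi_N}(\omega)\ge\kappa>0$ uniformly in $N$ and $T$, while the summability condition $\sup_i\sum_s s^2|\phi_{is}|<\infty$ in Assumption~\ref{ass eps} implies that $\sum_h h^2|\xi_N(h)|<\infty$ uniformly in $N$, so $g_{\xi_N}$ has two bounded continuous derivatives; being bounded below, so does $1/g_{\xi_N}$. Lemma~\ref{paolo} then yields that the Fourier coefficients $\varsigma_h$ of $2\pi/g_{\xi_N}$ satisfy $|\varsigma_h|=O(h^{-2})$ uniformly in $N$. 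Since $\Hbar$ is circulant, all its rows are cyclic permutations of the first, so $\|\Hbar\|_{row}=\sum_{h=0}^{T-1}|[\Hbar]_{1,h+1}|$; the Poisson-summation identity $[\Hbar]_{ts}=\sum_{m\in\mathbb{Z}}\varsigma_{(t-s)+mT}$ together with the $O(h^{-2})$ decay of $\varsigma_h$ then gives the uniform bound $\|\Hbar\|_{row}<\infty$.

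For part (i), Assumption~\ref{independence} makes $\v$ and $\epi$ independent and mean zero, and $\bm{M}:=\Hninv-\Hbar$ is deterministic, so a direct second-moment computation yields
\begin{equation*}
\E\,\|\v'\bm{M}\epi\|^2\;=\;\sum_{k=1}^{K}\tr\!\big(\bm{M}\,\Hi\,\bm{M}'\,\bm{\Sigma}_{v}^{(k)}\big)\;\le\;C\,\|\bm{M}\|_{F}^2,
\end{equation*}
where $\bm{\Sigma}_{v}^{(k)}$ denotes the $T\times T$ covariance of the $k$-th column of $\v$, and $C$ depends on $\|\Hi\|_{sp}$ and $\max_k\|\bm{\Sigma}_{v}^{(k)}\|_{sp}$, both $O(1)$ uniformly in $N,T$ by Remark~\ref{unif lower bound} and Assumption~\ref{ass V}. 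It therefore suffices to prove $\|\bm{M}\|_{F}^2=O(1)$. The factorization $\bm{M}=\Hninv(\tilde{\bm{\Xi}}_N-\Hn)\Hbar$, where $\tilde{\bm{\Xi}}_N:=\Hbar^{-1}$ is the natural circulant approximation to the Toeplitz $\Hn$, together with $\|\Hninv\|_{sp},\|\Hbar\|_{sp}=O(1)$, reduces the task to the classical estimate $\|\tilde{\bm{\Xi}}_N-\Hn\|_{F}^2=O(1)$. The latter follows from the Poisson-summation identity $[\tilde{\bm{\Xi}}_N-\Hn]_{ts}=\sum_{m\neq0}\xi_N(t-s+mT)$ and the decay $|\xi_N(h)|=O(h^{-2})$ implied by Assumption~\ref{ass eps}.

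The main technical difficulty is uniformity in $N$: since $\xi_N(h)$ is an average over $i$ of weighted autocovariances of the $a_{jt}$, the lower bound on $g_{\xi_N}$, the smoothness constants of $g_{\xi_N}$, and the decay rate of $\xi_N(h)$ must all be controlled without any $N$-dependence. The row/column summability of $\bm{R}$ combined with the uniform-in-$i$ summability condition on $\{\phi_{is}\}$ in Assumption~\ref{ass eps} is precisely what delivers this uniformity, although making each estimate explicit requires some bookkeeping.
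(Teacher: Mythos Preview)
Your proposal is correct and follows essentially the same route as the paper. For part~(i) the paper also factorizes $\Hninv-\Hbar=-\Hninv(\Hn-\Hbar^{-1})\Hbar$, extracts spectral norms of $\Hninv$ and $\Hbar$, and reduces to a Frobenius-norm bound on the Toeplitz--circulant discrepancy; it obtains the latter from Corollary~\ref{circulantN} (the entrywise $O(T^{-1})$ bound on $\P\Hn\P'-2\pi\bm{G}_{\xi_N}$), whereas you use the equivalent Poisson-summation formulation directly. For part~(ii) the paper simply refers to the proof of Lemma~\ref{020217b}, which is exactly the argument you sketch via Lemma~\ref{paolo} applied to $1/g_{\xi_N}$. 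One small remark: your pointwise claim $|\xi_N(h)|=O(h^{-2})$ is justified because $h^2|\xi_N(h)|\le\sum_k k^2|\xi_N(k)|<\infty$ uniformly in $N$, which you should state explicitly; the paper sidesteps this by working with the uniform entrywise bound of Lemma~\ref{circulant} rather than the autocovariance decay.
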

\begin{prop}\label{propCHLOE1}
For $\tau_0$ increasing suitably with $T$
$$
\lim_{T\to\infty}\E\lb \bm{w}^{(\xii)}_{ij1}\rb^2=0.
$$
\end{prop}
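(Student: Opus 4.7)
The plan is to mimic the strategy of Proposition~\ref{prop190117a}, with the role of $\zz_t$ replaced by the kernel-convolved vector $\bm{\ell}^{(\xii)}_{is}$, exploiting the bound $\lb\Hbar\rb_{row}<\infty$ from Proposition~\ref{prop020217a}(ii). First, I would rewrite $\bm{w}^{(\xii)}_{ij}$ purely as a weighted sum of the innovations $\eta_{j,u}$. Substituting $a_{j,s}=\sum_{v=0}^{\infty}\phi_{j,v}\eta_{j,s-v}$ into $\bm{w}^{(\xii)}_{ij}=T^{-\nicefrac12}\sum_{s=1}^T\bm{\ell}^{(\xii)}_{is}a_{j,s}$ and interchanging the order of summation (using the convention $\phi_{j,h}=0$ for $h<0$) yields
$$
\bm{w}^{(\xii)}_{ij}=T^{-\nicefrac12}\sum_{u=-\infty}^{T}\bm{s}^{(\xii)}_{iju}\eta_{j,u},\qquad \bm{w}^{(\xii)}_{ij1}=T^{-\nicefrac12}\sum_{u=-\infty}^{-\tau_0-1}\bm{s}^{(\xii)}_{iju}\eta_{j,u}.
$$
By Assumption~\ref{independence} the process $\{\eta_{j,u}\}_u$ is independent of $\v_i$, so conditioning on $\sa(\v_i)$ and using the iid zero-mean property of $\eta_{j,u}$ gives
$$
\E\lb\bm{w}^{(\xii)}_{ij1}\rb^2=T^{-1}\,\E\eta_{j,0}^2\sum_{u=-\infty}^{-\tau_0-1}\E\lb\bm{s}^{(\xii)}_{iju}\rb^2.
$$

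Next, I would bound $\E\lb\bm{s}^{(\xii)}_{iju}\rb^2$ via two ingredients. For the inner kernel, expanding $\bm{\ell}^{(\xii)}_{is}=\sum_{t=1}^T\vvv_{it}\xii_N(t-s)$ and using $\sup_{i,k}\E v_{it,k}^2<\infty$ (from Assumption~\ref{ass V}) together with Proposition~\ref{prop020217a}(ii) yields
$$
\bigl|\E(\bm{\ell}^{(\xii)}_{is_1}{}'\bm{\ell}^{(\xii)}_{is_2})\bigr|\le C\sum_{t_1,t_2}|\xii_N(t_1-s_1)||\xii_N(t_2-s_2)|\le C\lb\Hbar\rb_{row}^2,
$$
uniformly in $s_1,s_2$. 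Hence
$$
\E\lb\bm{s}^{(\xii)}_{iju}\rb^2\le C\Bigl(\sum_{s=1}^{T}|\phi_{j,s-u}|\Bigr)^{2}.
$$
For the tail, the summability condition $\sup_j\sum_{s=0}^{\infty}s^{2}|\phi_{js}|<\infty$ in Assumption~\ref{ass eps} implies the uniform tail bound $\sum_{h\ge H}|\phi_{j,h}|\le CH^{-2}$. For $u\le-\tau_0-1$ the sum $\sum_{s=1}^{T}|\phi_{j,s-u}|$ is taken over indices $h=s-u\ge 1-u\ge\tau_0+2$, so it is of order $(1-u)^{-2}$, giving $\E\lb\bm{s}^{(\xii)}_{iju}\rb^2\le C(1-u)^{-4}$.

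Assembling the pieces,
$$
\E\lb\bm{w}^{(\xii)}_{ij1}\rb^2\le CT^{-1}\sum_{u\le-\tau_0-1}(1-u)^{-4}\le CT^{-1}\tau_0^{-3},
$$
which vanishes as $T\to\infty$ for any $\tau_0\to\infty$ (indeed, for any fixed $\tau_0\ge 1$). The main obstacle is purely a bookkeeping one: correctly interchanging the two infinite sums over $u$ and the finite sums over $(t,s)$ to produce a clean orthogonal decomposition of $\bm{w}^{(\xii)}_{ij1}$ in the $\eta_{j,u}$, and verifying that the convolution by $\Hbar$ preserves the bounded-second-moment property of the underlying innovations $\vvv_{it}$. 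Once Proposition~\ref{prop020217a}(ii) is invoked to guarantee the row-summability of $\Hbar$, the remainder is a routine application of the polynomial decay of the MA coefficients, just as in the proof of Proposition~\ref{prop190117a}.
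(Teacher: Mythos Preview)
Your proof is correct and takes essentially the same route as the paper's: both expand $\bm{w}^{(\xii)}_{ij1}$ as a tail sum over the innovations $\eta_{ju}$, bound $|\E(\vvv_{it}'\vvv_{is})|$ by a constant, use the row-summability of $\Hbar$ (Proposition~\ref{prop020217a}(ii)) to control the kernel $\xii_N$, and conclude via the tail decay of the MA coefficients $\phi_{j,\cdot}$ from Assumption~\ref{ass eps}. Your bookkeeping is in fact somewhat sharper than the paper's --- you exploit the $u$-dependent estimate $\sum_{s=1}^{T}|\phi_{j,s-u}|\le C(1-u)^{-2}$ to obtain the bound $CT^{-1}\tau_0^{-3}$, whereas the paper arrives at a cruder bound of order $T\sum_{u\ge\tau_0}\phi_{ju}^{2}$ that genuinely requires $\tau_0$ to grow with $T$ --- but this is a refinement, not a different argument.
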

\begin{prop}\label{propCHLOE2}
As $T\to\infty$
\be\label{eq:scott0B}
\left(\bm{W}^{(\xii)}_{ij0}\right) ^{-{1 \over 2}} \bm{w}^{(\xii)}_{ij0}  \xrightarrow{d} \norm(\bm{0}, \bm{I_{m+s}} ).
\ee
\end{prop}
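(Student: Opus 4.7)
The plan is to obtain the central limit theorem by conditioning on $\sa(\v)$ and invoking a Lyapunov CLT for a triangular array of independent centered random variables, combined with the Cram\'er--Wold device. By Assumption \ref{independence} the sequence $\{\eta_{ju}\}$ is stochastically independent of $\v$; hence, conditional on $\sa(\v)$, the coefficient vectors $\bm{s}^{(\xii)}_{iju}$ defined in (\ref{eq:LEO}) are non-random, and $\bm{w}^{(\xii)}_{ij0}$ is a weighted sum of independent mean-zero scalars $\eta_{ju}$ with deterministic coefficients, whose conditional covariance matrix is exactly $\bm{W}^{(\xii)}_{ij0}$.

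First, I would apply Cram\'er--Wold: for any fixed unit vector $\bm{\lambda}$, consider the scalar
\[
S_T(\bm{\lambda}):=\bm{\lambda}'\bigl(\bm{W}^{(\xii)}_{ij0}\bigr)^{-\nicefrac{1}{2}}\bm{w}^{(\xii)}_{ij0}
=T^{-\nicefrac{1}{2}}\sum_{u=-\tau_0}^{T} c_{iju}(\bm{\lambda})\,\eta_{ju},
\]
where $c_{iju}(\bm{\lambda}):=\bm{\lambda}'\bigl(\bm{W}^{(\xii)}_{ij0}\bigr)^{-\nicefrac{1}{2}}\bm{s}^{(\xii)}_{iju}$ is $\sa(\v)$-measurable. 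By construction the conditional variance of $S_T(\bm{\lambda})$ equals one. The plan is then to verify Lyapunov's condition with exponent $2+\delta=4$, namely
\[
T^{-2}\,\E|\eta_{j0}|^{4}\sum_{u=-\tau_0}^{T}\!|c_{iju}(\bm{\lambda})|^{4}\xrightarrow{p}0,
\]
which is admissible since $\E|\eta_{j0}|^{6}<\infty$ by Assumption \ref{ass eps} implies $\E|\eta_{j0}|^{4}<\infty$. Once this holds, Bernstein's Lemma (as applied in the proof of Proposition \ref{prop190117b}) yields $S_T(\bm{\lambda})\xrightarrow{d}\norm(0,1)$ for each $\bm{\lambda}$, and the Cram\'er--Wold theorem delivers the joint conclusion $(\bm{W}^{(\xii)}_{ij0})^{-\nicefrac{1}{2}}\bm{w}^{(\xii)}_{ij0}\xrightarrow{d}\norm(\bm{0},\bm{I}_{m+s})$.

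The main obstacle is establishing (a) a uniform negligibility bound on the coefficient sequence $\bm{s}^{(\xii)}_{iju}$ and (b) that $\lambda_{1}(\bm{W}^{(\xii)}_{ij0})$ is bounded away from zero in probability, since together these imply that the Lyapunov ratio vanishes. For (a), writing $\bm{s}^{(\xii)}_{iju}=\sum_{s=1}^{T}\sum_{t=1}^{T}\vvv_{it}\xii_{N}(t-s)\phi_{j,s-u}$, I would combine Cauchy--Schwarz with the row summability $\|\Hbar\|_{row}<\infty$ from Proposition \ref{prop020217a}(ii), the absolute summability $\sup_{i}\sum_{s=0}^{\infty}s^{2}|\phi_{is}|<\infty$ from Assumption \ref{ass eps}, and the moment bound $\sup_{i,t,k}\E|v_{itk}|^{12}<\infty$ from Assumption \ref{ass V}, to derive both $\max_{u}\|\bm{s}^{(\xii)}_{iju}\|^{2}=o_{p}(T)$ and $T^{-1}\sum_{u}\|\bm{s}^{(\xii)}_{iju}\|^{2}=O_{p}(1)$. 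For (b), I would establish convergence in probability of $T^{-1}\sum_{u}\bm{s}^{(\xii)}_{iju}\bm{s}^{(\xii)\prime}_{iju}$ to a strictly positive-definite limit via an argument paralleling Proposition \ref{prop190117c}, exploiting $\inf_{i}\lambda_{1}(\E\vvv_{it}\vvv'_{it})>\kappa$ from Assumption \ref{ass V} together with the uniform lower bound on $|\phi_{i}(z)|$ in (\ref{eq:11ap}), in the spirit of Remark \ref{unif lower bound}. These two ingredients force the Lyapunov ratio to zero and complete the proof.
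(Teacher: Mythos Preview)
Your proposal is correct and follows essentially the same route as the paper: condition on $\sa(\v)$, apply the Cram\'er--Wold device via a unit vector, and verify a triangular-array CLT using uniform negligibility of $\max_u T^{-1/2}\|\bm{s}^{(\xii)}_{iju}\|$ together with a lower bound on $\lambda_1(\bm{W}^{(\xii)}_{ij0})$. The only difference is that the paper verifies the Lindeberg-type conditions of \cite{Scott73} (Theorem~1, Conditions~(B)) rather than the Lyapunov condition with exponent~$4$; since Lyapunov implies Lindeberg this is a harmless variation, and both approaches rest on exactly the same two ingredients, which the paper obtains by the splitting argument $\max_u|d_{iju}|\le \kappa(\epsilon T^{-1}\sum_t\|\bm{\ell}^{(\xii)}_{it}\|^2)^{1/2}+\kappa L T^{-1/2}\max_t\|\bm{\ell}^{(\xii)}_{it}\|$ and by Propositions~\ref{propCHLOE1} and~\ref{propCHLOE3}.
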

\begin{prop}\label{propCHLOE3} Let $\bm{\mathcal{ W}^{(\xii)}_{ij}}$ as defined in (\ref{eq:06021717a}) below. Then,
\begin{equation}\label{eq:06021717a}
\bm{W}^{(\xii)}_{ij} \xrightarrow{p} \E (\eta_{j0}^2 )  \bm{\Sigma }_{\v'\Hninv \bm{\Phi }_j \Hninv\v }  =: \bm{\mathcal{W}^{(\xii)}_{ij}} > 0 ,
\end{equation}
where the matrix $\bm{\Phi_j}$ has been defined in (\ref{eq:190117a})
\end{prop}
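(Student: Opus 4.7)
The plan is to compute $\bm{W}^{(\xii)}_{ij}$ in closed form and then pass to the limit in two sub-steps. Starting from the definition (\ref{eq:radio112b}) together with (\ref{eq:radio112}), and invoking Assumption \ref{independence} (so that $\v$ is independent of $\{a_{js}\}_s$), conditioning on $\sa(\v)$ affects only the $\bm{v}_{it}$ factors and yields
\[
\bm{W}^{(\xii)}_{ij} = T^{-1}\sum_{t,t',s,s'} \bm{v}_{it}\bm{v}'_{it'}\,\xii_N(t-s)\,\xii_N(t'-s')\,\E[a_{js}a_{js'}].
\]
The linear-process representation (\ref{eq:11}) with iid $\eta$ gives $\E[a_{js}a_{js'}] = \E(\eta_{j0}^2)\,[\bm{\Phi}_j]_{s,s'}$, so by the symmetry of $\Hbar$ this collapses to $\bm{W}^{(\xii)}_{ij} = \E(\eta_{j0}^2)\,T^{-1}\v'\Hbar\bm{\Phi}_j\Hbar\v$.

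The remaining task is to show $T^{-1}\v'\Hbar\bm{\Phi}_j\Hbar\v \xrightarrow{p} \bm{\Sigma}_{\v'\Hninv\bm{\Phi}_j\Hninv\v}$. I would split this through the decomposition
\[
\Hbar\bm{\Phi}_j\Hbar - \Hninv\bm{\Phi}_j\Hninv = (\Hbar-\Hninv)\bm{\Phi}_j\Hbar + \Hninv\bm{\Phi}_j(\Hbar-\Hninv),
\]
and bound the associated error using the circulant-approximation machinery developed for Proposition \ref{prop020217a} and in Appendix \ref{invcov}. The ingredients are: $\lb\Hbar\rb_{sp}$, $\lb\Hninv\rb_{sp}$, and $\lb\bm{\Phi}_j\rb_{sp}$ are all $O(1)$ (the first from $\lb\Hbar\rb_{row}<\infty$ via Proposition \ref{prop020217a}(ii), the second from Remark \ref{unif lower bound}, the third because $\bm{\Phi}_j$ is a Toeplitz autocovariance whose spectral norm is dominated by $\sup_\omega|\phi_j(e^{i\omega})|^2\cdot\E\eta_{j0}^2$, finite by (\ref{eq:11})); an operator-norm bound $\lb\Hbar-\Hninv\rb_{sp}=o(1)$ delivered by Lemma \ref{paolo} applied to the inverse spectral density $g_{\xi_N}^{-1}$; and $\lb\v\rb^2 = O_p(T)$ from Assumption \ref{ass V}. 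Combining these ingredients gives $T^{-1}\v'[\Hbar\bm{\Phi}_j\Hbar-\Hninv\bm{\Phi}_j\Hninv]\v = o_p(1)$. A weak law of large numbers for quadratic forms in $\v$, underwritten by the summable-cumulants and moment conditions in Assumption \ref{ass V} (and used analogously in Proposition \ref{lemma160117a}\ref{eq:lemma160117aTRE}), then delivers
\[
T^{-1}\v'\Hninv\bm{\Phi}_j\Hninv\v \xrightarrow{p} \bm{\Sigma}_{\v'\Hninv\bm{\Phi}_j\Hninv\v},
\]
which is exactly the target limit by the notational convention in the Introduction. Positive definiteness of $\bm{\mathcal{W}}^{(\xii)}_{ij}$ follows by lower-bounding each of the three sandwiched objects: $\inf_i\lambda_1(\Hninv) = 1/\sup_i\lambda_T(\Hn) > 0$ by Remark \ref{unif lower bound}; $\inf_j\lambda_1(\bm{\Phi}_j) > 0$ because (\ref{eq:11ap}) forces the spectral density of the MA process driving $a_{js}$ to be uniformly bounded away from zero; and $\lambda_1(\bm{\Sigma}_{\v'\v}) > \kappa$ by Assumption \ref{ass V} and Remark \ref{inverseV}. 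Hence $\lambda_1(\bm{\Sigma}_{\v'\Hninv\bm{\Phi}_j\Hninv\v}) \geq \lambda_1(\Hninv)^2\,\lambda_1(\bm{\Phi}_j)\,\lambda_1(\bm{\Sigma}_{\v'\v}) > 0$, and multiplying by $\E(\eta_{j0}^2)>0$ yields $\bm{\mathcal{W}}^{(\xii)}_{ij}>0$.

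The main obstacle is the operator-norm control of $\Hbar-\Hninv$. Proposition \ref{prop020217a}(i) only provides the bilinear rate $\lb\v'(\Hbar-\Hninv)\epi\rb=O_p(1)$, which is too coarse here because $\Hbar-\Hninv$ is sandwiched between two $T$-growing factors of $\v$. The key technical input is Lemma \ref{paolo} applied to $g_{\xi_N}^{-1}$, which is smooth (inherited from $\sum_s s^2|\phi_{is}|<\infty$ in Assumption \ref{ass eps}) and bounded away from zero (by (\ref{eq:11ap}) and Remark \ref{unif lower bound}); the resulting fast decay of its Fourier coefficients makes the Toeplitz--circulant discrepancy uniformly small in spectral norm, exactly what is needed to absorb the $\lb\v\rb^2 = O_p(T)$ factor.
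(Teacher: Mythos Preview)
Your closed-form computation $\bm{W}^{(\xii)}_{ij}=\E(\eta_{j0}^2)\,T^{-1}\v'\Hbar\bm{\Phi}_j\Hbar\v$ and the decomposition strategy are exactly what the paper does. The positive-definiteness argument is fine.

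The gap is in the step you yourself flag as the main obstacle. The claim $\lb\Hbar-\Hninv\rb_{sp}=o(1)$ is \emph{false} in general, and Lemma~\ref{paolo} does not deliver it. Lemma~\ref{paolo} gives decay of the Fourier coefficients of $g_{\xi_N}^{-1}$, which controls the off-diagonal decay of the associated circulant/Toeplitz matrices; it says nothing about the discrepancy between $\Hninv$ (the inverse of a Toeplitz matrix, itself not Toeplitz) and its circulant surrogate $\Hbar$. That discrepancy is concentrated at the corners and does not vanish in operator norm. A concrete counterexample: take $\xi(h)=\rho^{|h|}$ (AR(1)). Then $\Hninv$ is tridiagonal with modified $(1,1)$ and $(T,T)$ entries, while $\Hbar$ is the tridiagonal circulant with wrap-around at $(1,T)$ and $(T,1)$. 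The difference $\Hbar-\Hninv$ has $O(1)$ entries at four fixed positions, so $\lb\Hbar-\Hninv\rb_{sp}$ stays bounded away from zero as $T\to\infty$. Your bound $T^{-1}\lb\v\rb^2\cdot\lb\Hbar-\Hninv\rb_{sp}$ therefore only gives $O_p(1)$, not $o_p(1)$.

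The paper sidesteps this by never invoking the spectral norm of $\Hbar-\Hninv$. Instead it bounds the \emph{vector} $T^{-1/2}(\Hbar-\Hninv)\v$ directly:
\[
\E\lb T^{-1/2}(\Hbar-\Hninv)\v\rb^2 \;\le\; T^{-1}\,\lambda_T(\E\v\v')\,\tr\bigl[(\Hbar-\Hninv)^2\bigr].
\]
Writing $\Hninv-\Hbar=\Hninv(\Hbar^{-1}-\Hn)\Hbar$ and using $\Hbar^{-1}-\Hn=-\P(\P'\Hn\P-2\pi\bm{G}_{\xi_N})\P'/(2\pi)$, the Frobenius norm is controlled by $\tr(\P'\Hn\P-2\pi\bm{G}_{\xi_N})^2$, which is $O(1)$ because the entrywise bound in Lemma~\ref{circulant}/Corollary~\ref{circulantN} is $O(1/T)$ over $T^2$ entries. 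Hence $\lb T^{-1/2}(\Hbar-\Hninv)\v\rb=O_p(T^{-1/2})$. Combined with $\lb\bm{\Phi}_j\rb_{sp}=O(1)$ and $\lb T^{-1/2}\Hninv\v\rb=O_p(1)$, the paper's inequality
\[
\lb T^{-1}\v'(\Hbar\bm{\Phi}_j\Hbar-\Hninv\bm{\Phi}_j\Hninv)\v\rb
\le \lb\bm{\Phi}_j\rb_{sp}\,\lb T^{-1/2}(\Hbar-\Hninv)\v\rb\Bigl(\lb T^{-1/2}(\Hbar-\Hninv)\v\rb+\lb T^{-1/2}\Hninv\v\rb\Bigr)
\]
yields $o_p(1)$. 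The Frobenius route works precisely because the corner effects, though $O(1)$ individually, are \emph{few}: the squared Frobenius norm stays bounded and the extra $T^{-1}$ kills it. Replace your spectral-norm step by this trace bound and your argument goes through.
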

\subsection{Proof Theorem~\ref{Theorem_GLS}}\label{proofT2}

By Proposition \ref{23feb18} below
\be\label{eq:ugls11N} 
\sqrt{T}\left(\bm{\hat{\beta}_i^{FGLS}}-\bm{\hat{\beta}_i}\right)\aa
\left(T^{-1}\xx'\MAI\xx\right)^{-1} \left(T^{-\nicefrac{1}{2}}\xx'\MAI\ei\right),
\ee
where
\be\label{eq:MMM1}
\MA=\db'\left(\f \an\f'+\bm{C_N}\right)\db,
\ee
and $\an$ and $\cn$  are  defined in equations (\ref{eq:seconmomb2}) and (\ref{eq:MMM2}), respectively.
The matrix $\MA$ can be seen as the  FGLS counterpart of $\ssthi$. The proof follows closely that of Theorem~\ref{Theorem_OLS} part (ii). However, here we consider the joint asymptotics, for (N,T) diverging simultaneously, as can be appreciated from the inspection of Proposition \ref{23feb18}.
%
For the first term on the LHS of (\ref{eq:ugls11N}),
proposition \ref{lemma160117aT2} implies that
\ber\label{eq:radio111}
{ \xx '  \MAI \xx  \over T } &=& \Gi'  { \ff ' \MAI \ff  \over T } \Gi + { \vv ' \MAI \vv  \over T } +  { \vv '  \MAI \ff \over T }  \Gi + \Gi' { \ff '  \MAI \vv \over T}\nonumber\\
&\approx & \frac{\v'\cni\v}{T}
\xrightarrow{p}\bm{ \Sigma }_{\v' \cni \v }
\eer
The matrix $\bm{ \Sigma }_{\v' \cni \v }=\E\left(\v'\cni\v\right)$ is non-stochastic, but does depend on $N$, in general. 
To complete the proof we need to derive the limiting distribution of the latter term in (\ref{eq:ugls11N}). By and 
Proposition \ref{lemma170117aT2},
\[
\lb { \xx ' \MAI  \ei- \v ' \cn^{-1}  \epi \over   \sqrt{T }} \rb \leq  
\lb{ \Gi ' \ff ' \MAI  \ei  \over \sqrt{T}}\rb+
 \lb{ \vv ' \MAI  \ei  - \v ' \MAIN  \epi  \over \sqrt{T} }\rb=O_p\left(\frac{1}{\sqrt{T}}\right),
\] 
that is,  $T^{-\nicefrac{1}{2}}\xx ' \MAI\ei\approx T^{-\nicefrac{1}{2}}\v ' \MAIN  \epi$.  
Proposition \ref{mamma}.(i) show   that, in turn, $T^{-1/2}\v ' \MAIN  \epi\approx T^{-1/2}\v ' \MAB  \epi$, where $\MAB$ is a circulant matrix  defined in the same proposition. Hence, the LHS of (\ref{eq:ugls11N}) can be further approximated as
\be\label{eq:2luglio}
\sqrt{T}\left(\bm{\hat{\beta}_i^{FGLS}}-\bm{\hat{\beta}_i}\right)
\approx \left(\frac{\v'\cni\v}{T}\right)^{-1}
\frac{\v'\MAB\epi}{\sqrt{T}}.
\ee
The second part of  Proposition \ref{mamma} states that the rows of the matrix $\MAB$ are absolutely summable,  allowing us to exploit again \cite[Propositons 1 and 2]{RH97}.

Similarly to (\ref{eq:radio112})-(\ref{eq:radio112b}), we write
\be\label{eq:radio112N}
  T^{-{1 \over 2}}\v'\MAB \epi =  \sum_{j=1}^N r_{ij} \left(   T^{-{1 \over 2}}  \sum_{t=1}^T \vvv_{it}\cb(t-s)   a_{js}   \right)=\sum_{j=1}^N r_{ij}\bm{w^{(\bar{c})}_{ij}}\nonumber,
\ee
 where $\cb(t-s)=\bar{c}_{N,ts}$ is the $(t,s)$-entry of the $(T\times T)$ matrix $\MAB$ and 
$$
\bm{w}^{(\bar{c})}_{ij} :=  \left( T^{-\nicefrac{1}{2} }  \sum_{u=- \infty  }^T \bm{s}^{(\bar{c})}_{iju}   \eta_{ju}   \right),\quad 
\bm{s}^{(\bar{c})}_{iju}:= \sum_{t=1}^T \bm{\ell}^{(\bar{c})}_{it} \phi_{jt-u}
,\quad  
\bm{\ell}^{(\bar{c})}_{it}:=\sum_{s=1}^{T} \vvv_{it}\cb(t-s),
$$
with  $\phi_{jh} = 0$ for $h < 0$.
 By Assumption \ref{ass eps} it also follows that, for any $N$
\be\label{eq:27feb}
T^{-1}\E\left(\v'\MAB \epi\epi'\MAB\v\left|\right.\left\{  \sa(\v) \right\}\right)=\sum_{i=1}^N r^2_{ij}\bm{W}^{(\bar{c})}_{ij},
\ee
with $\bm{W}^{(\bar{c})}_{ij}:= \E \left( \bm{w}^{(\bar{c})}_{ij} \bm{w}^{(\bar{c})'}_{ij}| \{   \sa(\v) \}  \right).$
Proceeding along the lines of Theorem 1, Part (ii) using Propositions \ref{propCHLOE1N}-\ref{propCHLOE2N} we establish that, 
\be\label{eq:radio115N}
\left(\sum_{i=1}^N r^2_{ij}\bm{W}^{(\bar{c})}_{ij}\right)^{-1/2} T^{-{1 \over 2}}\v'  \MAB\epi \xrightarrow{d} \norm \left(\bm{0}, \I_{K} \right).
\ee 
The result in display (\ref{eq:theorem2B}) follows from display (\ref{eq:2luglio}) and Proposition \ref{propMarch1}.
\begin{prop}\label{23feb18}
For $1/T+T^3/N^2\to 0$:
\begin{enumerate}[label=(\roman*)]
\item $\lb T^{-1}\xx'(\shi-\MAI)\xx\rb=o_p(1)$.
\item $\lb T^{-\nicefrac{1}{2}}\xx'(\shi-\MAI)\ei\rb=o_p(1)$.
\end{enumerate}
\end{prop}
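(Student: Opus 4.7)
The plan is to route the difference $\shi - \MAI$ through the intermediate weighting matrix $\SS$ of Appendix~\ref{auxiliary}, equation~(\ref{eq:silver}), which mirrors the factor-plus-noise form $\MA = \ff\an\ff' + \cc$ but with population (rather than sample) ingredients. Writing
\[
\xx'(\shi - \MAI) = \xx'(\shi - \SSI) + \xx'(\SSI - \MAI),
\]
I would treat the two summands in turn. The first is genuinely stochastic, driven by $N$-averaging over the $\huu\huu'$, while the second is a small deterministic/near-deterministic discrepancy. For part~(i) it suffices to show each piece is $o_p(T)$; for part~(ii) each must be $o_p(\sqrt{T})$.

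For the first piece, Lemma~\ref{approx inv}(i) applied with $\bm A = \sh$ and $\bm B = \SS$ gives the second-order Neumann expansion
\[
\shi - \SSI = -\SSI(\sh-\SS)\SSI + \SSI(\sh-\SS)\shi(\sh-\SS)\SSI.
\]
As the authors stress right after display~(\ref{eq:casa}), a blunt spectral-norm bound is inadequate; the approach instead is to unpack $\sh - \SS$ as an $N^{-1}$-average of centred outer products $\huu\huu'$ using $\huu = \mxx\uu$, and to evaluate the inner products $T^{-1}\xx'\SSI(\sh-\SS)\SSI\xx$ and $T^{-1/2}\xx'\SSI(\sh-\SS)\SSI\ei$ term by term. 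The outer contractions with $\SSI\xx$ and $\SSI\ei$ collapse each trace into a scalar expression governed by high-order cumulants of the $v_{itk}$, $\varepsilon_{is}$ and $f_s$, which are bounded uniformly by Assumptions~\ref{ass eps}--\ref{ass V}, Remark~\ref{remcum}, and the diagram formula of Appendix~\ref{auxiliary1} in the Supplement. This yields the sharpened rates announced in Section~\ref{LEMMI}: the linear term is $O_p(T/\sqrt N)$ and the quadratic remainder is $O_p(T^2/N)$, which are enough for both parts under $T/N \to 0$ and $T^3/N^2 \to 0$ respectively. For the second piece $\xx'(\SSI - \MAI)(\cdot)$, the same Neumann expansion applies with $\bm A = \SS$, $\bm B = \MA$; here $\SS - \MA$ is controlled by elementwise LLN-type bounds on the ingredients of $\cc$, and the potentially dangerous $\ff$-direction is disposed of by Lemma~\ref{PZ}(i) applied to $\MA$ (with $\bm E = \MA$, $\bm B = \ff$, $\bm A = \cc$), which delivers $\|\MAI\ff\|^2 = O_p(T^{-1})$ and hence reduces factor-driven contributions to negligible order.

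The main obstacle will be the cumulant bookkeeping for the leading cross-term $T^{-1/2}\xx'\SSI(\sh-\SS)\SSI\ei$. Once $\sh - \SS$ is opened up, this becomes a high-dimensional sum over unit indices and time indices mixing the regressor innovations, the residual innovations, the factors, the loadings, and the cross-sectional weights $r_{ij}$ from Assumption~\ref{ass eps}. Only by a careful diagram-formula partition of the indices, matching which index-ties deliver an $N^{-1/2}$ gain against which deliver $T$-order multiplicities, does one recover the $O_p(T/\sqrt N)$ rate rather than the trivial $O_p(T)$ or $O_p(T^{3/2})$ that a naive bound would give; this is precisely the ``lengthy calculations involving high order cumulants'' that Section~\ref{LEMMI} flags, and is the substantive technical content beyond what is already packaged in Lemmas~\ref{PZ} and \ref{approx inv}.
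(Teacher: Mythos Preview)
Your proposal is correct and matches the paper's own route: the proof of Proposition~\ref{23feb18} is just the triangle inequality through $\SS$, invoking Lemmata~II (Lemmas~\ref{mmtt0} and~\ref{mmttt}), whose proofs in turn unfold exactly as you describe---the second-order expansion of Lemma~\ref{approx inv}(i), the decomposition of $\sh-\SS$ into the pieces of Appendix~\ref{auxiliary}, and the cumulant/diagram bookkeeping of Lemmata~III to deliver the $O_p(T/\sqrt N)$ and $O_p(T^2/N)$ rates. One small correction: the second leg $\SSI-\MAI$ is governed not by an $N$-average LLN but by the $T$-convergence $\ff'\ff/T\to\sf$ entering $\BT-\T$ (see Lemma~\ref{mmtt4}), so the relevant smallness comes from $\bar{\mathcal q}_{ik\ell}\to_p 0$ rather than cross-sectional averaging.
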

%
%
\begin{prop}\label{lemma160117aT2}
For any $N$:
\begin{enumerate}[label=(\roman*)]
\item  $\lb\ff ' \MAI \ff\rb  = O_p(1).$
\item $\lb\vv ' \MAI \ff\rb = O_p\left(T^{-\nicefrac{1}{2}} \right).$ 
\item  $\lb T^{-1}\left( \vv '   \MAI \vv-\v'\cni\v\right)\rb=o_p(1).$
\item $T^{-1}\left( \vv '   \MAI \vv\right)   \xrightarrow{p} \bm{ \Sigma }_{\v' \cn^{-1} \v } > 0.$  
\end{enumerate}
\end{prop}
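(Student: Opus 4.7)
The structure of the statement mirrors Proposition~\ref{lemma160117a} for the UGLS case, with $\MA = \db'(\f\an\f' + \cn)\db$ playing the role of $\ssth$. My plan is to exploit the fact that, writing $\tilde{\bm{\mathcal{C}}} := \db'\cn\db$, one has $\MA = \ff\,\an\,\ff' + \tilde{\bm{\mathcal{C}}}$ with $\ff = \db'\f$, so Lemma~\ref{PZ} applies directly with $\bm{B} = \ff$, $\bm{C} = \an$, $\bm{A} = \tilde{\bm{\mathcal{C}}}$, $m_1 = T-S$ and $m_2 = M$. Before invoking the lemma I would verify its hypotheses: Assumption~\ref{ass loading} yields $\|\an\| + \|\an^{-1}\| = O_p(1)$; Remark~\ref{unif lower bound} gives a uniform lower bound on the eigenvalues of each $\Hi$, and a parallel argument, using Assumptions~\ref{ass V} and \ref{ass loading} together with $\sf > 0$, handles each $\bm{\Theta_i}$, so $\|\tilde{\bm{\mathcal{C}}}\|_{sp}$ and $\|\tilde{\bm{\mathcal{C}}}^{-1}\|_{sp}$ are $O(1)$; finally $(T-S)^{-1}\ff'\tilde{\bm{\mathcal{C}}}^{-1}\ff$ and its inverse are $O_p(1)$ thanks to Assumption~\ref{ass factors} and Proposition~\ref{propo1}\ref{propo1D1}.

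Parts (i) and (ii) then follow immediately from Lemma~\ref{PZ}: take $\bm{D} = \ff$ in part (ii) of the lemma to obtain our (i), and $\bm{D} = \vv$ in part (iii) of the lemma to obtain our (ii). The inner bound $\|(T-S)^{-\nicefrac{1}{2}}\vv'\tilde{\bm{\mathcal{C}}}^{-1}\ff\|_2 = O_p(1)$ needed for the latter is a second-moment computation: $\v$ and $\f$ are mutually independent by Assumption~\ref{independence}, both independent of the deterministic $\cn^{-1}$, and the cumulant bounds of Assumption~\ref{ass V} and Remark~\ref{remcum} give $\E\|\v'\cn^{-1}\f\|^2 = O(T)$. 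For part (iii) I would apply Sherman--Morrison--Woodbury twice. First,
\begin{equation*}
\MAI = \tilde{\bm{\mathcal{C}}}^{-1} - \tilde{\bm{\mathcal{C}}}^{-1}\ff\bigl(\an^{-1} + \ff'\tilde{\bm{\mathcal{C}}}^{-1}\ff\bigr)^{-1}\ff'\tilde{\bm{\mathcal{C}}}^{-1},
\end{equation*}
so $\vv'\MAI\vv = \vv'\tilde{\bm{\mathcal{C}}}^{-1}\vv + O_p(1)$, the correction being $O_p(T^{\nicefrac{1}{2}})\cdot O_p(T^{-1})\cdot O_p(T^{\nicefrac{1}{2}})$ by the bounds already derived. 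Second, the pseudoinverse identity $\db(\db'\cn\db)^{-1}\db' = \cn^{-1} - \cn^{-1}\d(\d'\cn^{-1}\d)^{-1}\d'\cn^{-1}$ yields
\begin{equation*}
\vv'\tilde{\bm{\mathcal{C}}}^{-1}\vv = \v'\cn^{-1}\v - \v'\cn^{-1}\d\bigl(\d'\cn^{-1}\d\bigr)^{-1}\d'\cn^{-1}\v,
\end{equation*}
where the subtracted term is $O_p(1)$ since $\v'\cn^{-1}\d = O_p(T^{\nicefrac{1}{2}})$ and $(\d'\cn^{-1}\d)^{-1} = O_p(T^{-1})$. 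Dividing by $T$ gives $o_p(1)$, proving (iii). Part (iv) then reduces to the quadratic-form convergence $T^{-1}\v'\cn^{-1}\v \xrightarrow{p} \bm{\Sigma}_{\v'\cn^{-1}\v}$, which is standard since $\cn$ is deterministic: the mean equals $T^{-1}\tr(\cn^{-1}\E\v\v')$ and converges by Assumption~\ref{ass V}, while the variance vanishes by cumulant summability.

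The main technical obstacle, I expect, is making the $O_p$ bounds uniform in $N$, so that in the subsequent joint-asymptotics proof of Theorem~\ref{Theorem_GLS} these estimates can be combined with the rates $T/N \to 0$ or $T^3/N^2 \to 0$. This requires uniform-in-$N$ control of $\|\cn^{-1}\|_{sp}$, which reduces to a uniform lower bound on the eigenvalues of $\cn = N^{-1}\sum_i(\Hi + \bm{\Theta_i})$: the $\Hi$ piece is handled directly by Remark~\ref{unif lower bound}, but the $\bm{\Theta_i}$ piece is genuinely delicate because it involves $\sxi$ and the products of factor-loading matrices, which must be kept away from degeneracy uniformly in $i$ and $N$. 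Once this uniform spectral control is in hand, the rest of the argument is bookkeeping around the two Sherman--Morrison expansions.
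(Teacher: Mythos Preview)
Your approach is essentially the same as the paper's: the paper says ``similar to Proposition~\ref{lemma160117a}'' and only supplies the eigenvalue bounds on $\cn$ needed to feed Lemma~\ref{PZ}. Your Sherman--Morrison reduction for part (iii) and the explicit identity $\db(\db'\cn\db)^{-1}\db' = \cn^{-1}-\cn^{-1}\d(\d'\cn^{-1}\d)^{-1}\d'\cn^{-1}$ are a minor variation on the paper's route (which, in the UGLS analogue, uses Lemma~\ref{MN} and a trace bound instead); both work.

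There is, however, a small but real misconception in your ``main technical obstacle.'' You worry about keeping each $\bm{\Theta_i}$ \emph{away from degeneracy} uniformly in $i$ and $N$. This is unnecessary: each $\bm{\Theta_i}=\E[(\v\,a_i)(\v\,a_i)']$ with $a_i=\sxi\Gi\sf\b$ is automatically positive \emph{semi}-definite, so by Lemma~\ref{emme}\ref{emme3}
\[
\lambda_1(\cn)=\lambda_1\!\Big(\tfrac1N\sum_i(\Hi+\bm{\Theta_i})\Big)\ \ge\ \lambda_1(\Hn)\ \ge\ \inf_i\lambda_1(\Hi)>\ka,
\]
with the last inequality coming directly from Remark~\ref{unif lower bound}. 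No lower-bound control on the $\bm{\Theta_i}$ themselves is ever needed. What \emph{is} needed is a uniform \emph{upper} bound on $\lambda_T(\bm{\Theta_i})$, and that is where the paper does a little work: write $\theta_{its}=\tr(\bm{\mathscr{Q}}_i\,\E(\vvv_{it}\vvv_{is}'))$ with $\bm{\mathscr{Q}}_i=\sxi\Gi\sf\b\b'\sf\Gi'\sxi$, use Assumptions~\ref{ass factors}, \ref{ass loading} and Proposition~\ref{propo1}\ref{propo15M} to get $\sup_{i,k,\ell}|\mathcal{q}_{ik\ell}|<\infty$, and then the cumulant summability in Assumption~\ref{ass V} gives $\sup_{i,t}\sum_s|\theta_{its}|<\infty$, hence $\sup_i\lambda_T(\bm{\Theta_i})<\infty$. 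With both eigenvalue bounds on $\cn$ in hand (uniformly in $N$), the rest of your argument goes through as written.
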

\begin{prop}\label{lemma170117aT2} 
For any $N$
\begin{enumerate}[label=(\roman*)]
\item  $\lb\Gi'\ff' \MAI \ei\rb = O_p\left(T^{-\nicefrac{1}{2}}\right).$
\item $\lb \vv '   \MAI\ei  -\v'\cn^{-1}\epi\rb=O_p(1).$  
\end{enumerate}
\end{prop}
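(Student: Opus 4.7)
The plan is to mirror the proof of Proposition~\ref{lemma170117a}, which handles the same two claims for the \emph{unfeasible} weighting matrix $\ssth=\f\B\f'+\Hn$, replacing throughout $\B\to\an$ and $\Hn\to\cn$. Set $\bm{E}:=\f\an\f'+\cn$, so $\MA=\db'\bm{E}\db$ and, by the same identity used in deriving (\ref{eq:UGLSdef}), $\db\MAI\db'=(\md\bm{E}\md)^{+}$. Consequently $\ff'\MAI\ei=\f'(\md\bm{E}\md)^{+}\epi$ and $\vv'\MAI\ei=\v'(\md\bm{E}\md)^{+}\epi$. The analysis then proceeds on $\bm{E}^{-1}$ via the Sherman--Morrison--Woodbury identity (Appendix~\ref{smw}) plus a rank-$S$ correction accounting for the difference between $(\md\bm{E}\md)^{+}$ and $\md\bm{E}^{-1}\md$.

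For part (i), apply Lemma~\ref{PZ} to $\bm{E}$, identifying $\bm{B}=\f$, $\bm{C}=\an$, $\bm{A}=\cn$ and $m_1=T$. Conditions~(a) and (c) reduce to boundedness of $\lb T^{-1}\f'\cni\f\rb$ and of its inverse, which follow from Assumption~\ref{ass factors} together with Remark~\ref{unif lower bound} (since $\cn\ge\Hn$ because $\T\ge0$ by construction); conditions~(b) and (d) amount to boundedness of $\an$ and $\an^{-1}$, which is Assumption~\ref{ass loading}. Lemma~\ref{PZ}(iii) with $\bm{D}=\epi$ then yields $\lb\f'\bm{E}^{-1}\epi\rb=O_p(T^{-\nicefrac{1}{2}})$ as soon as one has the input estimate $\lb T^{-\nicefrac{1}{2}}\epi'\cni\f\rb=O_p(1)$. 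The latter follows by the very same Bernstein/CLT argument used for $T^{-\nicefrac{1}{2}}\v'\Hbar\epi$ in the proof of Theorem~\ref{Theorem_OLS}(ii), leveraging Assumption~\ref{independence} and row-summability of $\cni$ obtained by a circulant approximation argument modelled on Proposition~\ref{prop020217a}. Passing from $\f'\bm{E}^{-1}\epi$ to $\f'(\md\bm{E}\md)^{+}\epi$ costs at most another $O_p(T^{-\nicefrac{1}{2}})$ term via the rank-$S$ correction, and premultiplication by $\Gi'=O_p(1)$ preserves the rate.

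For part (ii), apply Woodbury to $\bm{E}^{-1}$:
\[
\bm{E}^{-1} \;=\; \cni \;-\; \cni\f\bigl(\an^{-1}+\f'\cni\f\bigr)^{-1}\f'\cni.
\]
Insert this into $\v'(\md\bm{E}\md)^{+}\epi$ and absorb the $\md$-versus-identity discrepancy into a remainder, obtaining $\vv'\MAI\ei=\v'\cni\epi+R_1+R_2$. The Woodbury correction factors as $R_1=(\v'\cni\f)(\an^{-1}+\f'\cni\f)^{-1}(\f'\cni\epi)$; by the independence structure of Assumption~\ref{independence} and the $\sqrt T$-CLT just used, $\lb T^{-\nicefrac{1}{2}}\v'\cni\f\rb=O_p(1)$ and $\lb T^{-\nicefrac{1}{2}}\f'\cni\epi\rb=O_p(1)$, while $\lb T(\an^{-1}+\f'\cni\f)^{-1}\rb=O_p(1)$ by condition~(b) of Lemma~\ref{PZ}; combining, $\lb R_1\rb=O_p\!\bigl(\sqrt T\cdot T^{-1}\cdot\sqrt T\bigr)=O_p(1)$. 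The projection remainder $R_2$ is controlled by finite rank ($=S$) of $\pd$ together with boundedness of $\cni$ on the finite-dimensional span of $\d$, again yielding $O_p(1)$.

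The main obstacle is to justify, uniformly in $N$, the row-summability and bounded spectrum of $\cni$ needed throughout the above. In the unfeasible case, Propositions~\ref{prop020217a} and \ref{mamma} handle this for $\Hninv$ via circulant-symmetric approximation followed by Lemma~\ref{paolo}; the required adaptation is to show that the extra piece $N^{-1}\sum_i\T$, which embeds the non-consistency of the OLS residuals through $\T=\E[\v\sxi\Gi\sf\b\b'\sf\Gi'\sxi\v']$, inherits the same circulant-approximable structure with an error that is negligible on the scales invoked above. Once this structural control on $\cn$ and $\cni$ is in place, the rest is an essentially mechanical transcription of the proof of Proposition~\ref{lemma170117a}.
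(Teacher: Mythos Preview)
Your overall strategy matches the paper's: both reduce to the proof of Proposition~\ref{lemma170117a}, which in turn mirrors Proposition~\ref{lemma160117a}(i),(iii) via Lemma~\ref{PZ} for the factor part and Sherman--Morrison--Woodbury plus a rank-$S$ correction (Lemma~\ref{MN}) for the idiosyncratic part. The paper's own proof is literally ``Analogous to Proposition~\ref{lemma170117a}, and hence omitted,'' with the only new ingredient being the spectral bounds $\lambda_1(\cn)\ge\lambda_1(\Hn)>\kappa$ and $\lambda_T(\cn)<\infty$ established in the proof of Proposition~\ref{lemma160117aT2}.

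Where you overshoot is in identifying the ``main obstacle.'' You do \emph{not} need row-summability of $\cni$ or any circulant/CLT machinery here. The input estimates $\lb T^{-\nicefrac{1}{2}}\epi'\cni\f\rb=O_p(1)$ and $\lb T^{-\nicefrac{1}{2}}\v'\cni\f\rb=O_p(1)$ follow from a one-line second-moment bound using only the spectral bound on $\cni$: for instance,
\[
\E\lb T^{-\nicefrac{1}{2}}\f'\cni\epi\rb^2
= T^{-1}\E\tr\!\left(\f'\cni\Hi\cni\f\right)
\le \lambda_T^2(\cni)\,\lambda_T(\Hi)\,\E\tr\!\left(T^{-1}\f'\f\right)=O(1),
\]
exactly as in the proof of Proposition~\ref{lemma160117a}(ii). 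The row-summability of $\cni$ (via circulant approximation and Lemma~\ref{paolo}) is deferred to Proposition~\ref{mamma}, where it is genuinely needed to pass from $\cni$ to $\MAB$ and invoke \cite{RH97}; it plays no role in the present proposition. So your argument is correct but takes an unnecessary detour; the paper's route is shorter because it relies purely on the eigenvalue bounds for $\cn$ already secured in Proposition~\ref{lemma160117aT2}.
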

\begin{prop}\label{mamma}
Let 
$\MAB:=2\pi\P\bm{G}_{c_N}^{-1}\P'$, with $\P$ defined in (\ref{eq:peigenvectors}) and $\bm{G}_{c_N}=\diag({\bm g}(c_N,\omega))$ is defined as in display (\ref{eq:gt}) with $g_{c_N}(\omega)=\sum_{h=-\infty}^\infty c_{N}(h)\cos(h\omega)$. The scalar $c_N(h)=c_{N,ts}$ is  the $(t,s)-$entry of the matrix $\cn$ defined in (\ref{eq:MMM2}). Then, $\forall N$
\begin{enumerate}[label=(\roman*)]
\item $\lb \v '\cn^{-1} \epi  -\v'\MAB\epi\rb=O_p(1)$.
\item $\lb \MAB\rb_{row}<\infty$.
\end{enumerate}
\end{prop}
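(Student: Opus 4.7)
The proposition is the exact structural analogue of Proposition \ref{prop020217a}, with the factor-residual covariance $\Hn$ replaced by $\cn = N^{-1}\sum_{i=1}^{N}(\Hi+\bm{\Theta_i})$. I would therefore mimic that proof line by line, isolating the two distinct ingredients: for (ii), a smoothness argument on the spectral symbol invoking Lemma \ref{paolo}; and for (i), a Toeplitz-to-circulant approximation argument exploiting the independence of $\v$ and $\epi$.

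For Part (ii) the idea is to read off the $(t,s)$-entry of $\MAB$ as the Fourier coefficient
\[
\cb(t-s) \;=\; \frac{1}{2\pi}\int_{-\pi}^{\pi} \frac{1}{g_{c_N}(\omega)}\cos((t-s)\omega)\,d\omega
\]
and apply Lemma \ref{paolo} with $r=2$ to get $|\cb(h)| = O(h^{-2})$, which yields the summability $\sup_{N,T}\|\MAB\|_{row}<\infty$. Two uniform conditions must be verified: (a) $g_{c_N}(\omega)\ge\kappa>0$ uniformly in $\omega, N, T$; and (b) $g_{c_N}^{-1}$ has a bounded second derivative uniformly in $N,T$. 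Point (a) follows from Remark \ref{unif lower bound}, which gives $\inf_i\lambda_1(\Hi)>\kappa$, together with $\bm{\Theta_i}\ge\bm{0}$ (so the sum $\Hi+\bm{\Theta_i}$ retains a uniform lower spectral bound after averaging). Point (b) reduces to the summability $\sup_N\sum_h h^{2}|c_N(h)|<\infty$: the contribution from $N^{-1}\sum_i\Hi$ is controlled by $\sup_i\sum_{s}s^{2}|\phi_{is}|<\infty$ in Assumption \ref{ass eps}, while the contribution from $N^{-1}\sum_i\bm{\Theta_i}$ is controlled by the $\v$-cumulant summability in Assumption \ref{ass V} (the factor $(1+t_j^{2})$ there is exactly tailored for this).

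For Part (i) I would use the exact identity
\[
\cn^{-1}-\MAB \;=\; \MAB\,(\MAB^{-1}-\cn)\,\cn^{-1},
\]
so that $\v'(\cn^{-1}-\MAB)\epi = \v'\MAB(\MAB^{-1}-\cn)\cn^{-1}\epi$. Since $\MAB^{-1}$ is the symmetric circulant built from the symbol $g_{c_N}$, Poisson summation gives $(\MAB^{-1}-\cn)(t,s)=\sum_{k\neq 0}c_N(t-s+kT)$, whose Frobenius norm is controlled by the tail sums of $c_N$ and, by the same smoothness in Part (ii), is $O(1)$ uniformly in $T$ and $N$. Combining this with $\|\MAB\|_{sp}\le\|\MAB\|_{row}=O(1)$, $\|\cn^{-1}\|_{sp}=O(1)$ (both from (a)), and $\E\|\v\|^{2},\E\|\epi\|^{2}=O(T)$ from Assumptions \ref{ass eps}--\ref{ass V}, the independence in Assumption \ref{independence} yields a conditional second-moment bound of order $O(1)$, hence $\|\v'(\cn^{-1}-\MAB)\epi\|=O_p(1)$ by Markov's inequality.

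The main obstacle is the \emph{uniform-in-$N$} control of the symbol $g_{c_N}$: since $\cn$ is an average over heterogeneous units with different loadings $\b,\Gi$ and different spectral characteristics of $\epi$ and $\v$, one must carefully propagate the pointwise assumptions to uniform lower bounds and uniform second-derivative bounds on $g_{c_N}$. Once that is established, the circulant-approximation calculations run exactly in parallel to those already carried out for $\Hn$ in the proof of Proposition \ref{prop020217a}.
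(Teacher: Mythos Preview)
Your proposal is correct and follows essentially the same route as the paper: verify $\sup_i\sum_h h^2(|\xi_i(h)|+|\theta_i(h)|)<\infty$ from Assumptions~\ref{ass eps}--\ref{ass V} and the uniform boundedness of the $\mathcal{q}_{ik\ell}$, then feed this into the Toeplitz--circulant machinery and Lemma~\ref{paolo}, exactly mirroring Proposition~\ref{prop020217a}. The only cosmetic difference is that the paper packages the uniformity-in-$N$ and the approximation step through Corollary~\ref{circulantN} (the panel version of Lemmas~\ref{circulant}--\ref{020217b}), whereas you re-derive these by hand via Poisson summation and a continuous Fourier-coefficient reading of $\bar c_N(t-s)$; be aware that the entries of $\MAB$ are discrete sums over the $\omega_j$, not the integral you wrote, so your Part~(ii) still needs the extra discretization layer that Lemma~\ref{020217b}/Corollary~\ref{circulantN}(ii) supplies.
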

\begin{prop}\label{propCHLOE1N}
For any $N$ and $T_0$ increasing suitably with $T$
$$
\lim_{T\to\infty}\E\lb \sqrt{T}\bm{w}^{(\cb)}_{ij1}\rb^2=0.
$$
\end{prop}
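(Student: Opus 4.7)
The plan is to mirror the proof of Proposition \ref{propCHLOE1}, which handles the analogous truncation error for $\bm{w}^{(\xii)}_{ij1}$, and simply replace the circulant weighting matrix $\Hbar$ with $\MAB$. The only structural novelty is the extra factor of $\sqrt{T}$: while Proposition \ref{propCHLOE1} requires only $\tau_0\to\infty$, the word ``suitably'' in the present statement will have to mean $T_0/\sqrt{T}\to\infty$. First I would write
\[
\sqrt{T}\,\bm{w}^{(\cb)}_{ij1} \;=\; \sum_{u<-T_0}\bm{s}^{(\cb)}_{iju}\,\eta_{ju},\qquad \bm{s}^{(\cb)}_{iju}=\sum_{s=1}^T\bm{\ell}^{(\cb)}_{is}\,\phi_{j,s-u},\qquad \bm{\ell}^{(\cb)}_{is}=\sum_{t=1}^T\vvv_{it}\,\cb(t-s),
\]
and invoke Assumption \ref{independence} together with the i.i.d.\ structure of $\{\eta_{ju}\}$ to collapse the second moment to
\[
\E\lb\sqrt{T}\,\bm{w}^{(\cb)}_{ij1}\rb^2 \;=\; \E(\eta_{j0}^2)\,\sum_{u<-T_0}\E\lb\bm{s}^{(\cb)}_{iju}\rb^2 .
\]

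Next I would apply Cauchy--Schwarz twice, each time spending one of the two absolutely-summable budgets available. First, using $\sum_{s}|\phi_{j,s-u}|\le\sum_{h\ge 0}|\phi_{jh}|<\infty$ from Assumption \ref{ass eps},
\[
\lb\bm{s}^{(\cb)}_{iju}\rb^2 \;\le\; \ka\sum_{s=1}^T\lb\bm{\ell}^{(\cb)}_{is}\rb^2\,|\phi_{j,s-u}|.
\]
Swapping $\sum_{u<-T_0}$ with $\sum_s$ and changing variable $h=s-u$, the inner sum reduces to $\sum_{h>s+T_0}|\phi_{jh}|$, which is uniformly $O(T_0^{-2})$ in $s\ge 1$ because $\sum_h h^2|\phi_{jh}|<\infty$ (Assumption \ref{ass eps}) and a Markov-type truncation gives $\sum_{h>T_0}|\phi_{jh}|\le T_0^{-2}\sum_h h^2|\phi_{jh}|$. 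Second, applying Cauchy--Schwarz together with $\|\MAB\|_{\mathrm{row}}<\infty$ from Proposition \ref{mamma}(ii),
\[
\lb\bm{\ell}^{(\cb)}_{is}\rb^2 \;\le\; \ka\sum_{t=1}^T\lb\vvv_{it}\rb^2|\cb(t-s)|,
\]
and then summing over $s$ with the row summability of $\MAB$ once more, together with $\sup_{i,t}\E\lb\vvv_{it}\rb^2<\infty$ from Assumption \ref{ass V}, yields $\sum_{s=1}^T\E\lb\bm{\ell}^{(\cb)}_{is}\rb^2=O(T)$.

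Stitching the two estimates together gives $\E\lb\sqrt{T}\,\bm{w}^{(\cb)}_{ij1}\rb^2=O(T/T_0^2)$, which vanishes provided $T_0$ is chosen to grow strictly faster than $T^{\nicefrac{1}{2}}$; this is the ``suitable'' rate to which the statement refers. The main obstacle is not analytic but bookkeeping: I must be careful that the two Cauchy--Schwarz splits dedicate their $|\phi|$-budget and their $|\cb|$-budget to the summations they are about to dominate (rather than wasting one on the other), and that the row-summability of $\MAB$ in Proposition \ref{mamma}(ii) holds \emph{uniformly} in $N$ and $T$ so that $\ka$ can be taken independent of those indices. Once this has been checked, the argument follows Proposition \ref{propCHLOE1} line by line.
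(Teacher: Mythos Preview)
Your proposal is correct and follows essentially the same route as the paper, whose proof is simply the one-line reference ``use Corollary~\ref{circulantN} and proceed as in Proposition~\ref{propCHLOE1}'': both arguments rest on the row-summability of $\MAB$ (your invocation of Proposition~\ref{mamma}(ii) is exactly what Corollary~\ref{circulantN} delivers here) together with the tail decay of $\{\phi_{jh}\}$ from Assumption~\ref{ass eps}. Your double Cauchy--Schwarz split, which controls the tail via $\sum_{h>T_0}|\phi_{jh}|=O(T_0^{-2})$ rather than via $\sum_{h>T_0}\phi_{jh}^2$ as in the paper's Proposition~\ref{propCHLOE1}, is a slightly cleaner bookkeeping of the same idea and leads to the same ``suitable'' requirement $T_0/\sqrt{T}\to\infty$.
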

\begin{prop}\label{propCHLOE2N}
For any $N$, as $T\to\infty$
$$
\bm{W}_{ij0} ^{-{1 \over 2}} \bm{w}_{ij0}  \xrightarrow{d} \norm(\bm{0}, \I_K).
$$
\end{prop}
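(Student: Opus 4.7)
My plan follows the template already used for Proposition~\ref{propCHLOE2}, which in turn adapts the Lindeberg--Feller argument from Proposition~\ref{prop190117b}. Conditional on $\sa(\v)$, the quantity
\[
\bm{w}^{(\cb)}_{ij0}=T^{-1/2}\sum_{u=-T_0}^T\bm{s}^{(\cb)}_{iju}\,\eta_{ju}
\]
is a sum of independent, mean-zero, $\mathbb{R}^K$-valued random vectors, because the $\{\eta_{ju}\}_u$ are i.i.d.\ by Assumption~\ref{ass eps} and each $\bm{s}^{(\cb)}_{iju}$ is $\sa(\v)$-measurable. Its conditional covariance is by construction $\bm{W}^{(\cb)}_{ij0}$, so pre-multiplication by $(\bm{W}^{(\cb)}_{ij0})^{-1/2}$ yields a sum with conditional covariance $\I_K$. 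By the Cram\'er--Wold device, it suffices to show that for an arbitrary unit vector $\bm{\lambda}\in\mathbb{R}^K$,
\[
S_T:=T^{-1/2}\sum_u\zeta_{iju}\,\eta_{ju}\xrightarrow{d}\mathcal{N}(0,1),\qquad \zeta_{iju}:=\bm{\lambda}'(\bm{W}^{(\cb)}_{ij0})^{-1/2}\bm{s}^{(\cb)}_{iju}.
\]

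Since the conditional variance of $S_T$ equals $1$ by construction, the conditional Lindeberg--Feller CLT reduces to a Lyapunov condition with exponent $2+\delta\le 6$, which is available thanks to $\E|\eta_{j0}|^6<\infty$ in Assumption~\ref{ass eps}. Taking $\delta=1$, I need $T^{-3/2}\sum_u|\zeta_{iju}|^3\xrightarrow{p}0$. Bounding this by
\[
\lambda_1(\bm{W}^{(\cb)}_{ij0})^{-3/2}\cdot\bigl(\max_u\|\bm{s}^{(\cb)}_{iju}\|/\sqrt{T}\bigr)\cdot T^{-1}\sum_u\|\bm{s}^{(\cb)}_{iju}\|^2,
\]
the CLT follows from three ingredients: (a) $\lambda_1(\bm{W}^{(\cb)}_{ij0})$ stays bounded away from zero in probability, which is the content of the non-degeneracy statement on $\bm{W}^{(\cb)}_{ij}$ that mirrors Proposition~\ref{propCHLOE3}, combined with Proposition~\ref{propCHLOE1N}; (b) $T^{-1}\sum_u\|\bm{s}^{(\cb)}_{iju}\|^2=O_p(1)$; and (c) $\max_u\|\bm{s}^{(\cb)}_{iju}\|=o_p(\sqrt{T})$.

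For (b) and (c), expand $\bm{s}^{(\cb)}_{iju}=\sum_s\bm{\ell}^{(\cb)}_{is}\phi_{j,s-u}$ with $\bm{\ell}^{(\cb)}_{is}=\sum_t\vvv_{it}\cb(t-s)$. Proposition~\ref{mamma}(ii) gives $\sup_s\sum_t|\cb(t-s)|<\infty$, hence $\|\bm{\ell}^{(\cb)}_{is}\|\le(\max_t\|\vvv_{it}\|)\cdot\|\MAB\|_{row}$. Assumption~\ref{ass V}'s $12$th-moment bound combined with a union bound yields $\max_{1\le t\le T}\|\vvv_{it}\|=O_p(T^{1/12})$, and the absolute summability $\sup_j\sum_s|\phi_{js}|<\infty$ from Assumption~\ref{ass eps} then produces $\max_u\|\bm{s}^{(\cb)}_{iju}\|=O_p(T^{1/12})=o_p(\sqrt{T})$, proving~(c). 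For~(b), Cauchy--Schwarz gives $\|\bm{s}^{(\cb)}_{iju}\|^2\le\|\phi_j\|_1\sum_s\|\bm{\ell}^{(\cb)}_{is}\|^2|\phi_{j,s-u}|$; swapping the sums yields $\sum_u\|\bm{s}^{(\cb)}_{iju}\|^2\le\|\phi_j\|_1^2\sum_s\|\bm{\ell}^{(\cb)}_{is}\|^2=O_p(T)$, since $\E\|\bm{\ell}^{(\cb)}_{is}\|^2=O(1)$ uniformly in $s$ by Assumption~\ref{ass V} and Proposition~\ref{mamma}(ii). Finally, upgrading the conditional CLT to an unconditional one is immediate because the limit law $\mathcal{N}(\bm{0},\I_K)$ does not depend on $\sa(\v)$: one applies dominated convergence to the conditional characteristic function exactly as in the invocation of Bernstein's Lemma (\cite{H70}, p.~242) used elsewhere in the paper.

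The main obstacle is really item (c): a naive operator-norm bound on $\|\bm{s}^{(\cb)}_{iju}\|$ only yields $O_p(\sqrt{T})$, which is the borderline at which the Lyapunov condition just fails. The escape route is to trade the $L^\infty$ control on the convolution coefficients (Proposition~\ref{mamma}(ii)) against the strong $12$th moment on $\vvv_{it}$, turning the pointwise $\ell^1$ boundedness of the $\cb(t-s)$ into a polynomial-in-$T$ bound on the maximum, which provides just enough slack to push $\max_u\|\bm{s}^{(\cb)}_{iju}\|$ strictly below $\sqrt{T}$ in probability. Without either the $12$th-moment hypothesis or the absolute-summability statement in Proposition~\ref{mamma}, this step would break down.
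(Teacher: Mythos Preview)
Your argument is correct and arrives at the same conclusion as the paper, but the route differs in two notable respects from the proof of Proposition~\ref{propCHLOE2} to which the paper simply refers.

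First, the paper verifies the Lindeberg condition directly (in the form of Scott's conditions, via \cite{RH97}), whereas you verify a Lyapunov condition of order three. This is harmless here since $\E|\eta_{j0}|^6<\infty$, and Lyapunov implies Lindeberg.

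Second, and more substantively, the paper handles the crucial negligibility of $\max_u|d_{iju}|$ by the Robinson--Hidalgo splitting trick: choose $L$ with $\sum_{s\ge L}\phi_{js}^2<\epsilon$, so that
\[
\max_u|d_{iju}|\le \kappa\Big(\epsilon\, T^{-1}\sum_t\|\bm{\ell}^{(\cb)}_{it}\|^2\Big)^{1/2}+\kappa L\,T^{-1/2}\max_t\|\bm{\ell}^{(\cb)}_{it}\|.
\]
The first piece is made small by taking $\epsilon$ small; the second vanishes because $\max_t\|\bm{\ell}^{(\cb)}_{it}\|\le(\sum_t\|\vvv_{it}\|^4)^{1/4}\|\MAB\|_{row}=O_p(T^{1/4})$, which uses only \emph{fourth} moments of $\vvv_{it}$. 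You instead bound $\max_u\|\bm{s}^{(\cb)}_{iju}\|\le\|\phi_j\|_1\|\MAB\|_{row}\max_t\|\vvv_{it}\|$ and invoke the full twelfth-moment hypothesis of Assumption~\ref{ass V} to get $\max_t\|\vvv_{it}\|=O_p(T^{1/12})$. Both are valid under the stated assumptions; the paper's splitting argument is more economical with moments, while yours is shorter and avoids the two-scale $\epsilon$/$L$ bookkeeping. Your remark that ``without the $12$th-moment hypothesis \ldots\ this step would break down'' is thus specific to your route: the paper's argument would still go through with only fourth moments on $\vvv_{it}$.
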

\begin{prop}\label{propMarch1} For any $N$
$$
\lb \sum_{i=1}^N r^2_{ij}\bm{W}_{ij}^{(\bar{c})}-\frac{1}{T}\v'\cni\Hi\cni\v\rb=O_p\left(\frac{1}{\sqrt{T}}\right).
$$
\end{prop}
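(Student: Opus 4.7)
The strategy is to reduce the claim to a purely matrix-valued comparison between the sandwiched quantities $T^{-1}\v'\MAB\Hi\MAB\v$ and $T^{-1}\v'\cni\Hi\cni\v$, and then exploit the circulant approximation $\MAB\approx\cni$ already established in Proposition \ref{mamma}(i).

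\textbf{Reduction.} By Assumption \ref{ass eps} and Assumption \ref{independence} the processes $\{\bm{a}_j\}_{j=1}^N$ are mutually independent, so $\Hi=\sum_{j=1}^N r^2_{ij}\,\E(\bm{a}_j\bm{a}_j')$. Unpacking the definition of $\bm{w}^{(\bar c)}_{ij}$ and using $\sum_{u=-\infty}^T \phi_{j,t-u}\eta_{ju}=a_{jt}$ yields the pointwise identity $\bm{w}^{(\bar c)}_{ij}=T^{-\nicefrac{1}{2}}\v'\MAB\bm{a}_j$. Taking conditional second moments and summing produces the matrix identity
\[
\sum_{j=1}^N r^2_{ij}\,\bm{W}^{(\bar c)}_{ij} \;=\; \frac{1}{T}\v'\MAB\Hi\MAB\v,
\]
so the proposition reduces to $\lb T^{-1}\v'(\MAB\Hi\MAB-\cni\Hi\cni)\v\rb = O_p(T^{-\nicefrac{1}{2}})$.

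\textbf{Decomposition and easy factor.} Setting $\bm{\Delta}:=\MAB-\cni$ and using symmetry of $\MAB,\cni,\Hi$,
\[
\MAB\Hi\MAB - \cni\Hi\cni = \bm{\Delta}\Hi\MAB + \cni\Hi\bm{\Delta},
\]
the two sandwiched summands being transposes of each other, so it suffices to control the first. Factoring $\Hi=\Hi^{\nicefrac{1}{2}}\Hi^{\nicefrac{1}{2}}$ and applying the matrix Cauchy--Schwarz inequality in Frobenius norm gives
\[
\lb T^{-1}\v'\bm{\Delta}\Hi\MAB\v\rb \;\leq\; T^{-1}\lb\v'\bm{\Delta}\Hi^{\nicefrac{1}{2}}\rb\cdot\lb\Hi^{\nicefrac{1}{2}}\MAB\v\rb.
\]
The second factor satisfies $\lb\Hi^{\nicefrac{1}{2}}\MAB\v\rb^2 = \tr(\v'\MAB\Hi\MAB\v) = T\cdot\tr\big(\sum_j r^2_{ij}\bm{W}^{(\bar c)}_{ij}\big)$, which is $O_p(T)$ by Proposition \ref{lemma160117aT2}(iii)--(iv) combined with the bounded spectrum of $\Hi$ (Remark \ref{unif lower bound}).

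\textbf{Hard factor and main obstacle.} For the first factor, the independence decomposition of $\Hi$ yields $\E\lb\v'\bm{\Delta}\Hi^{\nicefrac{1}{2}}\rb^2 = \sum_j r^2_{ij}\,\E\lb\v'\bm{\Delta}\bm{a}_j\rb^2$. The plan is to re-run the argument underlying Proposition \ref{mamma}(i) --- which replaces $\cn$ by its circulant spectral analogue and uses Lemma \ref{paolo} to bound the Fourier-coefficient decay of the inverse spectral density --- but applied separately to each linear process $\bm{a}_j$ rather than to $\epi$ itself, yielding $\E\lb\v'\bm{\Delta}\bm{a}_j\rb^2 = O(1)$ uniformly in $j$. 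Together with $\sup_i\sum_j r^2_{ij}\leq(\sup_i\sum_j|r_{ij}|)^2<\infty$ from Assumption \ref{ass eps}, this gives $\lb\v'\bm{\Delta}\Hi^{\nicefrac{1}{2}}\rb = O_p(1)$, and combining the two factor bounds delivers $O_p(T^{-1}\cdot 1\cdot\sqrt{T}) = O_p(T^{-\nicefrac{1}{2}})$ as required. The principal technical obstacle is this uniform-in-$j$ extension: the circulant-approximation error bounds must inherit uniformity in the underlying linear-process parameters, which is supplied precisely by Assumption \ref{ass eps}'s clauses $\inf_j|\phi_j(z)|>\kappa$ for $|z|\leq 1$ and $\sup_j\sum_s s^2|\phi_{js}|<\infty$, ensuring that the reciprocal spectral densities $|\phi_j(e^{i\omega})|^{-2}$ remain uniformly smooth so that Lemma \ref{paolo}'s constants do not blow up across $j$. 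Once this uniformity is in hand, the weighted summation over $j$ closes routinely by the row-summability of $\bm{R}$.
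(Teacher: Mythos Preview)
Your reduction and two–term decomposition are correct (note a small slip: $\bm{\Delta}\Hi\MAB$ and $\cni\Hi\bm{\Delta}$ are \emph{not} transposes of one another, since $\MAB\neq\cni$; but both are handled by the same bound, so this is harmless). Where your argument diverges from the paper is in the treatment of the ``hard factor'' $\lb\v'\bm{\Delta}\Hi^{1/2}\rb$. You propose to expand $\Hi=\sum_j r_{ij}^2\,\E(\bm{a}_j\bm{a}_j')$, re-run the circulant-approximation argument separately for each $\bm{a}_j$, and then worry about uniformity in $j$ via the clauses $\inf_j|\phi_j(z)|>\kappa$ and $\sup_j\sum_s s^2|\phi_{js}|<\infty$. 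This works, but it is an unnecessary detour: since $\Hi$ has bounded spectral norm (Remark~\ref{unif lower bound}), one has immediately
\[
\lb\v'\bm{\Delta}\Hi^{1/2}\rb \;\le\; \lb\v'\bm{\Delta}\rb\cdot\lambda_T(\Hi)^{1/2},
\]
so the only real input is $\lb\v'\bm{\Delta}\rb=O_p(1)$. The paper establishes exactly this, by the same device as in Proposition~\ref{prop020217a}/\ref{mamma}: write $\v'(\MAB-\cni)=\v'\cni\P(\P'\cn\P-2\pi\bm{G}_{c_N})\P'\MAB$ and use $\tr(\P'\cn\P-2\pi\bm{G}_{c_N})^2=O(1)$ together with bounded spectral norms. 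In fact the paper is even more direct than your Cauchy--Schwarz split at $\Hi^{1/2}$: it simply applies the submultiplicative bound
\[
\lb T^{-1}\v'\bm{\Delta}\Hi\cni\v\rb \;\le\; \lb T^{-1/2}\v'\bm{\Delta}\rb\cdot\lambda_T(\Hi)\cdot\lambda_T(\cni)\cdot\lb T^{-1/2}\v\rb \;=\; O_p(T^{-1/2}),
\]
with the analogous bound for the remaining pieces. So your proof is valid, but what you flag as the ``principal technical obstacle'' (uniform-in-$j$ control of the individual $\bm{a}_j$ terms) is a red herring; the spectral bound on $\Hi$ removes the need to unpack its linear-process representation at all.
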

%
%
%
%
%
\section{The matrix $\SS$} \label{auxiliary}
%
%
 By Equation (\ref{eq:ref_def}), (\ref{eq:storti}) and (\ref{eq:charlie2})
\berr
\sh &=&\frac{1}{N}\sum_{i=1}^N\mxx\uu \uu ' \mxx\\
&=&\frac{1}{N}\sum_{i=1}^N\left(\I-\ff\Gi\xp-\vv\xp\right)\left(\ff\b+\ei \right)\left(\ff\b+\ei \right)'  
\left(\I-\ff\Gi\xp-\vv\xp\right)'\\
&=:&
\ff \han \ff' + \bm{\hat{\sc}_N},
\eerr
with 
$$
\han:=\frac{1}{N}\sum_{i=1}^N \bm{\hat{A}_i},
\qquad
\bm{\hat{\sc}_N}:=\frac{1}{N}\sum_{i=1}^N\left(\bm{\hat{\sc}_{1i}}+\bm{\hat{\sc}_{2i}}+\bm{\hat{\sc}_{3i}}
+\bm{\hat{\sc}_{4i}}
\right).
$$
To define $\bm{\hat{A}}_i$, note that
\berr
&&\left(\I-\ff\Gi\xp\right)\ff\b\b'\ff'
\left(\I-\ff\Gi\xp\right)'+\left(\ff\Gi\xp\right)\ei\ei'
\left(\ff\Gi\xp\right)'\\
&=&
\ff\left(\I-\Gi\xp\ff\right)\b\b'
\left(\I-\Gi\xp\ff\right)'\ff'+\ff\left(\Gi\xp\right)\ei\ei'
\left(\Gi\xp\right)'\ff'.
\eerr
Hence,
$
\bm{\hat{A}_i}=\left(\I-\Gi\xp\ff\right)\b\b'
\left(\I-\Gi\xp\ff\right)'+\left(\Gi\xp\right)\ei\ei'
\left(\Gi\xp\right)'.
$

Likewise, 
$
\bm{\hat{\sc}_{1i}}= \left(\I-\vv\xp\right)\ei\ei'
\left(\I-\vv\xp\right)',$ and $
\bm{\hat{\sc}_{2i}}=\left(\vv\xp\right)\ff\b\b'\ff'
\left(\vv\xp\right)'.
$
The term $\bm{\hat{\sc}_{3i}}$ is defined as
$
\bm{\hat{\sc}_{3i}}=\sum_{j=1}^{13} \left(
\bm{\hat{\sc}_{3i,j}}+\bm{\hat{\sc}'_{3i,j}}\right),
$
where
$$
\begin{array}{lcl}
\bm{\hat{\sc}_{3i,1}}= -\vv\xp\ff\b\b'\ff', & \phantom{ccccc}&
%
%
\bm{\hat{\sc}_{3i,2}}= \vv\xp\ff\b\b'\ff'
\left(\ff\Gi\xp\right)' ,\\
\bm{\hat{\sc}_{3i,3}}=\ff\b\ei', & &
\bm{\hat{\sc}_{3i,4}}=\vv\xp\ff\b\ei'\left(\vv\xp\right)' ,\\
\bm{\hat{\sc}_{3i,5}}=-\vv\xp\ff\b\ei', & &
\bm{\hat{\sc}_{3i,6}}=-\ff\b\ei'\left(\vv\xp\right)' ,\\
\bm{\hat{\sc}_{3i,7}}=-\ff\Gi\xp\ff\b\ei', &&
\bm{\hat{\sc}_{3i,8}}=\vv\xp\ff\b\ei'\left(\ff\Gi\xp\right)',\\
\bm{\hat{\sc}_{3i,9}}=\ff\Gi\xp\ff'\b\ei'\left(\vv\xp\right)', && 
\bm{\hat{\sc}_{3i,10}}=-\ff\Gi\xp\ei\ei',\\ 
\bm{\hat{\sc}_{3i,11}}=\ff\Gi\xp\ei\ei'\left(\vv\xp\right)', & &
\bm{\hat{\sc}_{3i,12}}=\ff\Gi\xp\ff\b\ei'\xp\Gi'\ff',
\\
\bm{\hat{\sc}_{3i,13}}=
-\ff\b\ei'\left(\ff\Gi\xp\right)' .&& 
\end{array}
$$
 Next, define the matrices 
\begin{equation}\label{eq:silver}
\SS:=\ff'\an\ff'+\bm{\breve{\sc}_{N}},\qquad\bm{\breve{\sc}_{N}}:=\db'\bcn\db,\qquad
\bcn:= \frac{1}{N}\sum_{i=1}^N \left(\Hi+\BT\right),\quad
\end{equation}
where  $\Hi$ and $\an$ are defined in Equations (\ref{eq:uu}) and (\ref{eq:seconmomb2}),
%
%
and
\be\label{eq:defpsi}
 \BT:=\E\left[\v\psinv \Gi\frac{\ff'\ff}{T}\b\b'\frac{\ff'\ff}{T}\Gi'\psinv\v'\left|\sa(\z)\right.\right],
\ee
with $\bm{\Psi}_i$ defined as in (\ref{eq:trenta})

Some properties of the matrices defined above are established Lemmas \ref{boundinverse} and \ref{expansion} in Appendix \ref{lemmata1}. In particular, Lemma \ref{boundinverse}  verifies  that the matrix $\SS$ satisfies the assumptions in Lemma \ref{PZ}, on which rely the proof of most of the results in Lemma \ref{cor}.
%

%
\newpage
\title{
\begin{center}
\Huge
Supplementary Material to\\  
\V{Robust  Nearly-Efficient Estimation    of Large Panels   with Factor Structures}
\end{center}
}
\maketitle

\newpage

\appendix
\normalfont
\begin{center}
\begin{Huge}
\textsc{
\textbf{Appendices}
}
\end{Huge}
\end{center}

\setcounter{section}{3}

This  Supplement is made by seven appendixes:  Appendix D contains some results of linear matrix algebra (Lemmas D.1 to D.6), Appendix E  contains some results used to construct  bounds on the inverse of various covariance matrices  (Lemmas E.1 to E.3 and Corollary E.3),  Appendix F contains the proofs to Lemmas A.1 to A.3 (stated in Appendix A of the manuscript), Appendix G 
contains the proofs to Propositions B.1 to B.17 (stated in Appendix B of the manuscript), Appendix H contains  auxiliary results for the proof of Theorem 3.2 (Lemmas H.1 to  H.19), Appendix I formalises the  asymptotic properties of the estimators for the common observed regressors' coefficient (Theorems I.1 to I.3) and, finally, Appendix J provides some technical details  for the case when the regressors and the residuals have different, yet correlated, factor structures.

\section{Some results on matrix algebra}\label{someresults}
In this section, we report for reference some auxiliary results on matrix algebra. The first two Lemmas are reported without proof. In the following, all the matrices have real entries. 
\begin{lemma}[Sherman-Morrison-Woodbury  formula]\label{smw}
 For every  matrices $\bm{A},\bm{B}$ and $\bm{C}$ of  suitable dimension:
\begin{equation}\label{eq:smw}
({\bm{BCB}' +   \bm{A}   })^{-1} = {\bm{A}^{-1} - \bm{A}^{-1} \bm{B} ( \bm{C}^{-1} +
\bm{B}'\bm{A}^{-1}\bm{B} )^{-1} \bm{B}' \bm{A}^{-1} }, 
\end{equation}
if all the involved inverses exist.
\end{lemma}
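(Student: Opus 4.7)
The plan is to prove the Sherman--Morrison--Woodbury identity by direct verification: one shows that multiplying the candidate inverse $\bm{A}^{-1} - \bm{A}^{-1}\bm{B}(\bm{C}^{-1} + \bm{B}'\bm{A}^{-1}\bm{B})^{-1}\bm{B}'\bm{A}^{-1}$ on the right by $(\bm{B}\bm{C}\bm{B}' + \bm{A})$ yields the identity matrix. Since all the required inverses exist by hypothesis, and the outer matrix $\bm{B}\bm{C}\bm{B}' + \bm{A}$ is assumed invertible, producing a two-sided inverse is equivalent to producing a one-sided one on a square matrix, so a single calculation suffices.

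First I would distribute the product into four pieces: the term $\bm{A}\bm{A}^{-1}=\bm{I}$, the term $\bm{B}\bm{C}\bm{B}'\bm{A}^{-1}$, the term $-\bm{B}(\bm{C}^{-1}+\bm{B}'\bm{A}^{-1}\bm{B})^{-1}\bm{B}'\bm{A}^{-1}$ arising from $\bm{A}$ times the second summand in the candidate inverse, and the term $-\bm{B}\bm{C}\bm{B}'\bm{A}^{-1}\bm{B}(\bm{C}^{-1}+\bm{B}'\bm{A}^{-1}\bm{B})^{-1}\bm{B}'\bm{A}^{-1}$. Setting aside the identity, I would factor $\bm{B}$ on the left and $\bm{B}'\bm{A}^{-1}$ on the right from the three surviving terms, reducing the claim to the bracketed identity $\bm{C} - \bigl(\bm{C}\bm{B}'\bm{A}^{-1}\bm{B} + \bm{I}\bigr)\bigl(\bm{C}^{-1}+\bm{B}'\bm{A}^{-1}\bm{B}\bigr)^{-1} = \bm{0}$.

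To close this I would use the rewrite $\bm{C}\bm{B}'\bm{A}^{-1}\bm{B} + \bm{I} = \bm{C}\bigl(\bm{B}'\bm{A}^{-1}\bm{B} + \bm{C}^{-1}\bigr)$, after which the inverse in the middle cancels and the bracket collapses to $\bm{C} - \bm{C} = \bm{0}$, completing the verification.

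The main obstacle here is purely bookkeeping rather than mathematical depth: everything follows from associativity, distributivity, and the factorisation trick in the previous paragraph. The only conceptual point worth flagging is that the rewrite $\bm{I} = \bm{C}\bm{C}^{-1}$ requires invertibility of $\bm{C}$, while the cancellation requires invertibility of $\bm{C}^{-1}+\bm{B}'\bm{A}^{-1}\bm{B}$; both are covered by the proviso \V{if all the involved inverses exist}, so no further assumptions are necessary.
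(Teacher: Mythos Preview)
Your verification is correct; the only slip is descriptive rather than mathematical --- you say you multiply the candidate inverse ``on the right'' by $\bm{BCB}'+\bm{A}$, but the four terms you list come from $(\bm{BCB}'+\bm{A})$ acting on the \emph{left}, which is fine since a one-sided inverse of a square matrix is two-sided. The paper itself states this lemma without proof (Appendix~D opens with ``The first two Lemmas are reported without proof''), so there is nothing to compare against; your direct-verification argument is the standard one and is complete as written.
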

\bigskip
\begin{lemma}[\cite{bern}, Fact 6.4.10]\label{fact6410}
Let $\bm{A}(n\times m)$ a matrix of rank $m$ and $\bm{B}(m\times m)$ a positive definite matrix. Then,
\begin{equation}\label{eq:fact6410}
(\bm{ABA}')^+=\bm{A}\left(\bm{A}'\bm{A}\right)^{-1}\bm{B}^{-1}\left(\bm{A}'\bm{A}\right)^{-1}\bm{A}'.
\end{equation} 
\end{lemma}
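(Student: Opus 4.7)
The statement is an identity for the Moore--Penrose pseudoinverse of a matrix of the form $\bm{X}:=\bm{A}\bm{B}\bm{A}'$ under a full--column--rank hypothesis on $\bm{A}$. The natural route is to verify directly that the proposed matrix
\[
\bm{Y}:=\bm{A}(\bm{A}'\bm{A})^{-1}\bm{B}^{-1}(\bm{A}'\bm{A})^{-1}\bm{A}'
\]
satisfies the four Moore--Penrose conditions, namely
\[
\bm{X}\bm{Y}\bm{X}=\bm{X},\qquad \bm{Y}\bm{X}\bm{Y}=\bm{Y},\qquad (\bm{X}\bm{Y})'=\bm{X}\bm{Y},\qquad (\bm{Y}\bm{X})'=\bm{Y}\bm{X}.
\]
Since $\bm{A}$ has rank $m$, the Gram matrix $\bm{A}'\bm{A}$ is $m\times m$ and positive definite, hence invertible; and since $\bm{B}>0$, $\bm{B}^{-1}$ exists. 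Thus $\bm{Y}$ is well defined.

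The key computational step, which I would carry out first, is to show that both $\bm{X}\bm{Y}$ and $\bm{Y}\bm{X}$ collapse to the orthogonal projector $\bm{P}_{\bm{A}}:=\bm{A}(\bm{A}'\bm{A})^{-1}\bm{A}'$ onto $\mathrm{col}(\bm{A})$. Indeed,
\[
\bm{X}\bm{Y}=\bm{A}\bm{B}\bm{A}'\bm{A}(\bm{A}'\bm{A})^{-1}\bm{B}^{-1}(\bm{A}'\bm{A})^{-1}\bm{A}'=\bm{A}\bm{B}\bm{B}^{-1}(\bm{A}'\bm{A})^{-1}\bm{A}'=\bm{P}_{\bm{A}},
\]
after cancelling $\bm{A}'\bm{A}$ against its inverse; an entirely symmetric calculation gives $\bm{Y}\bm{X}=\bm{P}_{\bm{A}}$. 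Because $\bm{P}_{\bm{A}}$ is symmetric, conditions (iii) and (iv) are immediate. Conditions (i) and (ii) then reduce to the observation that $\bm{P}_{\bm{A}}$ acts as the identity on any matrix whose columns lie in $\mathrm{col}(\bm{A})$: both $\bm{X}=\bm{A}(\bm{B}\bm{A}')$ and $\bm{Y}=\bm{A}\bigl((\bm{A}'\bm{A})^{-1}\bm{B}^{-1}(\bm{A}'\bm{A})^{-1}\bm{A}'\bigr)$ have their column spaces contained in $\mathrm{col}(\bm{A})$, so $\bm{P}_{\bm{A}}\bm{X}=\bm{X}$ and $\bm{P}_{\bm{A}}\bm{Y}=\bm{Y}$, and therefore $\bm{X}\bm{Y}\bm{X}=\bm{P}_{\bm{A}}\bm{X}=\bm{X}$ and $\bm{Y}\bm{X}\bm{Y}=\bm{P}_{\bm{A}}\bm{Y}=\bm{Y}$.

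There is no genuine obstacle here: the result is purely algebraic and relies only on invertibility of $\bm{A}'\bm{A}$ (secured by the rank hypothesis) and of $\bm{B}$ (secured by positive definiteness). If desired, uniqueness of the Moore--Penrose inverse then lets us conclude $\bm{X}^{+}=\bm{Y}$, which is the stated identity.
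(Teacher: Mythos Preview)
Your proof is correct: verifying the four Moore--Penrose conditions via the identity $\bm{X}\bm{Y}=\bm{Y}\bm{X}=\projp_{\bm{A}}$ is exactly the right approach, and all the cancellations go through as you claim. Note that the paper itself does not prove this lemma; it is stated without proof and attributed to Bernstein (Fact 6.4.10), so your verification supplies what the paper simply cites.
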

\begin{lemma}\label{MN}
Let $\A$ be defined as in Lemma \ref{fact6410}. Assume further that $\A'\A=\bm{I_m}$. Let $\bm{C}(n\times n)$ a positive definite matrix,  and define $\bm{E}:=\projp_{\bm{A}}\bm{C}\projp_{\bm{A}}$. Then,
$$
\bm{C^{-1}}-\bm{E^+}=\bm{Q}\left(\bm{I_n-\bm{L}}\right)\bm{Q'}\geq 0,
$$
where  $\bm{Q}(n\times n)$ satisfies $\bm{Q}'\bm{Q}=\bm{C^{-1}}$ and the diagonal matrix $\bm{L}$ is obtained setting equal to zero the last $n-m$ diagonal elements of the matrix $\bm{I}_n$.  
\end{lemma}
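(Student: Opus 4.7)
The plan is to simplify $\bm{E}^+$ using Lemma \ref{fact6410}, rewrite the difference $\bm{C}^{-1}-\bm{E}^+$ as a symmetric conjugation of the complementary orthogonal projector, and finally read off the required factorization in terms of a well-chosen orthonormal basis.

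First, since $\bm{A}'\bm{A}=\bm{I}_m$ and $\bm{A}'\bm{C}\bm{A}$ is positive definite (because $\bm{C}>\bm{0}$ and $\bm{A}$ has full column rank), Lemma \ref{fact6410} applied with $\bm{B}=\bm{A}'\bm{C}\bm{A}$ gives at once
\[
\bm{E}^+=\bm{A}(\bm{A}'\bm{C}\bm{A})^{-1}\bm{A}'.
\]
Introducing the symmetric square root $\bm{C}^{1/2}$ and the full-column-rank matrix $\bm{U}:=\bm{C}^{1/2}\bm{A}$, the orthogonal projector $\bm{P}_U=\bm{U}(\bm{U}'\bm{U})^{-1}\bm{U}'=\bm{C}^{1/2}\bm{A}(\bm{A}'\bm{C}\bm{A})^{-1}\bm{A}'\bm{C}^{1/2}$ satisfies $\bm{C}^{-1/2}\bm{P}_U\bm{C}^{-1/2}=\bm{E}^+$, so
\[
\bm{C}^{-1}-\bm{E}^+=\bm{C}^{-1/2}(\bm{I}_n-\bm{P}_U)\bm{C}^{-1/2}\geq\bm{0},
\]
which already yields the positive semi-definiteness claim.

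To exhibit the factorization in the stated form, I would pick an orthonormal basis $[\bm{V}_U,\bm{V}_{\bot}]$ of $\R^n$ with $\bm{V}_U$ ($n\times m$) spanning the column space of $\bm{U}$, so that $\bm{I}_n-\bm{P}_U=\bm{V}_{\bot}\bm{V}_{\bot}'$, and set $\bm{Q}:=\bm{C}^{-1/2}[\bm{V}_U,\bm{V}_{\bot}]$. Orthogonality of $[\bm{V}_U,\bm{V}_{\bot}]$ yields $\bm{Q}\bm{Q}'=\bm{C}^{-1}$, while $\bm{Q}(\bm{I}_n-\bm{L})\bm{Q}'$ collapses to the outer product of the last $n-m$ columns of $\bm{Q}$, namely $\bm{C}^{-1/2}\bm{V}_{\bot}\bm{V}_{\bot}'\bm{C}^{-1/2}=\bm{C}^{-1}-\bm{E}^+$, as required. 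I read the normalization $\bm{Q}'\bm{Q}=\bm{C}^{-1}$ in the statement as a transposition typo for $\bm{Q}\bm{Q}'=\bm{C}^{-1}$; a small $2\times 2$ sanity check shows the stated version is incompatible with the remaining conclusion.

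The only real obstacle is recognizing this typo and settling on the right basis; once $[\bm{V}_U,\bm{V}_{\bot}]$ is fixed, every identity is read off directly, and the inequality is immediate from $\bm{I}_n-\bm{P}_U\geq\bm{0}$.
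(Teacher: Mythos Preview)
Your argument is correct and follows essentially the same route as the paper: apply Lemma~\ref{fact6410} to get $\bm{E}^+=\bm{A}(\bm{A}'\bm{C}\bm{A})^{-1}\bm{A}'$, factor out $\bm{C}^{-1/2}$ to expose the complementary projector $\projm_{\bm{C}^{1/2}\bm{A}}$, and then diagonalize that projector via an orthonormal matrix to produce $\bm{Q}$. You are also right about the transposition typo in the statement; the paper's own construction (setting $\bm{Q}$ equal to $\bm{C}^{-1/2}$ times an orthogonal matrix) likewise yields $\bm{Q}\bm{Q}'=\bm{C}^{-1}$, not $\bm{Q}'\bm{Q}=\bm{C}^{-1}$.
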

\begin{proof}
By simple manipulation and Lemma \ref{fact6410}
\berr
\bm{C^{-1}}-\bm{E^+}&=&
\bm{C^{-\frac{1}{2}}}\left(
\bm{I_n}-\bm{C^{\frac{1}{2}}}\bm{E^+}\bm{C^{\frac{1}{2}}}
\right)\bm{C^{-\frac{1}{2}}}\\
&=&
\bm{C^{-\frac{1}{2}}}\left[
\bm{I_n}-\bm{C^{\frac{1}{2}}}\A\left(\A'\bm{C}\A\right)^{-1}\bm{C^{\frac{1}{2}}}
\right]
\bm{C^{-\frac{1}{2}}}=\bm{C^{-\frac{1}{2}}}\projm_{\mathcal{A}}\bm{C^{-\frac{1}{2}}},
\eerr
with $\bm{\mathcal{A}}=\bm{C^{\frac{1}{2}}}\A$. Rewriting $\projm_{\mathcal{A}}=\bm{PLP'}$, with $\bm{P'P}=\I_n$, the proof is completed setting $\bm{Q=C^{\frac{1}{2}}P}$.
\end{proof}
\begin{lemma}\label{emme}
\begin{enumerate}[label=(\roman*)]
\item[]
\item\label{emme1} Let $\bm{A}(m\times m)$ positive semidefinite, and  $\bm{B}(n\times m)$. Then
$$
\lambda_n(\bm{BAB'})\leq \lambda_m(\bm{A})\lambda_n(\bm{BB'}),\quad \tr(\bm{BAB'})\leq \tr(\bm{A})\lambda_n\left(\bm{B}\bm{B}'\right).
$$
\item\label{emme2}Let $\projp_{B}$ a projection matrix, with $\bm{B}(n\times m)$, $n>m$, and  $\bm{A}(m\times m)$. Then,
$$
\lambda_n\left(\projp_{B}\bm{A}\projp_{B}\right)\leq \lambda_n\left(\bm{A}\right).
$$
\item\label{emme3}For any  symmetric matrix $\bm{A}(n\times n)$ and positive semidefinite  matrix $\bm{B}(n\times n)$
$$
\lambda_1(\bm{A}+\bm{B})\geq \lambda_1(\bm{A}),\quad 
\lambda_n(\bm{A}+\bm{B})\leq \lambda_n(\bm{A})+\lambda_n(\bm{B}),
$$
and 
$$
\lambda_1\left(\bm{A}\right)\tr\left(\bm{B}\right)
\leq \tr\left(\bm{AB}\right)
\leq 
\lambda_n\left(\bm{A}\right)\tr\left(\bm{B}\right).
$$
 \item\label{emme4}Let $\bm{A}(n\times n)$ be a positive semidefinite  matrix, then
 $$
 \max_{i,j}|a_{ij}|\leq \max_ia_{ii}\leq \lambda_n(\bm{A}),\quad\textrm{for}\quad i,j=1,\dots,n.
 $$ .
\end{enumerate}
\end{lemma}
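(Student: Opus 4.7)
Each of the four parts is a standard fact in matrix algebra; I will indicate how each would be derived, emphasizing the ingredients rather than spelling out routine manipulations.

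For part \ref{emme1}, since $\bm{A}$ is positive semidefinite, write $\bm{A}=\bm{A}^{1/2}\bm{A}^{1/2}$ with $\bm{A}^{1/2}$ symmetric. Then $\bm{BAB}'=(\bm{BA}^{1/2})(\bm{BA}^{1/2})'$ has the same nonzero eigenvalues as $\bm{A}^{1/2}\bm{B}'\bm{B}\bm{A}^{1/2}$. By the submultiplicativity of the spectral norm and the fact that $\lambda_{\max}$ of a product of PSD matrices is bounded by the product of their spectral norms, $\lambda_n(\bm{BAB}')\le\lambda_m(\bm{A}^{1/2}\bm{B}'\bm{B}\bm{A}^{1/2})\le\lambda_m(\bm{A})\lambda_n(\bm{BB}')$. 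For the trace, use the cyclic property $\tr(\bm{BAB}')=\tr(\bm{A}^{1/2}\bm{B}'\bm{B}\bm{A}^{1/2})$ and apply part \ref{emme3} (third display) with the roles of $\bm{A}$ and $\bm{B}$ interchanged: $\tr(\bm{A}^{1/2}(\bm{B}'\bm{B})\bm{A}^{1/2})\le\lambda_n(\bm{B}'\bm{B})\tr(\bm{A})=\lambda_n(\bm{BB}')\tr(\bm{A})$.

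For part \ref{emme2}, the projector $\projp_{\bm{B}}$ is symmetric and idempotent with eigenvalues in $\{0,1\}$, so $\projp_{\bm{B}}\projp_{\bm{B}}=\projp_{\bm{B}}$ and $\lambda_n(\projp_{\bm{B}})=1$. Using part \ref{emme1} with $\bm{A}$ replaced by $|\bm{A}|$ (or directly by the Rayleigh characterization $\lambda_n(\projp_{\bm{B}}\bm{A}\projp_{\bm{B}})=\sup_{\|x\|=1}x'\projp_{\bm{B}}\bm{A}\projp_{\bm{B}}x=\sup_{\|x\|=1}(\projp_{\bm{B}}x)'\bm{A}(\projp_{\bm{B}}x)\le\sup_{\|y\|\le 1}y'\bm{A}y=\lambda_n(\bm{A})$), the claim follows. (Here the intended dimension of $\bm{A}$ is the ambient $n\times n$, since otherwise the expression is not conformable.)

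For part \ref{emme3}, the two eigenvalue bounds are instances of Weyl's monotonicity inequality: if $\bm{B}\ge 0$ then $\bm{A}+\bm{B}\succeq\bm{A}$, so $\lambda_k(\bm{A}+\bm{B})\ge\lambda_k(\bm{A})$ for each $k$, giving the first inequality; and the subadditivity $\lambda_n(\bm{A}+\bm{B})\le\lambda_n(\bm{A})+\lambda_n(\bm{B})$ follows from the Rayleigh characterization of $\lambda_n$ as a supremum. For the trace bounds, use $\bm{B}=\bm{B}^{1/2}\bm{B}^{1/2}$ so that $\tr(\bm{AB})=\tr(\bm{B}^{1/2}\bm{A}\bm{B}^{1/2})$, and then observe the PSD ordering $\lambda_1(\bm{A})\bm{I}\preceq\bm{A}\preceq\lambda_n(\bm{A})\bm{I}$ implies $\lambda_1(\bm{A})\bm{B}\preceq\bm{B}^{1/2}\bm{A}\bm{B}^{1/2}\preceq\lambda_n(\bm{A})\bm{B}$, whence taking traces delivers the desired chain.

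For part \ref{emme4}, positive semidefiniteness implies that every $2\times 2$ principal submatrix is PSD, so its determinant is nonnegative, giving $a_{ij}^2\le a_{ii}a_{jj}\le(\max_i a_{ii})^2$, hence $|a_{ij}|\le\max_i a_{ii}$. Finally, $a_{ii}=\bm{e}_i'\bm{A}\bm{e}_i\le\sup_{\|x\|=1}x'\bm{A}x=\lambda_n(\bm{A})$.

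The only point that requires some care is the interpretation of \ref{emme2} (where the dimensions stated in the lemma do not match and one reads $\bm{A}$ as $n\times n$), and the trace inequality in \ref{emme3}, which relies crucially on $\bm{B}$ being PSD so that its square root exists and the PSD ordering is preserved under the sandwich operation; everything else reduces to Weyl's inequality and Rayleigh quotients.
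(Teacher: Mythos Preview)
Your proposal is correct and essentially matches the paper's approach. The paper's proof largely consists of references to standard texts (Magnus--Neudecker, Abadir--Magnus, L\"utkepohl), with only part~\ref{emme2} spelled out via the Rayleigh quotient --- exactly the argument you give; your direct arguments for the other parts (square-root decompositions, Weyl's inequality, the $2\times 2$ principal-minor bound) are the standard proofs behind those citations, and you also correctly flag the dimension typo in part~\ref{emme2}.
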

\begin{proof}
The first inequality of part \ref{emme1} can be found in \cite[p. 237]{MN88}. About the second inequality,  first note that $\tr(\bm{B'}\bm{A}\bm{B})=\tr(\bm{B}\bm{B}'\bm{A})$. The proof is concluded using the quasilinear representation of the extremal eigenvalues \cite[p. 204]{MN88}. For part \ref{emme2}, by Reayleigh quotient we have
$$
\max_{\bm{x}}\frac{\bm{x}'\projp_{B}\bm{A}\projp_{B}\bm{x}}{\bm{x}'\bm{x}}\leq \max_{\bm{x}}\frac{\bm{x}'\projp_{B}\bm{A}\projp_{B}\bm{x}}{\bm{x}'\projp_{B}\bm{x}}\leq \max_{\bm{x}} \frac{\bm{x}'\bm{A}\bm{x}}{\bm{x}'\bm{x}}=\lambda_n\left(\bm{A}\right) .
$$
For part \ref{emme3} see \cite[p. 204]{MN88}. Note that for the second inequality to hold, only requires $\bm{B}$ to be symmetric. Part \ref{emme4} follows from \cite[Exercise 8.7]{AM05} and \cite[Section 9.13.4(4)]{ltk}.
\end{proof}

\begin{lemma}\label{matrix}
Let $\bm{A}(m\times m)$ a matrix satisfying $\|\bm{A}\|_{col}<\infty$ and $\|\bm{A}\|_{row}<\infty$, and $\bm{p}$ and $\bm{q}$ two conformable vectors.  Then,
$$
\sup_{\bm{p},\bm{q}}\left|\sum_{i,j=1}^m p_{i}a_{ij}q_{j}\right|\leq 
\frac{\|\bm{A}\|_{row}\|\bm{A}\|_{col}
\|\bm{p}\|\|\bm{q}\|}{\lb\bm{A}\rb_{sp}}.
$$
\end{lemma}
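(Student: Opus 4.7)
The key observation is that the claimed bound is equivalent, modulo the standard spectral-norm bound on bilinear forms, to the well-known inequality $\|\bm{A}\|_{sp}^2 \le \|\bm{A}\|_{row}\|\bm{A}\|_{col}$ (a discrete Schur test). I would therefore split the proof into two clean steps.

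First, I would invoke the definition of the spectral norm to obtain the basic bilinear bound $|\bm{p}'\bm{A}\bm{q}| \le \|\bm{A}\|_{sp}\|\bm{p}\|\|\bm{q}\|$. This is immediate from Cauchy--Schwarz applied to $\bm{p}'(\bm{A}\bm{q})$ together with $\|\bm{A}\bm{q}\| \le \|\bm{A}\|_{sp}\|\bm{q}\|$, and requires no reference to $\|\bm{A}\|_{row}$ or $\|\bm{A}\|_{col}$.

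Second, I would prove the Schur-type inequality $\|\bm{A}\|_{sp}^2 \le \|\bm{A}\|_{row}\|\bm{A}\|_{col}$. For an arbitrary $\bm{x}\in\mathbb{R}^m$, expand
\[
\|\bm{A}\bm{x}\|^2 = \sum_{i=1}^m \Big|\sum_{j=1}^m a_{ij} x_j\Big|^2 = \sum_{i=1}^m \Big|\sum_{j=1}^m |a_{ij}|^{1/2}\,|a_{ij}|^{1/2} x_j\Big|^2,
\]
and apply Cauchy--Schwarz inside the outer sum to get
\[
\|\bm{A}\bm{x}\|^2 \le \sum_{i=1}^m \Big(\sum_{j=1}^m |a_{ij}|\Big)\Big(\sum_{j=1}^m |a_{ij}|\,|x_j|^2\Big) \le \|\bm{A}\|_{row}\sum_{j=1}^m |x_j|^2 \sum_{i=1}^m |a_{ij}| \le \|\bm{A}\|_{row}\|\bm{A}\|_{col}\|\bm{x}\|^2.
\]
Taking the supremum over $\bm{x}\neq \bm{0}$ yields the claim.

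Combining the two steps, $|\bm{p}'\bm{A}\bm{q}| \le \|\bm{A}\|_{sp}\|\bm{p}\|\|\bm{q}\| = \frac{\|\bm{A}\|_{sp}^2}{\|\bm{A}\|_{sp}}\|\bm{p}\|\|\bm{q}\| \le \frac{\|\bm{A}\|_{row}\|\bm{A}\|_{col}}{\|\bm{A}\|_{sp}}\|\bm{p}\|\|\bm{q}\|$, which is the desired bound. There is no genuine obstacle here: the only subtlety is the weighted Cauchy--Schwarz trick in step two (splitting $|a_{ij}| = |a_{ij}|^{1/2}\cdot|a_{ij}|^{1/2}$), and one must tacitly assume $\|\bm{A}\|_{sp}>0$, i.e.\ $\bm{A}\neq \bm{0}$, for the ratio on the right-hand side to be defined (the case $\bm{A}=\bm{0}$ is trivial since both sides vanish).
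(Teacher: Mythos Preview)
Your proof is correct and follows essentially the same route as the paper. The paper's proof simply cites two standard inequalities from L\"utkepohl's handbook---namely the bilinear bound $|\bm{p}'\bm{A}\bm{q}|\le\|\bm{A}\|_{sp}\|\bm{p}\|\|\bm{q}\|$ and the norm inequality $\|\bm{A}\|_{sp}^2\le\|\bm{A}\|_{row}\|\bm{A}\|_{col}$---whereas you spell both out explicitly, the second via the weighted Cauchy--Schwarz (Schur test) argument; the logical structure is identical.
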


\begin{proof}
Follows from (20) and (14), on pages 110 and 112 in \cite{ltk}, respectively.
\end{proof}

\begin{lemma}\label{matrix2}
Let $\bm{A}(m\times m)$ positive semi-definite  matrix with $\lambda_m(\bm{A})<\infty$, and denote with $\bm{a_i}$ its $i$-th row. Then, $\|\bm{a_i}\|<\lambda_m(\bm{A})$. 
\end{lemma}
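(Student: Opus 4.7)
The plan is to recognize that $\|\bm{a_i}\|^2$ is simply the $i$-th diagonal entry of $\bm{A}\bm{A}'$ and then apply Lemma \ref{emme}\ref{emme4} to this positive semi-definite matrix. Since $\bm{A}$ is symmetric, writing $\bm{e_i}$ for the $i$-th standard basis vector, we have $\bm{a_i}' = \bm{A}\bm{e_i}$, so
$$
\|\bm{a_i}\|^2 = \bm{e_i}'\bm{A}'\bm{A}\bm{e_i} = \bm{e_i}'\bm{A}^2\bm{e_i} = (\bm{A}^2)_{ii}.
$$

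Next, I would note that $\bm{A}^2$ is positive semi-definite (being the square of a symmetric PSD matrix, equivalently $\bm{A}'\bm{A}$), so Lemma \ref{emme}\ref{emme4} yields
$$
(\bm{A}^2)_{ii} \le \lambda_m(\bm{A}^2).
$$
Because $\bm{A}$ is PSD with non-negative eigenvalues, the spectral mapping theorem gives $\lambda_m(\bm{A}^2) = \lambda_m(\bm{A})^2 < \infty$ by hypothesis. Combining the two bounds and taking square roots delivers $\|\bm{a_i}\| \le \lambda_m(\bm{A})$ (I read the strict inequality in the statement as a typo for the non-strict one, since the zero matrix attains equality).

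There is no real obstacle: the lemma is a one-line corollary of the identity $\|\bm{a_i}\|^2 = (\bm{A}^2)_{ii}$ and the standard fact, already invoked via Lemma \ref{emme}\ref{emme4}, that the diagonal entries of a PSD matrix are dominated by its largest eigenvalue. The only mild subtlety worth flagging is the symmetry of $\bm{A}$ (which lets us replace $\bm{A}\bm{A}'$ by $\bm{A}^2$ and thereby obtain $\lambda_m(\bm{A})^2$ on the right-hand side rather than $\|\bm{A}\|_{sp}^2$), and this is guaranteed by the PSD hypothesis.
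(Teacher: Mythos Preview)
Your proof is correct and follows essentially the same route as the paper: set $\bm{B}=\bm{A}^2$, observe $\|\bm{a_i}\|^2=b_{ii}$, and apply Lemma~\ref{emme}\ref{emme4} to get $b_{ii}\le\lambda_m(\bm{B})=\lambda_m(\bm{A})^2$. Your remark that the strict inequality should be non-strict is also well taken.
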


\begin{proof}
Let $\bm{B}=\bm{A}\bm{A}=[b_{ij}]_{i,j=1}^m$. By the second inequality in Lemma \ref{emme}\ref{emme4}, $\|\bm{a_i}\|^2=b_{ii}\leq \lambda_{m}(\bm{B})=\lambda_m^2(\bm{A})$.
\end{proof}

\section{The inverse of a covariance matrix}\label{invcov}
\setcounter{equation}{0}

Lemma \ref{circulant} reports a result well known in time series analysis. We prefer to report it anyway with proof, both for completeness and to introduce some further notation. 
Approximating the matrix $\mc$ with the circulant symmetric matrix $\mc^{(s)}$,  is convenient for deriving the properties of the inverse of $\mc$.  Diagonalizing $\mc^{(s)}$ we obtain the matrix $\Lambda$   of easy interpretation (see \ref{eq:lambda17}). Lemma \ref{020217b} highlights the importance of Lemma \ref{paolo}. The latter implies that the inverse  of the matrix $\mc^{(s)}$ has bounded row-norm. The same property is inherited by the matrix $\bm{P'G^{-1}_{\varsigma}\P}$, that is more convenient to work with. The main advantage is that the nonzero entries of the diagonal matrix $\bm{G^{-1}_{\varsigma}}$ do not depend on $T$. Corollary \ref{circulantN} adapt the result derived for time series analysis to a panel framework.
\begin{lemma}\label{circulant} 
Let $\varsigma(\cdot)$ a real function defined on the integers. Assume that $\varsigma(\cdot)$ is symmetric, non-negative definite and $\sum_{h=1}^{\infty}h^{\delta}\left|\varsigma(h)\right|<\infty$, with $\delta\geq 1$.
Let 
\be\label{eq:gto}
g_{\varsigma}(\omega):=\frac{1}{2\pi}\sum_{h=-\infty}^{\infty}\varsigma(h)\cos(h\omega),\qquad -\pi\leq \omega\leq\pi
\ee
and  ${\bm{G}}_{\varsigma}:=\diag\left({\bm{g}}(\varsigma,\omega)\right)$ be the $T\times T$ matrix, with
\be\label{eq:gt}
{\bm{g}}(\varsigma,\omega):=\left\{
\begin{array}{l}
\left(g_{\varsigma}(0),g_{\varsigma}(\omega_1),g_{\varsigma}(\omega_1),\dots,g_{\varsigma}\left(\omega_{[T/2]}\right),g_{\varsigma}\left(\omega_{[T/2]}\right)\right)' \qquad\qquad\qquad\textrm{if T is odd,}\\
\left(g_{\varsigma}(0),g_{\varsigma}(\omega_1),g_{\varsigma}(\omega_1),\dots,g_{\varsigma}\left(\omega_{(T-2)/2}\right),g_{\varsigma}\left(\omega_{(T-2)/2}\right),g_{\varsigma}\left(\omega_{T/2}\right)\right)' \;\textrm{otherwise,}
\end{array}
\right.
\ee
where $\omega_j=2\pi j/T$, $j=0,\dots,[T/2]$ and $[x]$ denotes the integer part of $x$.
Define the real orthogonal  $T \times T$ matrix   ${\bm P } $ by
\begin{equation}\label{eq:peigenvectors}
\begin{array}{lll}
{\bm P } &:= \left[ {\bm q }_0,\; {\bm q }_1,\; {\bm s }_1,\;  \cdots, \;{\bm q }_{[T/2]},\;  {\bm s }_{[T/2]} \right]' &\qquad \textrm{if $T$ is odd},\\
{\bm P } &:= \left[ {\bm q }_0,\;  {\bm q }_1,\;{\bm s }_1,\;  \cdots, \; 2^{-1/2}{\bm q }_{T/2} \right]'&\qquad \textrm{otherwise},
\end{array} 
\end{equation}
 where  
\begin{eqnarray*}
\bm{q_j}&=&\sqrt{2/T}\left[
1\quad \cos\omega_j\quad \cos2\omega_j \quad \cdots \quad \cos(T-1)\omega_j
\right]',\\
\bm{s_j}&=&\sqrt{2/T}\left[
0\quad \sin\omega_j\quad \sin 2\omega_j \quad \cdots \quad \sin(T-1)\omega_j
\right]',
\end{eqnarray*}
and the matrix $\mc=\left[\varsigma(i-j)\right]_{i,j=1}^T$. Denote by $\Delta_{pq}^{(T)} $ is the $p,q$ component of the matrix 
$
 {\bm P } \mc {\bm P }' - 2\pi\bm{G}_{\varsigma},
$
then  
\be\label{eq:300117a}
\sup_{1 \le i,j \le T } | \Delta_{ij}^{(T)}|=O\left(\frac{1}{T}\right) \qquad\textrm{for}\quad i,j=1,\dots,T.
\ee
\end{lemma}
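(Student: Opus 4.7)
The plan is to compute each entry $(\bm{P}\mc\bm{P}')_{pq}$ explicitly and compare it directly with the corresponding entry of $2\pi\bm{G}_\varsigma$. Recall that the rows of $\bm{P}$ are the cosine vectors $\bm{q}_j$ and sine vectors $\bm{s}_j$ at the Fourier frequencies $\omega_j = 2\pi j/T$. Hence, depending on the types of the $p$-th and $q$-th rows, $(\bm{P}\mc\bm{P}')_{pq}$ is one of the three trigonometric double sums $\frac{2}{T}\sum_{j,k=1}^T \varphi_p(j)\varsigma(j-k)\psi_q(k)$, with $\varphi_p, \psi_q \in \{\cos((\cdot-1)\omega_p),\sin((\cdot-1)\omega_p)\}$. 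Substituting $h=j-k$ and swapping the order of summation yields the representation $(\bm{P}\mc\bm{P}')_{pq} = \sum_{|h|\le T-1}\varsigma(h)\,K^{(T)}_{pq}(h)$, where $K^{(T)}_{pq}(h)$ is an inner kernel produced by summing a product of two trigonometric functions of $k$ over the truncated range $\max(1,1-h) \le k \le \min(T,T-h)$.

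The key step is to expand $K^{(T)}_{pq}(h)$ by a product-to-sum identity, obtaining a combination of partial sums of $\cos((k-1)(\omega_p\pm\omega_q)+h\omega_p)$ (and the analogous sine versions). A standard Dirichlet-kernel calculation on the Fourier grid shows that, whenever $\omega_p\pm\omega_q$ is nonzero modulo $2\pi$, the corresponding partial sum would be zero if taken over the full range $k=1,\dots,T$; when the frequency equals $0$, it equals $T$ times a phase factor depending on $h$. Because the $k$-sum is actually truncated (missing $|h|$ terms), the kernel equals its ``full-range'' value plus a boundary residual bounded uniformly by $C|h|/T$. Collecting contributions, one finds $K^{(T)}_{pq}(h) = \cos(h\omega_p)\,\mathbf{1}\{p=q\}+R^{(T)}_{pq}(h)$ with $|R^{(T)}_{pq}(h)| \le C|h|/T$.

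Summing against $\varsigma(h)$, the diagonal piece produces $\mathbf{1}\{p=q\}\sum_{|h|\le T-1}\varsigma(h)\cos(h\omega_p) = 2\pi g_\varsigma(\omega_p)\mathbf{1}\{p=q\} - \mathbf{1}\{p=q\}\sum_{|h|>T-1}\varsigma(h)\cos(h\omega_p)$, and the tail here is $O(T^{-\delta})=O(T^{-1})$ by the hypothesis $\sum_h |h|^\delta|\varsigma(h)|<\infty$ with $\delta\ge 1$. The residual contribution is at most $\frac{C}{T}\sum_{h}|h|\,|\varsigma(h)|=O(T^{-1})$ by the same moment condition. This gives $(\bm{P}\mc\bm{P}')_{pq} = 2\pi g_\varsigma(\omega_p)\mathbf{1}\{p=q\}+O(T^{-1})$ uniformly in $p,q$, which is the claim since the right-hand side is precisely the $(p,q)$ entry of $2\pi\bm{G}_\varsigma$. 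The main obstacle will be the bookkeeping: there are three kernel types (cc, ss, cs) to verify separately, the off-diagonal pairs $(p,q)$ with $\omega_p\pm\omega_q\equiv 0\pmod{2\pi}$ require extra care (they can only occur for sine--sine versus cosine--cosine pairs, forcing the leading term to vanish and leaving only the $O(|h|/T)$ residual), and the normalisation of the last row of $\bm{P}$ when $T$ is even must be tracked consistently with the definition of $\bm{g}(\varsigma,\omega)$.
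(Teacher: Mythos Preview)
Your approach is correct and yields the same bound, but the paper takes a shorter route by interposing the circulant symmetric matrix $\mc^{(s)}$ built from $\varsigma(0),\varsigma(1),\dots,\varsigma([T/2])$ (following Brockwell and Davis, Proposition 4.5.2). Because $\mc^{(s)}$ is circulant, $\bm{P}\mc^{(s)}\bm{P}'=\bm{\Lambda}$ exactly, with $\bm{\Lambda}$ diagonal and entries $\lambda_j=\sum_{|h|\le[T/2]}\varsigma(h)e^{ih\omega_j}$. The triangle inequality then reduces the problem to two elementary entrywise bounds: $|\lambda_j-2\pi g_\varsigma(\omega_j)|\le\sum_{|h|>[T/2]}|\varsigma(h)|=O(T^{-\delta})$, and $|\bm{p}_i(\mc^{(s)}-\mc)\bm{p}_j'|\le(4/T)\sum_{h\ge1}h|\varsigma(h)|=O(T^{-1})$, the latter using only $|p_{ik}|\le\sqrt{2/T}$ and the explicit Toeplitz-versus-circulant difference. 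No kernel computation or case analysis (cc/ss/cs) is needed.

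Your direct kernel route is more self-contained and makes the mechanism (Dirichlet-type cancellation plus $O(|h|/T)$ boundary leakage) fully explicit, which is a virtue if one does not want to cite the circulant diagonalisation. One point to tighten: your formula $K^{(T)}_{pq}(h)=\cos(h\omega_p)\mathbf{1}\{p=q\}+R^{(T)}_{pq}(h)$ is not literally correct for the off-diagonal cosine--sine pair sharing a frequency $\omega_j$; there the full-range kernel is $\pm\sin(h\omega_j)$, not zero. This does no harm because $\varsigma$ is even and $\sum_h\varsigma(h)\sin(h\omega_j)=0$, but you should say so rather than fold it into $R$. The paper's circulant trick sidesteps this entirely.
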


\begin{proof} The proof follows closely \cite[Proposition 4.5.2]{BD91}. Define the $T\times T$ circulant symmetric matrix 
\begin{equation}\label{eq:CS}
\mc^{(s)}=\left[
\begin{array}{cccccc}
\varsigma(0) & \varsigma(1) & \varsigma(2) & \cdots & \varsigma(2) & \varsigma(1)\\
\varsigma(1) & \varsigma(0) & \varsigma(1) & \cdots & \varsigma(3) & \varsigma(2)\\
\varsigma(2) & c(1) & \varsigma(0) & \cdots & \varsigma(4) & \varsigma(3)\\
\vdots & \vdots & \vdots & & \vdots & \vdots\\
\varsigma(1) & \varsigma(2) & \varsigma(3) & \cdots & \varsigma(1) & \varsigma(0)
\end{array}
\right].
\end{equation}
For brevity's sake we only consider the case of $T$ odd\footnote{For the case when $T$ is even the reader is referred to \cite{BD91}, p. 135}. Following \cite[Section 4.5]{BD91}, the above matrix can be diagonalized as $\P\mc^{(s)}\P'=\bm{\Lambda}$,  
\be\label{eq:lambda17}
\bm{\Lambda}=\textrm{diag}\left(\lambda_0,\;\lambda_1,\;\lambda_1,\;\dots,\;\lambda_{[T/2]},\;\lambda_{[T/2]}\right), 
\ee
with
\begin{equation}\label{eq:settembre}
\lambda_0=\sum_{|h|\leq [T/2]}\varsigma(h),\qquad
\lambda_j=
\sum_{|h|\leq [T/2]}\varsigma(h)\exp(ih\omega_j),\quad j=1,2,\dots [T/2].
\end{equation}
Let 
$
\bm{p_i}=\left[p_{i1},p_{i2},\dots,p_{iT}\right]
$
and $\bm{e_i}$
denote the $i^{th}$ row of the matrices $\P$, $\bm{I_T}$, respectively. Hence we have to show that
\be\label{eq:310117a}
\left|\bm{p_i} \mc\bm{p_j}'-2\pi\bm{e_i}\bm{G}_{\varsigma}\bm{e_j}'\right|\leq 
\left|\bm{p_i} \mc^{(s)}\bm{p_j}'-2\pi\bm{e_i}\bm{G}_{\varsigma}\bm{e_j}'\right|+
\left|\bm{p_i} \mc^{(s)}\bm{p_j}'-\bm{p_i} \mc\bm{p_j}'\right|
\ee
The first term on the right hand side of the above inequality
is bounded in absolute value by 
\be\label{eq:310117b}
\sum_{|h|>[T/2]}|\varsigma(h)|\leq \frac{2}{T^{\delta}}\sum_{|h|>[T/2]}h^{\delta}|\varsigma(h)|=O\left(\frac{1}{T^{\delta}}\right).
\ee
For the second term we have
\be\label{eq:310117c}
\left|
\bm{p_i\left(\mc^{(s)}-\mc\right)p_j'}\right|\leq \frac{4}{T}\left(2\sum_{h=1}^{[(T-1)/2]}h|\varsigma(h)|
+2\sum_{h=1}^{[(T-1)/2]}h|\varsigma(T-h)|\right)=O\left(\frac{1}{T}\right).
\ee
Since both (\ref{eq:310117b}),(\ref{eq:310117c}) terms are independent of $i$ and $j$, the proof is completed.
 \end{proof}

\begin{lemma}\label{020217b}
Let $\P$, $\bm{G_{\varsigma}}$ and $\bm{\Lambda}$ be as  in (\ref{eq:gt}), (\ref{eq:peigenvectors}) and (\ref{eq:lambda17}), respectively. 
Let $\lambda(\omega)=\sum_{|h|\leq [T/2]}\varsigma(h)\exp(ih\omega)$, with $\lambda(\omega_j)=\lambda_j$. Suppose that $\inf_\omega \lambda(\omega)>0$,  $\inf_\omega g_{\varsigma}(\omega)>0$, and the conditions
assumptions of Lemma \ref{circulant} are satisfied with $\delta\geq2$. 
Then, $\lb\P'\bm{G_{\varsigma}}^{-1}\P\rb_{row}<\infty$,  
\end{lemma}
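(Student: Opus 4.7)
The plan is to identify $\bm{K} := \bm{P}'\bm{G}_\varsigma^{-1}\bm{P}$ as a real symmetric circulant matrix, express the entries of its first row as an inverse discrete Fourier transform of the sampled values $\{1/g_\varsigma(\omega_j)\}_{j=0}^{T-1}$, and then bound the resulting row-sum by the $\ell^1$-norm of the true Fourier coefficients of $\tilde g := 1/g_\varsigma$ via a Poisson-summation/aliasing identity. The final bound will be $T$-independent once Lemma~\ref{paolo} is invoked to conclude that these Fourier coefficients are absolutely summable.

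First I would verify the circulant-symmetric structure. The crucial observation is that the diagonal of $\bm{G}_\varsigma$ in (\ref{eq:gt}) pairs up the entry $g_\varsigma(\omega_j)$ at the positions of $\bm{q}_j$ and $\bm{s}_j$, and this pairing is exactly what $\bm{P}'(\cdot)\bm{P}$ preserves as a real symmetric circulant matrix. The cosine-addition identity $P_{2j,t}P_{2j,l}+P_{2j+1,t}P_{2j+1,l} = (2/T)\cos((t-l)\omega_j)$ makes $K_{t,l}$ depend only on $t-l$ modulo $T$, so $\|\bm{K}\|_{row}$ reduces to $\sum_{l=0}^{T-1}|K(l)|$ with $K(l):=K_{1,l+1}$. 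Combining $\bm{P}\bm{K}\bm{P}' = \bm{G}_\varsigma^{-1}$ with the diagonalisation formula (\ref{eq:settembre}) applied in reverse gives the compact expression
\[
K(l) \;=\; \frac{1}{T}\sum_{j=0}^{T-1}\frac{1}{g_\varsigma(\omega_j)}\,e^{-il\omega_j}, \qquad l=0,1,\dots,T-1.
\]

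Next, denoting the continuous Fourier coefficients of $\tilde g$ by $\tilde\varsigma(h) := \int_{-\pi}^{\pi}\tilde g(\omega)e^{-ih\omega}\,d\omega$, the orthogonality relation $\sum_{j=0}^{T-1}e^{i(h-l)\omega_j} = T\,\mathbf{1}\{h \equiv l \pmod{T}\}$ immediately yields the aliasing identity $K(l) = (2\pi)^{-1}\sum_{k\in\mathbb{Z}}\tilde\varsigma(l+kT)$. Summing absolute values telescopes the double sum onto the full integers:
\[
\|\bm{K}\|_{row} \;\leq\; \frac{1}{2\pi}\sum_{l=0}^{T-1}\sum_{k\in\mathbb{Z}}|\tilde\varsigma(l+kT)| \;=\; \frac{1}{2\pi}\sum_{m\in\mathbb{Z}}|\tilde\varsigma(m)|.
\]
The hypothesis $\sum_h h^\delta|\varsigma(h)|<\infty$ with $\delta\geq 2$ gives (by term-by-term differentiation) a bounded second derivative for $g_\varsigma$, and $\inf_\omega g_\varsigma(\omega)>0$ transfers this regularity to $\tilde g$; Lemma~\ref{paolo} then delivers $|\tilde\varsigma(m)| = O(m^{-2})$, making the right-hand side finite and independent of $T$.

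I expect the main obstacle to be the careful bookkeeping around the unpaired boundary modes at $j=0$ and, when $T$ is even, at $j=T/2$: these contribute with $1/\sqrt{T}$ normalisation rather than the $\sqrt{2/T}$ used for paired cosine/sine modes, and they must be handled so that the circulant structure of $\bm{K}$ holds \emph{exactly}, not merely up to $O(1/T)$ error in the spirit of Lemma~\ref{circulant}. Once this structural point is secured, the Fubini exchange in the Poisson-aliasing sum, the smoothness transfer from $g_\varsigma$ to $1/g_\varsigma$, and the invocation of Lemma~\ref{paolo} are all essentially routine.
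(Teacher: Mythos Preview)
Your argument is correct, but it follows a genuinely different route from the paper's.

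The paper does not work with $\bm{K}=\bm{P}'\bm{G}_\varsigma^{-1}\bm{P}$ directly. Instead it inserts the auxiliary circulant $\mc^{(s)}$ from (\ref{eq:CS}), whose eigenvalues are the \emph{truncated} sums $\lambda_j=\sum_{|h|\le[T/2]}\varsigma(h)e^{ih\omega_j}$, and splits
\[
\lb\bm{P}'\bm{G}_\varsigma^{-1}\bm{P}\rb_{row}\le\lb(\mc^{(s)})^{-1}\rb_{row}+\lb(\mc^{(s)})^{-1}-\bm{P}'\bm{G}_\varsigma^{-1}\bm{P}\rb_{row}.
\]
For the first piece it applies Lemma~\ref{paolo} to $1/\lambda(\omega)$ (not to $1/g_\varsigma$). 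For the second piece it passes to the spectral/Frobenius norm and uses the eigenvalue gap $\lb\bm{\Lambda}^{-1}-\bm{G}_\varsigma^{-1}\rb=O(T^{-\delta})$ from (\ref{eq:310117b}), absorbing a crude $\sqrt{T}$ factor from the norm conversion; this is where $\delta\ge2$ enters for them.

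Your approach is cleaner on two counts. First, by recognising that $\bm{K}$ is itself exactly circulant (because $\bm{G}_\varsigma^{-1}$ inherits the paired cosine/sine structure), you avoid the triangle-inequality split and the somewhat wasteful $\sqrt{T}\,\|\cdot\|$ bound entirely. Second, your Poisson/aliasing identity gives a uniform-in-$T$ bound by the $\ell^1$-norm of the Fourier coefficients of $1/g_\varsigma$ in one stroke; Lemma~\ref{paolo} is applied once, to the full spectral density rather than its truncation. The paper's route has the minor advantage that it recycles the machinery of Lemma~\ref{circulant} already in place, but your direct argument is both shorter and conceptually more transparent. Your flagged ``obstacle'' about the boundary modes $j=0$ and $j=T/2$ is real bookkeeping but not a difficulty: the paired structure of $\bm{G}_\varsigma$ guarantees the circulant form holds exactly, as you anticipate.
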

\begin{proof}
For sake of brevity, we consider only the case of $T$ odd, as in the proof of Lemma \ref{circulant}. Write
$$
\lb\P'\bm{G_{\varsigma}}^{-1}\P\rb_{row}\leq \lb\left(\mc^{(s)}\right)^{-1}\rb_{row}+\lb\left(\mc^{(s)}\right)^{-1}-\bm{P}'\bm{G^{-1}_{\varsigma}}\bm{P}\rb_{row},
$$
where $\mc^{(s)}$ has been  defined in (\ref{eq:CS}).

The inverse of $\mc^{(s)}$ is circulant and symmetric, with eigenvalues $1/\lambda_j$, $j=0, \dots, \left[T/2\right]$, where $\lambda_j$ is defined as (\ref{eq:settembre}). 
Because
\be\label{eq:derivative}
\frac{d^2}{d\omega^2}\frac{1}{\lambda(\omega)}=-\frac{1}{\pi \lambda(\omega)^3}\sum_{|h|\leq [T/2]}h\varsigma(h)\sin(h\omega)-
\frac{1}{2\pi \lambda(\omega)^2}\sum_{|h|\leq [T/2]}h^2\varsigma(h)\cos(h\omega)
\ee
is continuous in $\omega$, Lemma \ref{paolo} implies that $\varsigma(h)$ is absolutely summable, and $\lb\left(\mc^{(s)}\right)^{-1}\rb_{row}<\infty$.

By the norm inequalities (6) and (12) in \cite[8.5.12, (12)]{ltk}
$$
\lb\left(\mc^{(s)}\right)^{-1}-\bm{P}'\bm{G^{-1}_{\varsigma}}\bm{P}\rb_{row}
=
\lb\bm{P}\left(\bm{\Lambda}^{-1}-\bm{G^{-1}_{\varsigma}}\right)\bm{P}'\rb_{row}
\leq \sqrt{T}\lb\bm{\Lambda}^{-1}-\bm{G^{-1}_{\varsigma}}\rb =\left(\frac{1}{T^{\delta-\nicefrac{1}{2}}}\right),
$$
where the latter equality follows from(\ref{eq:310117b}). 
\end{proof}

\begin{corollary}\label{circulantN} 
Let $\varsigma_i(\cdot)$, $i=1,\dots,N$ a set of  real functions defined on the integers. Assume that, 
\begin{enumerate}[label=(\alph*)]
\item $\varsigma_i(\cdot)$ is symmetric, non-negative definite and $\sup_i\sum_{h=1}^{\infty}h^{\delta}\left|\varsigma_i(h)\right|<\infty$, with $\delta\geq 2$ , $\forall \;i$;
\item $m_g:=\inf_i\inf_{\omega}g_i(\omega)>0$, and $M_g:=\sup_i\sup_{\omega}g_i(\omega)<\infty$;
\end{enumerate} 
where $g_{\varsigma_i}(\omega)=(2\pi)^{-1}\sum_{h=-\infty}^{\infty}\varsigma_i(h)\cos(h\omega)$, $-\pi\leq \omega\leq\pi$ and $m_g,M_g$ are positive constants. 
Let $\bm{\mathcal{C}_{N}}=\left[\varsigma_N(i-j)\right]_{\ell,j=1}^T$, with $\varsigma_N(h)=N^{-1}\sum_{i=1}^N\varsigma_i(h)$. Define the function
\be\label{eq:gtoN}
g_{\varsigma_N}(\omega)=\frac{1}{2\pi}\sum_{h=-\infty}^{\infty}\varsigma_N(h)\cos(h\omega),\qquad -\pi\leq \omega\leq\pi,
\ee
and  $T\times T$ diagonal matrix $G_{\varsigma_N}$ as in display(\ref{eq:gt}).
 Denote by $\Delta_{pq}^{(NT)} $ is the $p,q$ component of the matrix 
$
 {\bm P } \bm{\mathcal{C}_{N} } {\bm P }' - 2\pi\bm{G_{\varsigma_N}},
$
with $\bm{P}$ defined in (\ref{eq:peigenvectors}).  Then,
\begin{enumerate}[label=(\roman)]
\item[(i)]
$
\sup_{1 \le \ell,j \le T } | \Delta_{\ell,j}^{(NT)}|=O(1/T), \quad\textrm{as}\quad N,T\to\infty , \qquad \ell,j=1,\dots,T.
$
\item[(ii)] $\lb\bm{P'}\bm{G}^{-1}_{\varsigma_N}\bm{P}\rb_{row}<\infty$.
\end{enumerate}

\end{corollary}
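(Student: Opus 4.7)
The plan is to apply Lemma \ref{circulant} and Lemma \ref{020217b} with $\varsigma = \varsigma_N$, the only delicate point being that all the constants produced by those lemmas must be tracked to be uniform in $N$. Since assumptions (a) and (b) impose bounds that are already uniform across $i$ (they are stated with $\sup_i$ and $\inf_i$), the averaging
\[
\varsigma_N(h) = \frac{1}{N}\sum_{i=1}^N \varsigma_i(h), \qquad g_{\varsigma_N}(\omega)=\frac{1}{N}\sum_{i=1}^N g_{\varsigma_i}(\omega),
\]
will preserve all the properties required by the two lemmas with constants independent of $N$.

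For part (i), I would first check the hypotheses of Lemma \ref{circulant} for $\varsigma_N$. Symmetry and non-negative definiteness of $\varsigma_N(\cdot)$ follow trivially from those of each $\varsigma_i$, and the triangle inequality yields
\[
\sum_{h=1}^\infty h^\delta |\varsigma_N(h)| \le \sup_i \sum_{h=1}^\infty h^\delta |\varsigma_i(h)| < \infty
\]
by (a), with a bound independent of $N$. Direct application of Lemma \ref{circulant} with $\varsigma = \varsigma_N$ produces the $O(1/T)$ conclusion; inspecting the proof shows the two remainder terms, namely the tail $\sum_{|h|>[T/2]}|\varsigma_N(h)|$ and the circulantization error, depend only on the uniform constant above, so the bound holds jointly as $N,T\to\infty$.

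For part (ii), I would verify the extra hypotheses of Lemma \ref{020217b} for $\varsigma_N$: that $\inf_\omega g_{\varsigma_N}(\omega)>0$ and $\inf_\omega \lambda_N(\omega)>0$ uniformly. The first is immediate since $g_{\varsigma_N}(\omega)\ge m_g>0$ by (b). For the second, write $\lambda_N(\omega) = 2\pi g_{\varsigma_N}(\omega) - \sum_{|h|>[T/2]}\varsigma_N(h)\cos(h\omega)$ and bound the tail by $O(T^{-\delta})$ uniformly in $N$ by the same argument as in part (i); hence $\lambda_N(\omega)\ge \pi m_g > 0$ once $T$ is large enough, a threshold that does not depend on $N$. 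Lemma \ref{020217b} then delivers $\|\bm{P}'\bm{G}_{\varsigma_N}^{-1}\bm{P}\|_{row}<\infty$. The main obstacle to watch for is that the row-norm bound on $(\bm{\mathcal{C}}_N^{(s)})^{-1}$ in the proof of Lemma \ref{020217b} invokes Lemma \ref{paolo} applied to $1/\lambda_N(\omega)$, which requires the second derivative of $1/\lambda_N$ to be bounded uniformly in $N$; from the explicit formula in (\ref{eq:derivative}) this bound is controlled by $\sup_\omega |\lambda_N(\omega)|^{-1}$ and $\sum_h h^2 |\varsigma_N(h)|$, both of which we have already bounded uniformly in $N$ using (b) and (a) with $\delta\ge 2$. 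With these uniformities in hand the proof of Lemma \ref{020217b} transfers verbatim and neither conclusion deteriorates as $N$ grows.
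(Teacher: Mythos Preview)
Your proposal is correct and follows essentially the same route as the paper: apply Lemma~\ref{circulant} for part~(i) and Lemma~\ref{020217b} for part~(ii) with $\varsigma=\varsigma_N$, checking that the relevant constants are uniform in $N$. The paper's own proof is extremely terse---it cites Tonelli to swap $\sum_h$ and $N^{-1}\sum_i$ (trivial here since the $i$-sum is finite), then refers back to displays (\ref{eq:310117a})--(\ref{eq:310117c}) for part~(i) and to the derivative computation (\ref{eq:derivative}) for part~(ii)---whereas you spell out the uniformity explicitly, which is the actual content of the corollary beyond the two lemmas it rests on.
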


\begin{proof}
Tonelli's lemma imply that, under Assumption $(a)$, $\sum_h\varsigma_N(h)=N^{-1}\sum_i\sum_h\varsigma_i(h)$. The proof of part $(i)$ then follows very closely equations (\ref{eq:310117a})-(\ref{eq:310117c}). 

Part $(ii)$ of the lemma is proved following the proof of Lemma \ref{020217b}.
A slight modification of the calculations in (\ref{eq:derivative}) show that  $d^2\lambda^{-1}_{N}(\omega)/d\omega^2$ is continuous in $\omega$, with 
$\lambda_N(\omega)=\sum_{|h|\leq [T/2]}\varsigma_N(h)\exp(ih\omega)$ .
\end{proof}

%
%
%
%
%
%
%
%
%
%
%
\section{Proof of the lemmas in Appendix  \ref{centralemmas}}\label{proofAppA}
\begin{proof}{\textbf{Lemma \ref{PZ}}}
\emph{Part (i)}
By the identity in equation (\ref{eq:inv1}) we have
$$
\left(
\bm{C^{-1}+B'A^{-1}B}\right)^{-1}
=\left(\bm {B'A^{-1}B}\right)^{-1}-\left(\bm {B'A^{-1}B}\right)^{-1}\bm{C^{-1}}\left(\bm {C^{-1}+B'A^{-1}B}\right)^{-1}.
$$
Combining the above equality with Lemma \ref{smw},  we get
\berr
\bm{E^{-1}}&=&
\bm{A}^{-1}-
\bm{A}^{-1}\bm{B}
\left(
\bm {C^{-1}+B'A^{-1}B
}
\right)^{-1}\bm{B}\bm{A}^{-1}\nonumber\\
&=&
\bm{A}^{-1}-
\bm{A}^{-1}\bm{B}
\left(
\bm {B'A^{-1}B
}
\right)^{-1}\bm{B}'\bm{A}^{-1}\label{eq:veccev}\\
&&+
\bm{A^{-1}B}\left[
\left(
\bm{C^{-1}+B'A^{-1}B}
\right)^{-1}
\bm{C^{-1}}
\left(
\bm{B'A^{-1}B}
\right)^{-1}
\right]\bm{B'A^{-1}}.\nonumber
\eerr
 Hence, by Assumptions $(a)-(d)$ 
\berr
\lb\bm{E}^{-1}\bm{B}\rb &\leq&
\lb\frac{\bm{A^{-1}B}}{m_1^{1/2}}\rb\lb
\left(\frac{
\bm{C^{-1}+B'A^{-1}B}}{m_1}
\right)^{-1}\rb
\lb\frac{\bm{C^{-1}}}{m_1^{1/2}}\rb=O_p\left(\frac{1}{m^{\nicefrac{1}{2}}_1}\right).\\
\eerr
\\
\emph{Parts (ii) and (iii)}: The proof follows straightforwardly from the inequality

 \[
\lb\bm{ D }'   \bm{ E }^{-1} \bm{ B }\rb  \leq \lb \frac{\bm{ D }'
 \bm{ A }^{-1} \bm{  B }}{m_1}\rb  \lb\left(  \frac{\bm{ C }^{-1}  +  \bm{ B } ' \bm{ A
}^{-1} \bm{ B } }{m_1}    \right)^{-1}\rb   \lb\bm{ C }^{-1}\rb .
 \]
\end{proof}

\begin{proof}\textbf{Lemma \ref{approx inv}}
By 3.5.2(5) in \cite{ltk}
\ber
\bm{A}^{-1}&=& \bm{B}^{-1}-\bm{B}^{-1}(\bm{A}-\bm{B})\bm{A}^{-1},\label{eq:inv1}\\
\bm{A}^{-1}&=& \bm{B}^{-1}-\bm{A}^{-1}(\bm{A}-\bm{B})\bm{B}^{-1}.\label{eq:inv2}
\eer
Plugging-in the RHS of (\ref{eq:inv2}) in $\bm{A}^{-1}$ in RHS of (\ref{eq:inv1}) proves  Part $(ii)$ of the Lemma. Similarly, Part $(ii)$ of the Lemma is proved replacing $r$ times the RHS of (\ref{eq:inv2}) for $\bm{A}^{-1}$ in the RHS of (\ref{eq:inv1}).
\end{proof}

%
%
%
%
%
%
\begin{proof}\textbf{Lemma \ref{paolo}} For a proof of case  $r=1$ see \cite[ Chapter II, Theorem 4.7, p.48]{Z93}.
 Consider case $r=2 $ and, without loss of generality, set $h> 0$.  Replace $\omega$ with $\omega-\pi/h$;  using integration by substitution, the Fourier coefficient $\varsigma_h $ can  be re-written as
\ber
\varsigma_h  & =&        \int_{- \pi + \pi/h  }^{\pi + \pi /h }    g\left(\omega - {\pi \over h } \right) \cos \left(  h \left( \omega -  \frac{\pi}{h} \right) \right) d \omega \nonumber\\
& = & \int_{- \pi + \pi/h  }^{\pi + \pi /h}    g\left(\omega - {\pi \over h } \right) \cos \left(  h  \omega  \right)  \cos \left(   -  \pi \right)    d \omega  \nonumber\\
    & = & 
   - \int_{- \pi   }^{\pi }    g\left(\omega - {\pi \over h } \right) \cos \left(  h \omega  \right)     d \omega . \label{eq:310117f}
\eer 
Similarly, substituting $\omega$ with $\omega+\pi/h$, we get
\be\label{eq:310117g}
\varsigma_h=
- \int_{- \pi   }^{\pi }    g\left(\omega + {\pi \over h } \right) \cos \left(  h \omega  \right)     d \omega .
\ee
 Hence, for some  for some $ 0 \le s_1=s_1(\omega) , s_2=s_2(\omega) , s_3=s_3(\omega) \le 1  $
\berr
\varsigma_h & =& -{1 \over 4 } \int_{- \pi }^\pi  \Big[    g\left(\omega + {\pi \over h } \right) + g\left(\omega - {\pi \over h } \right)  -2 g\left( \omega \right)   \Big] \cos (  h \omega ) d \omega \\
&  = & -{\pi \over 4  h } \int_{- \pi }^\pi   \Big[    g^{(1)}\left(\omega + s_1 {\pi \over h } \right) - g^{(1)}\left(\omega - s_2 {\pi \over h} \right)  
  \Big]  \cos (  h \omega ) d \omega \\
&  = & -{\pi^2  \over 4  h^2 } \int_{- \pi }^\pi ( s_1 + s_2  )   g^{(2)}\left(\omega + \left(s_1s_3 - (1-s_3)s_2     \right)\frac{\pi}{h} \right) \cos (  h \omega ) d \omega .
\eerr
The first equality follows from expressions (\ref{eq:310117f}) and(\ref{eq:310117g}). The 
 second and third equalities follows by applying twice the mean-value theorem.  

Since $| s_1 + s_2 | \le 2 $ and, by assumption  $| g^{(2)}( \omega ) | \le \ka $, we have  
 \[
 | \varsigma_h | \le    {\pi^2  \ka \over 4  \nu^2 } \int_{- \pi }^\pi  \left|( s_1 + (1-s_2)  )\right|  d\omega \le   {\pi^3  \ka \over   h^2 } = O \left(  {1 \over  h^2 } \right),
  \]
proving the theorem for $r=2$.  The general case follows along the same lines. 
\end{proof}

%
%
%

\section{Proof of the propositions in Appendix \ref{proofmain}}\label{propth1}
\setcounter{equation}{0}
%
%
%
%

%
%
\subsection{Proof of the propositions in Appendix \ref{thuno}}
\begin{proof}\textbf{of Proposition \ref{propo1}}
Part $(i)$ follows from Remark \ref{fpd}. Combining Remark \ref{inverseV}, Assumptions \ref{ass V} and   \ref{independence} and Lemma \ref{emme}\ref{emme1}
\berr
\E\lb\frac{\v'\pd}{T}\rb^2&=& \E\tr\left(\frac{\v'\pd\v}{T^2}\right)\leq
 \frac{\E\tr(\pd\E\v\v')}{T}\\
 &\leq & \frac{\tr(\pd)\lambda_K\left(\E\v\v'\right)}{T}=\frac{S}{T}\lambda_K\left(\E\v\v'\right)=O\left(\frac{1}{T}\right).
\eerr
The proof of Part $(ii)$ is completed noting that 
$$
\lb\frac{\v'\v}{T}-\frac{\v '\md\v}{T}\rb=\lb\frac{\v'\pd\v}{T}\rb\leq \lb \frac{\v'\pd}{T}\rb^2.
$$
Using similar arguments, for Part $(iii)$ we have
\berr
\E\lb\frac{\v'\md\f}{\sqrt{T}}\rb^2&=& \E\tr\left(\frac{\f'\md\f\v\v'}{T}\right)\leq
 \frac{\E\tr(\f'\md\f\E\v\v')}{T^2}\\
 &\leq & \frac{1}{T}\frac{\E\tr(\f'\md\f)}{T}\lambda_K\left(\E\v\v'\right)=O\left(\frac{1}{T}\right).
\eerr
noting that $\E\tr(\f'\md\f/T)\leq \E(\f'\f/T)<\infty$ by Assumption \ref{ass factors}. The result in Part $(iv)$ follows along the same lines, showing 
$$
\lb{{\v ' \epi} \over \sqrt{T}}-{{\v ' \md \epi} \over \sqrt{T} }\rb^2=
\lb{{\v ' \pd \epi} \over \sqrt{T} }\rb^2\leq \frac{\tr(\pd)}{T}\lambda_K(\E{\v\v'})\lambda_T(\E{\epi\epi'})=O\left(\frac{1}{T}\right).
$$
Finally, Part $(v)$ follows from Lemma \ref{emme}\ref{emme3} and Remark \ref{inverseV}.
\end{proof}

\begin{proof}\textbf{Proposition \ref{prop190117a}}
\ber
\E\lb \bm{w}_{j1}\rb^2&=&\frac{\E\left(\eta_{j0}^2\right)}{T}\sum_{u=-\infty}^{-\tau_0-1}
\sum_{t,s=1}^T\E(\zz_t'\zz_s)\phi_{jt-u}\phi_{js-u}\\
&\leq& \frac{\ka}{T}\sum_{t=1}^{\infty}|\phi_{jt}|
\sum_{u=-\infty}^{-\tau_0-1}\sum_{s=1}^T|\phi_{js-u}|\leq \ka\sum_{u=\tau_0}^{\infty}|\phi_{ju}|,
\eer
which tends to zero by Assumption \ref{ass eps} as $T\to\infty$, for a suitable choice $\tau_0$ increasing with $T$.
\end{proof}

\begin{proof}\textbf{Proposition \ref{prop190117b}}
Convergence in distribution in (\ref{eq:scott0}) is implied if the convergence holds conditional on $\sa(\z)$, and we establish the latter.
For any vector $\bm{h}$ satisfying $ \lb   \bm{h} \rb = 1 $ and any $T,\tau_0$, define 
${d}_{ju} := T^{-{\nicefrac{1}{2}} } \bm{h}'  \bm{W}_{j0} ^{-{1 \over 2}}    \bm{s}_{ju}  $.
Then, following \cite{RH97}, Proposition 2 (see also \cite{Scott73}, Theorem 1, Conditions (B)), it is sufficient to show that
\begin{equation}
\sum_{u=-T_0}^T  {d}_{ju}^2 \E \eta_{ju}^2 \xrightarrow{p}1 ,   \label{eq:scott1}
\end{equation}
and
\begin{equation}
\E \left[  \sum_{u=-T_0}^T   {d}_{ju}^2 \E\left(  \eta_{ju}^2 \bm{1}(|  d_{ju}  \eta_{ju} | > \delta_1) |  \sa(\z)   \right)        \right]  \rightarrow 0  \mbox{ for all } \delta_1 >  0 .  \label{eq:scott2}
  \end{equation}
  where $\bm{1}(\cdot)$ denotes the indicator function.
By construction $\sum_{-T_0}^T d^2_{ju}=1/(\E\eta^2_{j0})$, hence (\ref{eq:scott1}) follows trivially.  For some $\delta_2>0$, (\ref{eq:scott2}) is bounded by 
\be
\E \left[  \sum_{u=-T_0}^T   {d}_{ju}^2 \E\left(  \eta_{ju}^2 \bm{1}(|   \eta_{ju} | > \delta_1 / \delta_2  )   \right)        \right]  + \prob(  \max_u | d_{ju} | >  \delta_2 ).\label{eq:secondterm}
\ee
The first term is made arbitrarily small by choosing $ \delta_2$ suitably small and noting that $\eta_{ju}^2$ is uniformly integrable by Assumption (\ref{ass eps}) and   in \cite[Theorem 12.10]{Dav94}. Concerning the second term in (\ref{eq:secondterm}), Propositions (\ref{prop190117a}) and (\ref{prop190117c}) below  imply that $\|\bm{W}_{j0}^{-1}\|=O_p(1)$, so  we can  consider the set $ \parallel  \bm{W}_{j0}^{-{1 \over 2}}  \parallel \le \ka $ on which, for any $\epsilon>0$
\berr
\max_u  | d_{ju} | &\le& 
\max_u \ka T^{-\nicefrac{1}{2}}\sum_{t=1}^T\lb\zz_t\rb |\phi_{jt-u}|\\
&\le&
\ka \left(  \frac{\epsilon}{T} \sum_{t=1}^T  \parallel   \zz_t \parallel^2 \right)^{\nicefrac{1}{2}}  + \ka L T^{-{\nicefrac{1}{2}}} \max_{ 1 \le t \le T }  \lb  \zz_t  \rb  \max_u | \phi_{ju} |  ,
\eerr
for $L=L(\epsilon)$ satisfying $ \sum_{u \ge L }  \phi_{ju}^2  < \epsilon $. By Assumption \ref{ass factors},  $ \E \lb \bm{\mathrm{z}}_t \rb^2 < \infty $ and $ \max_{1 \le t \le T } \lb \bm{\mathrm{z}}_t \rb \le \ka \left( \sum_{t=1}^T \lb \bm{\mathrm{z}}_t \rb^{4} \right)^{\nicefrac{1}{4} }  = O_p( T^{\nicefrac{1}{4}  } ) $. The proof of (\ref{eq:scott2}) is concluded by Markov inequality. 
\end{proof}

\begin{proof}\textbf{Proposition \ref{prop190117c}}
Note that
 \begin{align}
 & \bm{W}_j  = { \E (\eta_{j0}^2 ) \over T } \sum_{u=- \infty }^T \Big(  \sum_{t=1}^T \sum_{s=1}^T  \zz_t \zz_s'    \phi_{jt-u} \phi_{js-u}  \Big) =  \E (\eta_{j0}^2 )  { \bm{Z}'   \bm{\Phi }_j  \bm{Z}   \over T } .
  \end{align}
  Remark \ref{unif lower bound}
implies that the matrix $ \bm{\Phi }_j $ is positive definite for a sufficiently large $T$. 
The proof follows letting $T$ diverge to infinity.
\end{proof}

\begin{proof}\textbf{of proposition \ref{lemma160117a}}
Part $(i)$ follows from Lemma \ref{PZ}(ii). 
To verify that
$\lb T^{-1}\ff'\left(\db'\Hninv\db\right)^{-1}\ff\rb=O_p(1)$, by Lemma \ref{emme}\ref{emme1}  it is enough to prove that $\lambda_1\left(\db'\Hn\db\right)>\ka$. Define $\bar{\d}:=\d\left(\d'\d\right)^{-\nicefrac{1}{2}}$, then
$
\lambda_i\left(\left[\db\;\bar{\d}\right]'\Hn\left[\db\;\bar{\d}\right]\right)
$
$
=\lambda_i\left(\Hn\right),
$
for $i=1,\dots,T$. The inclusion principle \cite[p. 160]{ltk} implies that 
\be\label{eq:incl}
\lambda_1(\Hn)\leq\lambda_1\left(\db'\Hn\db\right)
\leq\lambda_T\left(\db'\Hn\db\right)
\leq\lambda_T\left(\Hn\right).
\ee
Remark~\ref{unif lower bound}, and Lemma~\ref{emme}\ref{emme3} imply that
\be\label{eq:28febsnow}
\lambda_1\left(\Hn\right)\geq
\frac{1}{N}\sum_{i=1}^N \lambda_1\left(\Hi\right)\geq \inf_i \lambda_1\left(\Hi\right)>\ka,
\ee
proving the result.
Assumption $(a)-(d)$  of the Lemma are satisfied because of Assumptions \ref{ass factors} and \ref{ass loading}, completing the proof. 
Similarly,  Part $(ii)$ follows from Lemma \ref{PZ}, part $(iii)$ if we show that $\lb T^{-\nicefrac{1}{2}}\vv'\left(\db'\Hn\db\right)^{-1}\ff\rb=O_p(1)$. The latter is proved noting that, by Assumptions \ref{independence}, Lemma \ \ref{fact6410} and Lemma \ref{emme}.\ref{emme1} 
\berr
&&\E\lb T^{-\frac{1}{2}}\vv'\left(\db'\Hn\db\right)^{-1}\ff\rb^2\nonumber\\
 &=&
\E\left[
\frac{1}{T}\tr\db\left(\db'\Hn\db\right)^{-1}\db'\f\f'\db\left(\db'\Hn\db\right)^{-1}\db'\E\left(\v\v'\right)
\right]\nonumber\\
&\leq &
\lambda_T\left(\E\left(\v\v'\right)\right)
\E\left[
\lb(\db\left(\db'\Hn\db'\right)^{-1}\db\rb^2\tr\left(\frac{\f\f'}{T}\right)
\right]
\\
&\leq& 
 \lambda_T\left(\E\left(\v\v'\right)\right)\lambda^2_T\left(\Hninv\right)\tr\E\left(\frac{\f'\f}{T}\right)=O(1).\label{eq:130117a}
\eerr
The last bound follows noting that $\lb\db\left(\db'\Hn\db\right)^{-1}\db'\rb=\lb\left(\db'\Hn\db\right)^{-1}\rb$,  display (\ref{eq:incl}), Remarks \ref{unif lower bound} and \ref{inverseV}, Assumption \ref{ass factors}.

  To prove part $(iii)$, we first show that $T^{-1}\vv'\ssthi\vv\approx$ $T^{-1}$ $\vv'\left(\db'\Hn\db\right)^{-1}\vv$. By Lemma \ref{smw},
\ber\label{eq:130117b}
&&\lb T^{-1}\left(\vv '   \ssthi \vv - \vv '   \left(\db'\Hn\db\right)^{-1}  \vv\right)\rb\\
 &\leq & \lb \frac{\vv ' \left(\db'\Hn\db\right)^{-1}  \ff}{\sqrt{T}}\rb^2
  \lb\left(  \frac{\B^{-1} +  \ff ' \left(\db'\Hn\db\right)^{-1} \ff}{T} \right)^{-1}\rb=O_p(1).\nonumber
\eer
where the bound follow from  part $(ii)$, Assumption \ref{ass factors}, and Remark \ref{unif lower bound}.  Next we show that $T^{-1}\vv'\left(\db'\Hn\db\right)^{-1}\vv\approx$ $ T^{-1}$ $\v'\Hninv\v$. Lemma \ref{MN}  and Lemma \ref{emme}\ref{emme1},\ref{emme3} entail that
\berr
&&\E\lb T^{-1}\v'\Hninv\v-\v'\db\left(\db'\Hn\db\right)^{-1}\db'\v\rb_{sp}\\
&\leq &
\E\tr\left[ T^{-1}\v'\left(
\Hninv-\left(\md\Hn\md\right)^{+}\right)\v
\right]\\
&\leq &T^{-1}\E\tr\left(
\Hninv-\left(\md\Hn\md\right)^{+}\right)
\lambda_T\left(\E\v\v'\right)\\
&\leq& \frac{S}{T}\lambda_T\left(\Hninv\right)\lambda_T \left(\E\v\v'\right)=
O\left(\frac{1}{T}\right). 
\eerr
The desired result follows from Remark \ref{inverseV}. 
\end{proof}

\begin{proof}\textbf{of proposition \ref{lemma170117a}}
The proof of parts  $(i)$ and $(ii)$ are analogous to the proofs of Proposition \ref{lemma160117a}), parts $(i)$ and $(iii)$, respectively, and hence omitted. 
\end{proof}

\begin{proof}\textbf{of proposition \ref{prop020217a}}
Note that $\xi_i(h)=\sum_{j=1}^N r^2_{ij}\cov(a_{jt},a_{jt+h})$, therefore by Assumption \ref{ass eps} satisfies the condition of Corollary \ref{circulantN}, implying that $\tr\left(\P\Hn\P'-2\pi\bm{G}_{\xi_N}\right)^2=O(1)$.
Hence, by Lemma \ref{emme}\ref{emme1},
\ber
&&\E\lb\v'\left(\Hninv-\Hbar\right)\epi \rb^2\leq  
\E\lb\v'\Hninv\left(\Hn-\Hbari\right)\Hbar\epi \rb^2\label{eq:070217}\\
&=&\E\lb\v'\Hninv\P\left(\P'\Hn\P-2\pi\bm{G}_{\xi_N}\right)\P'\Hbar\epi \rb^2\nonumber\\
&\leq&\lambda_T\left(\P'\Hbar\E(\epi\epi'\right)\Hbar\P)
\lambda_T\left(\P'\Hninv\E(\v\v'\right)\Hninv\P)\tr\left(\P'\Hn\P-2\pi\bm{G}_{\xi_N}\right)^2\nonumber\\
&=&O(1).
\nonumber
\eer
The proof of Part (i) follows  by Markov's inequality, whereas Part (ii) follows along the lines of Lemma \ref{020217b}'s proof.
\end{proof}
%
%
\begin{proof}\textbf{of proposition \ref{propCHLOE1}}
Following \cite[Propositon 1]{RH97} and Proposition \ref{prop190117b}
\berr
\E\lb \bm{w}^{(\xii)}_{ij1}\rb^2&=&\frac{\E\left(\eta_{j0}^2\right)}{T}\sum_{u=-\infty}^{-T_0-1}
\sum_{t,s,r,q=1}^T\E(\vvv_{it}'\vvv_{is})\xii_N(t-r)\xii_N(s-q)\phi_{j,t-u}\phi_{j,s-u}\\
&=& \frac{\ka}{T}\left(
\sum_{t,s=1}^T\left|\xii_N(t-s)\right|
\right)^2\sum_{u=T_0}^{\infty}\phi_u^2\leq \ka T\sum_{u=\tau_0}^{\infty}\phi_u^2,
\eerr
which tends to zero as $T\to\infty$ for a suitable choice $\tau_0$ increasing with $T$, by Assumption \ref{ass eps}.
\end{proof}
\begin{proof}\textbf{of proposition \ref{propCHLOE2}}
Following \cite[Propositon 2]{RH97} and Proposition \ref{prop190117c} convergence in distribution in (\ref{eq:scott0B}) is implied if the convergence holds conditional on $\sa(\v)$, the sigma algebra generated by $\{\bm{v_{it}}\}_{t=1}^{T}$, and we establish the latter.
For any vector $\bm{h}$ satisfying $ \parallel   \bm{h} \parallel = 1 $ and any $T,\tau_0$, define 
${d}_{iju} = T^{-{\nicefrac{1}{2}} } \bm{h}' \left(\bm{W}^{(\xii)}_{ij0}\right) ^{-{1 \over 2}}   \bm{s}_{iju}  $.
The proof of the following results
\begin{equation}
\sum_{u=-T_0}^T  {d}_{iju}^2 \E \eta_{ju}^2 \xrightarrow{p} 1 ,   \label{eq:scott1B}
\end{equation}
and
\begin{equation}
\E \left[  \sum_{u=-T_0}^T   {d}_{iju}^2 \E\left(  \eta_{ju}^2 \bm{1}(|  d_{ju}  \eta_{ju} | > \delta_1 ) | \{  \sa(\v)  \}  \right)        \right]  \rightarrow 0  \mbox{ for all } \delta_1 >  0 .  \label{eq:scott2B}
  \end{equation}
follow closely the proof of (\ref{eq:scott1}) and (\ref{eq:scott2}). For  conciseness  we will limit ourselves to prove that the second term of (\ref{eq:secondtermR})
\be
\E \left[  \sum_{u=-T_0}^T   {d}_{iju}^2 \E\left(  \eta_{ju}^2 \bm{1}(|   \eta_{ju} | > \kappa / \delta_1  )   \right)        \right]  + \prob(  \max_u | d_{ju} | >  \delta_2 ).\label{eq:secondtermR}
\ee
converges in probability to zero, for some $\delta_2>0$.
Consider the set $\lb\left(\bm{W}^{(\xii)}_{ij0}\right) ^{-{1 \over 2}}\rb\le \ka $ on which, for any $ \kappa_2 > 0 $,
\[
\max_u  | d_{iju} | \le \ka \left(  \frac{\kappa_2 }{T} \sum_{t=1}^T  \lb \bm{\ell^{(\xii)}_{it}} \rb^2 \right)^{\frac{1}{2}}  + \ka L T^{-{\nicefrac{1}{2}}} \max_{ 1 \le t \le T }  \lb  \bm{\ell^{(\xii)}_{it}}  \rb  \max_u | \phi_{ju} |  ,
\]
for $L=L(\kappa_2)$ satisfying $ \sum_{u \ge L }  \phi_{ju}^2  < \epsilon $. The proof of (\ref{eq:scott2B}) is completed noting that, by Assumption \ref{ass V} and Proposition \ref{propCHLOE1}.(ii)
$$
\E\lb\bm{\ell^{(\bar{\xi})}_{it}}\rb^2\leq \sum_{s,u=1}^T\E\lb\vvv_{it}\rb^2\left|\hbar_N(t-s)\hbar_N(t-u)\right|\leq \ka<\infty ,
$$
and
$$
\max_{1\leq t\leq T}\lb\bm{\ell^{(\xii)}_{it}}\rb\leq \max_{1\leq t\leq T}\lb\vvv_{it}\rb\sum_{u=1}^T \left|\xii_N(h)\right|\leq \ka \left[\sum_{t=1}^T\lb \vvv_{it}\rb^4\right]^{1/4}=O_p(T^{1/4}).
$$
\end{proof}

\begin{proof}\textbf{\ref{propCHLOE3}}
 Proceeding as for the proof of Proposition \ref{prop190117c}, define
$$
  \bm{W}_j^{(\bar{\xi})}  = { \E (\eta_{j0}^2 ) \over T } \sum_{u=- \infty }^T \left(  \sum_{p,q=1}^T \sum_{t,s=1}^T  \vvv_{ip} \vvv_{iq}'\xii_N(t-p)\hbar_N(s-q)    \phi_{jt-u} \phi_{js-u}  \right) =  \E (\eta_{j0}^2 )  { \v'\Hbar   \bm{\Phi }_j  \Hbar\v   \over T } .
$$
with $\bm{\Phi_j}$ is positive definite for large $T$. 
The proof is concluded if we prove that
\berr
&& \lb T^{-1}\v'\left(\Hbar\bm{\Phi_j}\Hbar-\Hninv\bm{\Phi_j}\Hninv\right)\v\rb\\
&\leq & \|\bm{\Phi_j}\|_{sp} \lb T^{-\nicefrac{1}{2}}\left(\Hbar-\Hninv\right)\v\rb\left(\lb T^{-\nicefrac{1}{2}}\left(\Hbar-\Hninv\right)\v\rb+\lb T^{-\nicefrac{1}{2}}\Hninv\v\rb\right)=o_p(1).
\eerr
The bound follows noting that $\E\lb T^{-\nicefrac{1}{2}}\Hninv\v\rb=O(1)$ and, analogously to the Lemma \ref{prop020217a}' proof,
$$
\E\lb T^{-\nicefrac{1}{2}}\left(\Hbar-\Hninv\right)\v\rb^2\leq T^{-\nicefrac{1}{2}}\lambda_T\left(\E\bm{P}\v\v'\bm{P}'\right)\tr\left(\Hbar-\Hninv\right)^2\leq O\left(T^{-\nicefrac{1}{2}}\right).
$$

\end{proof}

\subsection{Proof of the propositions in Appendix \ref{proofT2}}\label{propth2}

%
%
%
%
%
%
%
%
%
%

%
%

\begin{proof}\textbf{of proposition \ref{23feb18}}
The results follows from Lemmata 2 (Section~\ref{lemmata2}) and the triangular inequality.
\end{proof}
\begin{proof}\textbf{of proposition \ref{lemma160117aT2}}
The proof is similar to that of Proposition \ref{lemma160117a}, and hence details are omitted. We will only check that $\MA$ satisfies the assumption of Lemma~\ref{PZ}. It will be enough to establish that the eigenvalues of $\cb$ are bounded, and uniformly bounded away from zero.
Denote by $\theta_{its}$ the $t,s$-element of $\T$.  
\be\label{eq:28febsnow2}
\theta_{its}=\tr\left(\bm{\mathscr{Q}}_i\E\left(\vvv_{it}\vvv_{is}'\right)\right)=\sum_{k=1}^K\sum_{\ell=k}^K\mathcal{q}_{ik\ell}\left[\cov(v_{ikt},v_{i\ell s})+\cov(v_{i\ell t},v_{ik s})\right]+\sum_{k=1}^K \mathcal{q}_{ikk}\cov(v_{ikt},v_{iks}),
\ee
where $\mathcal{q}_{ik\ell}$ is the $k,\ell-$element of the symmetric $(K\times K)$ matrix $\bm{\mathscr{Q}}_i$
\be\label{eq:qnormale}
\bm{\mathscr{Q}}_i:=\sxi \Gi\sf\b\b'\sf\Gi'\sxi.
\ee
Assumptions \ref{ass factors} and \ref{ass loading}, Lemma \ref{emme}\ref{emme3}\ref{emme4} and Proposition \ref{propo1}\ref{propo15M} imply that $\sup_{i,k,\ell}|\mathcal{q}_{ik\ell}|<\infty$. Assumption \ref{ass V} implies that $\sup_{it}\sum_{s=1}^T|\theta_{its}|<\infty$, and hence $\lambda_T(\T)<\infty$. Using again Lemma \ref{emme}\ref{emme4} and display (\ref{eq:28febsnow}) we find that
\ber
&& \lambda_T(\cn)\leq  \sup_i\lambda_T(\Hi)+\sup_i\lambda_T(\T)<\infty,\label{eq:luca1}\\
&& \lambda_1(\db'\cn\db)\geq  \lambda_1(\cn)\geq \lambda_1(\Hn)\geq \ka\label{eq:luca2},
\eer
proving the result.
\end{proof}
\begin{proof}\textbf{of proposition \ref{lemma170117aT2} }
Analogous to  Proposition \ref{lemma170117a}, and hence omitted.
\end{proof}
\begin{proof}\textbf{of proposition \ref{mamma} }
Recall that $\T=[\theta_{its}]$ and $\Hi=[\xi_{its}]$. For $h=t-s$, Assumptions \ref{ass eps} and \ref{ass V}, and the uniform boundness of $\mathcal{q}_{ik\ell}$ (see the proof of Lemma \ref{lemma160117aT2}) entail that $\sup_i\sum_{h=1}^{\infty}h^2\left(|\xi_i(h)|+|\theta_i(h)|\right)<\infty$.
 Hence,
\be
c_N(h)=N^{-1}\sum_{i=1}^N\left(\xi_i(h)+\theta_i(h)\right),
\ee
is even and positive definite and the conditions in Lemmas~\ref{circulant} and~\ref{paolo} and Corollary~\ref{circulantN} are satisfied.
The proof is completed along the line of Proposition \ref{prop020217a}.
\end{proof}
%
%
\begin{proof}\textbf{of proposition \ref{propCHLOE1N} }
The proof makes use of Corollary \ref{circulantN} and proceed as for Lemma \ref{propCHLOE1}
\end{proof}
\begin{proof}\textbf{of proposition \ref{propCHLOE2N} }
See the Proof of Proposition (\ref{propCHLOE2})
\end{proof}

\begin{proof}\textbf{of proposition \ref{propMarch1} } Equation (\ref{eq:27feb}) implies that 
 \berr
 &&\lb \sum_{i=1}^N r^2_{ij}\bm{W}_{ij}^{(\bar{c})}-\frac{1}{T}\v'\cni\Hi\cni\v\rb
 \leq \lb\frac{1}{T}\v'\left(\MAB-\cni\right)\Hi\cni\v\rb\\
 &&+\lb\frac{1}{T}\v'\cni\Hi\left(\MAB-\cni\right)\v\rb+
 \lb\frac{1}{T}\v'\left(\MAB-\cni\right)\Hi\left(\MAB-\cni\right)\v\rb .
 \eerr
 For sake of brevity, we only consider the first norm on the RHS of the inequality, that is bounded by
 $$
 \lb \frac{1}{\sqrt{T}} \v'\left(\MAB-\cni\right) \rb^2 \lambda^2_T(\Hi)\lambda^2_K(\cni)\lb\frac{\v}{\sqrt{T}}\rb^2=O_p\left(\frac{1}{T}\right).
 $$
The results follows from Remark~\ref{unif lower bound}, Assumption \ref{ass V}, inequality (\ref{eq:luca2}) and  noting that, using similar arguments as in the proof of Proposition
 \ref{prop020217a}
$$
\E  \lb \v'\left(\MAB-\cni\right) \rb^2=\E\lb\v'\cni\P(\P'\cn\P'-2\pi\bm{G}_{c_N}) \P'\MAB\rb^2=O(1).
$$
The remaining terms can be bounded similarly.
\end{proof}

%
%
%
%
%
%
%
%
%
%
%
\section{Auxiliary  results for proof of  Theorem~\ref{Theorem_GLS}} \label{auxiliary1}
%
%
\setcounter{equation}{0}
As mentioned at the end of Appendix \ref{auxiliary}, Lemmata I derive some properties of the matrix $\SS$ and results frequently used in the following proofs. Lemmata II and Lemmata III develop key results for the proof of Theorem \ref{Theorem_GLS} (see Proposition \ref{23feb18}). Following the discussion in Section \ref{LEMMI}, we consider the double weighted convergence of the difference $\shi-\MAI$, that by the identity in (\ref{eq:formula}), can be rewritten as
\berr
&&\shi-\MAI\\
&=& \SSI(\MA-\SS)\MAI  - \SSI   ( \sh  - \SS) \SSI+ 
\SSI  (   \sh  - \SS ) \shi    ( \sh   - \SS ) \SSI.
\eerr

\subsection{Lemmata I}\label{lemmata1}
\begin{lemma}\label{boundinverse}Under Assumptions of Theorem \ref{Theorem_GLS}
\begin{enumerate}[label=(\roman*)]
\item $\lb\bm{\Psi_i}\rb_{sp}=O_p\left(1 \right)$ and  $\lb\psinv\rb_{sp}<\infty$, $\forall i$.\label{boundinv1}
\item $\lambda_T\left(\BT\right)=O_p(1)$, $\forall i$.\label{boundinv2}
\item $\lambda_{T-K}\left(\SSI\right)<\infty . $\label{boundinv3}
\item $0<\lambda_i\left(\I_M-\Gi\psinv\Gi'\frac{\ff'\ff}{T}\right)\leq 1$ for $i=1,\dots,M$.\label{zerone}
\end{enumerate}
\end{lemma}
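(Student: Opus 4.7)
The plan for part (i) is routine: writing $\bm{\Psi}_i = \Gi'(\ff'\ff/T)\Gi + \SO_{\v'\v}$, the spectral-norm bound $\lb\bm{\Psi}_i\rb_{sp}=O_p(1)$ follows from $\lb\Gi\rb<\infty$ (Assumption \ref{ass loading}), $\lb\ff'\ff/T\rb=O_p(1)$ via Proposition \ref{propo1}\ref{propo1D1}, and boundedness of $\SO_{\v'\v}$ from Assumption \ref{ass V}. For $\lb\psinv\rb_{sp}$, since $\Gi'(\ff'\ff/T)\Gi\ge 0$, Lemma \ref{emme}\ref{emme3} gives $\lambda_1(\bm{\Psi}_i)\ge \lambda_1(\SO_{\v'\v})>\kappa$ by Remark \ref{inverseV}, so $\lb\psinv\rb_{sp}\le 1/\kappa$.

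For part (ii), I would write $\BT=\E[\v\bm{\mathscr{Q}}_i\v'\,|\,\sa(\z)]$ with $\bm{\mathscr{Q}}_i:=\psinv\Gi(\ff'\ff/T)\b\b'(\ff'\ff/T)\Gi'\psinv$, a symmetric $K\times K$ matrix. Part (i) together with $\lb\ff'\ff/T\rb=O_p(1)$, $\lb\Gi\rb<\infty$ and $\lb\b\rb<\infty$ gives $\lb\bm{\mathscr{Q}}_i\rb_{sp}=O_p(1)$, hence all entries $\mathcal{q}_{ik\ell}$ are $O_p(1)$ by Lemma \ref{emme}\ref{emme4}. By the independence of $\v$ and $\z$ (Assumption \ref{independence}),
\[
(\BT)_{ts}=\sum_{k,\ell=1}^K \mathcal{q}_{ik\ell}\,\cov(v_{itk},v_{is\ell}),
\]
and since $\BT$ is symmetric positive semidefinite, $\lambda_T(\BT)\le\sup_t\sum_s |(\BT)_{ts}|$. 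The inner sum is bounded by $K^2\max_{k,\ell}|\mathcal{q}_{ik\ell}|\cdot\sup_{i,t,k,\ell}\sum_s|\cov(v_{itk},v_{is\ell})|$, and the last factor is finite by Assumption \ref{ass V} (taking $h=\ell=2$). This mirrors the analysis of $\T$ in Proposition \ref{lemma160117aT2}, only with the random quantities $\ff'\ff/T$ and $\psinv$ replacing their population counterparts $\sf$ and $\sxi$.

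Part (iii) chains three monotonicity observations. Decomposing $\SS=\ff\an\ff'+\db'\bcn\db$ with $\an>0$ (Assumption \ref{ass loading}), Lemma \ref{emme}\ref{emme3} yields $\lambda_1(\SS)\ge\lambda_1(\db'\bcn\db)$. The inclusion principle exactly as in the derivation of (\ref{eq:incl}) gives $\lambda_1(\db'\bcn\db)\ge\lambda_1(\bcn)$. Since each $\BT\ge 0$ from part (ii), $\bcn\ge\Hn$, so another application of Lemma \ref{emme}\ref{emme3} together with (\ref{eq:28febsnow}) yields $\lambda_1(\bcn)\ge\lambda_1(\Hn)\ge\inf_i\lambda_1(\Hi)>\kappa$ by Remark \ref{unif lower bound}. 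Hence the largest eigenvalue of $\SSI$ equals $1/\lambda_1(\SS)\le 1/\kappa<\infty$.

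Part (iv) carries the only real algebraic content and is the step most worth singling out. Setting $\bm{Y}:=(\ff'\ff/T)^{1/2}\Gi$ (well-defined since $\ff'\ff/T>0$ by Remark \ref{fpd}), one has $\bm{\Psi}_i=\bm{Y}'\bm{Y}+\SO_{\v'\v}$, and the matrix $\I_M-\Gi\psinv\Gi'(\ff'\ff/T)$ is similar, via conjugation by $(\ff'\ff/T)^{1/2}$, to the symmetric matrix $\I_M-\bm{Y}\psinv\bm{Y}'$. Applying the Sherman--Morrison--Woodbury identity (Lemma \ref{smw}) to $(\I_M+\bm{Y}\SO_{\v'\v}^{-1}\bm{Y}')^{-1}$ produces the clean rewrite
\[
\I_M-\bm{Y}\psinv\bm{Y}'=\bigl(\I_M+\bm{Y}\,\SO_{\v'\v}^{-1}\bm{Y}'\bigr)^{-1}.
\]
Since $\SO_{\v'\v}>0$ (Remark \ref{inverseV}), the matrix $\bm{Y}\SO_{\v'\v}^{-1}\bm{Y}'$ is positive semidefinite, hence $\I_M+\bm{Y}\SO_{\v'\v}^{-1}\bm{Y}'$ has eigenvalues in $[1,\infty)$, and its inverse has eigenvalues in $(0,1]$, which transfers back to $\I_M-\Gi\psinv\Gi'(\ff'\ff/T)$ by similarity. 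The main thing to be careful about here is the dimension bookkeeping and verifying that the similarity transform preserves the spectrum; the Woodbury rewrite itself does all the work once $\bm{Y}$ is identified.
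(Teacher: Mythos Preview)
Your proof is correct. Parts (i)--(iii) track the paper's argument essentially line for line (the paper's part (iii) is phrased as $\lambda_{T-K}(\SSI)\le\lambda_T(\bcn^{-1})\le\lambda_1^{-1}(\Hn)$, but this is the same chain you wrote, inverted).

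Part (iv) is where your route genuinely diverges. The paper does not use Woodbury: it observes that the eigenvalues of $\Gi\psinv\Gi'(\ff'\ff/T)$ coincide with those of the symmetric matrix $\bm{\Psi}_i^{-1/2}\Gi'(\ff'\ff/T)\Gi\,\bm{\Psi}_i^{-1/2}$ (handling the rectangular shape via a case split $K\ge M$ versus $K<M$), and then notes that adding the strictly positive $\bm{\Psi}_i^{-1/2}\SO_{\v'\v}\bm{\Psi}_i^{-1/2}$ to the latter yields the identity, forcing each eigenvalue into $[0,1)$. Your similarity to $\I_M-\bm{Y}\psinv\bm{Y}'$ followed by the Woodbury rewrite $(\I_M+\bm{Y}\SO_{\v'\v}^{-1}\bm{Y}')^{-1}$ is cleaner: it delivers a closed form, removes the case split on $K$ versus $M$, and makes the interval $(0,1]$ immediate from positive semidefiniteness of $\bm{Y}\SO_{\v'\v}^{-1}\bm{Y}'$. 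Both arguments ultimately rest on $\SO_{\v'\v}>0$, but yours packages it more economically.
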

\begin{lemma}\label{cor}
Under Assumptions of Theorem \ref{Theorem_GLS}
\begin{enumerate}[label=(\roman*)]
\item $\lb \SSI\ff \rb=O_p\left(\frac{1}{\sqrt{T}} \right).$\label{corD1}
\item $\lb \ff'\SSI\ff \rb=O_p\left(1\right).$\label{corD2}
\item $\lb \vvj'\SSI\ff \rb=O_p\left(\frac{1}{\sqrt{T}} \right).$\label{corD3}
\item $\lb \ff\SSI\ei\rb=O_p\left(\frac{1}{\sqrt{T}} \right).$\label{corD4}
\item $\lb \vv'\SSI\ei\rb=O_p(\sqrt{T}).$\label{corD5}
\item $\lb \vvj'\SSI\vv\rb=O_p(T)$\quad \textrm{for any} $i,j$.\label{corD6}
\end{enumerate}
\end{lemma}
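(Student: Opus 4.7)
The plan is to derive all six bounds from a single application of Lemma \ref{PZ}, identifying $\bm{E} = \SS$, $\bm{B} = \ff$, $\bm{C} = \an$ and $\bm{A} = \bm{\breve{\sc}_N}$, with $m_1 = T-S$. Conditions (a)--(d) of Lemma \ref{PZ} are exactly what Lemma \ref{boundinverse} delivers: the spectral boundedness of $\bm{\breve{\sc}_N}^{-1}$, the $O_p(1)$ behaviour and positive-definiteness of $\an$ (via Assumption \ref{ass loading} together with Lemma \ref{boundinverse}), and the non-singularity of $T^{-1}\ff'\bm{\breve{\sc}_N}^{-1}\ff$, which follows from Remark \ref{fpd} and the spectral bounds on $\bm{\breve{\sc}_N}$. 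Once this setup is in place, part (i) is immediate from Lemma \ref{PZ}(i); part (ii) follows by combining (i) with the trivial bound $\|\ff\| = O_p(T^{1/2})$, since $\|\ff'\SSI\ff\|\le\|\ff\|\cdot\|\SSI\ff\|$; and part (vi) is equally direct, because $\|\SSI\|_{sp} = O_p(1)$ by Lemma \ref{boundinverse} while $\|\vv\|$ and $\|\vvj\|$ are $O_p(T^{1/2})$ by Assumption \ref{ass V}, so the product $\|\vvj'\SSI\vv\|$ is $O_p(T)$.

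Parts (iii) and (iv) call for Lemma \ref{PZ}(iii) with $\bm{D}$ taken to be $\vvj$ and $\ei$ respectively. The side condition $\|T^{-1/2}\bm{D}'\bm{\breve{\sc}_N}^{-1}\ff\| = O_p(1)$ must be checked in each case. The crucial observation is that $\bm{\breve{\sc}_N}$ depends only on population second moments (the $\Hi$ and the conditional expectations defining $\BT$), so conditional on $\sa(\z)$ it is deterministic. Since by Assumption \ref{independence} both $\vvj$ and $\ei$ are independent of $\ff$ with mean zero, a trace inequality via Lemma \ref{emme} gives $\E\bigl[\|\bm{D}'\bm{\breve{\sc}_N}^{-1}\ff\|^{2}\,\big|\,\sa(\z)\bigr] \leq \lambda_{T-S}\!\bigl(\E[\bm{D}\bm{D}'\mid\sa(\z)]\bigr)\cdot \|\bm{\breve{\sc}_N}^{-1}\|_{sp}^{2}\cdot \tr(\ff'\ff) = O_p(T)$, which is exactly what Lemma \ref{PZ}(iii) requires.

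Part (v) is the most delicate, because a direct appeal to (i) combined with Cauchy--Schwarz would give $\|\vv'\SSI\ei\|\le\|\SSI\ff\|\cdot(\cdots)$, which is not the natural route. Instead I will use the representation behind Remark \ref{remarkPZ}, namely $\SSI = \bar{\bm{E}} + \bm{R}$ where $\bar{\bm{E}} = \bm{\breve{\sc}_N}^{-1} - \bm{\breve{\sc}_N}^{-1}\ff(\ff'\bm{\breve{\sc}_N}^{-1}\ff)^{-1}\ff'\bm{\breve{\sc}_N}^{-1}$ and $\|\bm{R}\|_{sp} = O_p(T^{-1/2})$. The leading piece $\vv'\bm{\breve{\sc}_N}^{-1}\ei$ is $O_p(\sqrt{T})$ by the same conditional second-moment calculation used for (iii)--(iv), since $\vv$ and $\ei$ are mutually independent given $\sa(\z)$. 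The projection correction factors as $\vv'\bm{\breve{\sc}_N}^{-1}\ff = O_p(\sqrt{T})$, $(\ff'\bm{\breve{\sc}_N}^{-1}\ff)^{-1} = O_p(T^{-1})$ and $\ff'\bm{\breve{\sc}_N}^{-1}\ei = O_p(\sqrt{T})$, contributing only $O_p(1)$. The remainder $\vv'\bm{R}\ei$ is bounded by $\|\vv\|\cdot\|\bm{R}\|_{sp}\cdot\|\ei\| = O_p(\sqrt{T})\cdot O_p(T^{-1/2})\cdot O_p(\sqrt{T}) = O_p(\sqrt{T})$, which is absorbed by the leading term.

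The main obstacle throughout is bookkeeping for the conditional independence: one must keep transparent that $\bm{\breve{\sc}_N}$ contains no realization-level dependence on any $\epj$ or $\bm{V}_j$, so that conditioning on $\sa(\z)$ reduces every moment computation to a trace inequality governed by Lemma \ref{emme} and the spectral bounds in Lemma \ref{boundinverse}. Once this is firmly in place, each of the six displayed bounds is an elementary combination of those inputs with the abstract orthogonality statement of Lemma \ref{PZ}.
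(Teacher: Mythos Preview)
Your proposal is correct and largely parallels the paper's approach: parts (i), (iii) and (iv) are proved exactly as you suggest, via Lemma~\ref{PZ} with the verification of its hypotheses supplied by Lemma~\ref{boundinverse}; part (ii) in the paper is referred back to Lemma~\ref{PZ}(ii) (Proposition~\ref{lemma160117a}(i)) rather than the product bound $\|\ff\|\cdot\|\SSI\ff\|$, but your route is equally valid; and part (vi), which the paper omits, is handled by the spectral bound you give.

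The one genuine methodological difference is part (v). You go through the Sherman--Morrison decomposition of Remark~\ref{remarkPZ}, splitting $\SSI$ into $\bar{\bm E}$ plus a small remainder and bounding each piece separately. The paper instead applies a single-line trace (Markov) argument directly to $\SSI$:
\[
\E\bigl[\|\vv'\SSI\ei\|^2\mid\sa(\z)\bigr]
=\tr\!\bigl(\db\SSI\db'\,\Hi\,\db\SSI\db'\,\E[\v\v']\bigr)
\le \lambda_T(\Hi)\,\lambda_T(\E\v\v')\,\lambda_{T-S}(\SSI)\,\tr(\SSI)=O_p(T).
\]
This works because Lemma~\ref{boundinverse}\ref{boundinv3} already gives $\|\SSI\|_{sp}=O_p(1)$, so there is no need to strip off the low-rank factor component before bounding. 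Your decomposition is not wrong---each of the three pieces you isolate is indeed of the stated order---but it is unnecessary: the very same conditional second-moment calculation you invoke for $\vv'\bm{\breve{\sc}_N}^{-1}\ei$ applies verbatim with $\SSI$ in place of $\bm{\breve{\sc}_N}^{-1}$, since both have uniformly bounded spectrum. The paper's route is therefore shorter; yours has the (minor) advantage of making explicit which contributions come from the factor part versus the idiosyncratic part of $\SS$.
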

\begin{lemma}\label{extra1}Under Assumptions of Theorem \ref{Theorem_GLS}
\begin{enumerate}[label=(\roman*)]
\item $\lb \epi'\d \rb=O_p\left(\sqrt{T} \right).$\label{extra1a}
\item $\lb \v'\d \rb=O_p\left(\sqrt{T} \right).$\label{extra1c}
\item $\lb\ei'\xx\rb=\lb\epi'\md\x\rb=O_p\left(\sqrt{T} \right).$\label{extra1b}
\item $\lb\xx'\ff\rb=\lb\x'\md\f\rb=O_p\left(T\right).$\label{extra1d}
\end{enumerate}
\end{lemma}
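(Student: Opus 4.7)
The plan is to prove each of the four bounds by a standard second-moment calculation, exploiting the independence between the innovations ($\epi$, $\v$) and the observed common regressors/latent factors ($\bm{D}$, $\f$) that is provided by Assumption \ref{independence}. The four statements are all inputs to the more delicate arguments elsewhere in the proof of Theorem \ref{Theorem_GLS}, and none of them should require heavy machinery: once $\E\|\cdot\|^2 = O(T)$ or $O(T^2)$ is established, Markov's inequality delivers the claimed $O_p(\sqrt{T})$ or $O_p(T)$ rate.

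For parts (i) and (ii) I would write $\E\|\epi'\bm{D}\|^2 = \E\tr(\bm{D}'\epi\epi'\bm{D})$ and condition on $\sa(\bm{D})$. By Assumption \ref{independence} the conditional expectation equals $\E\tr(\bm{D}'\Hi\bm{D})$; Lemma \ref{emme}\ref{emme1} then bounds this by $\lambda_T(\Hi)\E\tr(\bm{D}'\bm{D})$, with $\lambda_T(\Hi)<\infty$ uniformly in $i$ by Remark \ref{unif lower bound} and $\E\tr(\bm{D}'\bm{D})=O(T)$ from the finite second moment of $\bm{\mathrm{z}}_t$ in Assumption \ref{ass factors}. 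Part (ii) is identical after replacing $\epi$ by $\v$ and invoking the eigenvalue bound on $\E\v\v'$ implicit in Assumption \ref{ass V} (finite variances plus the cumulant summability).

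For part (iii), since $\md\d = \bm{0}$, the regressor decomposition $\x = \d\bm{\Delta}_i + \f\Gi + \v$ collapses to $\md\x = \md\f\Gi + \md\v$, hence $\epi'\xx = \epsilon_i'\md\f\Gi + \epsilon_i'\md\v$. Each term admits the same treatment as in (i)--(ii): conditioning on $\sa(\z)$ and $\sa(\v)$ respectively and using $\|\Gi\|<\infty$ (Assumption \ref{ass loading}), $\lambda_T(\Hi)$ bounded (Remark \ref{unif lower bound}), together with $\E\tr(\f'\md\f)\leq\E\tr(\f'\f)=O(T)$ and $\E\tr(\v'\md\v)\leq\E\tr(\v'\v)=O(T)$. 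Finally, for part (iv), I would decompose $\x'\md\f = \Gi'(\f'\md\f) + \v'\md\f$: the first term is $O_p(T)$ by $T^{-1}\f'\md\f\xrightarrow{p}\bm{\Sigma}_{\f'\md\f}$ (Assumption \ref{ass factors}, Remark \ref{fpd}) together with $\|\Gi\|<\infty$, while the second is $O_p(\sqrt{T})$ by Proposition \ref{propo1}\ref{propo1D3}; the sum is $O_p(T)$.

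There is no real obstacle here. The only small care required is to condition on the correct sigma algebra before taking expectations, so that the independence assumption can be used to replace products of innovations by their deterministic covariances before Lemma \ref{emme}\ref{emme1} is applied. Each part is then a one-line trace inequality followed by Markov.
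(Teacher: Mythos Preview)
Your proposal is correct and follows essentially the same approach as the paper: second-moment trace bounds via Lemma \ref{emme}, independence (Assumption \ref{independence}), the eigenvalue control on $\Hi$ from Remark \ref{unif lower bound}, and the moment bounds on $\z$ from Assumption \ref{ass factors}, followed by Markov's inequality. The only cosmetic difference is in part (iii): you decompose $\md\x$ as $\md\f\Gi+\md\v$ and bound the two pieces separately, whereas the paper bounds $\E\lb\ei'\xx\rb^2$ in one step via $\tr\!\left(\E(\md\x\x'\md)\,\Hi\right)\le\lambda_T(\Hi)\,\tr\!\left(\E\x'\x\right)=O(T)$; part (iv) is handled identically by both.
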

\begin{lemma}\label{expansion}
Under Assumptions \ref{ass V}-\ref{independence}, for $j=1,2,\dots$ $J$,
\be\label{eq:exp1}
\left(\frac{\xx'\xx}{T}\right)^{-1}=\psinv+\sum_{j=1}^J\psinv\left[(-1)\dif\psinv\right]^j+\rem^{(J+1)},
\ee
with
\be\label{eq:exp2}
\rem_i^{(J+1)}=\left[(-1)\dif\psinv\right]^{J+1}\left(\frac{\xx'\xx}{T}\right)^{-1},
\ee
and
\be\label{eq:exp3}
\lb\rem^{(J+1)}\rb=O_p\left(T^{-\frac{(J+1)}{2}}\right). 
\ee
\end{lemma}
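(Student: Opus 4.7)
The approach is to treat Lemma \ref{expansion} as a Neumann-series expansion of $(\xx'\xx/T)^{-1}$ around $\psinv$. I would invoke Lemma \ref{approx inv}(ii) with $\bm{A}=\xx'\xx/T$ and $\bm{B}=\ps$; since $\bm{A}-\bm{B}=-\dif$, equations (\ref{eq:exp1})--(\ref{eq:exp2}) follow immediately by matching signs and separating the $j=0$ term from the finite sum. The identity itself is therefore essentially free of content, and the whole burden of the proof lies in establishing the rate (\ref{eq:exp3}).

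By sub-multiplicativity of the spectral norm applied to (\ref{eq:exp2}),
\[
\|\rem^{(J+1)}\|\;\le\;\|\dif\|^{J+1}\,\|\psinv\|^{J+1}\,\|(\xx'\xx/T)^{-1}\|.
\]
The last two factors are $O_p(1)$ by Lemma \ref{boundinverse}(i) and Remark \ref{mreg} respectively, so everything reduces to proving $\|\dif\|=O_p(T^{-1/2})$. For this, I would use the decomposition $\xx=\ff\Gi+\vv$ (obtained by pre-multiplying (\ref{eq:ref_def}) by $\db'$) to write $\xx'\xx/T=\Gi'(\ff'\ff/T)\Gi+\Gi'(\ff'\vv/T)+(\vv'\ff/T)\Gi+\vv'\vv/T$, and compare with $\ps=\Gi'(\ff'\ff/T)\Gi+\SO_{\v'\v}$ from (\ref{eq:trenta}). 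The leading factor-factor term cancels exactly (this is the whole point of the definition of $\ps$), leaving
\[
\dif \;=\; \SO_{\v'\v}-\vv'\vv/T\;-\;\vv'\ff\Gi/T\;-\;\Gi'\ff'\vv/T.
\]

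The two cross terms are $O_p(T^{-1/2})$ by Proposition \ref{propo1}(iii) together with $\|\Gi\|<\infty$ from Assumption \ref{ass loading}. For the first difference, $\vv'\vv/T=\v'\v/T-\v'\pd\v/T$. The projection piece is $O_p(T^{-1})$: from $\|\v'\d\|=O_p(\sqrt{T})$ in Lemma \ref{extra1}(ii), $\|(\d'\d)^{-1}\|=O_p(T^{-1})$ via Assumption \ref{ass factors}, and submultiplicativity, $\|\v'\pd\v/T\|\le\|\v'\d\|^2\,\|(\d'\d)^{-1}\|/T=O_p(T^{-1})$. The leading piece $\|\v'\v/T-\SO_{\v'\v}\|=O_p(T^{-1/2})$ follows from a standard second-moment bound: the cumulant summability of Assumption \ref{ass V} yields $\E\|\v'\v/T-\E(\v'\v/T)\|^2=O(T^{-1})$, and Markov delivers the rate. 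Substituting $\|\dif\|=O_p(T^{-1/2})$ back into the remainder estimate gives (\ref{eq:exp3}). The only step requiring genuine care is this cumulant-based second-moment bound for $\v'\v/T$; everything else is routine bookkeeping once that rate is in hand, and Assumption \ref{ass V} has been designed precisely to make it transparent.
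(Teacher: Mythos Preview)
Your proposal is correct and follows essentially the same approach as the paper: invoke Lemma~\ref{approx inv} for the algebraic identity, then reduce (\ref{eq:exp3}) to showing $\lb\dif\rb=O_p(T^{-1/2})$ via the same four-term decomposition, with the key piece being the cumulant-based second-moment bound on $\v'\v/T-\SO_{\v'\v}$. The only cosmetic differences are that the paper bounds $\lb\v'\pd\v/T\rb$ and $\lb\Gi'\ff'\vv/T\rb$ by direct moment/trace calculations rather than citing Lemma~\ref{extra1}(ii) and Proposition~\ref{propo1}(iii), and that the paper's $\lb\cdot\rb$ is the Frobenius norm rather than spectral---irrelevant here since all matrices are $K\times K$ with $K$ fixed.
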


\begin{remark}\label{reamrk_expansion}
Lemma \ref{expansion} implies that, for $J=0$
$$
\lb\left(\frac{\xx'\xx}{T}\right)^{-1}-\psinv\rb=O_p\left(T^{-\nicefrac{1}{2}}\right).
$$
\end{remark}
\subsection{Proof Lemmata I}
\begin{proof}{\textbf{Lemma~\ref{boundinverse}}}
\begin{enumerate}[label=(\roman*)]
\item\label{boundpsi}The proof of the first part follows from Remark~\ref{inverseV} and Assumption~\ref{ass factors}. For the second part, it is enough to show that $\lambda_1\left(\ps\right)>0$ as $T\to\infty$. By Lemma \ref{emme}\ref{emme3} and Remark \ref{inverseV}
$$
\lambda_1\left(\ps\right)=
\lambda_1\left(\frac{\Gi'\ff'\ff\Gi}{T}+\sv\right)
\geq \inf_i\lambda_1\left(\sv\right)>0.
$$
\item 
%
The proof is similar that of Proposition \ref{lemma160117aT2}, and hence some details are omitted. Similarly to Equation (\ref{eq:28febsnow2}), let $\breve{\theta}_{its}=\tr\left(\bm{\breve{\mathscr{Q}}}_i\E\left(\vvv_{it}\vvv_{is}'\right)\right)$
where  
\be\label{eq:qbreve}
\bm{\breve{\mathscr{Q}}}_i:=\psinv \Gi\left(\frac{\ff'\ff}{T}\right)\b\b'\left(\frac{\ff'\ff}{T}\right)\Gi'\psinv ,
\ee
with $k,\ell-$ entry equal to $\breve{\mathcal{q}}_{ik\ell}$.
Because $\sup_{i,k,\ell}|\breve{q}_{ik\ell}|=O_p(1)$,  $\sup_{it}\sum_{s=1}^T|\breve{\theta}_{its}|=O_p(1)$, and hence $\lambda_T(\BT)=O_p(1)$. 
%
\item From Lemma \ref{emme}, part \ref{emme1} $\lambda_{T-k}(\SSI)\leq \lambda_T(\cni)\lambda_T(\md)$. Part \ref{emme3} of the same lemma implies that $\lambda_T(\cni)\leq \lambda_1^{-1}(\Hn)<\infty$, where the last inequality follows from Remark \ref{unif lower bound}.
\item To simplify the exposition, we assume  that $K\geq M$ (the case $K<M$ follows similarly).  Because
\be\label{eq:eig01}
\lambda_i\left(T^{-1}\Gi\psinv\Gi'\ff'\ff\right), \qquad i=1,\dots, M,
\ee
 are equal to $\lambda_i\left(T^{-1}\bm{\Psi_i}^{-\nicefrac{1}{2}}\Gi'\ff'\ff\Gi\bm{\Psi_i}^{-\nicefrac{1}{2}}\right)$, $i=K-M+1,\dots,K$, and hence the eigenvalues in (\ref{eq:eig01}) are non-negative. Moreover
 $$
\lambda_K\left(T^{-1}\bm{\Psi_i}^{-\nicefrac{1}{2}}\Gi'\ff'\ff\Gi\bm{\Psi}^{-\nicefrac{1}{2}}\right)<\lambda_K\left(T^{-1}\bm{\Psi_i}^{-\nicefrac{1}{2}}\Gi'\ff'\ff\Gi\bm{\Psi}^{-\nicefrac{1}{2}}+
\bm{\Psi}^{-\nicefrac{1}{2}}
\SO_{\v'\v}
\bm{\Psi}^{-\nicefrac{1}{2}}
\right)=1.
 $$
It follows that the eigenvalues in (\ref{eq:eig01}) are non-negative and smaller than one, proving the theorem.
\end{enumerate}
\end{proof}

\begin{proof}{\textbf{Lemma~\ref{cor}}}
In view of Lemma\ref{boundinverse}, Part (i) follows from Lemma \ref{PZ}(a).
The proofs of Part(ii) is similar to \ref{lemma160117a}(i), whereas that of  Part(iii) and (iv) are similar to \ref{lemma160117a}(ii), and hence omitted.
Finally, for Part (v) we have
\berr
\E\|\vv'\SSI\ei\|^2&=&\tr\left(\db\SSI\db'\E(\epi\epi')\db\SSI\db'\E(\v\v')\right)\\
&\leq&\lambda_T(\Hi)\lambda_T\left(\E\v\v'\right)\lambda_T(\SSI)\tr(\SSI)=O(T).
\eerr
\end{proof}
\begin{proof}{\textbf{Lemma~\ref{extra1}}}
For part \ref{extra1c}, by Lemma \ref{emme}, Assumptions \ref{independence} and \ref{ass factors} and Remark \ref{unif lower bound}, we have 
$$
\E\lb\ei'\xx\rb^2\leq
\left[\tr\left(\E\left(\x'\md\x\right)\E\left(\epi\epi'\right)\right)\right]
\leq
\tr\left(\E\x'\x\right)\lambda_T\left(\Hi\right)=O(T).
$$
The proofs of parts \ref{extra1a} and \ref{extra1c} use similar arguments, and hence are omitted. For the latter part,  $\lb\x'\md\f\rb\leq \lb\Gi\rb\lb\f'\md\f\rb+
\lb\v'\md\f\rb=O_p(T)+O_p(\sqrt{T})$ in view of Assumption \ref{ass factors} and noting that $\E\lb\v'\md\f\rb^2\leq \E\|\f\|^2\lambda_T(\E\v\v')=O_p(T)$.
\end{proof}
\begin{proof}{\textbf{Lemma~\ref{expansion}}}
The equality in displays (\ref{eq:exp1})-(\ref{eq:exp2}) follows from Lemma~\ref{approx inv}. Concerning the bound in (\ref{eq:exp3}), by  triangular inequality
\berr
\lb\frac{\xx'\xx}{T}-\bm{\Psi_i}\rb&\leq&\lb\frac{\v'\v}{T}-\sv\rb+\lb\frac{\v'\pd\v}{T}\rb+\lb\frac{\Gi'\ff'\vv}{T}\rb+
\lb\frac{\vv'\ff\Gi}{T}\rb .
\eerr
For the first term on the RHS of the inequality, Assumption \ref{ass V} 
imply that 
\berr
&&\E\lb\frac{\v'\v}{T}-\sv\rb^2  =  
\frac{1}{T^2}\sum_{k,h=1}^K\sum_{t,s=1}^T\cum(v_{ikt}v_{iht},v_{iks}v_{ihs})\nonumber\\
&=& \frac{1}{T^2}\sum_{k,h}\sum_{t,s} \left[\cum(v_{ikt},v_{iks}) \cum(v_{iht},v_{ihs})+ \cum(v_{ikt},v_{ihs}) \cum(v_{ikt},v_{ihs})\right]\\
&&+\frac{1}{T^2}\sum_{k,h}\sum_{t,s}\cum(v_{ikt},v_{iht},v_{iks},v_{ihs})=O\left(\frac{1}{T}\right).
\eerr
By Lemma \ref{emme}\ref{emme1}
$$
\E\lb\v'\pd\rb^2=\frac{}{}\E\tr(\pd\E\v\v')\leq \tr(\pd)\lambda_{T}\left(\E\v\v'\right)=O(1),
$$
implying that $\lb\frac{\v'\pd\v}{T}\rb\leq \frac{1}{T}\lb\v'\pd\rb^2=O_p\left(\frac{1}{T}\right) $. Similarly, for the last two terms we have
$$
\E\lb\frac{\Gi'\ff'\vv}{T}\rb^2\leq\frac{1}{T}\lb\Gi\rb^2\tr\left(\frac{\E\f'\f}{T}\right)
\lambda_T\left(\E\v\v'\right)\lambda_T(\md)=O\left(\frac{1}{T}\right),
$$
concluding the proof.
\end{proof}

\subsection{Lemmata II}\label{lemmata2}
Under the Assumption of Theorem~\ref{Theorem_GLS}, the following results holds:
\begin{lemma}\label{mmtt0} 
\begin{enumerate}[label=(\roman*)]
\item[]
\item $
\lb \frac{1}{\sqrt{T} }\xxj'\left(\shi-\SSI \right)\uuj\rb =o_p(1).
$
\item
$
\lb \frac{1}{T }\xxj'\left(\shi-\SSI \right)\xxj\rb =o_p(1).
$
\end{enumerate}

\end{lemma}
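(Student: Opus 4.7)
My plan is to exploit the matrix identity from Lemma \ref{approx inv}(i), writing
\[
\shi - \SSI \;=\; -\,\SSI(\sh-\SS)\SSI \;+\; \SSI(\sh-\SS)\,\shi\,(\sh-\SS)\SSI,
\]
which reduces each statement to controlling a linear correction and a quadratic remainder. For (i) I would show that the linear term $T^{-1/2}\xxj'\SSI(\sh-\SS)\SSI\uuj$ is $o_p(1)$ and that the quadratic remainder is likewise $o_p(1)$; part (ii) proceeds identically after scaling by $T^{-1}$. The scaling targets advertised in Section~\ref{LEMMI} are $O_p(T/\sqrt{N})$ for the linear piece and $O_p(T^2/N)$ for the quadratic one, so dividing by $\sqrt{T}$ and using $T^3/N^2\to 0$ gives the desired convergence to zero; dividing by $T$ in (ii) is even more forgiving.

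For the linear term I would expand $\xxj=\ff\Gj+\vvj$ and $\uuj=\ff\bj+\ejj$ and break $\xxj'\SSI(\sh-\SS)\SSI\uuj$ into nine sub-terms. Every occurrence of $\SSI\ff$ carries the fast rate $O_p(T^{-1/2})$ in Frobenius norm by Lemma~\ref{PZ} (applied through Lemma~\ref{cor}\ref{corD1}), which together with the fact that $\SS$ satisfies the hypotheses of Lemma~\ref{boundinverse} means the factor-loaded pieces are negligible. The hard sub-term is the pure idiosyncratic one, $\vvj'\SSI(\sh-\SS)\SSI\ejj$, which requires writing $\sh-\SS=N^{-1}\sum_i\huu\huu' -\SS$ and splitting off its mean under the sigma-algebra $\sa(\bm{Z})$: the centered part contributes an average over $N$ independent (up to weak cross-correlation) residual products and is controlled by Chebyshev with fourth/eighth-order cumulant bounds from the diagram formula (Assumptions~\ref{ass eps},\ref{ass V}), producing the $1/\sqrt{N}$ gain; the residual bias term captures the fact that $\huu=\mxx\uu$ rather than $\uu$, and its contribution is handled using Lemma~\ref{expansion} to expand $(\xxj'\xxj/T)^{-1}$ around $\psinv$ so that the cross products with $\vvj'\SSI$ or $\ff'\SSI$ reduce to expressions already shown to be of order $O_p(1)$ or better by Lemma~\ref{cor}.

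For the quadratic remainder I would use the crude bound
\[
\bigl\|\xxj'\SSI(\sh-\SS)\shi(\sh-\SS)\SSI\,\bm{w}\bigr\|
\;\le\;
\bigl\|\xxj'\SSI\bigr\|\;\bigl\|\sh-\SS\bigr\|_{sp}^{2}\;\bigl\|\shi\bigr\|_{sp}\;\bigl\|\SSI\bm{w}\bigr\|,
\]
with $\bm{w}\in\{\uuj,\xxj\}$. The factors $\|\xxj'\SSI\|$ and $\|\SSI\bm{w}\|$ are controlled through Lemma~\ref{cor} and the bound $\lambda_{T-S}(\SSI)=O_p(1)$ from Lemma~\ref{boundinverse}\ref{boundinv3}, while $\|\shi\|_{sp}=O_p(1)$ follows from Assumption~\ref{ass X} together with the same factor-structure argument used for $\SS$. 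The crux is $\|\sh-\SS\|_{sp}$: since both matrices share the same factor loading on $\ff\an\ff'$ (up to the $\han-\an$ correction and the OLS projection error in $\huu$), a direct computation of $\mathbb{E}\|\sh-\SS\|^2$ via the diagram formula on sixth-order cumulants yields $O_p(T/\sqrt{N})$, whence the quadratic piece is at worst $O_p(T^2/N)$, as required.

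The main obstacle, as flagged in Section~\ref{LEMMI}, is precisely that $\sh-\SS$ is a $T\times T$ matrix whose dimension grows with the sample size, so generic random-matrix spectral norm bounds at the optimal $\sqrt{T/N}$ rate would only give $O_p(T/\sqrt N)$ after multiplying by $\lb\xxj\rb$, which is insufficient for (i). The workable route is therefore not spectral-norm-only: one must take advantage of the factor structure in $\sh$ and compute the dominant cross products explicitly, bounding the weighted (not operator-norm) quantities $\xxj'\SSI(\sh-\SS)\SSI\bm{w}$ through high-order cumulant expansions, using the diagram formula systematically to summation over $t,s$ pairs in $\sh-\SS$ while keeping the $\SSI$'s absorbed into the $\xxj$- and $\bm{w}$-sides via Lemmata~\ref{PZ},~\ref{cor}, and~\ref{expansion}. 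This is what forces the condition $T^3/N^2\to0$ rather than the milder $T/N\to0$.
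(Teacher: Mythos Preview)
Your treatment of the linear correction is essentially what the paper does: decompose $\xxj=\ff\Gj+\vvj$ and $\uuj=\ff\bj+\ej$ (this yields four pieces, not nine), dispatch every piece containing a factor $\ff$ via the asymptotic orthogonality $\lb\SSI\ff\rb=O_p(T^{-1/2})$ from Lemma~\ref{cor}, and handle the hard pure-idiosyncratic piece $\vvj'\SSI(\sh-\SS)\SSI\ej$ by cumulant expansions (this is precisely Lemma~\ref{mmtt1}).

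The gap is in your quadratic remainder. Your crude bound
\[
\bigl\lb\xxj'\SSI(\sh-\SS)\shi(\sh-\SS)\SSI\bm{w}\bigr\rb
\le
\lb\xxj'\SSI\rb\;\lb\sh-\SS\rb_{sp}^{2}\;\lb\shi\rb_{sp}\;\lb\SSI\bm{w}\rb
\]
fails for two reasons. First, $\lb\sh-\SS\rb_{sp}$ cannot be $O_p(T/\sqrt{N})$ or $O_p(\sqrt{T/N})$: the factor block alone satisfies $\lb\ff(\han-\an)\ff'\rb_{sp}\le\lb\ff\rb_{sp}^2\,\lb\han-\an\rb=O_p(T)\cdot O_p(T^{-1/2})=O_p(\sqrt{T})$ with \emph{no} $N$-gain, so the spectral norm claim is false as stated. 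Second, even granting an optimistic $\lb\sh-\SS\rb_{sp}=O_p(T/\sqrt{N})$, pulling out $\lb\xxj'\SSI\rb=O_p(\sqrt{T})$ and $\lb\SSI\uuj\rb=O_p(\sqrt{T})$ separately gives $O_p(T^3/N)$, which after dividing by $\sqrt{T}$ requires $T^5/N^2\to 0$, not $T^3/N^2\to 0$. The point you make in your final paragraph---that one must bound weighted, not operator-norm, quantities---applies here just as much as to the linear term.

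The paper's fix is to factor differently:
\[
\bigl\lb\xxj'\SSI(\sh-\SS)\shi(\sh-\SS)\SSI\uuj\bigr\rb
\le
\lb\xxj'\SSI(\sh-\SS)\rb\;\lb\shi\rb_{sp}\;\lb(\sh-\SS)\SSI\uuj\rb,
\]
keeping $(\sh-\SS)$ attached to $\SSI\xxj$ and $\SSI\uuj$. Each outer factor is then a \emph{weighted} quantity in which the $\ff(\han-\an)\ff'$ piece is annihilated by $\SSI\ff$, and the idiosyncratic remainder is controlled by separate cumulant lemmas (Lemmas~\ref{mmtt2}, \ref{mmtt3}, \ref{mmttF}) to be $\max\{O_p(1),\,O_p(T/\sqrt{N})\}$. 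The product is therefore $\max\{O_p(1),\,O_p(T^2/N)\}$, and after dividing by $\sqrt{T}$ one obtains $o_p(1)$ under $T^3/N^2\to 0$. In short, the same machinery you correctly deploy for the linear term (Lemmas~\ref{mmtt2}--\ref{mmttF}, not just Lemma~\ref{mmtt1}) is what rescues the quadratic term; the spectral-norm shortcut does not.
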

\begin{lemma}\label{mmttt}
\begin{enumerate}[label=(\roman*)]
\item[]
\item
$
\lb \frac{1}{\sqrt{T} }\xxj'\left(\SSI-\MAI \right)\uuj\rb =o_p(1).
$
\item 
$
\lb \frac{1}{T }\xxj'\left(\SSI-\MAI \right)\xxj\rb =o_p(1).
$
\end{enumerate}
\end{lemma}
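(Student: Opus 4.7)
The plan is to build on the framework already used for Lemma H.1 but now with the non-random (or only $\z$-measurable) weighting matrix $\MA$ in place of the sample weighting $\sh$. I will handle (i) first and indicate the minor modification for (ii).

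The starting point is the algebraic identity of Lemma A.2(i):
\[
\SSI-\MAI = -\MAI(\SS-\MA)\MAI + \MAI(\SS-\MA)\SSI(\SS-\MA)\MAI .
\]
The second--order remainder is bounded in spectral norm by $\|\MAI\|_{sp}^{2}\|\SSI\|_{sp}\|\SS-\MA\|_{sp}^{2}$. From the discussion of Appendix~\ref{auxiliary} and Lemma~\ref{boundinverse}, $\|\MAI\|_{sp},\|\SSI\|_{sp}=O_p(1)$, while the elementwise expansion $\breve{\mathscr{Q}}_i-\mathscr{Q}_i=O_p(T^{-\nicefrac12})$ (compare the proofs of Propositions~\ref{lemma160117aT2} and~\ref{mamma}) together with the absolute summability of $\E v_{itk}v_{isl}$ from Assumption~\ref{ass V} give $\|\bcn-\cn\|_{row}+\|\bcn-\cn\|_{col}=O_p(T^{-\nicefrac12})$, hence $\|\SS-\MA\|_{sp}=O_p(T^{-\nicefrac12})$. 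So the remainder contributes $\|\xxj\|\cdot O_p(T^{-1})\cdot\|\uuj\|=O_p(1)$, which after dividing by $\sqrt T$ is $o_p(1)$. In the same sum one shows $\|\SS-\MA\|_F=O_p(1)$, using that $(\bcn-\cn)_{ts}$ equals $\sum_{kl}\bar\delta_{kl}\,r_{kl}(t-s)$ with $\bar\delta_{kl}=O_p(T^{-\nicefrac12})$ and $r_{kl}$ absolutely-summable in the lag.

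For the first--order term $\xxj'\MAI(\SS-\MA)\MAI\uuj$ the idea is to expand $\xxj=\ff\Gi_j+\vv_j$ and $\uuj=\ff\bj+\ei_j$, producing four contributions. In three of them at least one factor $\MAI\ff$ or $\ff'\MAI$ appears: using the asymptotic orthogonality $\|\MAI\ff\|_{sp}=O_p(T^{-\nicefrac12})$ (Lemma~\ref{PZ} applied to $\MA$, whose hypotheses were verified in Proposition~\ref{lemma160117aT2}), each such term is $o_p(\sqrt T)$ by crude sub\-multiplicativity. The only genuinely delicate term is
\[
\vv_j' \MAI(\SS-\MA)\MAI\ei_j .
\]
Here the key observation is that $\MA$, $\SS$ and hence $M:=\MAI(\SS-\MA)\MAI$ are $\sa(\z)$--measurable, while by Assumption~\ref{independence} the blocks $\vv_j$ and $\ei_j$ are independent of $\z$ and mutually independent with zero mean. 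Therefore $\E[\vv_j' M\ei_j\mid\sa(\z)]=0$ and
\[
\E\bigl[|\vv_j' M\ei_j|^{2}\bigm|\sa(\z)\bigr]
= \tr\!\bigl(M\,\E[\ei_j\ei_j'\mid\z]\,M'\,\E[\vv_j\vv_j'\mid\z]\bigr)
\le \|M\|_F^{2}\,\lambda_T(\Hi)\,\lambda_T(\E\v\v'),
\]
with $\|M\|_F\le\|\MAI\|_{sp}^{2}\|\SS-\MA\|_F=O_p(1)$. Thus $\vv_j'M\ei_j=O_p(1)=o_p(\sqrt T)$, which closes part~(i).

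For part~(ii) the decomposition and the treatment of the three ``factor'' terms are identical; only the pure-idiosyncratic piece $\vv_j' M\vv_j$ changes character, since it is a quadratic form in the same $\vv_j$. I would handle it by computing the first two conditional moments: first, by the PSD-trace inequality,
\[
\bigl|\E[\vv_j'M\vv_j\mid\z]\bigr| = \bigl|\tr[M\,\E(\vv_j\vv_j'\mid\z)]\bigr|
\le \|M\|_{sp}\,\tr\E(\vv_j\vv_j'\mid\z)=O_p(T^{-\nicefrac12})\cdot O_p(T)=O_p(\sqrt T),
\]
and second, using the Isserlis--type identity plus the fourth-order cumulant bound of Remark~\ref{remcum}, $\var[\vv_j'M\vv_j\mid\z]\le \kappa\|M\|_{sp}^{2}\|\E(\vv_j\vv_j'\mid\z)\|_F^{2}=O_p(T^{-1})\cdot O(T)=O_p(1)$. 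Hence $\vv_j'M\vv_j=O_p(\sqrt T)$, and after dividing by $T$ the claim follows. The second--order remainder is handled exactly as in (i). The main obstacle is the bookkeeping for the pure-idiosyncratic term: one needs both a sharp Frobenius bound on $\SS-\MA$ (from the stationary-in-lag structure of $(\bcn-\cn)_{ts}$) and the conditional independence structure granted by Assumption~\ref{independence}; the other terms reduce to straightforward applications of the asymptotic orthogonality already exploited in Section~\ref{thuno} and Section~\ref{proofT2}.
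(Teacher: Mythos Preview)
Your overall strategy is sound and in one respect cleaner than the paper's, but there is a genuine gap in the rate you assign to $\SS-\MA$.

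\medskip
\textbf{The gap.} You claim $\breve{\mathscr{Q}}_i-\mathscr{Q}_i=O_p(T^{-1/2})$ and hence $\|\SS-\MA\|_{sp}=O_p(T^{-1/2})$, citing Propositions~\ref{lemma160117aT2} and~\ref{mamma}. Those propositions do not supply such a rate. Recall that $\breve{\mathscr{Q}}_i-\mathscr{Q}_i$ depends on $\psinv-\sxi$ and on $\ff'\ff/T-\bm{\Sigma}_{\ff'\ff}$, and Assumption~\ref{ass factors} only gives $\z'\z/T\xrightarrow{p}\bm{\Sigma}_{\z'\z}$, with no rate. Under the paper's assumptions you are entitled to $\breve{\mathscr{Q}}_i-\mathscr{Q}_i=o_p(1)$ only, hence $\|\SS-\MA\|_{sp}=o_p(1)$ and $\|\SS-\MA\|_F=o_p(\sqrt T)$. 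With these weaker rates your crude bound on the second--order remainder in part~(i) collapses to $\|\xxj\|\cdot o_p(1)\cdot\|\uuj\|=o_p(T)$, which after dividing by $\sqrt T$ is only $o_p(\sqrt T)$, not $o_p(1)$.

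\medskip
\textbf{The fix and comparison with the paper.} The paper avoids this issue entirely by using the \emph{one--term} identity $\SSI-\MAI=-\SSI(\SS-\MA)\MAI$ (cf.\ equation~(\ref{eq:inv1})), so there is no remainder to control; this is why Lemmas~\ref{mmtt4} and~\ref{mmtt5} are stated for $\MAI(\SS-\MA)\SSI$ and the proof of Lemma~\ref{mmtt4} uses only $\bar{\mathcal q}_{ik\ell}\xrightarrow{p}0$, never a rate. If you switch to the one--term identity your conditional--moment argument for the delicate piece $\vvj'M\ej$ (respectively $\vvj'M\vvj$) goes through unchanged: $M=\SSI(\SS-\MA)\MAI$ is still $\sa(\z)$--measurable, and with $\|M\|_F\le\|\SSI\|_{sp}\|\MAI\|_{sp}\|\SS-\MA\|_F=o_p(\sqrt T)$ you obtain $\E[\|\vvj'M\ej\|^2\mid\z]=o_p(T)$, i.e.\ $\vvj'M\ej=o_p(\sqrt T)$, which is exactly what is needed. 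The factor--containing pieces are handled as you do, using $\|\MAI\ff\|,\|\SSI\ff\|=O_p(T^{-1/2})$.

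\medskip
In short: your conditional--moment treatment of the idiosyncratic term (exploiting the $\sa(\z)$--measurability of both $\SS$ and $\MA$) is a legitimate and more transparent alternative to the paper's elementwise cumulant calculations in the proof of Lemma~\ref{mmtt4}; but to make the argument rigorous under the stated assumptions you must drop the two--term expansion in favour of the exact one--term identity, as the paper does.
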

%
%
\subsection{Proofs Lemmata II}
\begin{proof}{\textbf{Lemma \ref{mmtt0}}}
The proof rely on the results stated in the next section, Lemmata III.
\begin{Partn}[font=\emph]
\item  By Lemma~\ref{approx inv}(i) and the triangular inequality
\ber
&&\lb\xxj'\left(\shi-\SSI\right)\uuj\rb\nonumber\\
&\leq & \lb \xxj'\SSI(\sh-\SS)\SSI\uuj\rb\label{eq:EMME}\\
&&+
\lb \xxj'\SSI(\sh-\SS)\shi(\sh-\SS)\SSI\uuj\rb . \label{eq:EMMEa}
\eer
For the term  (\ref{eq:EMME}) we have
\ber
&&\lb \xxj'\SSI(\sh-\SS)\SSI\uuj\rb\nonumber\\
&\leq & \lb \vvj'{\SSI}(\sh-\SS)\SSI\ej\rb\label{eq:EMS2}\\
&&+\lb\vvj'\SSI(\sh-\SS)\SSI\ff\bj\rb\label{eq:EMS3}\\
&&+
 \lb \Gj'\ff'\SSI(\sh-\SS)\SSI\ej\rb\label{eq:EMS4}\\
&&+
 \lb \Gj'\ff'\SSI(\sh-\SS)\SSI\ff\bj\rb  .\label{eq:EMS5}
\eer
The term (\ref{eq:EMS2}) is bounded in Lemma \ref{mmtt1}.  Lemma \ref{mmtt2}, Lemma \ref{cor}(i), and Assumption \ref{ass loading} entail that  the term (\ref{eq:EMS3}) can be bound as 
\berr
&&\lb\vvj'\SSI(\sh-\SS)\SSI\ff\bj\rb
\leq 
\lb\vvj'\SSI(\sh-\SS)\rb\lb\SSI\ff\rb\lb\bj\rb\\
&=&\max\left\{O_p\left(1\right),O_p\left(\frac{T}{\sqrt{N}}\right)\right\}O_p\left(\frac{1}{\sqrt{T}}\right)O(1)=o_p(1).
\eerr
Similarly, but using Lemma \ref{mmtt3} instead of Lemma \ref{mmtt2}, we find that the term (\ref{eq:EMS4}) is $o_p(1)$.

Next, for the term (\ref{eq:EMS5}), Assumption \ref{ass loading}, Lemma \ref{cor}(i) and Lemma \ref{mmttF} imply
\berr
&&\lb\Gj'\ff\SSI(\sh-\SS)\SSI\ff\bj\rb
\leq 
\lb\Gj\rb \lb \ff\SSI\rb 
\lb(\sh-\SS)\SSI\ff\rb\lb\bj\rb=o_p(1).
\eerr
Hence, we conclude that the term (\ref{eq:EMME}) is $o_p(1)$.

As noted in Remark \ref{remark ass X}, Assumption \ref{ass X} implies that $\|\shi\|<\infty$. Hence, the term (\ref{eq:EMME}) is bounded by
$$
\lb \xxj'\SSI(\sh-\SS)\rb \lb\shi\rb \lb(\sh-\SS)\SSI\uuj\rb. 
$$
Taking similar steps as for the term in (\ref{eq:EMME}), Lemmas \ref{mmtt2}, \ref{mmtt3} and \ref{mmttF} entail  that (\ref{eq:EMMEa}) is of order $\max\left\{O_p(1), O_p(T/\sqrt{N}),O_p(T^2/N)\right\}$, concluding the proof.
\item The proof follows closely that of \emph{Part (i)}, and hence is omitted.
\end{Partn}
\end{proof}
\begin{proof}{\textbf{Lemma \ref{mmttt}}}
Similar to the proof of Lemma~\ref{mmtt0}
\end{proof}
%
\subsection{Lemmata III}\label{lemmata3}
%
%
\begin{lemma}\label{mmtt1}
\ber
\lb \vvj'\SSI(\sh-\SS)\SSI\ej\rb=\max\left\{O_p\left(\frac{1}{\sqrt{T}}\right),O_p\left(\frac{T}{\sqrt{N}}\right)\right\}.
\eer
\end{lemma}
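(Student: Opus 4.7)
The goal is to bound $\lb \vvj'\SSI(\sh-\SS)\SSI\ej\rb$. My first step is to exploit the parallel factor-plus-idiosyncratic decompositions of $\sh$ and $\SS$ constructed in Appendix~\ref{auxiliary}: writing $\sh = \ff\han\ff' + \bm{\hat{\sc}_N}$ and $\SS = \ff A_N \ff' + \bm{\breve{\sc}_{N}}$, I split
\[
\sh - \SS \;=\; \ff(\han - \an)\ff' \;+\; (\bm{\hat{\sc}_N} - \bm{\breve{\sc}_{N}}),
\]
so that sandwiching by $\vvj'\SSI$ on the left and $\SSI\ej$ on the right yields two separate quantities to bound.

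For the factor part I would sandwich first, obtaining $\bigl(\vvj'\SSI\ff\bigr)(\han-\an)\bigl(\ff'\SSI\ej\bigr)$, and then use the ``fast orthogonality'' rates $\lb\vvj'\SSI\ff\rb = O_p(T^{-1/2})$ and $\lb\ff'\SSI\ej\rb = O_p(T^{-1/2})$ that follow from Lemma~\ref{cor}~\ref{corD3}--\ref{corD4}. A direct comparison of $\hat A_i = (\I-\Gi\xp\ff)\b\b'(\I-\Gi\xp\ff)' + (\Gi\xp)\ei\ei'(\Gi\xp)'$ with the $i$-th summand of $\an$, using $\xp = \psinv + O_p(T^{-1/2})$ from Lemma~\ref{expansion} together with Assumptions~\ref{ass factors}--\ref{ass loading}, then gives $\lb\han-\an\rb = O_p(1)$. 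The factor contribution is therefore $O_p(T^{-1})$, safely inside $O_p(T^{-1/2})$.

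For the idiosyncratic part I would further split $\bm{\hat{\sc}_N}-\bm{\breve{\sc}_{N}}$ using the thirteen sub-terms of $\bm{\hat{\sc}_{3i}}$ in Appendix~\ref{auxiliary}, together with $\bm{\hat{\sc}_{1i}} - \db'\Hi\db$ (the ``$\ei\ei'$ vs.\ $\Hi$'' piece) and $\bm{\hat{\sc}_{2i}}-\db'\BT\db$ (the ``$\v\xp\ff\b\b'\ff'\xp'\v'$ vs.\ $\BT$'' piece, which is the main non-negligible bias). Each of these reduces, after sandwiching, to a quadratic form of the type $\vvj'\SSI(\cdot)\SSI\ej$ whose second moment I compute via the diagram formula, relying on the uniform cumulant summability of Remark~\ref{remcum} and Assumption~\ref{ass V}. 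Whenever a free factor $\ff$ appears in the middle, Lemma~\ref{PZ} (through Remark~\ref{remarkPZ} and Lemma~\ref{cor}~\ref{corD1}) provides an extra $T^{-1/2}$; whenever a centred sample mean $N^{-1}\sum_i(\cdot)$ appears (as in $N^{-1}\sum_i \ei\ei'-\Hi$), the CLT across units provides an extra $N^{-1/2}$. The remaining scaling, controlled by $\lb\vvj'\SSI\ej\rb = O_p(\sqrt{T})$ from Lemma~\ref{cor}~\ref{corD5}, yields the $O_p(T/\sqrt{N})$ threshold.

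\textbf{Main obstacle.} The principal difficulty lies in the bookkeeping for $\bm{\hat{\sc}_{3i}}$: each of its thirteen summands mixes $\ff$, $\v$, $\ei$, $\b$ and $\Gi\xp$ in a different way, and to close the cumulant estimates at the claimed rate every occurrence of $\xp$ must be either absorbed into a $\psinv$ via Lemma~\ref{expansion}, or paired with a $\SSI\ff$ factor via Lemma~\ref{PZ}. Keeping the $O_p(T)$ growth carried by quadratic forms in $\vvj$ and $\ej$ (Lemma~\ref{cor}~\ref{corD5}--\ref{corD6}) damped by exactly one CLT-in-$N$ factor, and not by fewer, is where the threshold $T/\sqrt{N}$ is set; achieving this for the cross terms $\bm{\hat{\sc}_{3i,7}}$--$\bm{\hat{\sc}_{3i,9}}$ and $\bm{\hat{\sc}_{3i,12}}$, which carry a $\ff\Gi\xp$ on one side only, is the most delicate case.
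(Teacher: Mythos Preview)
Your plan coincides with the paper's proof: the same four-term split $\ff(\han-\an)\ff'$ plus the three idiosyncratic pieces $\bm{\hat{\sc}_{1N}}-\db'\Hn\db$, $\bm{\hat{\sc}_{2N}}-\db'\BTN\db$, $\bm{\hat{\sc}_{3N}}+\bm{\hat{\sc}_{3N}}'$, the same appeal to Lemma~\ref{cor} for every $\SSI\ff$ factor, to Lemma~\ref{expansion} for every occurrence of $\xp$, and to second-moment/diagram-formula computations (Remark~\ref{remcum}, Assumption~\ref{ass V}) for the centred cross-$i$ averages. Two small refinements relative to your sketch: the paper in fact obtains $\lb\han-\an\rb=O_p(T^{-1/2})$ rather than $O_p(1)$, and among the thirteen $h$-terms the one requiring the deepest expansion (order $J=3$ in Lemma~\ref{expansion}) is $h=5$, whereas the one-sided-$\ff$ terms you flag ($h=7$--$9$, $12$) are dispatched directly by a single $\lb\vvj'\SSI\ff\rb$ or $\lb\ff'\SSI\ej\rb$ factor.
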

%
%

%
%
%
%
\begin{lemma}\label{mmtt2}
\ber
\lb (\sh-\SS)\SSI\vvj \rb=\max\left\{O_p(1),O_p\left(\frac{T}{\sqrt{N}}\right)\right\}.
\eer
\end{lemma}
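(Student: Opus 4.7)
The plan is to split $\sh-\SS$ into a factor piece and an idiosyncratic piece, use the asymptotic orthogonality $\lb\ff'\SSI\vvj\rb=O_p(T^{-1/2})$ (Lemma~\ref{cor}(\ref{corD3})) to kill the factor piece, and use a second-moment bound across $i$ (exploiting Remark~\ref{remcum} and Assumptions~\ref{ass V}--\ref{independence}) to get the $O_p(T/\sqrt{N})$ rate on the idiosyncratic piece. Concretely, from Appendix~\ref{auxiliary} one writes
\[
(\sh-\SS)\SSI\vvj \;=\; \ff(\han-\an)\ff'\SSI\vvj \;+\; (\bm{\hat{\sc}}_N-\bm{\breve{\sc}}_N)\SSI\vvj.
\]

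For the factor piece I would bound $\lb\ff(\han-\an)\ff'\SSI\vvj\rb\leq\lb\ff\rb_{sp}\,\lb\han-\an\rb_{sp}\,\lb\ff'\SSI\vvj\rb$. Assumption~\ref{ass factors} gives $\lb\ff\rb_{sp}=O_p(T^{1/2})$, Lemma~\ref{cor}(\ref{corD3}) gives $\lb\ff'\SSI\vvj\rb=O_p(T^{-1/2})$, and $\lb\han-\an\rb_{sp}=O_p(N^{-1/2}+T^{-1/2})$ follows by replacing $\xp\ff$ with its pseudo-population analogue $\psinv\Gi'\ff'\ff/T$ (the error of this replacement is $O_p(T^{-1/2})$ by Lemma~\ref{expansion} and Proposition~\ref{propo1}(iii)), dropping the $N^{-1}\sum_i\Gi\xp\ei\ei'\xp'\Gi'$ term (which is $O_p(T^{-1})$ by Lemma~\ref{cor}(iv) type arguments), and applying a standard $N^{-1/2}$ central-limit estimate to the remaining $i$-average of $\bm{b}_i\bm{b}_i'$-quadratic forms. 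Combining these rates, the factor piece is $o_p(1)$.

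For the idiosyncratic piece I would further decompose $\bm{\hat{\sc}}_N$ into its constituents $\hat{\sc}_{1i},\hat{\sc}_{2i},\hat{\sc}_{3i,k}$ and compare each to the corresponding piece of $\bm{\breve{\sc}}_N=\db'\bcn\db$ with $\bcn=N^{-1}\sum_i(\Hi+\BT)$. The dominant contribution is the comparison of $N^{-1}\sum_i\ei\ei'\SSI\vvj$ to $N^{-1}\sum_i\db'\Hi\db\,\SSI\vvj$: this equals $N^{-1}\sum_i\db'\bigl(\epi\epi'-\E(\epi\epi'|\sa(\z))\bigr)\db\,\SSI\vvj$, and taking its squared Frobenius norm and expanding via second- and fourth-order cumulants (Remark~\ref{remcum}, together with independence of $\vvj$ from $\{\epi\}_{i\ne j}$) yields $O_p(T^{2}/N)$, i.e. $O_p(T/\sqrt{N})$ in norm. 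The comparison of $N^{-1}\sum_i\vv\xp\ff\b\b'\ff'\xp'\vv'\,\SSI\vvj$ with $N^{-1}\sum_i\db'\BT\db\,\SSI\vvj$ is treated identically, using Lemma~\ref{expansion} to replace $\xp$ by $\psinv\xx'/T$, Lemma~\ref{cor}(\ref{corD6}) to control $\vvj'\SSI\vv$, and the independence assumptions to get a zero-mean $N^{-1/2}$ CLT across $i$ of summands whose Frobenius norm is $O_p(T)$.

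The remaining 13 cross terms in $\hat{\sc}_{3i}$ each contain either (i) a bare factor of $\ff$ multiplied against $\SSI\vvj$, in which case Lemma~\ref{cor}(\ref{corD3}) together with $\lb\Gi\rb<\infty$ and $\lb\bj\rb<\infty$ drives the contribution down to $O_p(1)$ or better; or (ii) a factor of $\vv\xp$, which is $O_p(T^{-1/2})$ in operator sense by Lemma~\ref{expansion} and the $\xx=\ff\Gi+\vv$ decomposition. In either case the individual contributions are absorbed into the envelope $\max\{O_p(1),O_p(T/\sqrt{N})\}$, and the $O_p(1)$ piece of the envelope accounts for residual pseudo-population bias (e.g.\ from the $\ei\ei'-\E(\ei\ei'|\sa(\z))$ piece being only conditionally centered while $\bcn$ uses the unconditional expectation $\Hi$).

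\textbf{The main obstacle} will be the cross-cumulant bookkeeping for the 13 summands in $\hat{\sc}_{3i}$. Each is a product of four or more random matrices drawn from $\{\ff,\Gi,\b,\vv,\ei,\xp\}$, so obtaining a sharp bound on $\lb N^{-1}\sum_i\hat{\sc}_{3i,k}\SSI\vvj\rb$ requires repeatedly combining the asymptotic-orthogonality estimates of Lemma~\ref{cor}, the OLS expansion of Lemma~\ref{expansion}, and the higher-order cumulant summability of Assumptions~\ref{ass eps}--\ref{ass V} through the diagram formula described in Appendix~\ref{auxiliary1}. Once this bookkeeping is completed, summing the bounds across all pieces yields the claimed envelope $\max\{O_p(1),O_p(T/\sqrt{N})\}$, with the $O_p(T/\sqrt{N})$ coming from the cross-sectional fluctuations of $\ei\ei'$ and $\vv\xp\ff\b\b'\ff'\xp'\vv'$ around their $\bcn$-counterparts.
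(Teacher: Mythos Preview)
Your approach is essentially the paper's: the paper omits the proof of this lemma entirely, stating only that it ``is similar to the proof of Lemma~\ref{mmtt1}, and hence omitted'', and your decomposition into the factor piece $\ff(\han-\an)\ff'\SSI\vvj$ and the idiosyncratic pieces $\hat{\sc}_{1i},\hat{\sc}_{2i},\hat{\sc}_{3i}$ compared against $\bcn$, together with the use of Lemma~\ref{cor}, Lemma~\ref{expansion}, and cumulant bookkeeping, is exactly the template of the Lemma~\ref{mmtt1} argument adapted to the one-sided norm.

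Two small corrections that do not affect your conclusion. First, the claim $\lb\han-\an\rb_{sp}=O_p(N^{-1/2}+T^{-1/2})$ overshoots: since $\an$ is already defined as $N^{-1}\sum_i$ of the \emph{same} (non-random) $\b\b'$-terms with $\xp\ff$ replaced by $\psinv\Gi'\ff'\ff/T$, there is no additional cross-sectional CLT and the paper obtains simply $\lb\han-\an\rb=O_p(T^{-1/2})$ (see the Term~I bound in the proof of Lemma~\ref{mmtt1}); the $\Gi\xp\ei\ei'\xp'\Gi'$ piece is handled via $\xx'\ei=O_p(\sqrt{T})$ (Lemma~\ref{extra1}\ref{extra1b}), not Lemma~\ref{cor}(iv). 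Second, your dichotomy for the thirteen $\hat{\sc}_{3i,k}$ terms is incomplete: e.g.\ $\hat{\sc}_{3i,3}\SSI\vvj=\ff\b\ei'\SSI\vvj$ has the $\ff$ on the outer side (not against $\SSI\vvj$) and contains no $\vv\xp$, so it fits neither (i) nor (ii); it still lands inside the envelope, but via a cross-sectional second-moment bound on $N^{-1}\sum_i\b\ei'\SSI\vvj$ rather than asymptotic orthogonality. Several of the $h=5,7,10$ terms likewise need the full Lemma~\ref{expansion} expansion to higher order, as in the paper's Term~IV treatment, before they fall into the stated envelope.
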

%
%
%
%
%
%
%
\begin{lemma}\label{mmtt3}
\berr
\lb (\sh-\SS)\SSI\ej \rb=\max\left\{O_p(1),O_p\left(\frac{T}{\sqrt{N}}\right)\right\}.
\eerr
\end{lemma}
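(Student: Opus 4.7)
The plan is to reduce the bound to two kinds of contributions: (i) averaged sample deviations that are $O_p(1/\sqrt{N})$, and (ii) deterministic scale factors in $T$ arising from $\ff$ and $\ej$. The controlling asymptotic orthogonality $\lb\SSI\ff\rb=O_p(T^{-\nicefrac{1}{2}})$ from Lemma~\ref{cor}\ref{corD1} is what prevents the na\"ive bound $\lb\sh-\SS\rb_{sp}\cdot\lb\SSI\rb\cdot\lb\ej\rb$ (which would be much too large) from being the operative estimate.

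First I would write $\sh=\ff\han\ff'+\bm{\hat{\sc}}_N$ and $\SS=\ff\an\ff'+\bm{\breve{\sc}}_N$ (with $\bm{\breve{\sc}}_N=\db'\bcn\db$), obtaining the split
\[
(\sh-\SS)\SSI\ej \;=\; \ff(\han-\an)\ff'\SSI\ej \;+\; (\bm{\hat{\sc}}_N-\bm{\breve{\sc}}_N)\SSI\ej .
\]
For the first piece, $\lb\ff'\SSI\ej\rb=O_p(T^{-\nicefrac{1}{2}})$ by Lemma~\ref{cor}\ref{corD4} and $\lb\ff\rb=O_p(\sqrt{T})$ by Assumption~\ref{ass factors}. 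The factor $\han-\an$ is an average over $i$ of quantities involving $\bm{\hat{A}}_i$ and $\an_i:=(\I-\Gi\psinv\Gi'\,\ff'\ff/T)\b\b'(\cdot)'$: its bias $\E\han-\an$ is $O_p(T^{-\nicefrac{1}{2}})$ using Lemma~\ref{expansion} to replace $\xp$ by $\psinv/T$ plus a remainder, while the centered part satisfies $\lb\han-\E\han\rb=O_p(N^{-\nicefrac{1}{2}})$ by a standard variance computation under Assumption~\ref{independence}. Combining these bounds gives the first piece is $O_p(1)\cdot O_p(T^{-\nicefrac{1}{2}})\cdot O_p(\sqrt{T})=O_p(1)$.

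Second, I would expand $\bm{\hat{\sc}}_N-\bm{\breve{\sc}}_N$ using the decomposition $\bm{\hat{\sc}}_N=N^{-1}\sum_i(\bm{\hat{\sc}}_{1i}+\bm{\hat{\sc}}_{2i}+\bm{\hat{\sc}}_{3i})$ from Appendix~\ref{auxiliary} and analyze each piece against $\Hi$ and $\BT$. The piece $N^{-1}\sum_i(\bm{\hat{\sc}}_{1i}-\db'\Hi\db)\SSI\ej$ is handled by substituting $\bm{\hat{\sc}}_{1i}=(\I-\vv\xp)\ei\ei'(\I-\vv\xp)'$, replacing $\xp$ by $\psinv/T$ via Lemma~\ref{expansion}, and splitting the result into (a) $N^{-1}\sum_i(\ei\ei'-\db'\Hi\db)\SSI\ej$ plus cross-terms; the centered average is controlled by the cumulant summability in Remark~\ref{remcum}, contributing $\sqrt{T}/\sqrt{N}$, while $\lb\SSI\ej\rb=O_p(\sqrt{T})$ by Lemma~\ref{boundinverse}\ref{boundinv3}, giving $O_p(T/\sqrt{N})$. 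Each of the sub-terms of $\bm{\hat{\sc}}_{3i}$ listed in Appendix~\ref{auxiliary} contains either a $\vv\xp$ factor (worth $O_p(T^{-\nicefrac{1}{2}})$), a $\ff\b$ factor that reveals $\ff'\SSI\ej=O_p(T^{-\nicefrac{1}{2}})$, or an $\ei$ factor whose correlation with $\ej$ is summable by Assumption~\ref{ass eps} through the coefficients $r_{ij}$, and in every case the resulting bound is $\max\{O_p(1),O_p(T/\sqrt{N})\}$.

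The principal obstacle is the thirteen cross-terms $\bm{\hat{\sc}}_{3i,j}$. The most delicate are those carrying both an $\ei\ei'$ block and a $\ff\b$ block (e.g.\ $\bm{\hat{\sc}}_{3i,10}$, $\bm{\hat{\sc}}_{3i,11}$): here one must use the asymptotic orthogonality on one side while controlling a sample average of $\ei\ei'\SSI\ej$ on the other, and the $N^{-\nicefrac{1}{2}}$ gain is obtained only after exploiting the cross-sectional summability in $\bm{R}$ via Remark~\ref{remcum} together with Lemma~\ref{matrix} applied to $\SSI$. Accumulating all contributions (none worse than $O_p(T/\sqrt{N})$ and the leading baseline $O_p(1)$) yields the stated bound.
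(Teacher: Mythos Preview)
Your proposal is correct and follows essentially the same route as the paper: the paper's own proof of Lemma~\ref{mmtt3} simply reads ``similar to the proof of Lemma~\ref{mmtt1} and hence omitted,'' and your sketch reproduces precisely that template---the four-term split $\ff(\han-\an)\ff'$, $\bm{\hat{\sc}}_{1N}-\db'\Hn\db$, $\bm{\hat{\sc}}_{2N}-\db'\BTN\db$, $\bm{\hat{\sc}}_{3N}+\bm{\hat{\sc}}_{3N}'$, controlled respectively via Lemma~\ref{cor}\ref{corD4}, cumulant/diagram bounds, Lemma~\ref{expansion}, and term-by-term treatment of the thirteen cross pieces. One small simplification relative to your write-up: in the paper the bound $\lb\han-\an\rb=O_p(T^{-\nicefrac{1}{2}})$ is obtained directly (Bound Term~I in the proof of Lemma~\ref{mmtt1}) without invoking a separate $N^{-\nicefrac{1}{2}}$ variance argument, so your bias/variance split there is unnecessary but harmless.
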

\begin{lemma}\label{mmttF}
\berr
\lb (\sh-\SS)\SSI\ff \rb=\max\left\{O_p(1),O_p\left(\frac{T}{\sqrt{N}}\right)\right\}. 
\eerr
\end{lemma}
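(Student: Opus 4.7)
The guiding principle is that $\SSI\ff$ is asymptotically small, namely $\|\SSI\ff\|=O_p(T^{-1/2})$ by Lemma~\ref{cor}\ref{corD1}, while $\|\ff'\SSI\ff\|=O_p(1)$ by Lemma~\ref{cor}\ref{corD2}; these asymptotic-orthogonality statements are what allow every $\ff$-heavy piece of $\sh-\SS$ to collapse after contraction with $\SSI\ff$. My plan is to split along the factor decomposition
\[
\sh-\SS=\ff(\han-\an)\ff'+\bigl(\bm{\hat{\sc}_N}-\bm{\breve{\sc}_N}\bigr)
\]
inherited from Appendix~\ref{auxiliary}, in which $\bm{\hat{\sc}_N}$ is assembled from the sub-terms $\bm{\hat{\sc}_{1i}},\dots,\bm{\hat{\sc}_{4i}}$ together with the thirteen cross pieces $\bm{\hat{\sc}_{3i,j}}$, while $\bm{\breve{\sc}_N}=\db'\bcn\db$ with $\bcn=N^{-1}\sum_i(\Hi+\BT)$. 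The two pieces are then handled separately.

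For the factor piece $\ff(\han-\an)\ff'\SSI\ff$ I would apply submultiplicativity to get
\[
\|\ff(\han-\an)\ff'\SSI\ff\|\le\|\ff\|\,\|\han-\an\|\,\|\ff'\SSI\ff\|,
\]
where $\|\ff\|=O_p(\sqrt T)$ and $\|\ff'\SSI\ff\|=O_p(1)$ as already noted. The middle factor is controlled by expanding $(\xx'\xx/T)^{-1}=\psinv+O_p(T^{-1/2})$ via Remark~\ref{reamrk_expansion}, together with $\xx'\ff/T=\Gi'\ff'\ff/T+O_p(T^{-1/2})$ from Proposition~\ref{propo1}, and by observing that the extra $(\Gi\xp)\ei\ei'(\Gi\xp)'$ summand present in $\bm{\hat{A}_i}$ but absent from the $\an$ summand is of order $O_p(T^{-1})$ per unit, since $\|\Gi\xp\ei\|\le\|\Gi\|\|(\xx'\xx)^{-1}\|\|\xx'\ei\|=O_p(T^{-1/2})$ by Lemma~\ref{extra1}. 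Averaging over $i$ therefore yields $\|\han-\an\|=O_p(T^{-1/2})$, and the factor piece contributes $O_p(1)$.

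For the noise piece $(\bm{\hat{\sc}_N}-\bm{\breve{\sc}_N})\SSI\ff$ I would expand $\bm{\hat{\sc}_N}$ sub-term by sub-term. The two summands that survive to the limit, $N^{-1}\sum_i\ei\ei'$ coming from $\bm{\hat{\sc}_{1i}}$ and $N^{-1}\sum_i(\vv\xp)\ff\b\b'\ff'(\vv\xp)'$ coming from $\bm{\hat{\sc}_{2i}}$, are centred against $N^{-1}\sum_i\Hi$ and $N^{-1}\sum_i\BT$ respectively; the $1/N$ averaging of mean-zero cross-products contributes a $1/\sqrt N$ rate entry-wise, and a direct computation of $E\|(\cdot)\SSI\ff\|^2$, by expanding into a double trace and bounding the resulting fourth-moment sum through Remark~\ref{remcum} and Assumption~\ref{ass V}, delivers the $O_p(T/\sqrt N)$ bound once $\SSI\ff$ and $\ff'\SSI$ on either side absorb the excess $\ff$ power. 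Each of the thirteen $\bm{\hat{\sc}_{3i,j}}$ already carries either an $\xp$ factor of order $O_p(T^{-1/2})$ or an $\ff$ that can be paired with $\SSI\ff$, and a direct $L^2$ calculation patterned on Propositions~\ref{propo1} and \ref{lemma160117a} delivers an $O_p(1)$ bound for each. The main obstacle will be this bookkeeping: around twenty sub-terms must each be certified at $\max\{O_p(1),O_p(T/\sqrt N)\}$ after contraction with $\SSI\ff$, with careful accounting of how many $\xp$, $\ff$, $\ei$ and $\vv$ factors appear and how many can be absorbed. The sharper $T/\sqrt N$ rate, as opposed to the na\"\i ve $T^{3/2}/\sqrt N$ that a bare spectral-norm bound on $\bm{\hat{\sc}_N}-\bm{\breve{\sc}_N}$ would produce, is recovered precisely because every $\ff$ appearing alongside a mean-zero residual can be absorbed into the $\SSI\ff$ contraction, thereby bypassing the standard $\sqrt{T/N}$ random-matrix rate as anticipated in Section~\ref{LEMMI}.
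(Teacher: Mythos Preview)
Your decomposition and the handling of the factor piece match the paper exactly: split into $\ff(\han-\an)\ff'\SSI\ff$ plus the $\bm{\hat{\sc}_N}-\bm{\breve{\sc}_N}$ part, bound the first by $\|\ff\|\,\|\han-\an\|\,\|\ff'\SSI\ff\|=O_p(\sqrt T)O_p(T^{-1/2})O_p(1)$, and treat the thirteen cross pieces by term-by-term norm bounds.

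The difference is in the idiosyncratic blocks ($\bm{\hat{\sc}_{1i}}$ and $\bm{\hat{\sc}_{2i}}$). You propose to centre $N^{-1}\sum_i\ei\ei'$ against $\Hn$ and compute $\E\|(\cdot)\SSI\ff\|^2$ via a cumulant expansion in the style of Lemma~\ref{mmtt1}, aiming for $O_p(T/\sqrt N)$. The paper instead bypasses the centring and the second-moment calculation altogether: it bounds $\|\bm{\hat{C}_{1N}}\SSI\ff\|$ and $\|\db'\Hn\db\SSI\ff\|$ \emph{separately}. For the first, each summand factorises as $(\ei-\vv\xp\ei)(\ei-\vv\xp\ei)'\SSI\ff$, and one applies $\|\ei\|=O_p(\sqrt T)$ together with $\|\ei'\SSI\ff\|,\|\vv'\SSI\ff\|=O_p(T^{-1/2})$ from Lemma~\ref{cor}, giving $O_p(1)$ term by term, so the cross-sectional average is trivially $O_p(1)$; for the second, $\|\db'\Hn\db\SSI\ff\|\le\|\Hn\|_{sp}\|\SSI\ff\|=O_p(T^{-1/2})$. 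The point is that contraction with $\SSI\ff$ already kills the $\sqrt T$ growth of each summand, so the $1/\sqrt N$ averaging gain is never needed here. Your route would work, but it replicates the heavy machinery of Lemma~\ref{mmtt1} where a two-line product bound suffices; the paper's argument actually delivers $O_p(1)$ throughout (the stated $T/\sqrt N$ in the lemma is slack).
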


\begin{lemma}\label{mmtt4}
\berr
\lb \vv'\MAI(\SS-\MA)\SSI\ej \rb=o_p(T^{1/2}).
\eerr
\end{lemma}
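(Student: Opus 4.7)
The plan exploits the key fact that $\SS - \MA$ is measurable with respect to $\sa(\z)$, while by Assumption~\ref{independence} the quantities $\vv$ and $\ej$ are jointly independent of $\sa(\z)$ and have zero mean. From~(\ref{eq:silver}) and~(\ref{eq:MMM1}) one reads off
\[
\SS - \MA = \frac{1}{N}\sum_{i=1}^N \db'(\BT - \T)\db,
\]
since $\cn$ is non-random under Assumption~\ref{ass factors} whereas $\BT$ depends on $\z$ only through $\ff'\ff/T$ and $\psinv$. Consequently the conditional first moment of the target quantity vanishes, and the required rate will follow from a conditional second-moment bound once a spectral-norm rate for $\SS - \MA$ is in hand.

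The first step is to establish $\lb\SS - \MA\rb_{sp} = O_p(T^{-1/2})$. Using the rank-one factorizations $\bm{\mathscr{Q}}_i = \bm{q}_i\bm{q}_i'$ and $\bm{\breve{\mathscr{Q}}}_i = \breve{\bm{q}}_i\breve{\bm{q}}_i'$ from~(\ref{eq:qnormale}) and~(\ref{eq:qbreve}), with $\bm{q}_i := \sxi\Gi'\sf\b$ and $\breve{\bm{q}}_i := \psinv\Gi'(\ff'\ff/T)\b$, a direct expansion combined with $\ff'\ff/T - \sf = O_p(T^{-1/2})$ (Proposition~\ref{propo1}\ref{propo1D1}) and $\psinv - \sxi = O_p(T^{-1/2})$ (Remark~\ref{reamrk_expansion}) yields $\sup_i \lb \breve{\bm{q}}_i - \bm{q}_i \rb = O_p(T^{-1/2})$, thanks to the uniform bounds on $\Gi, \b$ from Assumption~\ref{ass loading} and on $\psinv$ from Lemma~\ref{boundinverse}\ref{boundinv1}. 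Since $(\BT - \T)_{ts} = \tr\bigl((\bm{\breve{\mathscr{Q}}}_i - \bm{\mathscr{Q}}_i)\E[\vvv_{is}\vvv_{it}']\bigr)$, the cumulant summability in Assumption~\ref{ass V} then gives absolute row and column sums of $\BT - \T$ of order $O_p(T^{-1/2})$ uniformly in $i$, so $\lb \BT - \T \rb_{sp} \le \sqrt{\lb\BT - \T\rb_{row}\lb\BT - \T\rb_{col}} = O_p(T^{-1/2})$, and averaging over $i$ delivers the stated bound.

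The second step is the conditional second-moment calculation. Put $A := \MAI(\SS-\MA)\SSI$, which is $\sa(\z)$-measurable, and let $\vv_k$ denote the $k$-th column of $\vv$. By Assumption~\ref{independence}, $\vv_k$ and $\ej$ are mean-zero and mutually independent conditional on $\sa(\z)$, so
\[
\E\bigl[(\vv_k' A\, \ej)^2 \mid \sa(\z)\bigr] = \tr\bigl(A\, \db'\Hi\db\, A'\, \E[\vv_k\vv_k']\bigr) \le \lb A\rb_{sp}^2\, \lb\E[\vv_k\vv_k']\rb_{sp}\, \tr(\db'\Hi\db)
\]
by Lemma~\ref{emme}\ref{emme3}. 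Combining $\lb\MAI\rb_{sp}, \lb\SSI\rb_{sp} = O_p(1)$ (Lemma~\ref{boundinverse}\ref{boundinv3} and its analogue for $\MA$) with the spectral bound on $\SS - \MA$ gives $\lb A\rb_{sp}^2 = O_p(T^{-1})$, while $\tr(\db'\Hi\db) \le (T-S)\lambda_T(\Hi) = O(T)$ by Remark~\ref{unif lower bound} and $\lb\E[\vv_k\vv_k']\rb_{sp} = O(1)$ by Assumption~\ref{ass V}. Summing over the finite index $k = 1,\ldots,K$ yields $\E[\lb\vv' A\, \ej\rb^2 \mid \sa(\z)] = O_p(1)$, and a standard conditional Markov argument (fix $\eta > 0$, choose $C$ so that the conditional second moment exceeds $C$ with probability at most $\eta/2$, then apply Markov on the complementary event) concludes $\lb \vv'\MAI(\SS-\MA)\SSI\ej \rb = O_p(1) = o_p(T^{1/2})$, which is in fact strictly stronger than the stated claim.

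The hard part is the uniform-in-$i$ spectral rate $\sup_i \lb \BT - \T \rb_{sp} = O_p(T^{-1/2})$, because it requires simultaneously a uniform expansion of $\psinv - \sxi$ and of $\bm{\breve{\mathscr{Q}}}_i - \bm{\mathscr{Q}}_i$, and it needs the cumulant summability of Assumption~\ref{ass V} to convert entrywise $O_p(T^{-1/2})$ bounds into row- and column-norm bounds. Once that rate is secured, the identity $\E[\vv_k' A\, \ej \mid \sa(\z)] = 0$ forced by Assumption~\ref{independence} turns the rest of the argument into a routine spectral-norm second-moment estimate.
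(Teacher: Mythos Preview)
Your argument is sound for the lemma as stated and proceeds by a genuinely different route from the paper. The paper uses the identity $\MAI(\SS-\MA)\SSI=\MAI-\SSI$ and then works entry-wise: it writes $(\BT-\T)_{ts}=\sum_{k,\ell}\bar{\mathcal{q}}_{ik\ell}\,\E(v_{i\ell t}v_{iks})$ with $\bar{\mathcal{q}}_{ik\ell}:=\breve{\mathcal{q}}_{ik\ell}-\mathcal{q}_{ik\ell}\to_p 0$, pulls out the small scalar $|\bar{\mathcal{q}}_{ik\ell}|$, and separately establishes $\lb\vj'\SSIb\,\E(\mathbf{v}_{ik}\mathbf{v}_{i\ell}')\,\bar{\mathcal{H}}\,\epj\rb=O_p(\sqrt{T})$ by the cumulant-bound machinery of Lemmata~III (Remark~\ref{FHILL}). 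You instead convert $\bar{\mathcal{q}}_{ik\ell}\to_p 0$ into a spectral-norm bound $\lb\SS-\MA\rb_{sp}=o_p(1)$ and exploit that $A:=\MAI(\SS-\MA)\SSI$ is $\sa(\z)$-measurable while $\v,\epj$ are independent of $\z$, which reduces the problem to a one-line conditional second-moment estimate. Your route is more elementary and avoids re-running the cumulant expansions; the paper's route recycles the technology already built for Lemmas~\ref{mmtt1}--\ref{mmttF}.

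Two things need tightening. First, the rate $\lb\SS-\MA\rb_{sp}=O_p(T^{-1/2})$ is not delivered by the references you cite: Proposition~\ref{propo1}\ref{propo1D1} gives only $\ff'\ff/T\to_p\sf$ with no rate, and Remark~\ref{reamrk_expansion} concerns $(\xx'\xx/T)^{-1}-\psinv$, not $\psinv-\sxi$. Under Assumption~\ref{ass factors} alone one gets only $\ff'\ff/T-\sf=o_p(1)$, hence $\sup_i\lb\BT-\T\rb_{sp}=o_p(1)$. This weaker rate still yields $\E[\lb\vv'A\ej\rb^2\mid\sa(\z)]=\lb A\rb_{sp}^2\,O(T)=o_p(T)$, which is precisely the $o_p(T^{1/2})$ the lemma claims; only your advertised strengthening to $O_p(1)$ does not follow. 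Second, $\vv=\db'\v$ and $\ej=\db'\epj$ both depend on $\d\subset\z$, so it is $\v$ and $\epj$ (not $\vv$ and $\ej$) that are independent of $\sa(\z)$. The correct conditional variance is $\tr\bigl(\db A\,\db'\bm{\Xi}_j\db\,A'\db'\,\E[\v_k\v_k']\bigr)$, and since $\lb\db\rb_{sp}=1$ your bound is unaffected; also note the covariance index should be $\bm{\Xi}_j$, not $\Hi$.
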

\begin{lemma}\label{mmtt5}
\berr
\lb \vv'\MAI(\SS-\MA)\SSI\vvj \rb=o_p(T).
\eerr
\end{lemma}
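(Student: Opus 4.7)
The strategy is to bound the Frobenius norm of $\vv'\MAI(\SS-\MA)\SSI\vvj$ submultiplicatively, and then to show that $\SS-\MA$ is asymptotically negligible in spectral norm. Using the inequalities $\lb AB\rb\le\lb A\rb\lb B\rb_{sp}$ and $\lb AB\rb\le\lb A\rb_{sp}\lb B\rb$, one obtains
\[
\lb \vv'\MAI(\SS-\MA)\SSI\vvj \rb \;\le\; \lb\vv\rb\,\lb\MAI\rb_{sp}\,\lb\SS-\MA\rb_{sp}\,\lb\SSI\rb_{sp}\,\lb\vvj\rb.
\]
Assumption \ref{ass V} together with $\db'\db=\I_{T-S}$ yields $\lb\vv\rb=\lb\vvj\rb=O_p(T^{1/2})$, while $\lb\MAI\rb_{sp}=\lb\SSI\rb_{sp}=O_p(1)$ follow from Lemma \ref{boundinverse}(iii) and from the lower bound $\lambda_1(\MA)\ge\lambda_1(\db'\cn\db)\ge\lambda_1(\Hn)>\ka$ already established in (\ref{eq:luca1})--(\ref{eq:luca2}) of the proof of Proposition \ref{lemma160117aT2}. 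It therefore suffices to establish $\lb\SS-\MA\rb_{sp}=o_p(1)$.

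Since $\SS-\MA=\db'(\bar C_N-C_N)\db$ and $\lb\db'\bm{A}\db\rb_{sp}\le\lb\bm{A}\rb_{sp}$, with $\bar C_N-C_N=N^{-1}\sum_i(\BT-\T)$, it is enough to prove $\sup_i\lb\BT-\T\rb_{sp}=o_p(1)$. The key identity here exploits the independence of $\v$ from $\sa(\z)$ in Assumption \ref{independence}: conditional on $\z$ the matrix $\bm{\breve{\mathscr{Q}}}_i$ of (\ref{eq:qbreve}) is deterministic while $\v$ retains its unconditional law, hence
\[
\BT-\T \;=\; \sum_{k,\ell=1}^{K}\bigl(\breve q_{ik\ell}-q_{ik\ell}\bigr)\,\E\!\left[\bm{v}_{ik}\bm{v}_{i\ell}'\right],
\]
where $q_{ik\ell}$ and $\breve q_{ik\ell}$ denote the $(k,\ell)$-entries of $\bm{\mathscr{Q}}_i$ and $\bm{\breve{\mathscr{Q}}}_i$ in (\ref{eq:qnormale}) and (\ref{eq:qbreve}), and $\bm{v}_{ik}$ is the $k$-th column of $\v$. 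Each $T\times T$ matrix $\E[\bm{v}_{ik}\bm{v}_{i\ell}']$ has spectral norm dominated by its maximum absolute row sum, which is finite uniformly in $i,k,\ell$ by the cumulant summability in Assumption \ref{ass V}; consequently $\lb\BT-\T\rb_{sp}\le\ka\max_{k,\ell}|\breve q_{ik\ell}-q_{ik\ell}|$.

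It remains to show $\sup_i\max_{k,\ell}|\breve q_{ik\ell}-q_{ik\ell}|=o_p(1)$. Since $\bm{\breve{\mathscr{Q}}}_i$ differs from $\bm{\mathscr{Q}}_i$ only through the substitutions $\ff'\ff/T\leftrightarrow\sf$ and $\psinv\leftrightarrow\sxi$, a standard telescoping reduces the problem to the two convergences $\ff'\ff/T\xrightarrow{p}\sf$, which is Proposition \ref{propo1}(i), and $\sup_i\lb\psinv-\sxi\rb_{sp}=o_p(1)$; the latter follows from the former via Lemma \ref{boundinverse}(i) and the uniform bound $\sup_i(\lb\Gi\rb+\lb\b\rb)<\infty$ in Assumption \ref{ass loading}. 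The main technical obstacle is precisely this uniformity in $i$, but it is immediate here because the only stochastic ingredient $\ff'\ff/T$ is common across units while the deterministic multipliers $\Gi$, $\b$, $\sf$ and $\SO_{\v'\v}$ are uniformly bounded. Substituting these rates into the initial display yields $O_p(T^{1/2})\cdot O_p(1)\cdot o_p(1)\cdot O_p(1)\cdot O_p(T^{1/2})=o_p(T)$, as required.
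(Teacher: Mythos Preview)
Your proof is correct and follows essentially the same route as the paper. The paper does not give a separate proof of Lemma~\ref{mmtt5}; for the analogous Lemma~\ref{mmtt4} it rewrites the expression via the identity $\MAI(\SS-\MA)\SSI=\MAI-\SSI$, uses $\SS-\MA=\db'N^{-1}\sum_i(\BT-\T)\db$, decomposes $\BT-\T=\sum_{k,\ell}\bar q_{ik\ell}\,\E[{\bf v}_{ik}{\bf v}_{i\ell}']$, and concludes from $\bar q_{ik\ell}\xrightarrow{p}0$ together with $\lb\vj'\SSIb\E[{\bf v}_{ik}{\bf v}_{i\ell}']\bm{\bar{\mathcal{H}}}\epj\rb=O_p(\sqrt{T})$. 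Your argument packages the last step more compactly by passing through $\lb\SS-\MA\rb_{sp}=o_p(1)$, but the substance---the same decomposition of $\BT-\T$ and the same convergence $\breve q_{ik\ell}-q_{ik\ell}\to 0$---is identical. Your explicit verification of uniformity in $i$ (via the observation that the only stochastic ingredient $\ff'\ff/T$ is common across units) is a point the paper leaves implicit but also needs.
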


%
%
%
%
\subsection{Proof Lemmata III}
The proofs exploits results on moments and cumulants. A summary of the main results on cumulants used in the proof are provided in Section \ref{diagram formula}, that also serves to introduce further notation.

\subsubsection{Combinatorial expressions of cumulants and moments}\label{diagram formula}
In the presentation we follows \cite{PT08}. Let $a$ denotes a finite non-empty set, and $\pp(a)$ the set of  partitions of $a$. By definition, an element $\pi\in\pp(a)$ is a collection of non-empty and disjoint subset of $a$ (called \emph{blocks}), such that their union equals $a$. To clarify the notation used in the following, if $\pi=\{a_1,\dots,a_{j}\}\in\pp(a)$, then 
$
\cap_{i=1}^j a_i= \emptyset$, $\cup_{i=1}^j a_i=a$, where $\emptyset$ is the empty set.
The symbol $|\pi|$ indicates the number of blocks (or the \emph{size}) of the partition $\pi$.

For every $\sigma,\pi\in\pp(a)$, we write $\sigma\leq\pi$ if, and only if, each block of $\sigma$ is contained in a block of $\pi$. Whenever $\sigma\leq \pi$, one has $|\sigma|\geq|\pi|$. The partition $\sigma\wedge \pi$, \emph{meet} of $\sigma,\pi\in\pp(a)$, is the element of $\pp(a)$ whose blocks are constructed by taking the non-empty  intersection of the blocks of $\sigma$ and $\pi$.
The partition $\sigma\vee \pi$, \emph{join} of $\sigma,\pi\in\pp(a)$, is the element of $\pp(a)$ whose blocks are constructed by taking the non-disjoint unions of the blocks of $\sigma$ and $\pi$, that is, by taking the union of those blocks that have at least one element in common. In particular, $|\sigma\wedge\pi|\geq |\sigma\vee\pi|$. The partitions $\sigma\wedge\pi$ and $\sigma\vee\pi$ are the greatest lower bound and the least upper bounds associated with the pair $(\sigma,\delta)$.
The \emph{maximal element} of $\pp(a)$ is the trivial partition $\hat{1}=\left\{a\right\}$. 
The \emph{minimal element} of $\pp(a)$ is the  partition $\hat{0}$, such that each block of $\hat{0}$ contains exactly one element of $a$. Observe that $|\hat{1}|=1$ and $|\hat{0}|=| a|$.
\begin{example}\label{exampcum1} Take $a=\{1,2,3,4,5\}$. If $\pi=\{\{1,2,3\},\{4,5\}\}$, $\sigma=\{\{1,2\},\{3\},\{4,5\}\}$. Then, $\sigma\leq\pi$ and $\delta\wedge \pi=\delta$,  $\delta\vee \pi=\pi$.
\end{example}
\bigskip
\noindent Let $[n]=\left\{1,2,\dots,n\right\}$. For $n\geq 1$ we consider a vector-values random variables $\bm{X^{,[n]}}=(X_1,\dots,X_n)$ such that $\E|X_j|^n\leq \infty$. For every subset $a=\left\{j_1,\dots,j_k\right\}\subseteq [n]$, we write
$$
\bm{X^{,a}}=(X_{j_1},\dots,X_{j_k}),\qquad \bm{X^a}=\prod_{i=1}^k X_{j_i}.
$$
For instance, $\forall m\leq n$,
$$
\bm{X^{,[m]}}=(X_{1},\dots,X_{m}),\qquad \bm{X^{[m]}}=\prod_{i=1}^m X_i. 
$$
The next result contains three crucial relations, linking the cumulants and the moments associated with a random vector $\bf{X^{,[m]}}$.
\begin{prop}[Proposition 3.1. \cite{PT08}]\label{pt08} For every $a\subset [n]$,
\begin{enumerate}
\item 
\be\label{eq:leo1}
\E\bm{X}^{a}=\sum_{\pi=\{a_1,\dots,a_k\}\in\pp(a)}\cum(\bm{X}^{,a_1})\cdots \cum(\bm{X}^{,a_k}),
\ee
\item 
\be\label{eq:leo2}
\cum(\bm{X}^{,a})=\sum_{\tau=\{t_1,\dots,t_r\}\in\pp(a)}(-1)^{r-1}(r-1)!\E\bm{X}^{t_1}\dots
\E\bm{X}^{t_r},
\ee
\item $\forall \pi=\{a_1,\dots,a_k\}\in \pp(a)$,
\be\label{eq:mal}
\cum(\bm{X}^{a_1},\dots,\bm{X}^{a_k})=\sum_{
\substack{ \sigma=\{p_1,\dots,p_s\}\in\pp(a)\\ \pi \vee \sigma=\hat{1}}}
\cum(\bm{X}^{,p_1})\times\cdots\times \cum(\bm{X}^{,p_s}).
\ee
\end{enumerate}
\end{prop}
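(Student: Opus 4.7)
The plan is to treat the three identities as a classical trio about the lattice $\pp(a)$ of set partitions, ordered by refinement. The strategy is: prove (\ref{eq:leo1}) by a generating-function/differentiation argument, deduce (\ref{eq:leo2}) by Möbius inversion on $\pp(a)$, and then derive (\ref{eq:mal}) by a ``two-layer partition'' substitution that combines the two.

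First I would prove (\ref{eq:leo1}). Set $K(\bm{t}) = \log \E\exp(\sum_{j\in a} t_j X_j)$ for the multivariate cumulant generating function; by definition $\cum(\bm{X}^{,b}) = \partial^{b} K(\bm{t})\big|_{\bm{t}=\bm 0}$ for every $b\subseteq a$. Since $\E\bm{X}^{a} = \partial^{a}\exp(K)\big|_{\bm{t}=\bm 0}$, one applies the multivariate Faà di Bruno formula: differentiating $\exp(K)$ once adjoins an index to an existing block or opens a new block, so iterating over the $|a|$ derivatives yields exactly one term per element of $\pp(a)$, giving
$$
\E\bm{X}^{a} = \sum_{\pi\in\pp(a)} \prod_{B\in\pi} \partial^{B} K(\bm 0) = \sum_{\pi=\{a_1,\dots,a_k\}\in\pp(a)} \cum(\bm{X}^{,a_1})\cdots\cum(\bm{X}^{,a_k}),
$$
which is (\ref{eq:leo1}). (An equivalent, purely combinatorial, route is induction on $|a|$: differentiate both sides with respect to one variable and match terms.)

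Second, to obtain (\ref{eq:leo2}), I would invert (\ref{eq:leo1}) on the poset $(\pp(a),\leq)$. Define $f(\sigma) := \prod_{B\in\sigma}\E\bm{X}^{B}$ and $g(\sigma) := \prod_{B\in\sigma}\cum(\bm{X}^{,B})$ for $\sigma\in\pp(a)$; (\ref{eq:leo1}) applied to each block of $\sigma$ reads $f(\sigma)=\sum_{\pi\geq\sigma} g(\pi)$. Möbius inversion on the partition lattice gives $g(\sigma)=\sum_{\pi\geq\sigma}\mu(\sigma,\pi) f(\pi)$, where $\mu(\hat{0},\hat{1})=(-1)^{|a|-1}(|a|-1)!$ on $\pp(a)$ (see Rota). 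Taking $\sigma=\hat 0$ and extracting the single block $\{a\}$ yields exactly (\ref{eq:leo2}), since the partitions $\pi\geq\hat 0$ are all of $\pp(a)$, and the sign and factorial match.

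Third, and this is the step demanding the most care, I would derive (\ref{eq:mal}). Fix $\pi=\{a_1,\dots,a_k\}\in\pp(a)$ and set $Y_i := \bm{X}^{a_i}$. By (\ref{eq:leo2}) applied to the vector $\bm{Y}^{,[k]}$,
$$
\cum(Y_1,\dots,Y_k) = \sum_{\tau\in\pp([k])}(-1)^{|\tau|-1}(|\tau|-1)!\prod_{t\in\tau}\E\Bigl[\prod_{i\in t} Y_i\Bigr].
$$
Each inner expectation $\E[\prod_{i\in t} Y_i] = \E\bm{X}^{\cup_{i\in t}a_i}$ is expanded via (\ref{eq:leo1}) as a sum over partitions of $\cup_{i\in t}a_i$. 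Concatenating over $t\in\tau$ produces a sum over partitions $\sigma\in\pp(a)$ that are finer than the partition $\pi_\tau$ obtained by merging blocks of $\pi$ according to $\tau$; in particular, each such $\sigma$ satisfies $\sigma\vee\pi\leq \pi_\tau$. Collecting, for a given $\sigma=\{p_1,\dots,p_s\}\in\pp(a)$ its coefficient becomes
$$
\Bigl(\prod_{j=1}^{s}\cum(\bm{X}^{,p_j})\Bigr)\sum_{\tau:\,\pi_\tau\geq \sigma\vee\pi}(-1)^{|\tau|-1}(|\tau|-1)!.
$$
The inner sum is exactly the Möbius-type sum $\sum_{\tau\geq \pi_{\sigma\vee\pi}^\circ}\mu(\hat 0,\tau)$ on $\pp([k])$, which by the defining property of the Möbius function vanishes unless $\sigma\vee\pi=\hat 1$ (i.e.\ $\tau$ is forced to be the trivial partition of $[k]$), in which case it equals $1$. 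Thus only partitions $\sigma$ with $\sigma\vee\pi=\hat 1$ survive, yielding (\ref{eq:mal}).

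The main obstacle is step three: the bookkeeping that shows how an arbitrary $\sigma\in\pp(a)$ acquires its coefficient after expanding all the $\E[Y^{\cdot}]$ into cumulants, and recognising the resulting alternating sum over $\tau$ as a Möbius-type sum on $\pp([k])$ whose vanishing forces the join condition $\sigma\vee\pi=\hat 1$. Once that cancellation is made rigorous via the standard identity $\sum_{\tau\geq\rho}\mu(\hat 0,\tau)=\delta_{\rho,\hat 1}$ on a finite lattice, all three formulas fall out from the same Möbius-inversion mechanism.
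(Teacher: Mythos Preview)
The paper does not supply its own proof of this proposition: it is quoted from Peccati--Taqqu (2008), with the original attributions to Leonov--Shiryaev (1959) for parts 1 and 2 and Malyshev (1980) for part 3. Your sketch is the standard lattice/M\"obius-inversion route to these identities and is correct in outline.

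Two orientation slips are worth fixing. In part 2, under the paper's convention $\sigma\le\pi$ means $\sigma$ is \emph{finer}, so applying (\ref{eq:leo1}) blockwise gives $f(\sigma)=\sum_{\pi\le\sigma}g(\pi)$, not $\sum_{\pi\ge\sigma}$; the inversion must then be evaluated at $\sigma=\hat 1$ (the one-block partition), not $\hat 0$, and the relevant M\"obius value is $\mu(\pi,\hat 1)=(-1)^{|\pi|-1}(|\pi|-1)!$. In part 3, the coefficient $(-1)^{|\tau|-1}(|\tau|-1)!$ equals $\mu(\tau,\hat 1)$ on $\pp([k])$, not $\mu(\hat 0,\tau)$; the identity that forces the join condition is the dual relation $\sum_{\tau\ge\tau_0}\mu(\tau,\hat 1)=\mathbf{1}\{\tau_0=\hat 1\}$, and $\tau_0=\hat 1$ in $\pp([k])$ is equivalent to $\sigma\vee\pi=\hat 1$ in $\pp(a)$ via the isomorphism $\tau\mapsto\pi_\tau$ onto the interval $[\pi,\hat 1]$. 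With these corrections your argument goes through and coincides with the classical proofs in the cited references.
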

Part 1 and 2 are due to \cite{LS59}, part 3 to \cite{MAL80}. 
\begin{example}\label{exampcum2} As an illustration, consider the cumulant $\cum(X_1X_2,X_3)$, in which case one has $\pi={a_1,a_2}$, with $a_1=\{1,2\}$ and $a_2=\{3\}$. There are three partition $\sigma\in\pp\left([3]\right)$ such that $\sigma\vee \pi=\hat{1}=\{1,2,3\}$, namely 
$$
\sigma_1=\{1,2,3\}=\hat{1},\quad \sigma_2=\{\{1,3\},\{2\}\}, \quad \sigma_3=\{\{2,3\},\{1\}\}.
$$
It follows that
\be\label{eq:cot}
\cum(X_1X_2,X_3)=\cum(X_1,X_2,X_3)+\cum(X_1,X_3)\cum(X_2)+\cum(X_2,X_3)\cum(X_1).
\ee
Similarly
\be\label{eq:cotV}
\E(X_1X_2X_3)=\cum(X_1X_2,X_3)+\cum(X_1,X_2)\E(X_3)+\E(X_1)\E(X_2)\E(X_3).
\ee
\end{example}

\bigskip
%
%
%
%
\noindent Part of the notions presented in the above sections can be translated into the language of diagrams, which are often used to compute higher order cumulants.

Consider a finite set $a$. A \emph{diagram} is a graphical representation of a pair of partitions $(\pi,\sigma)\subseteq\pp(a)$ such that $\pi=\{a_1,\dots,a_k\}$ and $\sigma=\{p_1,\dots,p_s\}$. It is obtained as follows
\begin{enumerate}[wide, labelwidth=!, labelindent=0pt]
\item Order the elements of each block $a_i$, for $i=1,\dots,k;$
\item Associate with each block $a_i\in\pi$ a row of each $| a_i |$ \emph{vertices} (represented as dots), in such a way that the $j$th vertex of the $i$th row corresponds to the $j$th element of the block $a_i$; 
\item For every $\ell=1,\dots,s$, draw a closed curve around the vertices corresponding to the elements of the block $p_l\in\sigma$. We will denote by $\di(\pi,\sigma)$ the diagram of a pair of partitions $(\pi,\sigma).$ 
\end{enumerate}

The diagram $\di(\pi,\sigma)$ associated with two partitions $(\pi,\sigma)$ is said to be $\emph{connected}$ is $\delta\vee\pi=\hat{1}$, that is if the only partition $\rho$ such that $\pi\leq \rho$ and $\sigma\leq \rho$ is the maximal partition $\hat{1}$. In other words,  $\di(\pi,\sigma)$ is connected if and only if the rows of the diagram (the blocks of $\pi$), cannot be divided into two subsets, each defined a separate diagram.

The Malyshev formula (\ref{eq:mal}) can be reformulated as follows. For every finite set $a$ and every $\pi=\{a_1,\dots,a_k\}\in \pp(a)$
\be\label{eq:df}
\cum(\bm{X^{a_1}},\dots,\bm{X^{a_k}})=\sum_{
\sigma=\{p_1,\dots,p_s\}\in\dic(\pi,\sigma)}
\cum(\bm{X^{,p_1}})\times\cdots\times \cum(\bm{X^{,p_s}}),
\ee
where $\dic(\pi,\sigma)$ denotes the class of connected diagrams for the pair of partitions $(\pi,\omega)\subset \pp(a)$, that is
$$
\dic(\pi,\sigma)=\left\{\sigma=\{p_1,\dots,p_s\}\in\pp(a)\; : \;  \di(\pi,\sigma)\;\textrm{ connected}\right\}.
$$
\begin{example}\label{Fdiagramma} Consider again the Example \ref{exampcum1}. Recall that in this case $a=[3]$ and that the relevant partition is $\pi=\{\{1,2\},\{3\}\}$. There are only three partitions $\sigma_1,\sigma_2,\sigma_3\in\pp([3])$ such that $\di(\pi,\sigma_1)$,
$\di(\pi,\sigma_2)$, and $\di(\pi,\sigma_3)$ are connected, namely 
$$
\sigma_1=\hat{1},\qquad \sigma_2=\{\{1,3\},\{2\}\},\qquad \sigma_3=\{\{1\},\{2,3\}\}.
$$ 
The cumulants $\cum(X_1,X_2,X_3)$,$\cum(X_1,X_3)\cum_1(X_2)$, and $\cum(X_1)\cum(X_2,X_3)$ are associated with the diagrams 
$\di(\pi,\sigma_1)$, 
$\di(\pi,\sigma_2)$ and  $\di(\pi,\sigma_3)$, respectively. Taking the sums over the partitions associated with the connected diagram, the formula (\ref{eq:cot}) is obtained.
\end{example}

%
%
%
%
%

%
%
%
%
\subsubsection{Proof Lemma \ref{mmtt1}}
Recalling the definitions in Appendix \ref{auxiliary}; for $j=1,2,3$ let 
$
\bm{\hat{C}_{jN}}:=N^{-1}\sum_{i=1}^N\bm{\hat{C}_{ji}}$, with $\db'\bm{\hat{C}}_{ji}\db=\bm{\hat{\mathcal{C}}}_{ji}$, and $\BTN=N^{-1}\sum_{i=1}^N\BT$.

To save notation, define the $T\times T$ matrix $\SSIb=[\oi_{ts}]$, with
\be
\SSIb:=\db\SSI\db'.
\ee
From the bounds in displays (\ref{eq:boundMARCH6B}), (\ref{eq:boundMARCH6A}), (\ref{eq:BOUNDT1c}) and (\ref{eq:BOUNDT1d}) we have
\berr
&&\lb \vvj'\SSI(\sh-\SS)\SSI\ej \rb\\
&\leq & \overbrace{\lb \vvj'\SSI\ff\left(\han-\an\right)\ff'\SSI\ej \rb}^{Term \;I}+    
\overbrace{\lb \vj'\SSIb\left(\bm{\hat{C}_{1N}}-\Hn\right)\SSIb\epj \rb }^{Term \;II}\\ 
&&+\underbrace{\lb \vj'\SSIb\left(\bm{\hat{C}_{2N}}-\BTN\right)\SSIb\epj \rb}_{Term \;III} 
+\underbrace{\lb \vj'\SSIb\left(\bm{\hat{C}_{3N}}+\bm{\hat{C}'_{3N}}\right)\SSIb\epj \rb }_{Term \;IV} \\
&=& O_p\left(\frac{1}{T}\right)+\max\left(O_p(1),O_p\left(\frac{T}{\sqrt{N}}\right)\right)+ \max\left(O_p(1),O_p\left(\frac{T^{\nicefrac{3}{2}}}{N}\right)\right).
\eerr
%
%
%
%
The proof of most of the terms is quite tedious. For the terms that cannot be easily bounded using matrix norm inequality, we apply Markov inequality and bound the expectation. It is useful to note that the conditional expectation  (with respect $\sa(\z)$) of $\SSIb$ and $\psinv$ have uniformly bounded entries, so they can be treated as bounded constants. The save space, the conditioning will be not made explicit, because of Assumption~\ref{independence}. For illustrative purposed, a quite detailed proof for $Part (a)$ of Term II will be provided. More concise proof will be provided for the remaining parts and terms. Below, we report as remarks few facts, following from the combination of the Assumptions and previous lemmas, that will be repeatedly used later. 
\begin{remark}\label{FHILL}
Property (4) in \cite[Section 5.2]{ltk} and Lemma~\ref{boundinverse}\ref{boundinv3} imply that $\lambda_{T-K}(\SSIb)=\lambda_{T-K}(\SSI)$. By Lemma \ref{emme}\ref{emme4}, $\sup_{t,s}|\oi_{ts}|<\infty$ and Lemma~\ref{matrix2} implies that $\sup_t\sum_s|\oi_{ts}|^2<\infty$. Hence, by Lemma \ref{matrix},  Remark \ref{remcum} and Assumptions \ref{ass V}  imply that, 
$$
\sup_{p,q}\left|\sum_{t,s}\oi_{pt}\cov\left(v_{ikt},v_{jhs}\right)\oi_{sq}\right|<\infty,\;\textrm{and}\;
\sup_j\sup_{p,q}\left|\sum_{i}\sum_{t,s}\oi_{pt}\cov\left(v_{ikt},v_{jhs}\right)\oi_{sq}\right|<\infty.
$$
Similarly, $\sup_p\left|\sum_{t,s}\oi_{pt}\cov\left(v_{ikt},v_{jhs}\right)\right|$ and 
$\sup_{p}\sup_j\left|\sum_i\sum_{t,s}|\oi_{pt}\cov\left(\varepsilon_{it},\varepsilon_{js}\right)\right|$ are both 
$O(\sqrt{T})$.

The same bounds can be obtained replacing $\cov(v_{ikt},v_{jhs})$ with $\cov(\varepsilon_{it},\varepsilon_{js})$.
\end{remark}
\begin{remark}\label{FHILL2}
Let $\si_{ikh}$ denote the $k,h$-entry of the $K\times K$ matrix $\psinv$. Lemma~\ref{boundinverse}\ref{boundinv1}, Lemma \ref{emme}\ref{emme4} and Remark~\ref{inverseV} imply that $\sup_{i}|\si_{ikh}|<\infty$. 
\end{remark}
\subsubsection*{Bound Term I}
Displays (\ref{eq:radio106}) and (\ref{eq:radio107}), Assumption \ref{ass loading} and Remark \ref{reamrk_expansion}, entail that 
\berr
&&\lb\left(\I_M-\Gi\xx^+\ff\b\right)-\left(\I_M-\Gi\psinv\Gi'\frac{\ff'\ff}{T}\b\right)\rb\\
&=&\lb 
\Gi\psinv\Gi'\frac{\ff'\ff}{T}\b-
\Gi\left(\frac{\xx'\xx}{T}\right)^{-1}\left(\Gi'\frac{\ff'\ff}{T}+\frac{\vv'\ff}{T}\right)\b 
\rb\\
&\leq &
\lb 
\Gi\left[\psinv-\left(\frac{\xx'\xx}{T}\right)^{-1}\right]\Gi'\frac{\ff'\ff}{T}\b
\rb
+ \lb
\Gi\left(\frac{\xx'\xx}{T}\right)^{-1}\frac{\vv'\ff}{T}\b 
 \rb=O_p\left(\frac{1}{\sqrt{T}}\right).
\eerr
Hence,
$
\lb\han-\an\rb
=O_p(T^{-\nicefrac{1}{2}})$. Using Lemma \ref{cor}\ref{corD3},\ref{corD4} we conclude that 
\ber
&&\lb \vvj'\SSI\ff\left(\han-\an\right)\ff\SSI\ej \rb\nonumber\\
&\leq &  \lb \vvj'\SSI\ff\rb \lb\han-\an\rb
\lb\ff\SSI\ej \rb
=O_p(T^{-\nicefrac{3}{2}}). \label{eq:boundMARCH6B}
\eer
%
%
%
%
%
%
%
%
%
%
\subsubsection*{Bound Term II}
We split the proof of the latter equality in four parts (Parts (a)-(d)), one for each term on the RHS of the  inequality below.
\ber
&& \lb \vj'\SSIb\left(\bm{\hat{C}_{1N}}-\Hn\right)\SSIb\epj\rb\nonumber\\
 &\leq &
\lb\frac{1}{N}\sum_{i=1}^N\vj'\SSIb\left[\epi\epi'-\Hi\right]\SSIb\epj\rb\label{eq:boundIIa}\\
&+&
\lb\frac{1}{TN}\sum_{i=1}^N\vj'\SSIb\epi\epi'\md\x\left(\frac{\xx'\xx}{T}\right)^{-1}\v'\SSIb\epj\rb\label{eq:boundIIb}\\
&+& \lb\frac{1}{TN}\sum_{i=1}^N\vj'\SSIb\v\left(\frac{\xx'\xx}{T}\right)^{-1}\x'\md\epi\epi'\SSIb\epj\rb\label{eq:boundIIc}\\
&+& 
 \lb\frac{1}{T^2N}\sum_{i=1}^N\vj'\SSIb
\v\left(\frac{\xx'\xx}{T}\right)^{-1}\x'\md\epi\epi'
\md\x\right.\\
&&\times\left.\left(\frac{\xx'\xx}{T}\right)^{-1}\v'\SSIb\epj\rb\nonumber\\
&=&O_p\left(\frac{T}{\sqrt{N}}\right)+\max\left(O_p(1),O_p\left(\frac{T}{\sqrt{N}}\right)\right)+\max\left(O_p(1),O_p\left(\frac{T^{\nicefrac{3}{2}}}{N}\right)\right)\nonumber\\
&&+\max\left(O_p(1),O_p\left(\frac{T}{N}\right)\right)\\
&=&\max\left(O_p(1),O_p\left(\frac{T}{\sqrt{N}}\right)\right). \label{eq:boundMARCH6A}
\nonumber
\eer

\begin{Part}[font=\emph]
%
%
\item  To bound the term (\ref{eq:boundIIa}) show that
\ber
&&\E\lb \vj'\SSIb\left(\frac{1}{N}\sum_{i=1}^N\bm{\varepsilon_i\varepsilon_i'} -\Hi\right)
\SSIb\bm{\varepsilon_j}
\rb^2\nonumber\\
&=& \frac{1}{N^2}\E\left[\sum_k\sum_{t_1,\dots,t_4}\sum_{s_1,\dots,s_4}\sum_{i,\ell}\oi_{t_1t_2}\oi_{s_1s_2}\oi_{t_3t_4}\oi_{s_3s_4}\E(v_{kjt_1}v_{kjs_1})\right.\nonumber\\
&&
\left.\E\left(\varepsilon_{it_2}\varepsilon_{it_3}-\xi_{it_2t_3})
(\varepsilon_{\ell s_2}\varepsilon_{\ell s_3}-\xi_{\ell s_2s_3}\right)\left(\varepsilon_{jt_4}\varepsilon_{js_4}\right)\right]\nonumber\\
&=& \E\left[\frac{1}{N^2}\sum_k\sum_{t_1,\dots,t_4}\sum_{s_1,\dots,s_4}\sum_{i,\ell}\oi_{t_1t_2}\oi_{s_1s_2}\oi_{t_3t_4}\oi_{s_3s_4}\cum(v_{kjt_1},v_{kjs_1})\right.\label{eq:LA1unoa}\\
&&
\times\cum\left(\varepsilon_{it_2}\varepsilon_{it_3},
\varepsilon_{\ell s_2}\varepsilon_{\ell s_3}\right)
\cum\left(\varepsilon_{jt_4},\varepsilon_{js_4}\right)\label{eq:LA1unoab}
\\
&&+ \frac{1}{N^2}\sum_k\sum_{t_1,\dots,t_4}\sum_{s_1,\dots,s_4}\sum_{i,\ell}\oi_{t_1t_2}\oi_{s_1s_2}\oi_{t_3t_4}\oi_{s_3s_4}\cov(v_{kjt_1},v_{kjs_1}).\label{eq:LA1unob}\\
&&\left.
\times \cum\left(\varepsilon_{it_2}\varepsilon_{it_3},
\varepsilon_{\ell s_2}\varepsilon_{\ell s_3},
\varepsilon_{jt_4}\varepsilon_{js_4}\right)\right]. \label{eq:LA1unobb}\\
&=& O\left(\frac{T^2}{N}\right)\label{eq:DUKE790a}
\eer
The bound follows from (\ref{eq:dcd}) and (\ref{eq:LA1unobBOUND}) below. 

Consider the first cumulant in (\ref{eq:LA1unoab}). Using the notation introduced in Section \ref{diagram formula}
$$
\pi=\left\{
\left\{it_2,it_3 \right\}
\left\{\ell s_2,\ell s_3 \right\}
\right\}
$$
be the partition defining the vertices of the diagram. Then,
\berr
\sigma_1&=&\left\{\left\{it_2,it_3\right\},\left\{\ell s_2,\ell s_3\right\}\right\} ,\qquad
\sigma_2=\left\{\left\{it_2,\ell s_2  \right\},\left\{it_3,\ell s_3  \right\}\right\},\\
\sigma_3&=&\left\{\left\{it_2,\ell s_3\right\},\left\{i t_3,\ell s_2\right\}\right\} ,\qquad
\sigma_4=\left\{it_2,it_3,\ell s_2,\ell s_3  \right\}=\hat{1},
\eerr
and 
$$
\dic(\pi,\sigma)=\{\di(\pi,\sigma_2), \di(\pi,\sigma_3), \di(\pi,\sigma_4)\}
$$
For $l=1,2$, we have
\berr
&&\cum\left(\varepsilon_{it_2}\varepsilon_{it_3},
\varepsilon_{\ell s_2}\varepsilon_{\ell s_3}\right)=\sum_{\sigma=\{p_1,\dots,p_l\}\in\dic(\pi,\sigma)}\cum(\bm{\varepsilon^{,p_1}})\times\cdots\times \cum\left(\bm{\varepsilon^{,p_l}}\right)\nonumber\\
&=& \cov\left(
\varepsilon_{it_2},\varepsilon_{\ell s_2}
\right)
\cov\left(
\varepsilon_{it_3},\varepsilon_{\ell s_3}
\right)+
\cov\left(\varepsilon_{it_2},
\varepsilon_{\ell s_3}\right)
\cov\left(
\varepsilon_{it_3},\varepsilon_{\ell s_2}
\right)
+
\cum(\varepsilon_{it_2},\varepsilon_{it_3},\varepsilon_{js_2},\varepsilon_{js_3}).
\eerr
By  Remark \ref{FHILL}, the term in  (\ref{eq:LA1unoa})-(\ref{eq:LA1unoab}) can be  bounded as follows
\ber
&=&\frac{1}{N^2}\sum_k\sum_{t_1,\dots,t_4}\sum_{s_1,\dots,s_4}\sum_{i,\ell}
\oi_{t_1t_2}\oi_{s_1s_2}\oi_{t_3t_4}\oi_{s_3s_4}\nonumber\\
&&\times\cov(v_{k_1jt_1},v_{k_1js_1})\cov\left(\varepsilon_{jt_4},\varepsilon_{js_4}\right)
\cov\left(\varepsilon_{it_2}\varepsilon_{it_3},
\varepsilon_{\ell s_2}\varepsilon_{\ell s_3}\right)
\nonumber
\\
&\leq &
\frac{C}{N^2}
\left(\sup_{t_2,s_2}\left|\sum_{t_1,s_1} \oi_{t_1,t_2}\cov(v_{kjt_1},v_{kjs_1})\oi_{s_1s_2}\right|\right)\nonumber\\
%
&&\times\left(\sup_{i,\ell}\sup_{t_3,s_3}\left|\sum_{t_4,s_4} \oi_{t_3,t_4}\cov(\varepsilon_{j t_4},\varepsilon_{\ell j_4})\oi_{s_3,s_4}\right|\right)\nonumber\\
%
%
%
&&\times \left[\left(
\sup_{i,\ell}\sum_{t_2,s_2}|\cov(\varepsilon_{it_2},\varepsilon_{\ell s_2})|\right)
\left(\sum_{i,\ell}\sum_{t_3,s_3}|\cov(\varepsilon_{it_3},\varepsilon_{\ell s_3})|\right)\right.\nonumber\\
&&\phantom{c}
+\left(\sup_{i,\ell}\sum_{t_2,s_3}|\cov(\varepsilon_{it_2},\varepsilon_{\ell s_3})|\right)
\left(\sum_{i,\ell}\sum_{t_3,s_2}|\cov(\varepsilon_{it_3},\varepsilon_{\ell s_2})|\right)\nonumber\\
&&\phantom{c}
\left.+\left(\sum_{i,\ell}\sum_{t_2t_3s_2s_3}|\cum(\varepsilon_{it_2},\varepsilon_{it_3},\varepsilon_{js_2},\varepsilon_{js_3})|\right)\right]\nonumber\\
&=&O\left(\frac{C}{N^2}\left(T^2N+T^2N+TN\right)\right)=O\left(\frac{T^2}{N}\right)\label{eq:dcd}.
\eer

Next, we consider the cumulant in (\ref{eq:LA1unobb}):
\be\label{eq:diagcumR}
\cum\left(\varepsilon_{it_2}\varepsilon_{it_3},
\varepsilon_{\ell s_2}\varepsilon_{\ell s_3},\varepsilon_{j t_3}\varepsilon_{j s_3}\right)=\sum_{\sigma=\{p_1,\dots,p_l\}\in\dic(\pi,\sigma)}\cum(\bm{\varepsilon^{,p_1}})\times\cdots\times \bm{\varepsilon^{,p_l}}),
\ee
for $l=1,2,3.$
We can distinguish four different sets of diagrams, defined by the number  of blocks in the partition ($l$) and the size of the blocks. We label these sets as $\sigma^{(a)},\dots,\sigma^{(d)}$. For example, the first partition $\sigma^{(a)}=\left\{p_1,p_2,p_3: |p_1|=|p_2|=|p_3|=2\right\}$ includes all the partitions made of three blocks, $p_1,p_2$ and $p_3$, and each block is of size two ($|p_i|=2$, for $i=1,2,3$). In a diagram, we would have three rows, each with two vertices.
\begin{itemize}
\item[$\sigma^{(a)}$] $:=\left\{p_1,p_2,p_3: |p_1|=|p_2|=|p_3|=2\right\}$
\berr
&&\sum_{\sigma=\{p_1,p_2,p_3\}\in\dic(\pi,\sigma^{(a)})}\cum\left(\bm{\varepsilon^{,p_1}}\right)\cum\left(\bm{\varepsilon^{,p_2}}\right)\cum\left(\bm{\varepsilon^{,p_3}}\right)
\nonumber
\\
&=&\cov(\varepsilon_{it_2},\varepsilon_{\ell s_2})
\cov(\varepsilon_{it_3},\varepsilon_{jt_4})
\cov(\varepsilon_{\ell s_3},\varepsilon_{js_4})
+\cov(\varepsilon_{it_2},\varepsilon_{\ell s_2})
\cov(\varepsilon_{\ell s_3},\varepsilon_{jt_4})
\cov(\varepsilon_{it_3},\varepsilon_{js_4})\\
&&
+\cov(\varepsilon_{it_2},\varepsilon_{\ell s_3})
\cov(\varepsilon_{\ell s_2},\varepsilon_{js_2})
\cov(\varepsilon_{it_3},\varepsilon_{jt_4})
+\cov(\varepsilon_{it_2},\varepsilon_{\ell s_3})
\cov(\varepsilon_{\ell s_2},\varepsilon_{jt_4})
\cov(\varepsilon_{it_3},\varepsilon_{js_4})\\
&&
+\cov(\varepsilon_{it_2},\varepsilon_{jt_4})
\cov(\varepsilon_{it_3},\varepsilon_{\ell s_2})
\cov(\varepsilon_{\ell s_3},\varepsilon_{js_4})
+\cov(\varepsilon_{it_2},\varepsilon_{jt_4})
\cov(\varepsilon_{\ell s_2},\varepsilon_{js_4})
\cov(\varepsilon_{it_3},\varepsilon_{\ell s_3})\\
&&+\cov(\varepsilon_{it_2},\varepsilon_{js_4})
\cov(\varepsilon_{it_3},\varepsilon_{\ell s_2})
\cov(\varepsilon_{\ell s_3},\varepsilon_{jt_4})
+\cov(\varepsilon_{it_2},\varepsilon_{js_4})
\cov(\varepsilon_{\ell s_2},\varepsilon_{jt_4})
\cov(\varepsilon_{it_3},\varepsilon_{\ell s_3})\\
&&
+\cov(\varepsilon_{it_3},\varepsilon_{\ell s_3})
\cov(\varepsilon_{\ell t_2},\varepsilon_{jt_4})
\cov(\varepsilon_{\ell s_2},\varepsilon_{j s_4})
+\cov(\varepsilon_{it_3},\varepsilon_{\ell s_3})
\cov(\varepsilon_{it_2},\varepsilon_{js_4})
\cov(\varepsilon_{\ell s_2},\varepsilon_{jt_4}).
\eerr
%
%
\item[ $\sigma^{(b)}$] $:=\left\{p_1,p_2: |p_1|=2, |p_2|=4\right\}$
\berr
&&\sum_{\sigma=\{p_1,p_2\}\in\dic(\pi,\sigma^{(b)})}\cum\left(\bm{\varepsilon^{,p_1}}\right)\cum\left(\bm{\varepsilon^{,p_2}}\right)
\nonumber\\
&
=&\cov(\varepsilon_{it_2},\varepsilon_{js_4})
\cum(\varepsilon_{\ell s_2},\varepsilon_{jt_4},
\varepsilon_{it_3},\varepsilon_{\ell s_3})
+\cov(\varepsilon_{it_3},\varepsilon_{\ell s_3})
\cum(\varepsilon_{it_2},\varepsilon_{\ell s_2},\varepsilon_{jt_4},\varepsilon_{js_4})\\
&&
+\cov(\varepsilon_{it_3},\varepsilon_{\ell s_2})
\cum(\varepsilon_{it_2},\varepsilon_{jt_4},\varepsilon_{\ell s_3},\varepsilon_{js_4})
+\cov(\varepsilon_{it_2},\varepsilon_{\ell s_3})
\cum(\varepsilon_{it_3},\varepsilon_{\ell s_2},\varepsilon_{jt_4},\varepsilon_{js_4})\\
&&
+\cov(\varepsilon_{it_2},\varepsilon_{\ell s_2})
\cum(\varepsilon_{it_3},\varepsilon_{\ell s_3},\varepsilon_{jt_4},\varepsilon_{js_4})
+\cov(\varepsilon_{it_3},\varepsilon_{js_4})
\cum(\varepsilon_{it_2},\varepsilon_{\ell s_2},\varepsilon_{\ell s_3},\varepsilon_{js_4})\\
&&
+\cov(\varepsilon_{it_3},\varepsilon_{jt_4})
\cum(\varepsilon_{it_2},\varepsilon_{\ell s_2},\varepsilon_{\ell s_3},\varepsilon_{js_4})
+\cov(\varepsilon_{\ell s_2},\varepsilon_{jt_4})
\cum(\varepsilon_{it_2},\varepsilon_{it_3},\varepsilon_{\ell s_3},\varepsilon_{js_4})\\
&&
+\cov(\varepsilon_{\ell s_2},\varepsilon_{js_4})
\cum(\varepsilon_{it_2},\varepsilon_{it_3},\varepsilon_{\ell s_3},\varepsilon_{jt_4})
+\cov(\varepsilon_{\ell s_3},\varepsilon_{jt_4})
\cum(\varepsilon_{it_2},\varepsilon_{it_3},\varepsilon_{\ell s_2},\varepsilon_{js_4})\\
&&
+
\cov(\varepsilon_{\ell s_3},\varepsilon_{js_4})
\cum(\varepsilon_{it_2},\varepsilon_{it_3},\varepsilon_{\ell s_2},\varepsilon_{jt_4})
+\cov(\varepsilon_{it_2},\varepsilon_{jt_4})
\cum(\varepsilon_{it_3},\varepsilon_{\ell s_2},\varepsilon_{\ell s_3},\varepsilon_{js_4}). 
\eerr
%
%
%
%
%
\item[ $\sigma^{(c)}$] $:=\left\{p_1,p_2: |p_1|=3, |p_2|=3\right\}$
\berr
&&\sum_{\sigma=\{p_1,p_2\}\in\dic(\pi,\sigma^{(c)})}\cum\left(\bm{\varepsilon^{,p_1}}\right)\cum\left(\bm{\varepsilon^{,p_2}}\right)
\nonumber\\
&=&\cum(\varepsilon_{it_2},\varepsilon_{\ell s_2},\varepsilon_{jt_4})
\cum(\varepsilon_{it_3},\varepsilon_{\ell s_3},\varepsilon_{js_4})
+ \cum(\varepsilon_{it_2},\varepsilon_{it_3},\varepsilon_{\ell s_2})
\cum(\varepsilon_{\ell s_3},\varepsilon_{jt_4},\varepsilon_{j s_4})\\
&& +\cum(\varepsilon_{it_2},\varepsilon_{\ell s_2},\varepsilon_{\ell s_3})
\cum(\varepsilon_{it_3},\varepsilon_{jt_4},\varepsilon_{js_4})
+\cum(\varepsilon_{it_2},\varepsilon_{\ell s_2},\varepsilon_{js_4})
\cum(\varepsilon_{it_3},\varepsilon_{\ell s_3},\varepsilon_{jt_4})\\
&& +\cum(\varepsilon_{it_2},\varepsilon_{it_3},\varepsilon_{\ell s_3})
\cum(\varepsilon_{\ell s_2},\varepsilon_{jt_4},\varepsilon_{js_4})
+\cum(\varepsilon_{it_2},\varepsilon_{it_3},\varepsilon_{js_4})
\cum(\varepsilon_{\ell s_2},\varepsilon_{\ell s_3},\varepsilon_{jt_4})\\
&& +\cum(\varepsilon_{it_2},\varepsilon_{\ell s_3},\varepsilon_{j s_4})
\cum(\varepsilon_{i t_3},\varepsilon_{\ell s_2},\varepsilon_{jt_4})
+\cum(\varepsilon_{it_2},\varepsilon_{jt_4},\varepsilon_{js_4})
\cum(\varepsilon_{it_3},\varepsilon_{\ell s_2},\varepsilon_{\ell s_3})\\
&& +\cum(\varepsilon_{it_2},\varepsilon_{it_3},\varepsilon_{jt_4})
\cum(\varepsilon_{\ell s_2},\varepsilon_{\ell s_3},\varepsilon_{js_4})
 +
\cum(\varepsilon_{it_2},\varepsilon_{\ell s_3},\varepsilon_{jt_4})
\cum(\varepsilon_{it_3},\varepsilon_{\ell s_2},\varepsilon_{js_4}).
\eerr
\item[ $\sigma^{(d)}$] $:=\left\{p_1, |p_1|=6\right\}$
\berr
&&\sum_{\sigma=\{p_1\}\in\dic(\pi,\sigma^{(d)})}\cum\left(\bm{\varepsilon^{,p_1}}\right)=\cum\left(\varepsilon_{it_2},\varepsilon_{it_3},\varepsilon_{it_3},
\varepsilon_{\ell s_2},\varepsilon_{\ell s_3},\varepsilon_{jt_3},\varepsilon_{js_3}\right).
\eerr
\end{itemize}
Using the equality in (\ref{eq:diagcumR}), for the terms in (\ref{eq:LA1unob})-(\ref{eq:LA1unobb}) we find
\ber
&&\frac{1}{N^2}\sum_k\sum_{t_1,\dots,t_4}\sum_{s_1,\dots,s_4}\sum_{i,\ell}\oi_{t_1t_2}\oi_{s_1s_2}\oi_{t_3t_4}\oi_{s_3s_4}\cov(v_{kjt_1},v_{kjs_1})\nonumber\\
&&\times
\cum\left(\varepsilon_{it_2}\varepsilon_{it_3},
\varepsilon_{\ell s_2}\varepsilon_{\ell s_3},
\varepsilon_{jt_4}\varepsilon_{js_4}\right)\nonumber\\
&=&\frac{1}{N^2}\sum_k\sum_{t_1,\dots,t_4}\sum_{s_1,\dots,s_4}\sum_{i,\ell}\oi_{t_1t_2}\oi_{s_1s_2}\oi_{t_3t_4}\oi_{s_3s_4}\cov(v_{kjt_1},v_{kjs_1})\nonumber\\
&&\times\left[
\sum_{\sigma=\{p_1,p_2,p_3\}\in\dic(\pi,\sigma^{(a)})}\cum\left(\bm{\varepsilon^{,p_1}}\right)\cum\left(\bm{\varepsilon^{,p_2}}\right)\cum\left(\bm{\varepsilon^{,p_3}}\right)
\right.\nonumber\\
&&+
\sum_{\sigma=\{p_1,p_2\}\in\dic(\pi,\sigma^{(b)})}\cum\left(\bm{\varepsilon^{,p_1}}\right)\cum\left(\bm{\varepsilon^{,p_2}}\right)
\nonumber\\
&&+
\sum_{\sigma=\{p_1,p_2\}\in\dic(\pi,\sigma^{(c)})}\cum\left(\bm{\varepsilon^{,p_1}}\right)\cum\left(\bm{\varepsilon^{,p_2}}\right)
\nonumber\\
&&\left.+\cum\left(\varepsilon_{it_2},\varepsilon_{it_3},\varepsilon_{it_3},
\varepsilon_{\ell s_2},\varepsilon_{\ell s_3},\varepsilon_{jt_3},\varepsilon_{js_3}\right)\right]\nonumber\\
&=& O\left(\frac{T^3}{N^2}\right)+O\left(\frac{T^2}{N^2}\right)+O\left(\frac{T}{N^2}\right). \label{eq:LA1unobBOUND}
\eer
The latter bound results from the bounds in displays (\ref{eq:partAA})-(\ref{eq:partAD}), which are in turn derived for  each of the four sets of diagrams, $\sigma^{(a)},\dots,\sigma^{(d)},$. To save space, for each set, computations are shown for only one diagram. The remaining terms can be shown to be at most of the same order, using similar arguments.

Consider first $\sigma_1^{(a)}\in\sigma^{(a)}$
$$
\sigma^{(a)}_1=\left\{p_1,p_2,p_3\right\}=\left\{
\{it_2,\ell s_2\}, \{it_3,j t_4\}, \{\ell s_3,j s_4\}
\right\}.
$$
Then, by Remark \ref{FHILL}
\ber
&&N^{-2}\sum_{i,j}\sum_{t_1,\dots,t_4}\sum_{s_1,\dots,s_4}\oi_{t_1t_2}\oi_{s_1s_2}\oi_{t_3t_4}\oi_{s_3s_4}\cov(v_{kjt_1},v_{kjs_1})\nonumber\\
&&\times
\cov(\varepsilon_{it_2},\varepsilon_{\ell s_2})
\cov(\varepsilon_{it_3},\varepsilon_{jt_4})
\cov(\varepsilon_{\ell s_3},\varepsilon_{js_4})\nonumber\\
&\leq &
N^{-2}\sum_k\sup_{t_2,s_2}\left|\sum_{t_1,s_1}|\oi_{t_1t_2}\cov(v_{kjt_1},v_{kjs_1})\oi_{s_1s_2}\right|\sup_{t_3,t_4}|\oi_{t_3t_4}|\sup_{s_3,s_4}|\oi_{s_3s_4}|\nonumber\\
&&\times\left(\sup_{i,\ell}\sum_{t_2,s_2}|\cov(\varepsilon_{it_2},\varepsilon_{\ell s_2})|\right)
\left(\sum_i\sum_{t_3,t_4}|\cov(\varepsilon_{it_3},\varepsilon_{jt_4})|\right)\nonumber\\
&&\times\left(\sum_{\ell}\sum_{s_3,s_4}|\cov(\varepsilon_{\ell s_3},\varepsilon_{js_4})|\right)\nonumber\\
&=&O\left(T^3/N^2\right). \label{eq:partAA}
\eer
Similarly, for 
$$
\sigma^{(b)}_1=\left\{p_1,p_2\right\}=\left\{
\{it_1,\ell t_5\},\{it_3,\ell s_3, jt_4,js_4  \}
\right\}
$$ 
\ber
&&N^{-2}\sum_{i,j}\sum_{t_2,t_3,t_4}\sum_{s_2,s_3,s_4}\oi_{t_1t_2}\oi_{s_1s_2}\oi_{t_3t_4}\oi_{s_3s_4}\cov(v_{kjt_1},v_{kjs_1})
\cov(\varepsilon_{it_2},\varepsilon_{\ell s_2})
\cum(\varepsilon_{it_3},\varepsilon_{\ell s_3},\varepsilon_{jt_4},\varepsilon_{js_4})
\nonumber\\
&\leq & N^{-2}
\sum_k\sup_{t_2,s_2}\left|\sum_{t_1,s_1}\oi_{t_1t_2}\cov(v_{kjt_1},v_{kjs_1})\oi_{s_1s_2}\right|\sup_{t_3,t_4}|\oi_{t_3t_4}|\sup_{s_3,s_4}|\oi_{s_3s_4}|\nonumber\\
&&\times
\left(\sup_{i,\ell}\sum_{t_2,s_2}|\cov(\varepsilon_{it_2},\varepsilon_{\ell s_2})|\right)
\left(\sum_{i,\ell}\sum_{t_3,t_4,s_3,s_4}|\cum(\varepsilon_{it_3},\varepsilon_{\ell s_3},\varepsilon_{jt_4},\varepsilon_{js_4})|\right)\nonumber\\
&=&O\left(T^2/N^2\right). \label{eq:partAB}
\eer
For
$$
\sigma^{(c)}_1=\left\{p_1,p_2\right\}=\left\{\left\{it_2,it_3,jt_4\right\},\left\{\ell s_2, \ell s_3, js_4\right\}\right\}
$$
we have
\ber
&&N^{-2}\sum_{i,j}\sum_{t_2,t_3,t_4}\sum_{s_2,s_3,s_4}\oi_{t_1t_2}\oi_{s_1s_2}\oi_{t_3t_4}\oi_{s_3s_4}\cov(v_{kjt_1},v_{kjs_1})
\cum(\varepsilon_{it_2},\varepsilon_{\ell s_2},\varepsilon_{jt_4})
\cum(\varepsilon_{it_3},\varepsilon_{\ell s_3},\varepsilon_{js_4})\nonumber
\\
&\leq & N^{-2}
\sum_k\sup_{t_2,s_2}\left|\sum_{t_1,s_1}\oi_{t_1t_2}\cov(v_{kjt_1},v_{kjs_1})\oi_{s_1s_2}\right|\sup_{t_3,t_4}|\oi_{t_3t_4}|\sup_{s_3,s_4}|\oi_{s_3s_4}|\nonumber\\
&&\times
\left(
\sum_{i,\ell}\sum_{t_2,s_2,t_4}
|\cum(\varepsilon_{it_2},\varepsilon_{\ell s_2},\varepsilon_{jt_4})|\right)
\left(
\sum_{i,\ell}\sum_{t_3,s_3,s_4}
|\cum(\varepsilon_{it_3},\varepsilon_{\ell s_3},\varepsilon_{js_4})|\right)\nonumber\\
&=&O\left(T^2/N^2\right).\label{eq:partAC}
\eer
Finally, for $\sigma^{(d)}$ we have
\ber
&&N^{-2}\sum_{i,j}\sum_{t_2,t_3,t_4}\sum_{s_2,s_3,s_4}\oi_{t_1t_2}\oi_{s_1s_2}\oi_{t_3t_4}\oi_{s_3s_4}\cov(v_{kjt_1},v_{kjs_1})
\cum\left(\varepsilon_{it_2},\varepsilon_{it_3},\varepsilon_{it_3},
\varepsilon_{\ell s_2},\varepsilon_{\ell s_3},\varepsilon_{jt_3},\varepsilon_{js_3}\right)\nonumber
\\
&\leq &N^{-2}
\sum_k\sup_{t_2,s_2}\left|\sum_{t_1,s_1}\oi_{t_1t_2}\cov(v_{kjt_1},v_{kjs_1})\oi_{s_1s_2}\right|\sup_{t_3,t_4}|\oi_{t_3t_4}|\sup_{s_3,s_4}|\oi_{s_3s_4}|\nonumber\\
&&\times \sum_{t_2,t_3,t_4}\sum_{s_2,s_3,s_4}\sum_{i,\ell}|\cum\left(\varepsilon_{it_2},\varepsilon_{it_3},\varepsilon_{it_3},
\varepsilon_{\ell s_2},\varepsilon_{\ell s_3},\varepsilon_{jt_3},\varepsilon_{js_3}\right)|\nonumber\\
&=&O(T/N^2)\label{eq:partAD}.
\eer

%
%
%
\item[]
\item
Setting $J=0$ in  Lemma~\ref{expansion}, the term (\ref{eq:boundIIb}) can be rewritten as
\ber
&&
\frac{1}{TN}\sum_{i=1}^N\vj'\SSIb\epi\epi'\md \x\psinv\v'\SSIb\epj\label{eq:boundIIb1}\\
&&+\frac{1}{TN}\sum_{i=1}^N\vj'\SSIb\epi\epi'\md \x\psinv\left(\ps-\frac{\xx'\xx}{T}\right)\ixx\v'\SSIb\epj\label{eq:boundIIb1K}
\\
&=& O_p(1). \label{eq:boundM6ALB}
\eer
The latter bound in (\ref{eq:boundM6ALB}) follows from the results in Lemmata I:
\berr
&& \lb  
\frac{1}{TN}\sum_{i=1}^N\vj'\SSIb\epi\epi'\md \x\psinv\left(\ps-\frac{\xx'\xx}{T}\right)\ixx\v'\SSIb\epj
\rb\nonumber\\
&&\leq\frac{1}{N}\sum_{i=1}^N\lb\frac{\vj'\SSIb\epi}{\sqrt{T}}\rb
\lb \frac{\epi'\md \x}{\sqrt{T}}\rb \lb\psinv\rb 
\lb\dif\rb \lb \ixx \rb\lb\v'\SSIb\epj\rb
\\
&=&O_p(1).
\eerr

Recalling that $\md\x=\md\f\Gi+\v-\pd\v$, for the term in  (\ref{eq:boundIIb1}) we have
\ber
&&\lb\frac{1}{TN}\sum_{i=1}^N\vj'\SSIb\epi\epi'\md \x\psinv\v'\SSIb\epj\rb\nonumber\\
%
&\leq & \lb\vj'\SSIb\frac{1}{TN}\sum_{i=1}^N\epi\epi'\v\psinv\v'\SSIb\epj\rb\label{eq:bcinque}\\
&&+\lb\vj'\SSIb\frac{1}{TN}\sum_{i=1}^N\epi\epi'\pd\v\psinv\v'\SSIb\epj\rb\label{eq:strano}\\
&&+  \lb\vj'\SSIb\frac{1}{TN}\sum_{i=1}^N\epi\epi'\md\f\Gi\psinv\v'\SSIb\epj\rb . \label{eq:sei}
\eer
The term (\ref{eq:strano}) is bounded using again the results in Lemmata I,
\be\label{alber1}
 \frac{1}{N}\sum_{i=1}^N\lb\frac{\vj'\SSIb\epi}{\sqrt{T}}\rb
\lb \frac{\epi'\d}{\sqrt{T}}\rb \lb \left(\frac{\d'\d}{T}\right)^{-1}\rb \lb \frac{\d'\v}{\sqrt{T}}\rb \lb\psinv\rb\lb\frac{\v'\SSIb\epj}{\sqrt{T}}\rb=O_p(1). 
\ee
For the term(\ref{eq:bcinque}) we have
\ber
&&\E\lb\vj'\SSIb\frac{1}{TN}\sum_{i=1}^N\epi\epi'\v\psinv\v'\SSIb\epj\rb^2\nonumber\\
&=&
\frac{1}{T^2N^2}\E\left[\sum_{i,\ell}\sum_{t_1,\dots,t_5}\sum_{s_1,\dots,s_5}
\sum_{k_1,k_2,k_3}\sum_{h_2,h_3}\oi_{t_1t_2}\si_{ik_2k_3}\oi_{t_4t_5}\oi_{s_1s_2}\si_{lh_2h_3}\oi_{s_4s_5}\right.\nonumber\\
&\times&\left.\E\left(v_{k_1t_1j}v_{k_1s_1j}
v_{it_3k_2}v_{ik_3t_4}
v_{\ell s_3h_2}v_{\ell h_3s_4}v_{\ell s_3h_2}\right)
\E\left(\varepsilon_{it_2}\varepsilon_{it_3}\varepsilon_{jt_5}\varepsilon_{\ell s_2}\varepsilon_{\ell s_3}\varepsilon_{js_5}\right)
\nonumber\right]
\\
&\leq&
\frac{1}{T^2N^2}
\sum_{k_1,k_2,k_3}\sum_{h_1,h_2,h_3}\E\left[\sup_{i,\ell}\left|\si_{ik_2k_3}\si_{\ell h_2h_3}\right|\right.\nonumber\\
&\times&
\sup_{i,\ell}\sup_{t_3,t_4,s_3,s_4}
\sum_{t_2,t_5}\sum_{s_2,s_5}|\oi_{t_4t_5}\oi_{s_4s_5}\E\left(
\varepsilon_{it_2}\varepsilon_{it_3}\varepsilon_{\ell s_2}\varepsilon_{\ell s_3}\varepsilon_{js_5}\varepsilon_{jt_5}\right)|\label{eq:25janA}\\
&\times& \!\!\!\!
\left.
\sum_{i,\ell}\sup_{t_2,s_2}\sum_{t_1,t_3,t_4}\sum_{s_1,s_3,s_4}|\oi_{t_1t_2}\oi_{s_1s_2}\E\left(v_{jk_1t_1}v_{jh_1s_1}v_{it_3k_2}v_{ik_3t_4}v_{\ell s_3h_2}v_{\ell h_3s_4}\right)|\right]
\label{eq:25janB}\\
&&= O(1)+O\left(\frac{T}{N}\right). \label{eq:boudT1B}
\eer
The  upper bound follows noting  that 
$
\sum_{k,h}\sup_{i}|\si_{ikh}|\leq  \infty
$  by Remark~\ref{FHILL2}
and the bounds in(\ref{eq:EXT1}) and (\ref{eq:EXT2}) below. 

The term (\ref{eq:25janB}) can be written as
\berr
&&
\sum_{i,\ell}\sup_{t_2,s_2}\sum_{t_1,t_3,t_4}\sum_{s_1,s_3,s_4}|\oi_{t_1t_2}\oi_{s_1s_2}\E\left(v_{jk_1t_1}v_{jh_1s_1}v_{it_3k_2}v_{ik_3t_4}v_{\ell h_2s_3}v_{\ell h_3s_4}\right)|\\
&=& \sum_{i,\ell}\sup_{t_2,s_2}\sum_{t_1,t_3,t_4}\sum_{s_1,s_3,s_4}\left|\oi_{t_1t_2}\oi_{s_1s_2}
\sum_{\pi=\{a_1,\dots,a_l\}\in\pp(a)}\cum\left(\bm{v^{,a_1}}\right)\cdots \cum\left(\bm{v^{,a_l}}\right)\right|,
\eerr
for $l=1,2,3$, and 
$$
\bm{a}=\left\{
jk_1t_1,jh_1s_1,ik_2t_3,ik_5t_4,\ell h_2 s_3,\ell h_5 s_4,
\right\}.
$$
First we consider the partition $\pi_1\in\pi$ where each block has the same cross-sectional index:
$$
\pi_1=\left\{
\{jk_1t_1,jh_1s_1\},\{ik_2t_3,ik_5t_4\},\{\ell h_2 s_3,\ell h_5 s_4\}
\right\},
$$
leading to
\ber
&&
\sum_{i,\ell}\sup_{t_2,s_2}\sum_{t_1,t_3,t_4}\sum_{s_1,s_3,s_4}|\oi_{t_1t_2}\oi_{s_1s_2}\cum\left(v_{jk_1t_1},v_{jh_1s_1}\right)\cum\left(v_{it_3k_2},v_{ik_3t_4}\right)\cum\left(v_{\ell s_3h_2},v_{\ell h_3s_4}\right)|\nonumber\\
&\leq &\sup_{i,\ell}\sup_{t_2,s_2}\sum_{t_1,s_1}|\oi_{t_1t_2}\cum(v_{jk_1t_1},v_{jh_1s_1})\oi_{s_1s_2}|
\sum_{t_3,t_4}|\cum\left(v_{it_3k_2},v_{ik_3t_4}\right)|
\sum_{s_3,s_4}|\cum\left(v_{\ell s_3h_2},v_{\ell h_3s_4}\right)|\nonumber\\
&=& O\left(T^2N^2\right). \label{eq:EXT1}
\eer
Using similar arguments, it can be shown that the remaining partitions the bound is $O(T^2N)$. For illustrative purposes, we consider the partition
$$
\pi_2=\left\{
\{jk_1t_1,ik_2t_3\},\{jh_1s_1,ik_5t_4\},\{\ell h_2 s_3,\ell h_5 s_4\}
\right\},
$$
for which, again by Remark~\ref{FHILL}
\berr
&&\sum_{i,\ell}\sup_{t_2,s_2}\sum_{t_1,t_3}|\oi_{t_1t_2}\cum(v_{jk_1t_1},v_{ik_2t_3})|
\sum_{s_1,t_4}|\oi_{s_1s_2}\cum\left(v_{jh_1s_1},v_{ik_3t_4}\right)|
 \sum_{s_3,s_4}|\cum\left(v_{\ell h_2 s_3},v_{\ell h_3s_4}\right)|\\
 %
 &=&O_p(T^2N).
\eerr

The proof of the result below,
\be\label{eq:EXT2}
\sup_{t_3,t_4,s_3,s_4}
\sum_{t_2,t_5}\sum_{s_2,s_5}|\oi_{t_4t_5}\oi_{s_4s_5}\E\left(
\varepsilon_{it_2}\varepsilon_{it_3}\varepsilon_{\ell s_2}\varepsilon_{\ell s_3}\varepsilon_{js_5}\varepsilon_{jt_5}\right)|=O\left(T^2\right),
\ee
follows along the same lines, and hence details are omitted.

\bigskip
Next we consider  the second term (\ref{eq:sei}). Define the $T\times K$ matrix
$
\bm{\chi_i}:=T^{-\nicefrac{1}{2}}\md\f\Gi\psinv
$
with entry $\ft_{itk}$. Noting that 
\ber
\lb\fft\rb^2& \leq & \lambda_K\left(\bm{\Psi_i^{-1}}\right)
\tr\left(\bm{\Psi_i^{-\nicefrac{1}{2}}}
\frac{\Gi'\ff'\ff\Gi}{T}
\bm{\Psi_i^{-\nicefrac{1}{2}}}
\right)\nonumber\\
&=&\lambda_K\left(\bm{\Psi_i^{-1}}\right)
\tr\left(
\bm{I_K}-\bm{\Psi_i^{-\nicefrac{1}{2}}}\SO_{\v'\v}
\bm{\Psi_i^{-\nicefrac{1}{2}}}
\right)\leq K\lambda_K\left(\bm{\Psi_i^{-1}}\right)<\infty . \label{eq:chibound}
\eer
Hence, by Lemma \ref{boundinverse}\ref{boundinv1}, $\sup_i\lb\fft\rb$ is uniformly bounded. Then,
\ber
&&\phantom{=}\E\lb\vj'\SSIb\frac{1}{TN}\sum_{i=1}^N\epi\epi'\md\f\Gi\psinv\v'\SSIb\epj\rb^2\nonumber\\
&&=\E\lb\vj'\SSIb\frac{1}{\sqrt{T}N}\sum_{i=1}^N\epi\epi'\fft\v'\SSIb\epj\rb^2\nonumber\\
&&=\E\left[
\frac{1}{TN^2}\sum_{i,\ell}\sum_{t_1,\dots,t_5}\sum_{s_1,\dots,s_5}
\sum_{k_1,k_2}\sum_{h_1,h_2}\oi_{t_1t_2}\oi_{s_1s_2}\oi_{t_4t_5}\oi_{s_4s_5}\ft_{it_3k_2}\ft_{is_3h_2}
\right.\nonumber\\
&&\phantom{=}\times\left.
\E\left(
v_{jk_1t_1}v_{jk_1s_1}v_{it_4k_2}v_{\ell s_4h_2}
\right)
\E\left(
\varepsilon_{it_2}\varepsilon_{it_3}
\varepsilon_{\ell s_2}\varepsilon_{\ell s_3}\varepsilon_{jt_5}\varepsilon_{js_5}
\right)
\right]\nonumber\\
&&\leq\E\left[
\frac{1}{TN^2}
\sup_{i,\ell}\sum_{t_2,t_3,t_5}\sum_{s_2,s_3,s_5}
|\E\left(
\varepsilon_{it_2}\varepsilon_{it_3}
\varepsilon_{\ell s_2}\varepsilon_{\ell s_3}\varepsilon_{jt_5}\varepsilon_{js_5}
\right)\ft_{it_3k_2}\ft_{is_3h_2}|\right.\label{eq:NINO2}\\
&&\phantom{=}\times \sup_{t_2,t_5,s_2,s_5}\sum_{i,\ell}\sum_{t_1,t_4}\sum_{s_1,s_4}|\oi_{t_1t_2}\oi_{s_1s_2}\oi_{t_4t_5}\oi_{s_4s_5}\E\left(
v_{jk_1t_1}v_{jk_1s_1}v_{it_4k_2}v_{\ell s_4h_2}\right)|\label{eq:NINO}\\
&&= \frac{1}{TN^2}O(N)O(T^2)=O\left(\frac{T}{N}\right). 
\label{eq:BOUNDT1b}
\eer
To prove that latter bound, we first consider the sum in Equation (\ref{eq:NINO}). By Remark \ref{FHILL},
\berr
%
%
&&\sup_{t_2,t_5,s_2,s_5}\sum_{i,\ell}\sum_{t_1,t_4}\sum_{s_1,s_4}\left|\oi_{t_1t_2}\oi_{s_1s_2}\oi_{t_4t_5}\oi_{s_4s_5}\E\left(
v_{jk_1t_1}v_{jk_1s_1}v_{it_4k_2}v_{\ell s_4h_2}\right)\right|\\
&\leq& \sup_{t_2,s_2}\sum_{t_1,s_1}\left|\oi_{t_1t_2}\cum\left(v_{jk_1t_1},v_{jk_1s_1}\right)\oi_{s_1s_2}\right|
\sup_{t_5,s_5}\sum_{i,\ell}\sum_{t_4,s_4}\left|\oi_{t_4t_5}\cum\left(v_{it_4k_2},v_{\ell s_4h_2}\right)\oi_{s_4s_5}\right|\\
&+&\sup_{t_2,t_5}\sum_{t_1,t_4}\sum_i\left|\oi_{t_1t_2}\cum\left(v_{jk_1t_1},v_{it_4k_2}\right)\oi_{t_4t_5}\right|
\sup_{s_2,s_5}\sum_{s_1,s_2}\sum_{\ell}\left|\oi_{s_1s_2}\cum\left(v_{jk_1s_1},v_{\ell s_4h_2}\right)\oi_{s_4s_5}\right|\\
&+&\sup_{t_2,s_5}\sum_{t_1,s_4}\sum_{\ell}\left|\oi_{t_1t_2}
\cum\left(v_{jk_1t_1},v_{\ell s_4h_2}\right)\oi_{s_4s_5}\right|
\sup_{s_2,t_5}\sum_{s_1,t_4}\sum_{i}\left|
\oi_{s_1s_2}\cum\left(v_{jk_1s_1},v_{it_4k_2}\right)\oi_{t_4t_5}\right|\\
&+& \sup_{t_1,t_2}\left|\oi_{t_1t_2}\right|\sup_{s_1,s_2}\left|\oi_{s_1s_2}\right|
\sup_{t_4,t_5}
\sum_i\left|\oi_{t_4t_5}\right|\sup_{s_4,s_5}\left|\oi_{s_4s_5}\right|\sum_{t_1,t_4,s_1,s_4}
\sum_{i,\ell}\cum\left(v_{jk_1t_1}v_{jk_1s_1}v_{it_4k_2}v_{\ell s_4h_2}\right)\\
&=& O(N)+O(1)+O(1)+O(1).
\eerr
Next, for the term in  (\ref{eq:NINO2}), we show that,  for $l=1,2,3$,
\berr
&&\sup_{i,\ell}\sum_{t_2,t_3,t_5}\sum_{s_2,s_3,s_5}
|\E\left(
\varepsilon_{it_2}\varepsilon_{it_3}
\varepsilon_{\ell s_2}\varepsilon_{\ell s_3}\varepsilon_{jt_5}\varepsilon_{js_5}
\right)\ft_{it_3k_2}\ft_{is_3h_2}|\\
&=&\sup_{i,\ell}\sum_{t_2,t_3,t_5}\sum_{s_2,s_3,s_5}
\left|\sum_{\pi=\{a_1,\dots,a_l\}\in\pp(a)}\cum(\bm{\varepsilon^{,a_1}})\cdots\cum(\bm{\varepsilon^{,a_l}})\ft_{it_3k_2}\ft_{is_3h_2}\right|=O(T^2).
\eerr
For the summation over $\pi$, we consider only two partitions for illustrative purposes, namely
$$
\pi_1 =\left\{
\{it_2,it_3\},\{\ell s_2,\ell s_3\},\{jt_5,js_5\}
\right\},\qquad
\pi_2 = \left\{
\{it_2,\ell s_2\},\{i t_3,\ell s_3\},\{jt_5,js_5\}
\right\}
$$
The partition $\pi_1$ is formed by blocks with the same cross-sectional indices. By Lemma~\ref{matrix} and the result in display (\ref{eq:chibound}), we have
\berr
&&\sup_{i,\ell}\sum_{t_2,t_3,t_5}\sum_{s_2,s_3,s_5}
|\cum\left(
\varepsilon_{it_2},\varepsilon_{it_3}\right)
\cum\left(\varepsilon_{\ell s_2},\varepsilon_{\ell s_3}\right)\cum\left(\varepsilon_{jt_5},\varepsilon_{js_5}
\right)\ft_{it_3k_2}\ft_{is_3h_2}|\\
&=&\sup_{i,\ell}
\left(\sum_{t_2,t_3}\left|\cov(\varepsilon_{it_2},\varepsilon_{it_3})\ft_{it_3k_2}\right|\right)
\left(\sum_{s_2,s_3}\left|\cov(\varepsilon_{\ell s_2}\varepsilon_{\ell s_3})\ft_{\ell s_3h_2}\right|\right)
\left(\sum_{t_5,s_5}\left|\cov(\varepsilon_{j t_5},\varepsilon_{j s_5})\right|\right)\\
&=&O\left(\st\right)O\left(\st\right)O(T),
\eerr
whereas, for $\pi_2$
\berr
&&\sup_{i,\ell}\sum_{t_2,t_3,t_5}\sum_{s_2,s_3,s_5}
|\cum\left(
\varepsilon_{it_2}\varepsilon_{\ell s_2}\right)
\cum\left(\varepsilon_{it_3}\varepsilon_{\ell s_3}\right)\cum\left(\varepsilon_{jt_5}\varepsilon_{js_5}
\right)\ft_{it_3k_2}\ft_{is_3h_2}|\\
&=&\sup_{i,\ell}
\left(\sum_{t_2,s_2}|\cov(\varepsilon_{it_2},\varepsilon_{\ell s_2})|\right)
\left(\sum_{t_3,s_3}|\ft_{i t_3 k_2}\cov(\varepsilon_{i t_3}\varepsilon_{\ell s_3})\ft_{\ell s_3h_2}|\right)
\left(\sum_{t_5,s_5}|\cum\left(\varepsilon_{jt_5}\varepsilon_{js_5}\right)|\right)\\
&=&O\left(T\right)O\left(1\right)O(T).
\eerr

%
%
%
%
%
%
%
\item[]
\item
Similarly to \emph{Part (b)} we first consider the inequality 
\ber
&& \lb\frac{1}{TN}\sum_{i=1}^N\vj'\SSI\v\left(\frac{\xx'\xx}{T}\right)^{-1}\x'\md\epi\epi'\SSIb\epj\rb\nonumber\\
&\leq &
\lb\frac{1}{TN}\sum_{i=1}^N\vj'\SSI\v\left(\frac{\xx'\xx}{T}\right)^{-1}\v'\md\epi\epi'\SSIb\epj\rb\label{eq:cuno5a}\\
&+& \lb\frac{1}{TN}\sum_{i=1}^N\vj'\SSI\v\left(\frac{\xx'\xx}{T}\right)^{-1}\Gi'\f'\md\epi\epi'\SSIb\epj\rb\label{eq:cuno5b}\\
&=&\max\left\{
O_p\left(1\right),O_p\left(\frac{T^{\nicefrac{3}{2}}}{N}\right)
\right\}.  \label{eq:BOUNDT1c}
\eer
Using Lemma~\ref{expansion} with $J=2$, the squared norm in  (\ref{eq:cuno5a}) is bounded by
\ber 
&&\lb\frac{1}{TN}\sum_{i=1}^N\vj'\SSI\v\left(\frac{\xx'\xx}{T}\right)^{-1}\v'\md\epi\epi'\SSIb\epj\rb^2\nonumber\\
&\leq & \lb\frac{1}{TN}\sum_{i=1}^N\vj'\SSI\v\psinv\v'\md\epi\epi'\SSIb\epj\rb^2
\label{eq:ctre1}
\\
&+&\lb\frac{1}{TN}\sum_{i=1}^N\vj'\SSI\v\psinv\dif\psinv\v'\md\epi\epi'\SSIb\epj\rb^2\label{eq:ctre2}\\
&+&\lb\frac{1}{TN}\sum_{i=1}^N\vj'\SSI\v\left[\psinv\dif\right]^2\psinv\v'\md\epi\epi'\SSIb\epj\rb^2\label{eq:ctre3}\\
&+&\lb\frac{1}{TN}\sum_{i=1}^N\vj'\SSI\v\rem^{(3)}\v'\md\epi\epi'\SSIb\epj\rb^2\label{eq:ctre4}\\
&=& \max\left(O_p(1),O_p\left(\frac{T^{\nicefrac{3}{2}}}{N}\right)\right).
\label{eq:BoundCA}
\eer
To show how the latter bound is obtained, we analyse each term separately.
Let $\m_{ts}$ denote $t,s$-entry of $\md$. We first consider the expectation of the term (\ref{eq:ctre1}):
\berr
&&\E\left[\frac{1}{N^2T^2}\sum_{t_1,\dots,t_6}\sum_{s_1,\dots,s_6}\sum_{k_1,k_2,k_3}\sum_{h_1,h_2,h_3}\sum_{i,\ell}\oi_{t_1t_2}\oi_{t_5t_6}\oi_{s_1s_2}\oi_{s_5s_6}\m_{t_3t_4}\m_{s_3s_4}\si_{k_2k_3}\si_{h_2h_3}\right.\\
&&\left.\times \E\left(
v_{jt_1k_1}v_{js_1k_1}
v_{it_2k_2}v_{it_3k_3}
v_{\ell s_2h_2}v_{\ell t_3h_3}
\right)
\E\left(
\varepsilon_{it_4}\varepsilon_{it_5}
\varepsilon_{\ell s_4}\varepsilon_{\ell s_5}
\varepsilon_{j t_6}\varepsilon_{j s_6}
\right)
\right]\\
&=&\E\left[\frac{1}{N^2T^2}\sum_{t_1,\dots,t_6}\sum_{s_1,\dots,s_6}\sum_{k_1,k_2,k_3}\sum_{h_1,h_2,h_3}\sum_{i,\ell}\oi_{t_1t_2}\oi_{t_5t_6}\oi_{s_1s_2}\oi_{s_5s_6}\m_{t_3t_4}\m_{s_3s_4}\si_{k_2k_3}\si_{h_2h_3}\right.\\
&&\times\sum_{\pi=\{a_1,\dots,a_l\}\in\pp(\bm{a})}\cum(\vvv^{,a_1})\cdots\cum(\vvv^{,a_l})
\sum_{\sigma=\{p_1,\dots,p_k\}\in\pp(\bm{p})}\cum(\bm{\varepsilon}^{,p_1})\cdots\cum(\bm{\varepsilon}^{,p_k}),
\eerr
where $k,l=1,2,3$.

Consider first the partitions  where all the blocks have the same cross-sectional indexes 
\ber
&&\E\left[\frac{1}{N^2T^2}\sum_{t_1,\dots,t_6}\sum_{s_1,\dots,s_6}\sum_{k_1,k_2,k_3}\sum_{h_1,h_2,h_3}\sum_{i,\ell}
\oi_{t_1t_2}\oi_{t_5t_6}\oi_{s_1s_2}\oi_{s_5s_6}\m_{t_3t_4}\m_{s_3s_4}\si_{k_2k_3}\si_{h_2h_3}\right.\nonumber\\
&&\times  \cov(v_{jt_1k_1},v_{js_1k_1})\cov(v_{i t_2 k_2},v_{i t_3 k_3})
\cov(v_{\ell  s_2 h_2},v_{\ell  s_3 h_3})\nonumber\\
&&\times\left.
\cov(\varepsilon_{it_4},\varepsilon_{it_5})\cov(\varepsilon_{\ell s_4},\varepsilon_{\ell s_5})
\cov(\varepsilon_{jt_6},\varepsilon_{js_6})\right]\nonumber\\
&\leq&\frac{\ka}{T^2}\E\sup_{i,\ell}\left[
\sup_{t_3,s_3}\Big|\sum_{t_2,s_2}\cov(v_{it_2k_2},v_{it_3k_3})\left[\sum_{t_1,s_1}\oi_{t_1t_2}\cov(v_{jt_1k_1},v_{js_1k_1})\oi_{s_1s_2}\right]
\cov(v_{\ell  s_2 h_2},v_{\ell  s_3 h_3})\Big|\right.\nonumber
\\
&\times& \left.\sum_{t_3,s_3}\Big|\sum_{t_4,t_5}\sum_{s_4,s_5}
\m_{t_3t4}\cov(\varepsilon_{it_4},\varepsilon_{it_5})\left[\sum_{t_6s_6}\oi_{t_5t_6}\cov(\varepsilon_{jt_6},\varepsilon_{js_6})\oi_{s_5s_6}\right]\cov(\varepsilon_{\ell s_4},\varepsilon_{\ell s_5})\m_{s_3s_4}\label{eq:rinogaeteano}
\Big|\right]\\
&=&\frac{1}{T^2}O(1)O(T^2)=O\left(1\right). \label{eq:BoundC1}
\eer
We only show how to derive the bound for the term in (\ref{eq:rinogaeteano}). 
By inequality 4.(e) p. 111 in \cite{ltk}, is it bounded by $T^{\nicefrac{3}{2}}
\lb \md\Hi\left[\SSIb\bm{\Xi_j}\SSIb\right]\bm{\Xi_{\ell}}\md\rb_{sp}=O_p\left(
T^{\nicefrac{3}{2}}\right)$. 

We consider two further sets of partitions, characterized by different upper bounds.
\berr
\pi_1&=&\left\{
\{jt_1k_1,js_1k_1\},\{it_2k_2,\ell s_2 h_2\},\{it_3k_3,\ell s_3 h_3\}\right\},\quad \sigma_1=\left\{\{it_4,\ell s_4\}, \{it_5,\ell s_5\},\{jt_6,js_6\}
\right\},\\
\pi_2 &=& \left\{
\{ jt_1k_1, it_2 k_2\}, \{js_1k_1, \ell s_2 h_2\}, \{it_3k_3, \ell s_3 h_3\}\right\},\quad\sigma_2=\left\{
\{it_5,jt_6\}, \{\ell s_5, j s_6\}, \{it_4,\ell s_4\}
\right\}.
\eerr
For the partitions $\pi_1,\sigma_1$ we have 

\ber
&& \E\left[\frac{1}{(NT)^2}\sum_{t_1,\dots,t_6}\sum_{s_1,\dots,s_6}\sum_{k_1,k_2,k_3}\sum_{h_1,h_2,h_3}\sum_{i,\ell}\oi_{t_1t_2}\oi_{t_5t_6}\oi_{s_1s_2}\oi_{s_5s_6}\m_{t_3t_4}\m_{s_3s_4}\si_{k_2k_3}\pi_{h_2h_3}\right.\nonumber\\
&&\times \left.
\cov(v_{ j t_1 k_1},v_{j s_1 k_1})\cov(v_{ i t_2 k_2},v_{\ell s_2 h_2})\cov(v_{it_3k_3},v_{\ell s_3 h_3})\right.\nonumber\\
&&\times\left.\cov(\varepsilon_{it_4},\varepsilon_{\ell s_4})\cov(
\varepsilon_{it_5},\varepsilon_{\ell s_5})
\cov(\varepsilon_{j t_6},\varepsilon_{j s_6})
\right]\nonumber\\
&\leq & \frac{\ka}{(NT)^2}\E\left[\sup_{t_2,s_2}\left|\sum_{t_1,s_1}\oi_{t_1t_2}\cov(v_{ j t_1 k_1},v_{j s_1 k_1})\oi_{s1s_2}\right|\sup_{i,\ell}\left[\sup_{t_4s_4}\left|\sum_{t_3,s_3}\m_{t_3t_4}\cov(v_{it_3k_3},v_{\ell s_3 h_3}) \m_{s_3s_4}\right|\right.\right.\nonumber\\
&&\times\left.\sup_{t_5,s_5}\left|\sum_{t_6,s_6}\oi_{t_5t_6}\cov(
\varepsilon_{j t_6},\varepsilon_{js_6})\oi_{s_5s_6}\right|
\sum_{t_4,s_4}|\cov(\varepsilon_{it_4},\varepsilon_{\ell s_4})|\sum_{t_5,s_5}|\cov(\varepsilon_{i t_5},\varepsilon_{\ell s_5})|\right]\nonumber\\
&&\times
\left.\sum_{i,\ell}\sum_{t_2,s_2}\left|\cov(v_{ i t_2 k_2},v_{\ell s_2 h_2})\right|\right]\nonumber\\
&=&\frac{1}{(NT)^2}O(1)O(1)O(1)O(T)O(T)O(TN)=O\left(\frac{T}{N}\right). \label{eq:BoundC2}
\eer
Finally, for the partitions $\pi_2,\sigma_2$
\berr
&& \E\left[\frac{1}{(NT)^2}\sum_{t_1,\dots,t_6}\sum_{s_1,\dots,s_6}\sum_{k_1,k_2,k_3}\sum_{h_1,h_2,h_3}\sum_{i,\ell}\oi_{t_1t_2}\oi_{t_5t_6}\oi_{s_1s_2}\oi_{s_5s_6}\m_{t_3t_4}\m_{s_3s_4}\si_{k_2k_3}\si_{h_2h_3}\right.\nonumber\\
&&\times \left.
\cov(v_{ j t_1 k_1},v_{i t_2 k_2})\cov(v_{js_1k_1},v_{\ell s_2 h_2})\cov(v_{it_3k_3},v_{\ell s_3 h_3}) \right]\nonumber\\
&&\times\left.\cov(\varepsilon_{it_5},\varepsilon_{j t_6})\cov(\varepsilon_{\ell  s_5},\varepsilon_{ js_5})\cov(
\varepsilon_{i t_4},\varepsilon_{\ell s_4})\right]\nonumber\\
&\leq &\frac{\ka}{(NT)^2}\E\left\{
\sup_{t_5,t_6}\left|\oi_{t_5t_6}\right|\sum_i\sum_{t_5,t_6}\left|\cov(
\varepsilon_{i t_5},\varepsilon_{jt_6})\right|
\sup_{s_5,s_6}\left|\oi_{s_5s_6}\right|\sum_{\ell}\sum_{s_5,s_6}\left|\cov(
\varepsilon_{\ell s_5},\varepsilon_{js_6})\right|\nonumber\right.\\
&& \times \sup_{i,\ell}\left[\sup_{t_1,t_2}|\oi_{t_1,t_2}|\sum_{t_1,t_2}\cov\left(v_{jt_1k_1},v_{it_2k_2}\right)
\sup_{s_1,s_2}|\oi_{s_1,s_2}|\sum_{s_1,s_2}\cov\left(v_{js_1k_1},v_{\ell s_2 h_2}\right)\right.\nonumber\\
&&\times\left.\left.\sup_{t_4s_4}\left|\sum_{t_3,s_4}\m_{t_3t_4}\cov(v_{it_3k_4},v_{\ell s_3 h_3}) \m_{s_3t_4})\right|\sum_{t_4,s_4}|\cov(\varepsilon_{it_4},\varepsilon_{\ell s_4})|
\right]\right\}\nonumber\\
&=&\frac{\ka}{(NT)^2}O(T)O(T)O(T)O(T)O(1)O(T)= O\left(\frac{T^3}{N^2}\right).
\label{BoundC3}
\eerr

Using the same techniques illustrated above, it can be shown that  the terms in  (\ref{eq:ctre2}) and (\ref{eq:ctre3}) are $\max\left(O_p(1),O_p\left(T^{\nicefrac{3}{2}}/N\right)\right)$. The proof is tedious, due the the high order of the cumulants involving the $v_{ikt}$s, and hence omitted to save space. It might be worth to note that bound of the latter term requires the high order moment assumptions on the process  $\{v_{itk}\}$ in Assumption \ref{ass V}.

Concerning (\ref{eq:ctre4}) we have
\berr
&&\lb\frac{1}{TN}\sum_{i=1}^N\vj'\SSI\v\rem^{(3)}\v'\md\epi\epi'\SSIb\epj\rb\\
&\leq& 
\frac{1}{N}\sum_{i=1}^N\lb\frac{\vj'\SSI\v}{T}\rb \lb\rem^{(3)}\rb \lb\v'\md\epi\rb \lb\epi'\SSIb\epj\rb=O(1).
\eerr
completing the proof of (\ref{eq:BoundCA}).

For the term (\ref{eq:cuno5b}) we have (see equation (\ref{eq:chibound}))
\berr 
&&\lb\frac{1}{TN}\sum_{i=1}^N\vj'\SSI\v\left(\frac{\xx'\xx}{T}\right)^{-1}\Gi\f'\md\epi\epi'\SSIb\epj\rb^2\nonumber\\
&\leq & \lb\frac{1}{\sqrt{T}N}\sum_{i=1}^N\vj'\SSI\v\bm{\chi_i}'\epi'\SSIb\epj\rb^2
\\
&+&\lb\frac{1}{\sqrt{T}N}\sum_{i=1}^N\vj'\SSI\v\psinv\dif\bm{\chi_i}'\epi'\SSIb\epj\rb^2\\
&+&\lb\frac{1}{\sqrt{T}N}\sum_{i=1}^N\vj'\SSI\v\left[\psinv\dif\right]^2\bm{\chi_i}'\epi'\SSIb\epj\rb^2\\
&+&\lb\frac{1}{TN}\sum_{i=1}^N\vj'\SSI\v\rem^{(3)}\Gi'\f\md\epi\epi'\SSIb\epj\rb^2\\
&=& \max\left(O_p(1),O_p\left(\frac{T^{\nicefrac{3}{2}}}{N}\right)\right).
\eerr
The latter bound proceeding as for terms (\ref{eq:sei}) and (\ref{eq:cuno5a}); the proof is omitted. 

\item
%
%
Setting $J=1$ in the expansion in Lemma \ref{expansion}, we get
\ber
&&\lb\frac{1}{T^2N}\sum_{i=1}^N\vj'\SSIb
\v\left(\frac{\xx'\xx}{T}\right)^{-1}\x'\md\epi\epi'
\md\x\left(\frac{\xx'\xx}{T}\right)^{-1}\v'\SSIb\epj\rb\nonumber\\
&\leq & 
\lb\frac{1}{T^2N}\sum_{i=1}^N\vj'\SSIb
\v\psinv\x'\md\epi\epi'
\md\x\psinv\v'\SSIb\epj\rb+O_p(1)\nonumber\\
&=&\max\left(O_p(1,)\left(\frac{T}{N}\right)\right). \label{eq:BOUNDT1d}
\eer
The proof for the latter bound follows closely that of part $(c)$, and hence is omitted.
\end{Part}
%
%
%
%
\subsubsection*{Bound Term III}
Using the expansion in Lemma~\ref{expansion} for $J=2$,
\ber
&&\lb \vj'\SSIb\left(\bm{\hat{C}_{2N}}-\BTN\right)\SSIb\epj\rb\\
&=&\lb\frac{1}{N}\sum_{i=1}^N\vj'\SSIb\left(\v\xx^+\ff\b\b'\ff'\left(\xx^+\right)'\v'-\BT\right)\SSIb\epj
\rb\nonumber\\
&\leq&\lb 
\frac{1}{N}\sum_{i=1}^N\vj'\SSIb\left(\v\psinv\frac{\xx'\ff}{T}\b\b'\frac{\ff'\xx}{T}\psinv\v'-\BT\right)\SSIb\epj\rb\label{eq:W1}\\
&+&\lb 
\frac{1}{T^2N}\sum_{i=1}^N\vj'\SSIb\v\psinv\dif\psinv\right.\nonumber\\
&&\times\left.\xx'\ff\b\b'\ff'\xx\psinv\v'\SSIb\epj\label{eq:W2}\rb\\
&+&\lb 
\frac{1}{T^2N}\sum_{i=1}^N\vj'\SSIb\v\psinv\xx'\ff\b\b'\ff'\xx\right.\nonumber\\
&&\times\left. \psinv\dif\psinv\v'\SSIb\epj\label{eq:W3}\rb\\
&+&\lb 
\frac{1}{T^2N}\sum_{i=1}^N\vj'\SSIb\v\left[\psinv\dif\right]^2\psinv
\right.\nonumber\\
&&\times \left.
\xx'\ff\b\b'\ff'\xx\psinv\v'\SSIb\epj\rb\label{eq:W4}\\
&+&\lb 
\frac{1}{T^2N}\sum_{i=1}^N\vj'\SSIb\v\psinv\xx'\ff\b\b'\ff'\xx\right.\nonumber\\
&&\times \left.\left[\psinv\dif\right]^2\psinv\v'\SSIb\epj\rb\label{eq:W5}\\
&+&
\frac{1}{N}\sum_{i=1}^N\lb \vj'\SSIb\v\rb \lb \psinv \rb \lb \left(\frac{\xx'\xx}{T}\right)^{-1} \rb \lb \rem^{(3)}\rb\lb\v'\SSIb\epj\rb  \lb \frac{\xx'\ff}{T}\rb   \lb \b \rb^2 \label{eq:W6}\\
&=&O_p(1).\nonumber
\eer
To prove the bound, we start considering 
the norm in equation (\ref{eq:W1}). The latter can be further bounded by
\ber
&&\lb 
\frac{1}{N}\sum_{i=1}^N\vj'\SSIb\left(\v\psinv\frac{\xx'\ff}{T}\b\b'\frac{\ff'\xx}{T}\psinv\v'-\BT\right)\SSIb\epj\rb\nonumber\\
&\leq &\lb
\frac{1}{N}\sum_{i=1}^N\vj'\SSIb\left(\v\psinv\Gi'\frac{\ff'\ff}{T}\b\b'\frac{\ff'\ff}{T}\Gi\psinv\v'-\BT\right)\SSIb\epj\rb\label{eq:WW1}\\
&+ &\lb
\frac{1}{N}\sum_{i=1}^N\vj'\SSIb\v\psinv\frac{\v'\md\ff}{T}\b\b'\frac{\ff'\ff}{T}\Gi\psinv\v'\SSIb\epj\rb
\label{eq:WW2}
\\
&+ &\lb
\frac{1}{N}\sum_{i=1}^N\vj'\SSIb\v\psinv\Gi'\frac{\ff'\ff}{T}\b\b'\frac{\ff'\md\v}{T}\psinv\v'\SSIb\epj\rb
\label{eq:WW3}
\\
&+ &\lb
\frac{1}{N}\sum_{i=1}^N\vj'\SSIb\v\psinv\frac{\v'\md\ff}{T}\b\b'\frac{\ff'\md\v}{T}\psinv\v'\SSIb\epj\rb .
\label{eq:WW4}
\eer
%
%
%
\bigskip
We first consider the expectation of the term (\ref{eq:WW1}).
%
%
\berr
&&\E\left[\frac{1}{N^2T^4}\sum_{t_1,\dots,t_6}\sum_{s_1,\dots,s_6}\sum_{k_1,\dots,k_5}
\sum_{h_2,\dots,h_5}\sum_{m_1,\dots,m_4}\sum_{n_1,\dots,n_4}
\right.\nonumber\\
&&\times f_{m_1t_3}f_{t_3m_2}f_{m_3t_4}f_{t_4m_4} \oi_{t_1t_2}\oi_{t_5t_6}\oi_{s_1s_2}\oi_{s_5s_6}\E(\varepsilon_{jt_6}\varepsilon_{js_6})\\
&&\times \E\left[ v_{jk_1t_1}v_{jk_1s_1}\left(v_{it_2k_2}v_{ik_5t_5}-\E v_{it_2k_2}v_{ik_5t_5}\right)
\left(v_{\ell s_2h_2}v_{\ell h_5t_5}-\E v_{\ell s_2h_2}v_{\ell h_5t_5}\right)\right]
\nonumber
\\
&&\times\left.
\si_{ik_2k_3}\si_{ik_4k_5}\si_{\ell h_2h_3}\si_{\ell h_42h_5}
\gamma_{ik_3m_1}\gamma_{im_4k_4}\gamma_{\ell h_3n_1}\gamma_{\ell n_4h_4}b_{im_2}b_{im_3}b_{\ell n_2}b_{\ell n_3}
\right]^{\nicefrac{1}{2}}\\
&\leq&\ka
\E\left\{\left(\frac{1}{T}\sum_{m}\sum_{t}f^4_{mt}\right)^2\sup_{t_5,s_5}\left|\sum_{t_6s_6}\oi_{t_5,t_6}\cum(\varepsilon_{jt_5},\varepsilon_{js_6})\oi_{s_5s_6}\right|\right.\label{eq:ta110}\\
&&\sup_{k_1,k_2,k_5}\sup_{h_2,h_5}
\left[\sup_{t_2,s_2}\left|\sum_{t_1,s_1}\oi_{t_1t_2}\cum(v_{jk_1t_1},v_{jk_1s_1})\oi_{s_1s_2}\right|\right.\\
&&\times\left.
\frac{1}{N^2}\sum_{i,\ell}\sum_{t_2,t_5}\sum_{s_2,s_5}\left|\cum\left( v_{it_2k_2}v_{ik_5t_5},v_{\ell s_2h_2}v_{\ell h_5s_5}\right)
\right|\right.
\label{eq:ta11}\\
&&+\left.\left.
\frac{1}{N^2}\sum_{i,\ell}
\sum_{t_1,t_2,t_5}\sum_{s_1,s_2}
|\cum\left(
v_{jk_1t_1}v_{jk_1s_1},
v_{it_2k_2}v_{ik_5t_5},v_{\ell s_2h_2}v_{\ell h_5s_5}
\right)
\right]\right\}^{\nicefrac{1}{2}}
\label{eq:ta11a}\\
&=&\E\lb\frac{\ff'\ff}{T}\rb^2 O(1)\left[O(1)O\left(\frac{T^2}{N}\right)+O\left(\frac{T^3}{N^2}\right)\right]O(1)=O\left(\frac{T^2}{N}\right).
\eerr
About term (\ref{eq:WW2})
\ber
&&
\E\lb
\frac{1}{N}\sum_{i=1}^N\vj'\SSIb\v\psinv\frac{\v'\md\f}{T}\b\b'\frac{\ff'\ff}{T}\Gi\psinv\v'\SSIb\epj\rb\nonumber\\
\leq 
&&\E\left[
\frac{1}{N^2T^4}\sum_{t_1,\dots, t_4,t_6,t_7}\sum_{s_1,\dots, s_4,s_6, s_7}\sum_{k_1,\dots,k_5}\sum_{h_2,\dots,h_5}\sum_{i,\ell}
\right.\nonumber\\
&&
\E\left(v_{k_1jt_1}v_{k_2it_2}v_{k_3it_3}v_{k_5it_6}
v_{k_1js_1}v_{h_2\ell s_2}v_{h_3\ell s_3}v_{h_5\ell s_6}\right)\label{eq:jan2018a}\\
&&\E\left(\varepsilon_{jt_7}\varepsilon_{js_7}\right)\oi_{t_1t_2}\oi_{s_1s_2}\oi_{t_6t_7}\oi_{s_6s_7}\nonumber\\
&&\m_{t_3t_4}\m_{s_3s_4}\mathrm{f}_{m_1t_4}\mathrm{f}_{n_1s_4}\si_{ik_3k_4}\si_{ik_4k_5}\si_{\ell h_3h_4}\si_{\ell h_4h_5}\gamma_{im_3k_4}\gamma_{\ell n_3h_4}b_{im_1}b_{im_2}b_{\ell n_1}b_{\ell n_2}\nonumber\\
&&\left.
\sum_{t_5}f_{m_2t_5}f_{m_3t_5}
\sum_{s_5}f_{n_2s_5}f_{n_3s_5}
\right]^{\nicefrac{1}{2}}. \nonumber
\eer
where $\mathrm{f}_{ts}$ and $f_{ts}$ denote the elements for the matrices $\f$ and $\ff$, respectively.
The expectation in (\ref{eq:jan2018a}) can be rewritten as $\sum_{\pi=\{a_1,\dots a_l\}\in\pp(a)}\cum(\bm{v^{,a_1}},\dots,\bm{v^{,a_l}})$, for $l=1,2,3,4$, and 
$$
a=\left\{
k_1jt_1,k_1js_2,
k_2it_2,k_3it_3,k_5it_7,
 h_2\ell s_2,h_3\ell s_3,h_5\ell s_7
\right\}.
$$
We focus on the partition $\pi$, where the blocks have the same cross-sectional indexes:
$$
\pi=\left\{ 
\left\{k_1jt_1,k_1js_2\right\},
 \left\{k_2it_2,k_3it_3,k_5it_7\right\},
 \left\{h_2\ell s_2,h_3\ell s_3,h_5\ell s_7\right\}
\right\},
$$
that leads to bound the following expectation
\berr
&&\E\left[
\frac{1}{N^2T^4}\sum_{t_1,\dots, t_4,t_6,t_7}\sum_{s_1,\dots, s_4,s_6, s_7}\sum_{k_1,\dots,k_5}\sum_{h_2,\dots,h_5}\sum_{i,\ell}
\cum\left(v_{k_1jt_1},v_{k_1js_1}\right)
\right.\\
&&
\cum\left( v_{k_2it_2},v_{k_3it_3},v_{k_5it_7} \right)
\cum\left( v_{h_2\ell s_2},v_{h_3\ell s_3},v_{h_5\ell s_7}\right)
\E\left(\varepsilon_{jt_7}\varepsilon_{js_7}\right)\oi_{t_1t_2}\oi_{s_1s_2}\oi_{t_7t_8}\oi_{s_7s_8}\\
&&\mu_{t_3t_4}\mu_{s_3s_4}f_{m_1t_4}f_{n_1s_4}\si_{ik_3k_4}\si_{ik_4k_5}\si_{\ell h_3h_4}\si_{\ell h_4h_5}\gamma_{im_3k_4}\gamma_{\ell n_3h_4}b_{im_1}b_{im_2}b_{\ell n_1}b_{\ell n_2}\\
&&\left.
\sum_{t_5}f_{m_2t_5}f_{m_3t_5}
\sum_{s_5}f_{n_2s_5}f_{n_3s_5}
\right]^{1/2}\\
&
\leq& \ka\E\left\{\left(\frac{1}{T}\sum_{m}\sum_{t} f^2_{mt}\right)^2
\left[\frac{1}{T^2}\sum_{m_1,m_2}\sum_{n_1,n_2}\sum_{k_1,\dots,k_5}\sum_{h_2,\dots,h_5}
\sup_{t_2,s_2}\left|\sum_{t_1,s_1}\oi_{t_1t_2}\cum\left(v_{k_1jt_1},v_{k_1js_1}\right)\oi_{s_1s_2}\right|
\right.\right.\\
&&
\times
\sup_{i}\left|\sum_{t_2,t_3,t_7}\cum\left( v_{k_2it_2},v_{k_3it_3},v_{k_5it_6} \right)\sum_{t_4}\m_{t_3t_4}\mathrm{f}_{m_1t_4}\right|\sup_{t_6,s_6}\left|\sum_{t_7,s_7}\oi_{s_6s_7}\cum\left(\varepsilon_{jt_7},\varepsilon_{js_7}\right)\oi_{s_7s_6}\right|\\
&&\times\left.\left.
\sup_{\ell}\left|\sum_{s_2,s_3,s_7}\cum\left( v_{h_2\ell s_2},v_{h_3\ell s_3},v_{h_5\ell s_7} \right)\sum_{s_4}\m_{s3s_4}\mathrm{f}_{n_1s_4}\right|
%
%
\right]^{1/2}\right\}\\
&=&O\left(\E\left(\frac{1}{T}\sum_{m}\sum_{t} f^2_{mt}\right)^{3/2}\right)O(1)=O(1),
\eerr
because by Lemma~\ref{matrix}, $\left|\sum_{t_1,t_2,t_3}\cum\left( v_{k_1 i t_1},v_{k_2 i t_2},v_{k_3 i t_3} \right)\sum_{t_4}\m_{t_3t_4}f_{m t_4}\right|\leq \ka \sqrt{T}\|\ff\|$
For the remaining partitions $\pp(a)\backslash\pi$, suing the same techniques one can show that 
\berr
&&\E\left[
\frac{1}{N^2T^4}\sum_{t_1,\dots, t_4,t_7,t_8}\sum_{s_1,\dots, s_4,s_7, s_8}\sum_{k_1,\dots,k_5}\sum_{h_2,\dots,h_5}\sum_{i,\ell}
\right.\\
&&
\sum_{\sigma=\{a_1,\dots,a_l\in \pp(a)\backslash\pi} \cum(\bm{v^{,a_1}\cdots v^{,a_l}} )
\E\left(\varepsilon_{jt_7}\varepsilon_{js_7}\right)\oi_{t_1t_2}\oi_{s_1s_2}\oi_{t_6t_7}\oi_{s_6s_7}\\
&&\m_{t_3t_4}\m_{s_3s_4}\mathrm{f}_{m_1t_4}\mathrm{f}_{n_1s_4}\si_{ik_3k_4}\si_{ik_4k_5}\si_{\ell h_3h_4}\si_{\ell h_4h_5}\gamma_{im_3k_4}\gamma_{\ell n_3h_4}b_{im_1}b_{im_2}b_{\ell n_1}b_{\ell n_2}\\
&&\left.
\sum_{t_5t_6}f_{m_2t_5}f_{m_3t_5}
\sum_{s_5s_6}f_{n_2s_5}f_{n_3s_5}
\right]^{1/2}=O\left(\sqrt{\frac{T}{N}} \right).
\eerr
Proceeding in a similar way, terms (\ref{eq:WW3})-(\ref{eq:WW4}) can be show to be $O_p(1)$, whereas the 
 terms (\ref{eq:W2})-(\ref{eq:W5}) are $O_p(1)$. Finally, the term in (\ref{eq:W6}) is $O_p(1)$, using the results in Lemmas \ref{boundinverse}, {\ref{cor} and \ref{extra1}, Remark \ref{mreg} and noting that $\lb\rem^3\rb=O_p(T^{\nicefrac{3}{2}})$.
%

%
%
%
%
\subsubsection*{Bound Term IV}
In the sequel we will show that 
\berr
&&\lb\vvj'\SSI\left(\bm{\hat{C}_{3N}}+\bm{\hat{C}'_{3N}}\right)\SSI\ej\rb\\
&\leq &
\lb\frac{1}{N}\sum_{i=1}^N\vvj'\SSI\sum_{h=1}^{13}\bm{\hat{C}_{3i,h}}\SSI\ej\rb+
\lb\frac{1}{N}\sum_{i=1}^N\vvj'\SSI\sum_{h=1}^{13}\bm{\hat{C}_{3i,h}'}\SSI\ej\rb\\
&=&\max\left\{O_p(1),\frac{T^2}{N}\right\}.
\eerr
We will split each norm in thirteen terms ($h=1,\dots,13$). To save notation, we write $\bm{\hat{C}_{3i,h'}}$ to indicate $\bm{\hat{C}'_{3i,h}}$. For example, $h,h'=4$ mean that we are considering
the terms  $N^{-1}\sum_{i=1}^N\vj'\SSI\bm{\hat{C}_{3i,4}}\SSI\epj$ and $N^{-1}\sum_{i=1}^N\vj'\SSI\bm{\hat{C}'_{3i,4}}\SSI\epj$, respectively. Due the high number of terms involved, and because the derivation of the bounds exploit similar techniques to the ones previously used, we will provide very concise proofs. Since most of the $\hat{C}_{3i,h}$s are function $\xp$, the bounds' derivations will require the use of Lemma \ref{expansion}. For each term, the order of the expansion will be specified.
\begin{itemize}
\item[$\bm{h=1}:$] Setting $J=0$,
\berr
&&\lb\frac{1}{N}\sum_{i=1}^N \vvj' \SSI\bm{\hat{C}_{3i,1}}\SSI\ej\rb\\
&=&
\lb  \frac{1}{N}\sum_{i=1}^N\vvj' \SSI \vv \psinv \Gi'\frac{\ff'\ff}{T}\b\b'\ff'\SSI\ej\rb+O_p(1)\\
&\leq &  \underbrace{\lb\frac{1}{N}\sum_{i=1}^N\vvj' \SSI \vv \psinv \Gi'\frac{\ff'\ff}{T}\b\b'\rb}_{O_p\left(T/N\right)+O_p\left(\sqrt{T/N}\right)}
\underbrace{\lb \ff'\SSI\ej\rb}_{O_p(T^{-\nicefrac{1}{2}})}+O_p(1)=O_p\left(1\right).
\eerr
\item[$\bm{h'=1}:$] Setting $J=0$,
\berr
&&\lb\frac{1}{N}\sum_{i=1}^N \vvj' \SSI\bm{\hat{C}'_{3i,1}}\SSI\ej\rb\\
 &\leq&\underbrace{\lb \vvj'\SSI\ff \rb}_{O_p\left(T^{-\nicefrac{1}{2}}\right)}
 \frac{1}{N}\sum_{i=1}^N \underbrace{\lb\b\b'\rb\lb\Gi'\rb\lb\frac{\ff'\ff}{T}\rb\lb\psinv\rb}_{O_p(1)}\underbrace{\lb\vv'\SSI\ej\rb}_{O_p(\sqrt{T})}+O_p(1)=O_p\left(1\right).
\eerr
%
%
%
%
\item[$\bm{h=2}:$] Similarly to the case $h=1$,
\berr
&&\lb\frac{1}{N}\sum_{i=1}^N \vvj' \SSI\bm{\hat{C}_{3i,2}}\SSI\ej\rb\\
 &=&
\lb\frac{1}{N}\sum_{i=1}^N  \vvj' \SSI \vv\psinv\Gi'\frac{\ff'\ff}{T}\b\b' \frac{\ff'\ff}{T}\Gi\psinv\Gi'\ff'\SSI\ej\rb+O_p(1)\\
&=&O_p\left(1\right).
\eerr
\item[$\bm{h'=2}:$] Similarly to the case $h'=1$,
$$
\lb\frac{1}{N}\sum_{i=1}^N \vvj' \SSI\bm{\hat{C}'_{3i,2}}\SSI\ej\rb= O_p(1).
$$
\item[$\bm{h=3}:$] 
$$
\lb\frac{1}{N}\sum_{i=1}^N \vvj' \SSI\bm{\hat{C}_{3i,3}}\SSI\ej\rb
\leq\underbrace{\lb \vvj'\SSI\ff \rb}_{O_p\left(T^{-\nicefrac{1}{2}}\right)}
\underbrace{\lb \frac{1}{N}\sum_{i=1}^N \b\epi'\SSIb\epj\rb}_{O_p(T/N)+O_p(\sqrt{T/N})}=O_p\left(1\right).
$$

\item[$\bm{h'=3}:$]
$$
\lb\frac{1}{N}\sum_{i=1}^N \vvj' \SSI\bm{\hat{C}'_{3i,3}}\SSI\ej\rb \leq  \underbrace{\lb \vvj'\SSI\ei\rb}_{O_p\left(T^{\nicefrac{1}{2}}\right)}
\underbrace{\lb \b\rb}_{O(1)} \underbrace{\lb\ff'\SSI\ej\rb}_{O_p\left(T^{-\nicefrac{1}{2}}\right)}=O_p(1).
$$
\item[$\bm{h=4}:$] Setting $J=1$,
\berr
&&\lb\frac{1}{N}\sum_{i=1}^N \vvj' \SSI\bm{\hat{C}_{3i,4}}\SSI\ej\rb\\
 &\leq &
\underbrace{\lb  \frac{1}{T^2N}\sum_{i=1}^N\vvj' \SSI \vv \psinv \Gi'\ff'\ff\b\ei'\ff\Gi\psinv\vv'\SSI\ej\rb
}_{O_p\left(T/\sqrt{N}\right)}
\\
&&+\underbrace{\lb  \frac{1}{T^2N}\sum_{i=1}^N\vvj' \SSI \vv \psinv \vv'\ff\b\ei'\ff\Gi\psinv\vv'\SSI\ej\rb
}_{O_p(\sqrt{T/N})}
\\
&&+
\underbrace{\lb  \frac{1}{T^2N}\sum_{i=1}^N\vvj' \SSI \vv \psinv \Gi'\ff'\ff\b\ei'\vv\psinv\vv'\SSI\ej\rb
}_{O_p\left(T/\sqrt{N}\right)}
\\
&&+\underbrace{\lb  \frac{1}{T^2N}\sum_{i=1}^N\vvj' \SSI \vv \psinv \vv'\ff\b\ei'\vv\psinv\vv'\SSI\ej\rb
}_{O_p\left(\sqrt{T/N}\right)}
\\
&+&
  \underbrace{\lb\frac{1}{TN}\sum_{i=1}^N\vvj' \SSI \vv \psinv\dif\psinv\Gi' \frac{\ff'\ff}{T}\b\ei'\xx\psinv\vv'\SSI\ej\rb}_{O_p(\sqrt{T/N})}\\
&+&
\underbrace{\lb  \frac{1}{TN}\sum_{i=1}^N\vvj' \SSI \vv \psinv \Gi'\frac{\ff'\ff}{T}\b\ei'\xx\psinv\dif\psinv\vv'\SSI\ej\rb}_{O_p(\sqrt{T/N})}+O_p(1)\\
&=&\max\left\{O_p(1),O_p\left(\frac{T}{\sqrt{N}}\right)\right\}.
\eerr
\item[$\bm{h'=4}$:] 
$
\lb N^{-1}\sum_{i=1}^N \vvj' \SSI\bm{\hat{C}'_{3i,4}}\SSI\ej\rb=\max\left\{O_p(1),O_p\left(\frac{T}{\sqrt{N}}\right)\right\}.
$
\item[$\bm{h=5}$:]
\berr
&&
\lb\frac{1}{N}\sum_{i=1}^N \vvj' \SSI\bm{\hat{C}_{3i,5}}\SSI\ej\rb\\
 &\leq&
\underbrace{
\lb  \frac{1}{N}\sum_{i=1}^N\vvj' \SSI \vv \psinv\sum_{s=0}^3\left[\dif\psinv\right]^s\Gi' \frac{\ff'\ff}{T}\b\ei'\SSI\ej\rb}_{O_p\left(T^2/N\right)}\\
&+&
\underbrace{
\lb  \frac{1}{TN}\sum_{i=1}^N\vvj' \SSI \vv \sum_{s=0}^2\left(\psinv\dif\right)^s \vv'\ff\b\ei'\SSI\ej\rb}_{O_p(\sqrt{T^3/N^2})}+O_p(1)\\
&=&\max\left\{O_p(1),O_p\left(\frac{T^{2}}{N}\right)\right\}.
\eerr
\item[$\bm{h'=5}$:] Setting $J=1$
\berr
&&
\lb\frac{1}{N}\sum_{i=1}^N \vvj' \SSI\bm{\hat{C}'_{3i,5}}\SSI\ej\rb+O_p(1) = 
\max\left\{O_p(1),O_p\left(\frac{T}{\sqrt{N}}\right)\right\}.
\eerr
\item[$\bm{h=6}$:]
\berr
&&\lb\frac{1}{N}\sum_{i=1}^N\vvj'\SSI\bm{\hat{C}_{3i,6}}\SSI\ej\rb\\
&\leq &\sum_{i=1}^N\underbrace{\lb\vvj'\SSI\ff\rb}_{T^{-\nicefrac{1}{2}}}
\underbrace{
\lb\b\rb\lb\frac{\ei'\xx}{\sqrt{T}}\rb
\lb\left(\frac{\xx'\xx}{T}\right)^{-1}\rb
\lb\frac{\vv\SSI\ej}{\sqrt{T}}\rb}_{O_p(1)}
=O_p\left(\frac{1}{\sqrt{T}}\right).
\eerr
\item[$\bm{h'=6}$:]
\berr
&&\lb\frac{1}{N}\sum_{i=1}^N\vvj'\SSI\bm{\hat{C}'_{3i,6}}\SSI\ej\rb\\
 &\leq & \sum_{i=1}^N\underbrace{\lb\frac{\vvj'\SSI\vvj}{T}\rb
\lb\left(\frac{\xx'\xx}{T}\right)^{-1}\rb}_{O_p(1)}
\underbrace{\lb\xx'\ei\rb}_{O_p(\sqrt{T})}
\underbrace{\lb\b\rb\lb\ff\SSI\ej\rb}_{T^{-\nicefrac{1}{2}}}=O_p\left(1\right).\\
\eerr
\item[$\bm{h=7}$:] Setting $J=0$,
\berr
&&\lb\frac{1}{N}\sum_{i=1}^N\vvj'\SSI\bm{\hat{C}_{3i,7}}\SSI\ej\rb\leq \underbrace{\lb \vvj'\SSI\ff \rb}_{O_p(T^{-\nicefrac{1}{2}})}.\\
 &\times & 
\underbrace{\lb \frac{1}{N}\sum_{i=1}^N 
\Gi\psinv\Gi' \frac{\ff'\ff}{T}\b\ei'\SSI\ej\rb}_{O_p(\sqrt{T/N})}\frac{1}{N}\sum_{i=1}^N \lb\Gi\rb^2 \lb\psinv\rb^2\underbrace{\lb\dif\rb}_{O_p(T^{-\nicefrac{1}{2}})}\underbrace{\lb\ei'\SSI\ej\rb}_{O_p(T)} \\
&=&O_p\left(1\right).\nonumber\\
\eerr
%
\item[$\bm{h'=7}$:]
$$
 \lb\frac{1}{N}\sum_{i=1}^N\vvj'\SSI\bm{\hat{C}'_{3i,7}}\SSI\ej\rb\leq\frac{1}{N}\sum_{i=1}^N
\lb  \vvj'\SSI\ej \rb \lb\b\rb \lb\frac{\ff'\xx}{T}\rb\lb\Gi\rb^2 \lb \ff'\SSI\ej \rb =O_p(1).\nonumber
$$
\end{itemize}  
The remaining terms ($h,h'=8,9,\dots,13$) can be shown to be $O_p(1)$ using similar arguments to those used for the term labelled $h'=7$. 
%
%
%
%
%
%
%
%

\subsubsection{Proof of Lemma~\ref{mmtt2}}}
The proof is similar to the proof of Lemma \ref{mmtt1}, and hence omitted.
%
%
%
%
\subsubsection{Proof Lemma \ref{mmtt3}}
The proof is similar to the proof of Lemma~\ref{mmtt1} and hence omitted.
%
\subsubsection{Proof Lemma \ref{mmttF}}
%
Proceeding as in the proof of Lemma \ref{mmtt1}, and using Lemma \ref{cor}
\berr
&&\lb (\sh-\SS)\SSI\ff \rb\\
&\leq & \overbrace{\lb \ff\left(\han-\an\right)\ff'\SSI\ff \rb}^{Term \;I^*}+    
\overbrace{\lb\left(\bm{\hat{C}_{1N}}-\db'\Hn\db\right)\SSI\ff \rb }^{Term \;II^*}\\ 
&&+\underbrace{\lb \left(\bm{\hat{C}_{2N}}-\db'\BTN\db\right)\SSI\ff \rb}_{Term \;III^*} 
+\underbrace{\lb\left(\bm{\hat{C}_{3N}}+\bm{\hat{C}'_{3N}}\right)\SSI\ff \rb }_{Term \;IV^*} = O_p\left(1\right).
\eerr
%
The Term $I^*$ is bounded by $\lb \ff\rb \lb\han-\an\rb \lb\ff'\SSI\ff\rb\leq O_p(\sqrt{T})O_p\left(T^{-\nicefrac{1}{2}}\right)O_p\left(1\right)$. Concerning the Term $II^*$, he have 
\berr
\lb\bm{\hat{C}_{1N}}\SSI\ff\rb&\leq & 
\frac{1}{N}\sum_i\lb\ei-\frac{\vv}{\sqrt{T}}\left(\frac{\xx'\xx}{T}\right)^{-1}\frac{\xx'\ei}{\sqrt{T}}\rb
\lb\ei'\SSI\ff-\frac{\ei'\xx}{T}\left(\frac{\xx'\xx}{T}\right)^{-1}\vv'\SSI\ff\rb\\
&\leq& \frac{1}{N}\sum_i\frac{1}{\sqrt{T}}\lb\ei+O_p(1)\rb \lb\sqrt{T}\ei'\SSI\ff+o_p(1)\rb=O_p(1),
\eerr
and $\lb\db'\Hn\db\SSI\ff\rb\leq \lb\Hn\rb_{sp}\lb\SSI\ff\rb=O_p\left(T^{-\nicefrac{1}{2}}\right)$. Using similar argument, it can be shown that the Term $III^*$ is $O_p(1)$. For the last term ($IV^*$), we will show that
\berr
&&\lb \frac{1}{N}\sum_{i=1}^N\left(\sum_{h=1}^{13}\bm{\hat{C}}_{3i,h}+\bm{\hat{C}'}_{3i,h}\right)\SSI\ff \rb=O_p\left(\frac{T}{N}\right).
\eerr
We will consider only the terms for $h=1$. By Lemma \ref{approx inv} we find 
\berr
\lb\bm{\hat{C}}_{3i,1}\SSI\ff\rb&=&\lb\frac{1}{N}\sum_{i=1}^N\vv\xp\ff\b\b'\ff'\SSI\ff\rb\\
&\leq &
\lb\db'\rb\lb \frac{1}{N}\sum_{i=1}^N\v\psinv\Gi'\frac{\ff'\ff}{T}\b\b'\rb\lb\ff\SSI\ff \rb+O_p(1).
\eerr
Because
\berr
&&\E\lb \frac{1}{N}\sum_{i=1}^N\v\psinv\frac{\ff'\ff}{T}\b\b'\rb\\
&=& \ka\frac{1}{T}\sum_{t_1}\sum_{m_1,m_2}\E|f_{m_1t_1}f_{t_1m_2}|
\left[\frac{1}{N}
\sum_{t_2}\sum_{i,\ell}|\cov(v_{i t_2k_1},v_{\ell t_2 k_2})|\right]^{\nicefrac{1}{2}}=O\left(\sqrt{\frac{T}{N}}\right),
\eerr
we conclude that $\lb\bm{\hat{C}}_{3i,1}\SSI\ff\rb=O_p(1)$. For $h'=1$ we have
\berr
\lb\bm{\hat{C}'}_{3i,1}\SSI\ff\rb
&\leq &
 \frac{1}{N}\sum_{i=1}^N\lb\frac{\ff}{\sqrt{T}}\rb
\lb\psinv\rb 
\lb\Gi\rb\lb\frac{\ff'\ff}{T}\rb \lb\b\rb^2\lb\sqrt{T}\vv'\SSI\ff\rb +O_p(1)=O_p(1).
\eerr
Using the same arguments for $h,h'=2,\dots,13$ we prove the results.
\subsubsection{Proof Lemma \ref{mmtt4}}
Let $\bm{\bar{\mathcal{H}}}:=\db\MAI\db'$, and denote by $\bar{\mathcal{h}}_{ts}$ its $(t,s)$-entry, for $s,t=1,\dots,(T-S).$ Then,  equations (\ref{eq:MMM1}) and (\ref{eq:silver}), (\ref{eq:qnormale}) and (\ref{eq:qbreve}) entail that
\berr
&&\lb\vvj'\left(\SSI-\MAI\right)\ej\rb=\lb\vj'\SSIb\frac{1}{N}\sum_{i}\left(\BT-\T\right)\bm{\bar{\mathcal{H}}}\epj\rb\\
&=& \lb\frac{1}{N}\sum_{i=1}^N\sum_{k,\ell=1}^K\vj'\SSIb\left[\bar{\mathcal{q}}_{ik\ell}\E\left({\bf{v}}_{ik}{\bf{v}}'_{i\ell}\right)\right]\bm{\bar{\mathcal{H}}}\epj\rb\leq
\frac{1}{N}\sum_{i=1}^N\sum_{k,\ell=1}^K
\left|\bar{\mathcal{q}}_{ik\ell}\right|
\lb\vj'\SSIb\E\left({\bf{v}}_{ik}{\bf{v}}'_{i\ell}\right)\bm{\bar{\mathcal{H}}}\epj\rb .
\\
\eerr
where $\bar{\mathcal{q}}_{ik\ell}:=\breve{\mathcal{q}}_{ik\ell}-\mathcal{q}_{ik\ell}$ and ${\bf{v}}_{ik}$ denotes the $k-$th column of the matrix $\v$. Note that $\bar{\mathcal{q}}_{ik\ell}\xrightarrow{p}0$. From arguments similar  to those used in proof of Lemmata III, if follow that 
$\lb\vj'\SSIb\E\left({\bf{v}}_{ik}{\bf{v}}'_{i\ell}\right)\bm{\bar{\mathcal{H}}}\epj\rb=O_p(\sqrt{T})$, concluding the proof.

\section{Common regressors}\label{app2}

In this Section we provide some abbreviated proofs for the theorems of Section~\ref{common}. Further details are available upon request.

\begin{theorem}\label{TI1}
Under Assumption~2.4:
\[
\left(   \begin{array}{c}  \widetilde{\bm \alpha }_i^{GLS}  \\  \widetilde{\bm \beta }_i^{GLS}  \end{array}  \right) = \left(\z_i'  \stldp \z_i \right)^{+}
\z_i'     \stldp  \bm{Y}_i =   \left(  \begin{array}{c}  \bm{0} \\  \hat{\bm{ \beta }}_i^{GLS}   \end{array}  \right) .
\]
\end{theorem}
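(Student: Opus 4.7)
The plan is to exploit the fact that the OLS residuals $\hat{\bm u}_i$ are orthogonal to the columns of $\bm{Z}_i=[\bm{D},\bm{X}_i]$ by construction, hence in particular $\bm{D}'\hat{\bm u}_i = \bm{0}$ for every $i$. This immediately yields
\[
\bm{D}'\stld \;=\; \bm{D}'\,\frac{1}{N}\sum_{i=1}^{N}\hat{\bm u}_i\hat{\bm u}_i' \;=\; \bm{0},
\]
so $\bm{D}$ lies in the kernel of the symmetric positive semidefinite matrix $\stld$, and therefore also in the kernel of its Moore--Penrose inverse, i.e.\ $\stldp\bm{D}=\bm{0}$.

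Next, I would obtain an explicit expression for $\stldp$ using $\bm{D}_\bot$. Since $\stld$ is symmetric PSD and vanishes on the range of $\bm{D}$, it factorises as $\stld = \bm{D}_\bot\sh\bm{D}_\bot'$ with $\sh=\bm{D}_\bot'\stld\bm{D}_\bot$, which is positive definite for $N,T$ large by Assumption~\ref{ass X}/Remark~\ref{remark ass X}. Applying Lemma~\ref{fact6410} (Fact 6.4.10 in Bernstein) to $\bm{A}=\bm{D}_\bot$ (which has orthonormal columns, $\bm{D}_\bot'\bm{D}_\bot=\bm{I}_{T-S}$) and $\bm{B}=\sh$, we obtain
\[
\stldp \;=\; \bm{D}_\bot\,\shi\,\bm{D}_\bot'.
\]
Consequently $\bm{X}_i'\stldp\bm{X}_i = \xx'\shi\xx$ and $\bm{X}_i'\stldp\bm{y}_i = \xx'\shi\yy$, where $\xx,\yy$ are defined in (\ref{eq:storti}).

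Now I would assemble the block matrix. Because $\stldp\bm{D}=\bm{0}$ and $\bm{D}'\stldp=\bm{0}$,
\[
\bm{Z}_i'\stldp\bm{Z}_i = \begin{pmatrix} \bm{0} & \bm{0} \\ \bm{0} & \xx'\shi\xx \end{pmatrix}, \qquad
\bm{Z}_i'\stldp\bm{y}_i = \begin{pmatrix} \bm{0} \\ \xx'\shi\yy \end{pmatrix}.
\]
The Moore--Penrose inverse of a block-diagonal PSD matrix with one zero block is the block-diagonal matrix with the same zero block and the inverse of the other block, so
\[
(\bm{Z}_i'\stldp\bm{Z}_i)^{+} = \begin{pmatrix} \bm{0} & \bm{0} \\ \bm{0} & (\xx'\shi\xx)^{-1} \end{pmatrix}.
\]
Multiplying out then gives $\tilde{\bm \alpha}_i^{GLS}=\bm{0}$ and, by comparison with (\ref{eq:bgls}), $\tilde{\bm \beta}_i^{GLS} = (\xx'\shi\xx)^{-1}\xx'\shi\yy = \hat{\bm \beta}_i^{GLS}$.

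The only conceptual subtlety (and the main obstacle) is justifying the pseudoinverse identity $\stldp=\bm{D}_\bot\shi\bm{D}_\bot'$ rigorously: one must check that $\sh$ is invertible (which is exactly Assumption~\ref{ass X}/Remark~\ref{remark ass X}) and that the factorisation $\stld=\bm{D}_\bot\sh\bm{D}_\bot'$ together with orthonormality of $\bm{D}_\bot$ puts us in the scope of Lemma~\ref{fact6410}. Everything else is straightforward block-matrix algebra.
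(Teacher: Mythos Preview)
Your proposal is correct and follows essentially the same route as the paper: both arguments establish the factorisation $\stld=\bm{D}_\bot\sh\bm{D}_\bot'$, invoke Lemma~\ref{fact6410} to obtain $\stldp=\bm{D}_\bot\shi\bm{D}_\bot'$, and then read off the block structure of $\bm{Z}_i'\stldp\bm{Z}_i$ and $\bm{Z}_i'\stldp\bm{y}_i$. The only cosmetic difference is that the paper reaches the factorisation via the partitioned-projection identity $\mzi=\bm{D}_\bot\mxx\bm{D}_\bot'$ (so that $\hui=\bm{D}_\bot\huu$ directly), whereas you get there from the OLS orthogonality $\bm{D}'\hui=\bm{0}$; to finish, you should note that $\bm{D}_\bot'\hui=\huu$ (a Frisch--Waugh consequence) so that your $\bm{D}_\bot'\stld\bm{D}_\bot$ coincides with the paper's $\sh$ in (\ref{eq:charlie2}) and the identification with $\hat{\bm\beta}_i^{GLS}$ is exact.
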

\begin{proof}
Using Equation (1) in \cite[Section 3.5.3]{ltk}  for the partitioned inverse of $\zi'\zi$, with $\zi=\left[\d, \x\right]$, $\mzi$ can be rewritten as  
$$
\mzi=\db\left(
\bm{I_{T-k}}-\db'\x\left(
\x'\db\db'\x
\right)^{-1}\x'\db
\right)\db'=\db\mxx\db' .
$$
Equation (\ref{eq:gidio}) entail that $\hui=\mzi\ui=\db\mxx\uu$; hence
 $
\stldp=\db\left(\bm{\hat{\mathcal{S}}_N}\right)^{-1}\db'
$,  by
Lemma \ref{fact6410}, with $\bm{\hat{\mathcal{S}}_N}$  defined in (\ref{eq:charlie2}).
It follows that
 \be 
\left( \z_i'   \stldp   \z_i\right)^+ =
   \left[
 \begin{array}{cc}   \mathbf{0}  &    \mathbf{0} \\
   \mathbf{0} &
 \left(\xx'\shi\xx\right)^{-1}
     \end{array} \right],\quad 
  \z_i'   \stldp \y = 
     \left[
 \begin{array}{c}   \mathbf{0}   \\
 \xx'\shi\yy
     \end{array} \right] ,
\ee
completing the proof.
\end{proof}

\begin{theorem}\label{TI2}  Under  the assumptions of Theorem~\ref{Theorem_GLS}, if 
\[
\d' \f = \bf{0},
\]
then,
\[
  \sqrt{T}  ( {\bf\cal L }_{i}^\alpha )^{-{1 \over 2}} ( \widehat{\bm  \alpha }^{(\hat{\beta}_i)}_i - \bm {\alpha }_{i0} ) \xrightarrow{d} \norm( \bm{0},  \I_S  ) ,
\]
where
\berr
\hspace{-0.5in} {\bf\cal L }_{i}^\alpha  &:=&  \left(\frac{\d'\d}{ T} \right) \left(   \d' \x \left( \v'\cni\v  \right)^{-1},\; \I_S \right) 
 \Big(  \begin{array}{c} \vv'\cni \\ \d'  \end{array}  \Big) { \Hi \over T }   \\
 &&
 \times \left(   \begin{array}{cc} \cni \v, \; \d  \end{array}     \right)  \left(  \begin{array}{c}  ( \v'\cni\v   )^{-1} ( \x ' \d  )  \\  \I_S \end{array} \right)  \left({\d'\d\over T} \right).
  \eerr
\end{theorem}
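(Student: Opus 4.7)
The plan is to substitute the model equations into the definition \eqref{eq:gidio2} of $\widetilde{\bm\alpha}_i$, exploit the orthogonality condition $\bm{D}'\bm{F}=\bm{0}$ to kill the nuisance factor term, and then linearize using the GLS asymptotics already developed in Theorem~\ref{Theorem_GLS}.

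First I would write
\[
\widetilde{\bm\alpha}_i-\bm\alpha_{i0}=(\bm{D}'\bm{D})^{-1}\bm{D}'\bigl[\bm{X}_i(\bm\beta_i-\hat{\bm\beta}_i^{GLS})+\bm{F}\bm{b}_i+\bm\varepsilon_i\bigr]
=-(\bm{D}'\bm{D})^{-1}\bm{D}'\bm{X}_i(\hat{\bm\beta}_i^{GLS}-\bm\beta_i)+(\bm{D}'\bm{D})^{-1}\bm{D}'\bm\varepsilon_i,
\]
using $\bm{D}'\bm{F}=\bm{0}$. Multiplying by $\sqrt{T}$ gives
\[
\sqrt{T}(\widetilde{\bm\alpha}_i-\bm\alpha_{i0})=\Big(\tfrac{\bm{D}'\bm{D}}{T}\Big)^{-1}\bigg[-\Big(\tfrac{\bm{D}'\bm{X}_i}{T}\Big)\sqrt{T}(\hat{\bm\beta}_i^{GLS}-\bm\beta_i)+\tfrac{\bm{D}'\bm\varepsilon_i}{\sqrt{T}}\bigg].
\]
Remark~\ref{mreg} and Assumption~\ref{ass factors} ensure that $(\bm{D}'\bm{D}/T)^{-1}$ and $\bm{D}'\bm{X}_i/T$ are $O_p(1)$, so the limiting law is governed by the two stochastic terms on the right-hand side.

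Second, I would plug in the asymptotic linearization of the GLS estimator obtained in the proof of Theorem~\ref{Theorem_GLS} (essentially display \eqref{eq:2luglio} together with Propositions \ref{23feb18}, \ref{lemma160117aT2} and \ref{lemma170117aT2}), giving
\[
\sqrt{T}(\hat{\bm\beta}_i^{GLS}-\bm\beta_i)=\Big(\tfrac{\bm{V}_i'\bm{C}_N^{-1}\bm{V}_i}{T}\Big)^{-1}\tfrac{\bm{V}_i'\bm{C}_N^{-1}\bm\varepsilon_i}{\sqrt{T}}+o_p(1).
\]
Substituting this expression yields
\[
\sqrt{T}(\widetilde{\bm\alpha}_i-\bm\alpha_{i0})=\Big(\tfrac{\bm{D}'\bm{D}}{T}\Big)^{-1}\bigl[\bm{A}_{iT},\;\bm{I}_S\bigr]\begin{pmatrix}\bm{V}_i'\bm{C}_N^{-1} \\ \bm{D}'\end{pmatrix}\tfrac{\bm\varepsilon_i}{\sqrt{T}}+o_p(1),
\]
where $\bm{A}_{iT}:=-(\bm{D}'\bm{X}_i/T)(\bm{V}_i'\bm{C}_N^{-1}\bm{V}_i/T)^{-1}$, which is $O_p(1)$ by Proposition~\ref{lemma160117aT2}.

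Third, the core task reduces to a joint central limit theorem for the $(K+S)$-dimensional linear functional $T^{-1/2}(\bm{V}_i'\bm{C}_N^{-1}\bm\varepsilon_i,\;\bm{D}'\bm\varepsilon_i)'$. Conditional on $\mathcal{F}(\bm{Z})$ and on $\bm{V}_i$, both components are linear in $\bm\varepsilon_i=\bm{R}\bm{a}_i$ and, after approximating $\bm{C}_N^{-1}$ by its circulant counterpart $\bar{\bm{C}}_N$ via Proposition~\ref{mamma}, both coefficient matrices have absolutely summable rows/columns. One can therefore apply the same Bernstein blocking and Scott martingale CLT arguments used in the proofs of Proposition~\ref{propCHLOE2N} and Theorem~\ref{Theorem_OLS}(ii), adapted to the stacked vector, to obtain
\[
\begin{pmatrix}\bm{V}_i'\bm{C}_N^{-1}\bm\varepsilon_i/\sqrt{T}\\ \bm{D}'\bm\varepsilon_i/\sqrt{T}\end{pmatrix}\xrightarrow{d}\mathcal{N}\!\Bigg(\bm{0},\;\begin{pmatrix}\bm{V}_i'\bm{C}_N^{-1}\\ \bm{D}'\end{pmatrix}\tfrac{\bm\Xi_i}{T}\begin{pmatrix}\bm{C}_N^{-1}\bm{V}_i,&\bm{D}\end{pmatrix}\Bigg),
\]
with the variance treated as an $N$-indexed (possibly stochastic) quantity, exactly as in the statement of Theorem~\ref{Theorem_GLS}. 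Combining this with the Slutsky step in the second paragraph and pre-multiplying by $(\mathcal{L}_i^\alpha)^{-1/2}$ delivers the announced asymptotic $\mathcal{N}(\bm{0},\bm{I}_S)$ limit.

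The main obstacle, as in the proof of Theorem~\ref{Theorem_GLS}, is not the algebra of the sandwich formula but the joint CLT in the third step: one has to carry the circulant approximation and the Bernstein-blocks argument through the stacked coefficient matrix simultaneously, checking that the Lindeberg and Scott conditions hold uniformly in $N$ under the rate $T^3/N^2\to 0$. The additional assumption $\bm{D}'\bm{F}=\bm{0}$ is what disentangles $\widetilde{\bm\alpha}_i$ from the factor-induced first-order bias; without it the term $(\bm{D}'\bm{D})^{-1}\bm{D}'\bm{F}\bm{b}_i$ would dominate and the $\sqrt{T}$-rate would fail.
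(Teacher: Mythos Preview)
Your proposal is correct and follows essentially the same route as the paper's (abbreviated) proof: the paper also substitutes the model into \eqref{eq:gidio2}, uses $\d'\f=\bm{0}$ to reduce $\d'\ui$ to $\d'\epi$, and then invokes Theorem~\ref{Theorem_GLS} together with the marginal CLT $T^{-1/2}\d'\epi\xrightarrow{d}\norm(\bm{0},\bm{\Sigma}_{\d'\Hi\d})$, noting that the asymptotic covariance matrix must account for the cross-covariance between $\sqrt{T}(\hat{\bm\beta}_i^{GLS}-\bm\beta_{i0})$ and $T^{-1/2}\d'\epi$. Your write-up is actually more explicit than the paper's about what ``taking into account the covariance'' entails---namely a joint CLT for the stacked vector obtained via the circulant approximation (Proposition~\ref{mamma}) and the Bernstein/Scott argument---which the paper leaves implicit.
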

\begin{proof} Rewrite
\berr
 \widehat{\bm  \alpha }^{(\hat{\beta}_i)}_i &=& \left(\d' \d )^{-1} \d' (  \bm{Y}_i - \x \bm{\hat{ \beta }_i^{GLS}}  )  = (\d' \d )^{-1} \d' (  \d \ba_{i0}  + \x \bb_{i0}  + \ui - \x \bm{\hat{ \beta }_i^{GLS}} \right )\\
&=& \ba_{i0}- \left(\d' \d \right)^{-1}\left[\d'\x\left(\bm{\hat{ \beta }_i^{GLS}}-\bb_{i0}\right)+\d'\epi\right],
\eerr
where we used $ \d' \ui = \d' \epi $. Then apply Theorem~\ref{Theorem_GLS} together with:
\[
{  \d' \epi  \over \sqrt{T} }  \xrightarrow{d} \norm ( \bm{0} , \bm{\Sigma }_{\d' \Hi \d } ).
\]
The asymptotic covariance matrix   follows  by taking into account the covariance between $ T^{\nicefrac{1}{2}}( \hat{\bm{ \beta }}_i^{GLS} - \bb_{i0} ) $ and $T^{-\nicefrac{1}{2}}  \d' \epi $. 
\end{proof}

\begin{theorem}  Under  the assumptions of Theorem~\ref{Theorem_GLS}, if 
\be
\d' \f = \bf{0},   \label{iden}
\ee
then,
\[
\sqrt{T}  {\breve{\bf\cal L }}_{i}^{-{1 \over 2}}  \left[ \left(   \begin{array}{c}  \breve{\bm \alpha }_i^{GLS  }    \\  \breve{\bm \beta }_i^{GLS}  \end{array}  \right) - 
 \left(   \begin{array}{c}  {\bm \alpha }_{i0}  \\  {\bm \beta }_{i0}  \end{array}  \right) \right]
  \xrightarrow{d} \norm (  \bm{0},   \bold{I}_{K+S} )\quad  \mbox{ as }  \quad {1 \over T }  + { T^3 \over N^2 } \rightarrow 0,
\]
where
\[
 {\breve{\bf\cal L} }_{i} := {1 \over T }
  \left( \bf{Z}_i'    \breve{\bm{S}}_N^{-1}  \bf{Z}_i \right)^{-1}  
  \left(     \begin{array}{l}    \d'   \\    \x' \MAI   +\x' \pd  \end{array} \right)  \Hi
   \left(     \begin{array}{c}    \d   \\    \MAI \x  +\x \pd  \end{array} \right)  
  \left( \bf{Z}_i'    \breve{\bm{S}}_N^{-1}  \bf{Z}_i \right)^{-1}  .
 \]
\end{theorem}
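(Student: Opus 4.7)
The first move is to exploit the block structure of $\breve{\bm{S}}_N$. Because $\tilde{\bm{S}}_N=\db\sh\db'$ has column space orthogonal to that of $\pd$, a direct check using $\db'\d=\bm{0}$ and $\pd\db=\bm{0}$ delivers the clean inversion formula
$$\breve{\bm{S}}_N^{-1}=\db\sh^{-1}\db'+\frac{N}{\tr(\sh)}\,\pd.$$
Setting $\zi=[\d,\x]$ and $\bm{W}:=\breve{\bm{S}}_N^{-1}$, the Schur complement for the $\bm{\beta}$-block of $\zi'\bm{W}\zi$ collapses to $\xx'\sh^{-1}\xx$, because the nuisance contribution $(N/\tr(\sh))\x'\pd\x$ in the $(2,2)$-block is cancelled exactly by $\d'\bm{W}\x(\d'\bm{W}\d)^{-1}\x'\bm{W}\d$. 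Partitioned inversion then yields the identifications
$$\breve{\bm{\beta}}_i^{GLS}=\hat{\bm{\beta}}_i^{GLS},\qquad \breve{\bm{\alpha}}_i^{GLS}=(\d'\d)^{-1}\d'(\y-\x\hat{\bm{\beta}}_i^{GLS})=\widetilde{\bm{\alpha}}_i,$$
so the theorem reduces to a \emph{joint} central limit theorem for the pair $(\widetilde{\bm{\alpha}}_i,\hat{\bm{\beta}}_i^{GLS})$ whose asymptotic covariance must agree with $\breve{\bm{\cal L}}_i$.

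Second, I would invoke the identification hypothesis $\d'\f=\bm{0}$, which forces $\d'\ui=\d'\epi$ and hence the exact decomposition
$$\sqrt{T}(\widetilde{\bm{\alpha}}_i-\bm{\alpha}_{i0})=\left(\tfrac{\d'\d}{T}\right)^{-1}\!\left[\tfrac{\d'\epi}{\sqrt{T}}-\tfrac{\d'\x}{T}\,\sqrt{T}(\hat{\bm{\beta}}_i^{GLS}-\bm{\beta}_{i0})\right].$$
Theorem~\ref{Theorem_GLS} and its supporting Lemmata already establish $\sqrt{T}(\hat{\bm{\beta}}_i^{GLS}-\bm{\beta}_{i0})\approx(\v'\cni\v/T)^{-1}T^{-1/2}\v'\MAB\epi$, so the stacked estimation error is, to leading order, a single linear functional $\bm{L}_N T^{-1/2}\epi$ whose coefficient rows have bounded row and column sums (using $\lb\MAB\rb_{row}<\infty$ from Proposition~\ref{mamma} for the $\bm{\beta}$-row and the boundedness of $\d$ for the $\bm{\alpha}$-row). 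Joint asymptotic normality then follows by the Cram\'er--Wold device: for any $\bm{h}\in\mathbb{R}^{S+K}$, the scalar $\bm{h}'\bm{L}_N T^{-1/2}\epi$ is a linear functional of exactly the Robinson--Hidalgo type already analysed for the $\d'\epi$ piece in the proof of Theorem~\ref{Theorem_OLS}(i) and for the $\v'\MAB\epi$ piece in the proof of Theorem~\ref{Theorem_GLS}, so the Lindeberg and variance-convergence conditions transfer verbatim, and the variance of the limit is determined by the $\Hi$-sandwich applied to the stacked weight matrix.

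Finally, to recover the displayed form of $\breve{\bm{\cal L}}_i$, I would substitute the inversion formula for $\bm{W}$ into the sandwich $(\zi'\bm{W}\zi)^{-1}[\zi'\bm{W}\Hi\bm{W}\zi](\zi'\bm{W}\zi)^{-1}/T$ and replace $\sh^{-1}$ with its population counterpart $\MAI$ to leading order (permitted by the approximations in the proof of Theorem~\ref{Theorem_GLS}). The row $\zi'\bm{W}$ then reads $((N/\tr(\sh))\d',\,\x'\MAI+(N/\tr(\sh))\x'\pd)$, and the two $N/\tr(\sh)$ prefactors in the $\d$-entry of the score cancel against the reciprocal $\tr(\sh)/N$ factors that appear in the $\d$-block of $(\zi'\bm{W}\zi)^{-1}$, leaving the displayed kernel $(\d,\MAI\x+\x\pd)'\Hi(\d,\MAI\x+\x\pd)$ flanked by two copies of $(\zi'\breve{\bm{S}}_N^{-1}\zi)^{-1}$. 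The main obstacle I expect is the joint CLT step: although each row of $\bm{L}_N T^{-1/2}\epi$ is, individually, of a form already handled, verifying that the \emph{composite} weights $\bm{h}_1'\d+\bm{h}_2'\v'\MAB$ simultaneously satisfy the absolute summability and Lindeberg conditions of \cite{RH97} --- and hence that the off-diagonal block of $\breve{\bm{\cal L}}_i$ is correctly pinned down --- requires a unified bookkeeping of both rows against the same innovation sequence $\{\eta_{jt}\}$, whereas the existing proofs do this only separately for the $\d$-row and the $\v'\MAB$-row.
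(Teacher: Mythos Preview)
Your proposal is correct and mirrors the paper's (highly abbreviated) proof: the same block-orthogonal inversion of $\breve{\bm{S}}_N$, the same use of $\d'\f=\bm{0}$ to reduce $\zi'\breve{\bm{S}}_N^{-1}\ui$ to $(\d'\epi,\ \xx'\shi\uu+\x'\pd\epi)'$, and the same reliance on the Theorem~\ref{Theorem_GLS} machinery for the $\bm{\beta}$-piece. The only differences are presentational---the paper tacitly sets the scaling $N/\tr(\tilde{\bm{S}}_N)$ to unity (consistent with its remark that this constant is irrelevant for the asymptotics) and works directly with the sandwich form rather than first extracting the identities $\breve{\bm{\beta}}_i^{GLS}=\hat{\bm{\beta}}_i^{GLS}$ and $\breve{\bm{\alpha}}_i^{GLS}=\widetilde{\bm{\alpha}}_i$---and it is equally terse on the joint CLT step you rightly flag as the residual bookkeeping hurdle.
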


\begin{proof}  For the matrix  $\breve{\bm{S}}_N$ defined in (\ref{eq:gidio3}), it is easy to verify that (see Property 15 in \cite[3.6.2]{ltk}
$$
\breve{\bm{S}}_N^{-1}=\stldp+\pd=\db\shi\db'+\pd ,
$$
yielding 
\berr
\left( \bf{Z}_i'    \breve{\bm{S}}_N^{-1}  \bf{Z}_i \right)^{-1} 
&=&   \left[     \begin{array}{ccc}    \d'\d & &\d'\x \\  \x' \d && \x'\db\shi \db' \x +  \x' \pd\x       \end{array} \right]^{-1} \\
&=&   \left[     \begin{array}{ccc}    {\bf E_i}  & &  - {\bf E_i }   \d' \x  \A_i^{-1}  \\  -\A_i^{-1} \x' \d {\bf E_i }  & &  \A_i^{-1} + \A_i^{-1}  \x' \d  {\bf E_i }  \d' \x  \A_i^{-1}   \end{array} \right], \\
\eerr
with  $\A_i:=   \x' \db   \shi \db' \x +  \x' \pd\x    $ and $ {\bf E_i  }:= ( \d' (  {\bf I } -   \x \A^{-1}  \x '  )\d )^{-1}  $.
 Similarly, 
\berr
\bf{Z}_i' \breve{\bm{S}}_N^{-1}   \bf{y}_i   && = \left(\bf{Z}_i'   \breve{\bm{S}}_N^{-1}  
 \bf{Z}_i\right)  \left(  \begin{array}{c}   \bm{\alpha_{i0}} \\ \bm{\beta_{i0}}   \end{array}  \right) +
 {\bf Z}_i'   \breve{\bm{S}}_N^{-1} 
  \ui   
 \eerr
  where, using the identification assumption (\ref{iden}). 
$$
   {\bf Z}_i'    \breve{\bm{S}}_N^{-1}   \ui  =  \left[     \begin{array}{l}    \d'  \epi  \\    \xx' \shi  \uu  +\x' \pd \epi \end{array} \right],
$$
and, from the proof of Theorem \ref{Theorem_GLS}, $T^{-\nicefrac{1}{2}}\xx' \shi  \uu\approx T^{-\nicefrac{1}{2}}\xx' \MAI  \ei$.
 \end{proof}

\section{Different factor structure}\label{app3}
Following up the discussion in Section \ref{diff-fac}, we now consider the feasible GLS. In this section we allow for  $\d\neq 0$, that is we consider the model
\be\label{eq:diffact2}
\y=\d\bm{\alpha}_{i0}+\x\bm{\beta}_{i0}+\ui, \quad \x=\f_1\Gi+\v, \quad \ui=\f_2\b+\epi .
\ee
The aim of this section is to provide an heuristic proof that the results of Theorem \ref{Theorem_GLS} apply to model (\ref{eq:diffact2}). Proceeding along the lines of Appendix \ref{auxiliary}, we define
\berr
\sh^{(\gg)} &:=&\frac{1}{N}\sum_{i=1}^N\mxx\ui \ui ' \mxx\\
&=&\frac{1}{N}\sum_{i=1}^N\left(\I-\ff_1\Gi\xp-\vv\xp\right)\left(\ff_2\b+\ei \right)\left(\ff_2\b+\ei \right)'  
\left(\I-\ff_1\Gi\xp-\vv\xp\right)'\\
&=& \frac{1}{N}\sum_{i=1}^N\left(\gg\bm{\hat{A}^{(\gg)}_{1i}}\gg'+
\bm{\hat{C}^{(\gg)}_{1i}}+\bm{\hat{C}^{(\gg)}_{2i}}+\bm{\hat{C}^{(\gg)}_{3i}}
\right)=\gg\bm{\hat{A}_N}^{(\gg)} \gg' + \bm{\hat{C}_N}^{(\gg)}.
\eerr
The matrices $\gg$ and $\hat{\A}_i$ are defined below. Noting that
\berr
&&\left(\I-\ff_1\Gi\xp\right)\ff_2\b=
\left(\I-\left(\projp_{\ff_2}+\projm_{\ff_2}\right)\ff_1\Gi\xp\right)\ff_2\b\\
&=&\ff_2\left[\I-\left(\ff_2'\ff_2\right)^{-1}\ff_2'\ff_1\xp\ff_2\right]\b
+\projm_{\ff_2}\ff_1\Gi\xp\ff_2\b,
\eerr
the matrix
$$
\left(\I-\ff_1\Gi\xp\right)\ff_2\b\b'\ff_2'
\left(\I-\ff_1\Gi\xp\right)'+\left(\ff_1\Gi\xp\right)\ei\ei'
\left(\ff_1\Gi\xp\right)'
$$
can be rewritten as $\gg\han^{(\gg)}\gg'$, with $\gg:=\left[\ff_2,\;\projm_{\ff_2}\ff_1\right]$ of dimension $T\times (M_1+M_2)$. Write
$$
\bm{\hat{\mathcal{K}}_i}:=
\left[
\begin{array}{c}
\I-\left(\ff_2'\ff_2\right)^{-1}\ff_2'\ff_1\Gi\xp\ff_2\\
\Gi\xp\ff_2
\end{array}
\right],\quad
\bm{\hat{\mathcal{L}}_i}:=
\left[
\begin{array}{c}
\left(\ff_2'\ff_2\right)^{-1}\ff_2'\ff_1\xp\\
\Gi\xp\ff_2
\end{array}
\right].
$$
Then,
$$
\han^{(\gg)}=\frac{1}{N}\sum_{i=1}^N \left(\bm{\hat{\mathcal{K}}_i}\b\b'\bm{\hat{\mathcal{K}}'_i}
+\bm{\hat{\mathcal{L}}_i}\ei\ei'\bm{\hat{\mathcal{L}}'_i}\right).
$$

Likewise
$$
\bm{\hat{C}^{(\gg)}_{1i}}= \left(\I-\vv\xp\right)\ei\ei'
\left(\I-\vv\xp\right)',\qquad
\bm{\hat{C}_{2i}}^{(\gg)}=\left(\vv\xp\right)\ff_2\b\b'\ff_2'
\left(\vv\xp\right)'.
$$
The terms $\bm{\hat{C}^{(\gg)}_{3i}}$ is defined as
$
\bm{\hat{C}^{(\gg)}_{3i}}=\sum_{j=1}^{13} \left(
\bm{\hat{C}^{(\gg)}_{3i,j}}+\bm{\hat{C}^{(\gg)'}_{3i,j}}\right),
$
where
$$
\begin{array}{lcl}
\bm{\hat{C}^{(\gg)}_{3i,1}}= -\vv\xp\ff_2\b\b'\ff_2', & \phantom{ccccc}&
%
%
\bm{\hat{C}^{(\gg)}_{3i,2}}= \vv\xp\ff_2'\b\b'\ff_2'
\left(\ff_1\Gi\xp\right)'  , \\
\bm{\hat{C}^{(\gg)}_{3i,3}}=\ff_2\b\ei', & &
\bm{\hat{C}^{(\gg)}_{3i,4}}=\vv\xp\ff_2\b\ei'\left(\vv\xp\right)' , \\
\bm{\hat{C}^{(\gg)}_{3i,5}}=-\vv\xp\ff_2\b\ei', & &
\bm{\hat{C}^{(\gg)}_{3i,6}}=-\ff_2\b\ei'\left(\vv\xp\right)' , \\
\bm{\hat{C}^{(\gg)}_{3i,7}}=-\ff_1\Gi\xp\ff_2\b\ei', &&
\bm{\hat{C}^{(\gg)}_{3i,8}}=\vv\xp\ff_2\b\ei'\left(\ff_1\Gi\xp\right)', \\
\bm{\hat{C}^{(\gg)}_{3i,9}}=\ff_1\Gi\xp\ff_2'\b\ei'\left(\vv\xp\right)', && 
\bm{\hat{C}^{(\gg)}_{3i,10}}=-\ff_1\Gi\xp\ei\ei' ,\\ 
\bm{\hat{C}^{(\gg)}_{3i,11}}=\ff_1\Gi\xp\ei\ei'\left(\vv\xp\right)', & &
\bm{\hat{C}^{(\gg)}_{3i,12}}=\ff_1\Gi\xp\ff_2\b\ei'\xp\Gi'\ff_1' ,
\\
\bm{\hat{C}^{(\gg)}_{3i,13}}=
-\ff_2\b\ei'\left(\ff_1\Gi\xp\right)'  .&& 
\end{array}
$$
Next define the matrices 
$$
\SS^{(\gg)}:=\gg \an^{(\gg)}\gg'+\db'\bcn^{(\gg)}\db,\quad
\bcn^{(\gg)}:= \frac{1}{N}\sum_{i=1}^N \left(\Hi+\BT^{(\gg)}\right),\quad
$$
and
\berr
\an^{(\gg)}&:=&\frac{1}{N}\sum_{i=1}^N\left(\I_T-\ff_2^+\ff_1\Gi\bm{\Psi_{1i}}^{-1}\Gi'\frac{\ff_1'\ff_2}{T}\right)
\b\b'
\frac{1}{N}\sum_{i=1}^N\left(\I_T-\ff_2^+\ff_1\Gi\bm{\Psi_{1i}}^{-1}\Gi'\frac{\ff_1'\ff_2}{T}\right)',\\
 \BT&:=&\E\left[\v\psinv \Gi\frac{\ff_1'\ff_2}{T}\b\b'\frac{\ff_2'\ff_1}{T}\Gi'\bm{\Psi_{1i}}^{-1}\v\v'\left|\sa(\z)\right.\right],
\eerr
with $\ff_2^+=\left(\ff_2'\ff_2\right)^{-1}\ff_2$, and
\be
\bm{\Psi}_{1i}:=\Gi'\frac{\ff_1'\ff_1}{T}\Gi+\SO_{\v'\v}.
\ee
The key difference with respect to what discuss in Section \ref{diff-fac} lies  in the fact that the inverse of $\SS^{(\gg)}$ is orthogonal both to $\ff_1$ and $\ff_2$.

\bibliography{cross-section_PZ}
\end{document}